\newcommand{\openone}{\mathbbm{1}}
\newcommand{\ket}[1]{\left|#1\right\rangle}
\newcommand{\bra}[1]{\left\langle#1\right|}
\newcommand{\tr}{\mathrm{tr}}
\newcommand{\mcS}{\mathcal S}
\newcommand{\mcP}{\mathcal P}
\newcommand{\mc}[1]{\mathcal{#1}}
\newcommand{\cstar}{$\mathrm{C}^*$}
\newcommand{\lin}[1]{{\bm L}(\mathbb C^{#1})}
\newcommand{\MPS}[1]{\big\vert\mathcal M(#1)\big\rangle}
\theoremstyle{plain}
\newtheorem{lemma}{Lemma}[section]
\newtheorem{theorem}[lemma]{Theorem}
\newtheorem{corollary}[lemma]{Corollary}
\newtheorem{definition}[lemma]{Definition}
\theoremstyle{definition}
\newtheorem{observation}[lemma]{Observation}
\numberwithin{equation}{section}  
\title{PEPS as ground states: degeneracy and topology}
\author{Norbert Schuch,$\hspace*{-0.3em}^{1,2}$ 
    Ignacio Cirac,$\hspace*{-0.2em}^{2}$
    and David P\'erez-Garc\'ia$^{3}$}
\date{}
\begin{document}

\maketitle
\vspace*{-2.5em}
\begin{center}
\footnotesize
$^1$ California Institute of Technology, Institute for Quantum
Information,\\
MC 305-16, Pasadena CA 91125, USA\\
$^2$ Max-Planck-Institut f\"ur Quantenoptik,\\ Hans-Kopfermann-Str. 1,
D-85748 Garching, Germany\\
$^3$ Dpto.\ Analisis Matematico and IMI,\\ Universidad Complutense de Madrid,
28040 Madrid, Spain
\end{center}

\vspace{1em}

\begin{abstract}
We introduce a framework for characterizing Matrix Product States (MPS)
and Projected Entangled Pair States (PEPS) in terms of symmetries.  This
allows us to understand how PEPS appear as ground states of
local Hamiltonians with finitely degenerate ground states and to characterize
the ground state subspace.  Subsequently, we apply our framework to show
how the topological properties of these ground states can be explained
solely from the symmetry: We prove that ground states are locally
indistinguishable and can be transformed into each other by acting on a
restricted region, we explain the origin of the topological entropy, and
we discuss how to renormalize these states based on their symmetries.
Finally, we show how the anyonic character of excitations can be
understood as a consequence of the underlying symmetries.  \end{abstract}

\section{Introduction}

What are the entanglement properties of quantum many-body
states which characterize ground states of Hamiltonians with local
interactions? The answer seems to be ``an area law'': the bipartite
entanglement between any region and its complement grows as the area
separating them -- and not as their volume, as is the case for a random
state (see~\cite{cramer:arealaw-review} for a recent review).
Moreover, particular corrections to this scaling law are linked with
critical points (logarithmic corrections) or topological order (additive
corrections).  A rigorous general proof of the area law, however, could up
to now only be given for the case of one-dimensional
systems~\cite{hastings:arealaw}, where an area law has been proven for all
systems with an energy gap above the ground state, whereas the currently
strongest result for two
dimensions~\cite{masanes:arealaw,hastings:mps-entropy-ent} requires a
hypothesis on the
eigenvalue distribution of the Hamiltonian. Surprisingly, there is a
completely general proof in arbitrary dimensions if instead, we consider
the corresponding quantity for thermal
states~\cite{wolf:mutual-info-arealaw}, and similar links to 
topological order persist~\cite{iblisdir:topo-entropy-finite-T}.

The area law can be taken as a guideline for designing classes of quantum
states which allow to faithfully approximate ground states of local
Hamiltonians. There are several of these classes in the literature: Matrix
Product States (MPS)~\cite{ostlund-rommer} and Projected Entangled Pair
States (PEPS)~\cite{frank:2D-dmrg} are most directly motivated by the area
law, but there are other approaches such as MERA (the Multi-Scale
Entanglement Renormalization Ansatz)~\cite{vidal:mera} which e.g.\ is
based on the scale invariance of critical systems; all these classes are
summarized under the name of Tensor Network or Tensor Product States.
Though the main motivation to introduce them was numerical -- they
consitute variational ansatzes over which one minimizes the energy of a
target Hamiltonians and thus obtains an approximate description of the
ground state -- they have turned out to be powerful tools for
characterizing the role of entanglement in quantum many body systems, and
thus helped to improve our understanding of their physics.

In this paper, we are going to present a theoretical framework which
allows us to understand how MPS and PEPS appear as ground states of local
Hamiltonians, and to characterize the properties of their ground state
subspace.  This encompasses previously known results for MPS and
particular instances of PEPS, while simultaneously giving rise to a 
range of new phenomena, in particular topological effects.  Our work is
motivated by the contrast between the rather complete understanding in one
and the rather sparse picture in two dimensions, and we will review
what is known in the following. We will thereby focus on analytical
results, and refer the reader interested in numerical aspects
to~\cite{murg:peps-review}.

\subsection{Matrix Product States}

Matrix Product States (MPS)~\cite{ostlund-rommer} form a family of 
one-dimensional quantum states whose description is inherently local, in
the sense that the degree to which two spins can be correlated is related
to their distance. The total amount of correlations across any cut is
controlled by a parameter called the \emph{bond dimension}, such that
increasing the bond dimension allows to grow the set of states described.
MPS have a long history, which was renewed in 1992 when two apparently
independent papers appeared: In~\cite{fannes:FCS}, Fannes, Nachtergaele,
and Werner generalized the AKLT construction of~\cite{aklt} by introducing
the so-called Finitely Correlated States, which in retrospect can be
interpreted as MPS defined on an infinite chain; in fact, this work layed
the basis for our understanding of MPS and introduced many techniques
which later proved useful in characterizing
MPS~\cite{perez-garcia:mps-reps}.  The other was~\cite{white:DMRG},
where White introduced the Density Matrix Renormalization Group (DMRG)
algorithm, which can now be understood as a variational algorithm over the
set of MPS.  In~\cite{frank:dmrg-mps}, MPS were explained from a quantum
information point of view by distributing ``virtual'' maximally entangled
pairs between adjacent sites which can only be partially accessed by
acting on the physical system.  This entanglement-based perspective has
since then fostered a wide variety of results.

\subsubsection{The complexity of simulating 1D systems}

Motivated by the extreme success of DMRG, people investigated how
hard or easy the problem of approximating the ground state of a 1D local
Hamiltonian (or simply its energy) was. The history of this problem is
full of interesting positive and negative results. A number of them was 
devoted to prove that every ground state of a gapped 1D local Hamiltonian
can be approximated by an
MPS~\cite{osborne:1d-gs-approx,frank:faithfully}; this was finally proven
by Hastings~\cite{hastings:arealaw}, justifying the use of
MPS as the appropriate representation of the state of one-dimensional spin
systems.  Very recently, also in the positive, it was shown that dynamical
programming could be used to find the best approximation to the ground
state of a one-dimensional system within the set of MPS with fixed bond
dimension in a provably efficient
way~\cite{schuch:mps-dynprog,aharonov:mps-dynprog}.  On
the other hand, in the negative it could be shown that finding the ground
state energy of Hamiltonians whose ground states are MPS with a
bond dimension polynomial in the system size is
NP-hard~\cite{schuch:mps-gap-np}; this is based on a previous result of
Aharonov \emph{et~al.}~\cite{aharonov:1d-qma} proving that finding the
ground state energy of 1D Hamiltonians is QMA-complete (the quantum
version of NP-complete).

\subsubsection{Hidden orders, symmetries and entanglement in spin chains}

As we have seen, MPS provide the right description for one-dimensional
quantum spin chains. Therefore, and given their simple structure, one can
employ MPS to improve our understanding of the physics of one-dimensional
systems.  One field in which significant insight could be gained was the
characterization of symmetries in terms of entanglement.  First, the
relation between string order parameters and localizable entanglement was
explained in~\cite{verstraete:div-ent-length,venuti:locent-stringorder}.
In~\cite{david:stringorder-1d} (see also \cite{singh:mps-decomp-sym}),
global symmetries in generic MPS have been characterized, and related to
the
existence of string order parameters, thus explaining many of the
properties of string order, for instance its fragility
\cite{anfuso:stringorder-frag}. This characterization of global symmetries
was generalized to arbitrary MPS in \cite{sanz:mps-syms}, where it was used to
shed light on the Hamiltonian-free nature of the Lieb-Schultz-Mattis
theorem as well as to find new $\mathrm{SU(2)}$--invariant two-body
Hamiltonians with MPS ground states, beyond the AKLT and Majumdar-Ghosh
models.  Other examples of MPS with global symmetries were already
provided in \cite{fannes:FCS,karimipour:aklt-syms,tu:SOn-mps-gs}.
Recently, also reflection symmetry has been investigated, showing how it
provides topological protection of some MPS such as the odd-spin
AKLT model, as opposed to the even-spin case~\cite{pollmann:symprot-1d}.

MPS have also been extremely useful in understanding the scaling of
entanglement in quantum spin chains, where special attention has been devoted
to the case of quantum phase transitions. In \cite{wolf:mps-qpt}, MPS were used
to give examples of phase transitions with unexpected properties, namely
analytic ground state energy and finite entanglement entropy of an
infinite half-chain; the entanglement properties of these examples were
further analyzed in \cite{cozzini:mps-qpt-fidelity,wei:mps-qpt}. In
\cite{orus:mps-qpt-geoent},
MPS theory was used to compute how the geometric entanglement with respect
to large blocks diverges logarithmically with the correlation length near
a critical point, and thus takes the role of an order parameter (see also
\cite{wei:mps-qpt}).

Apart from that, MPS theory has been used to decompose global operations
(such as cloning or the creation of an entangled state) into a sequence of
local operations~\cite{schoen:hen-and-egg,lamata:sequential-operations},
to characterize renormalization group transformations and their fixed
points in 1D \cite{frank:renorm-MPS}, to understand which quantum circuits
can be simulated classically \cite{vidal:simulation-of-comput}, or even to
propose new numerical methods to solve differential equations
\cite{iblisdir:cont-mps} or to compress images
\cite{latorre:imagecompression}. 

But what
is it that makes MPS so useful in deriving all these results?

\subsubsection{The structure of MPS}

The main reason seems to be that MPS, despite being able to faithfully
represent the states of one-dimensional systems, have a simple and
well-understood structure, which makes them quite easy to deal with. For
instance, as shown in \cite{vidal:simulation-of-comput}, they naturally
reflect the Schmidt decomposition at any cut across the chain, which makes
dealing with their
entanglement properties particularly easy.  
If one moreover restricts to the physically
relevant case of translational invariant states, it turns out that one can
fully characterize the set of all translationally invariant MPS by
bringing them into a canonical
form~\cite{fannes:FCS,perez-garcia:mps-reps}; in fact this canonical form
constitutes one of the main ingredients in many of the results mentioned
above. 

In the canonical form, the matrices characterizing the MPS obtain a block
diagonal form, and the properties of the state can be simply read off the
structure of these blocks. In particular, for the case of one block
(termed the \emph{injective} case), it can be shown that the MPS arises as
the unique ground state of a so-called ``parent'' Hamiltonian with local
interactions, which moreover is frustration free. For the
\emph{non-injective} case where one has several blocks, the number of
blocks determines the degeneracy of the parent Hamiltonian, and the ground
state subspace is spanned by the injective MPS described by the individual
blocks~\cite{perez-garcia:mps-reps}.  Beyond that, the injective case has
other nice properties, such as an exponential decay of correlations.  For
the case of an infinite chain, all these properties were proven in
\cite{fannes:FCS}, together with the fact that the parent Hamiltonian of
an injective MPS has an energy gap above the ground state.
The block structure of the canonical form is also useful beyond the
relation of Hamiltonians and ground states, and e.g.\ allows to read off 
the type of the RG fixed point of a given 1D system.

\subsection{PEPS}

Projected Entangled Pair States (PEPS) constitute the natural
generalization of MPS to two and higher dimensions, motivated by the
quantum information perspective on MPS which views them as arising from
virtual entangled pairs between nearest
neighbors~\cite{sierra:2d-dmrg,frank:2D-dmrg}.  Though there has not yet
been a complete formal proof that PEPS approximate
efficiently all ground states of gapped local Hamiltonians, this could be
proven under a (realistic) assumption on the spectral density in the
low-energy regime~\cite{hastings:locally}, showing that PEPS are the
appropriate class to describe a large variety of two-dimensional systems.
However, as compared to MPS, PEPS are much harder to deal with:
For instance, computing expectation values of local observables, which
would be the key ingredient in any variational algorithm such as DMRG, is a
\#P-complete problem, and thus in particular
NP-hard~\cite{schuch:cplx-of-PEPS}.  This poses an obstacle to numerical
methods, and different ideas to overcome this problem  have been proposed
(we refer again to \cite{murg:peps-review} for numerical issues).
Fortunately, the bad news comes with good ones: The increase in complexity
allows to find a much larger variety of different interesting behavior
within PEPS as compared to MPS; for instance, in
\cite{frank:comp-power-of-peps} it is shown that there exist PEPS with a
power-law decay of two-point correlation functions, something which cannot
be achieved for MPS. 

\subsubsection{Many examples}

There have been identified various different classes of PEPS which exhibit rich
properties.  To start with, in~\cite{frank:mbc-peps} it has been shown how
the interpretation of the 2D cluster state as a PEPS can be used to
understand measurement based quantum
computation~\cite{raussendorf:cluster-short} -- a way of performing a
quantum computation solely by measurements -- by viewing it as a way to
carry out the computation as teleportation-based computation on the virtal
maximally entangled states underlying the PEPS description. This motivated
the search for different models for measurement based quantum computation,
and indeed models with very different properties have been subsequently
proposed~\cite{gross:mbqc-prl,gross:mbqc-pra,brennen:aklt-mbqc}.

Another category of examples has been found with regard to topological
models, where it was realized that many topologically ordered states have
a PEPS representation with a small bond dimension: The case of Kitaev's
toric code was observed in \cite{frank:comp-power-of-peps}, and this was
later generalized to all string net models in \cite{vidal:stringnet-peps}.
Yet, despite their ability to \emph{describe} these states, up to now PEPS
did not help much in \emph{understanding} the topological behavior of
these states, which is one of the things we will assess in this work.

\subsubsection{Few general results}

As we have seen, the class of PEPS is rich enough to incorporate states
with a variety of different behaviors. However, given the complexity of
topological systems or of (measurement based) quantum computation, is this
class still simple enough to prove useful as a tool to improve our
understanding of 2D systems? That is, can PEPS help to uncover new effects
and relations in nature, or to give a better understanding  of the
mechanisms behind quantum effects in two dimensions?

Judging from the experience with one-dimensional systems, in order to do
so it would be highly desirable to have an understanding of the structure
of PEPS comparable to the one obtained using the canonical form in one
dimension.  As it turns out, in the case of ``injective'' PEPS, several 1D
results can be transferred; in particular, injective PEPS appear as unique
ground states of their parent
Hamiltonian~\cite{perez-garcia:parent-ham-2d}.  Also,
in~\cite{perez-garcia:inj-peps-syms} it is shown how global symmetries can
be characterized in injective PEPS, which helped to understand the
mechanism behind the 2D version of the Lieb-Schultz-Mattis
theorem~\cite{hastings:liebschultzmattis-2d}, to define an appropriate
analogue for string orders in 2D~\cite{david:stringorder-1d}, and  to
improve the PEPS-based algorithms used to simulate 2D systems with
symmetries~\cite{singh:tns-decomp-symmetry}.  While these results
illustrate that PEPS are a useful tool to understand properties of quantum
states which appear as unique ground states of local Hamiltonians, it is
also true that some of the most interesting physics in two dimensions
takes place in systems which do not have unique ground states, but rather
lowly degenerate ones, such as systems with symmetry broken phases or
states with topological order. 

Yet, in order to be able to fully apply the toolbox of PEPS to the
understanding of these systems, it would be crucial to have a mathematical
characterization of the structure of PEPS, providing a framework similar
to the one which proved so useful for one-dimensional systems: Is there a
canonical form for PEPS which allows to easily determine their properties,
and how can it be found?  How do PEPS appear as ground states of local
``parent'' Hamiltonians, and what is the ground state degeneracy?
What is the structure of the ground state subspace, and how do these
states relate to the PEPS under consideration?

\subsection{Content of the paper}

In this paper, we introduce a framework for characterizing both MPS and
PEPS in terms of symmetries of the underlying tensors. This classification
allows us to rederive all results known for one-dimensional systems, while
it can be equally applied to the characterization of two- and
higher-dimensional PEPS states, answering the aforementioned
questions:
Using the characterization based on symmetries, we prove how PEPS appear as
ground states of local Hamiltonians with finitely degenerate ground
states, and how these states can be obtained as variants of the original
state. Subsequently, we demonstrate the power of our framework by using it
to explain in a simple and coherent way the topological properties of
these ground states: We prove that these states are locally
indistinguishable and can be transformed into each other by acting on a
restricted region, we explain the origin of the topological entropy, and
we discuss how to renormalize these states based on their symmetries. We
also discuss the excitations of these Hamiltonians, and demonstrate how to
understand their anyonic statistics as a consequence of their symmetries.
Thus, the characterization of PEPS in terms of symmetries provides a
powerful framework which allows to explain a large range of their
properties in a coherent and natural way.

The material in this paper is structured as follows: In
Sec.~\ref{sec:MPS-and-PEPS}, we introduce MPS and PEPS and discuss their
basic properties. In Sec.~\ref{sec:MPS-injective}, we review the proof for
the ``injective'' case, in which the MPS appears as a unique ground state
of the parent Hamiltonian.  In Sec.~\ref{sec:mps-noninj}, we show how
general MPS can be classified in terms of symmetries, and use this
description to generalize the results of the preceding section; we also
discuss how our results relate to the ones obtained using the 
original canonical form of~\cite{perez-garcia:mps-reps}.
 In Sec.~\ref{sec:PEPS-noninj}, we generalize the
symmetry-based classification of MPS to the case of PEPS, where we derive
the parent Hamiltonian and characterize the structure of its ground state
space. In Sec.~\ref{sec:iso-PEPS}, we consider a more restricted class of
PEPS with symmetries, for which we derive a variety of results concerning the
structure of the ground state space and the parent Hamiltonian, such as
topological entropy, local indistinguishability, renormalization
transformations, or the fact that the parent Hamiltonian commutes. We
conclude the section by characterizing the excitations of the system and
explaining how their anyonic statistics emerges from the symmetries of the
PEPS. Finally, in Sec.~\ref{sec:examples}, we discuss examples which
illustrates the applicability of our classification, before we conclude in
Sec.~\ref{sec:conclusion}.

\section{MPS and PEPS\label{sec:MPS-and-PEPS}}

We start by defining Matrix Product States, which describe 
the state of a one-dimensional chain of $d$-level systems of length $L$.
\begin{definition}
A state $\ket\psi\in(\mathbb C^d)^{\otimes L}$
is called a \emph{Matrix Product State (MPS)} if it
can be written as
\begin{equation}
\label{eq:mps-def}
\ket{{\mathcal{M}}(A)}=\sum_{i_1,\dots,i_L}\tr[A^{i_1}\cdots A^{i_L}]\ket{i_1,\dots,i_L}\ .
\end{equation}
Here, $A^{i}\in \lin{D}$ (the space of $D\times D$ matrices over $\mathbb
C$), and $D$ is called the
\emph{bond dimension}.
\end{definition}
Note that we restrict to the translational invariant setting with only one
tensor $A$, whereas a general MPS can be defined with a different tensor
at each site. However, for each MPS describing a translationally invariant
state a translationally invariant description (\ref{eq:mps-def}) can be
constructed; note, however, that the bond dimension can increase with $L$.

In a more graphical representation, $A$ can be written as a three-index
tensor,
\[
A\equiv\sum_{i\alpha\beta}(A^i)_{\alpha\beta} \ket{i}_p\ket{\alpha}_v\bra{\beta}_v=
\includegraphics[height=2.5em]{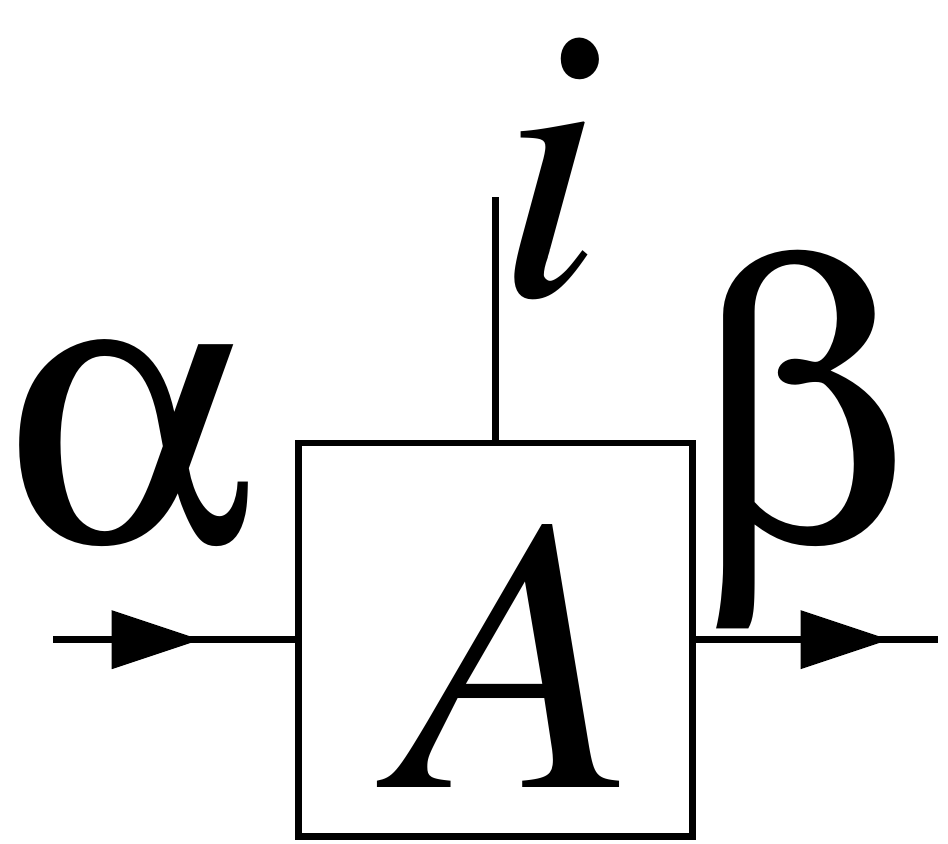}\ .
\]
Here, $p$ refers to the ``physical'' index [characterizing the actual
state in \eqref{eq:mps-def}], and $v$ to the ``virtual'' index, which is
only used to construct the MPS~\eqref{eq:mps-def} and does not appear in
the final state. For clarity, we assign arrows to the virtual indices,
pointing from bras $\bra{\cdot}$ to kets $\ket{\cdot}$.  Connecting the
legs of two tensors (called a ``bond'') denotes contraction,
\[
\sum_\beta (A^i)_{\alpha\beta}(A^j)_{\beta\gamma}
\ket{i,j}_p\ket{\alpha}_v\bra{\gamma}_v
=
    \includegraphics[height=2.5em]{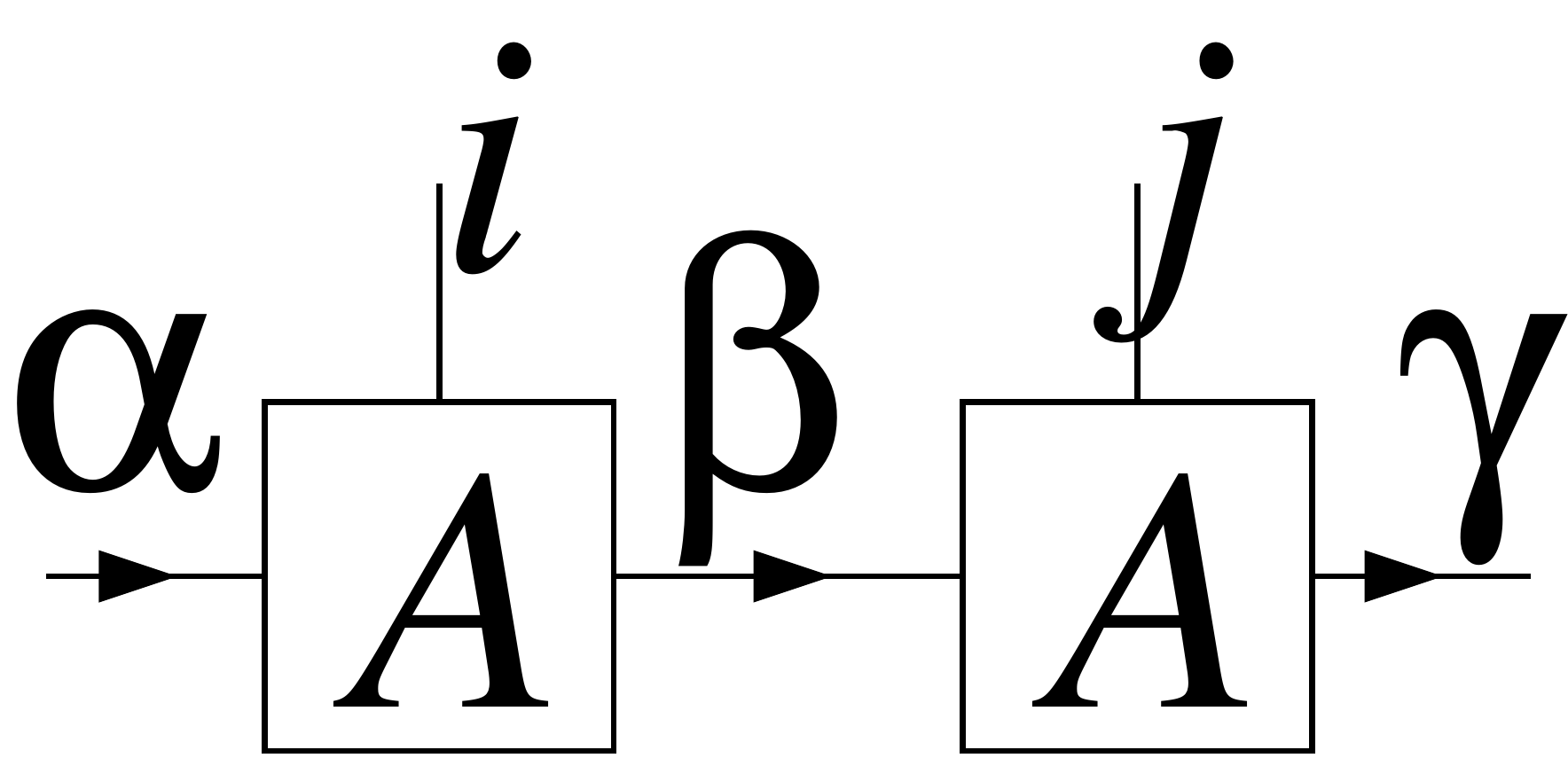}\ .
\]
Thus, we can write the MPS (\ref{eq:mps-def}) as
\[
\ket{\psi_A}=
\raisebox{-.8em}{
\includegraphics[height=3em]{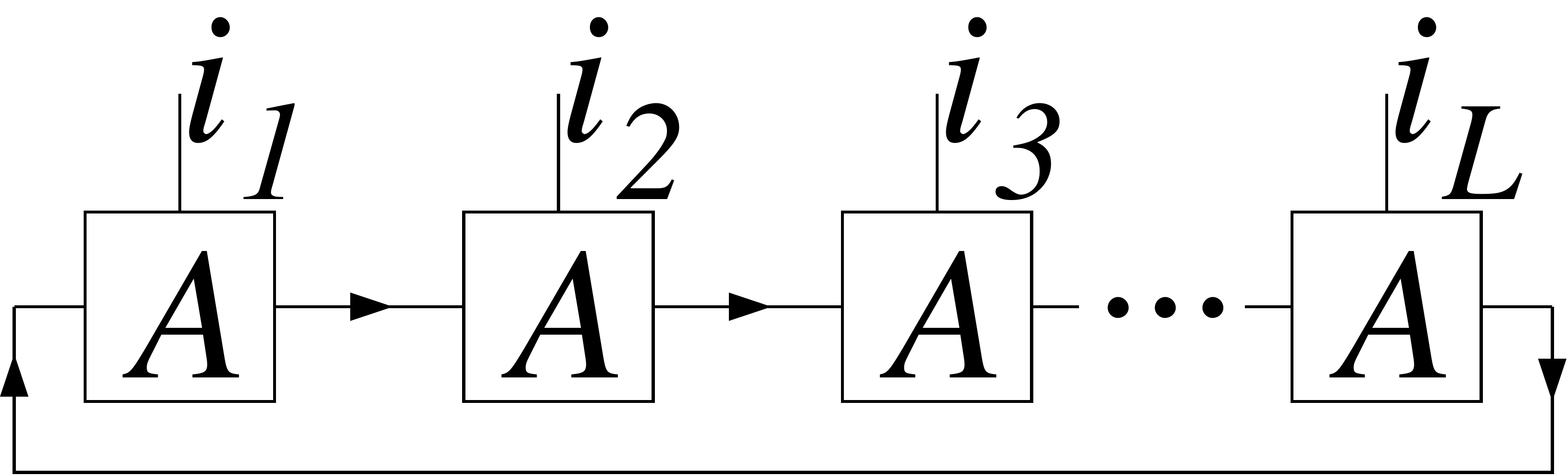}
}\ .
\]

Having this picture in mind, we can immediately define a two-dimensional
generalization:

\begin{definition}
A state $\ket{\psi}\in (\mathbb C^d)^{\otimes (L\times L)}$ is called a
\emph{Projected Entangled Pairs State (PEPS)} of \emph{bond dimension} $D$
if it can be written as
\begin{equation}
    \label{eq:peps-def}
\ket{\psi_A}=\raisebox{-3em}{\includegraphics[height=7em]{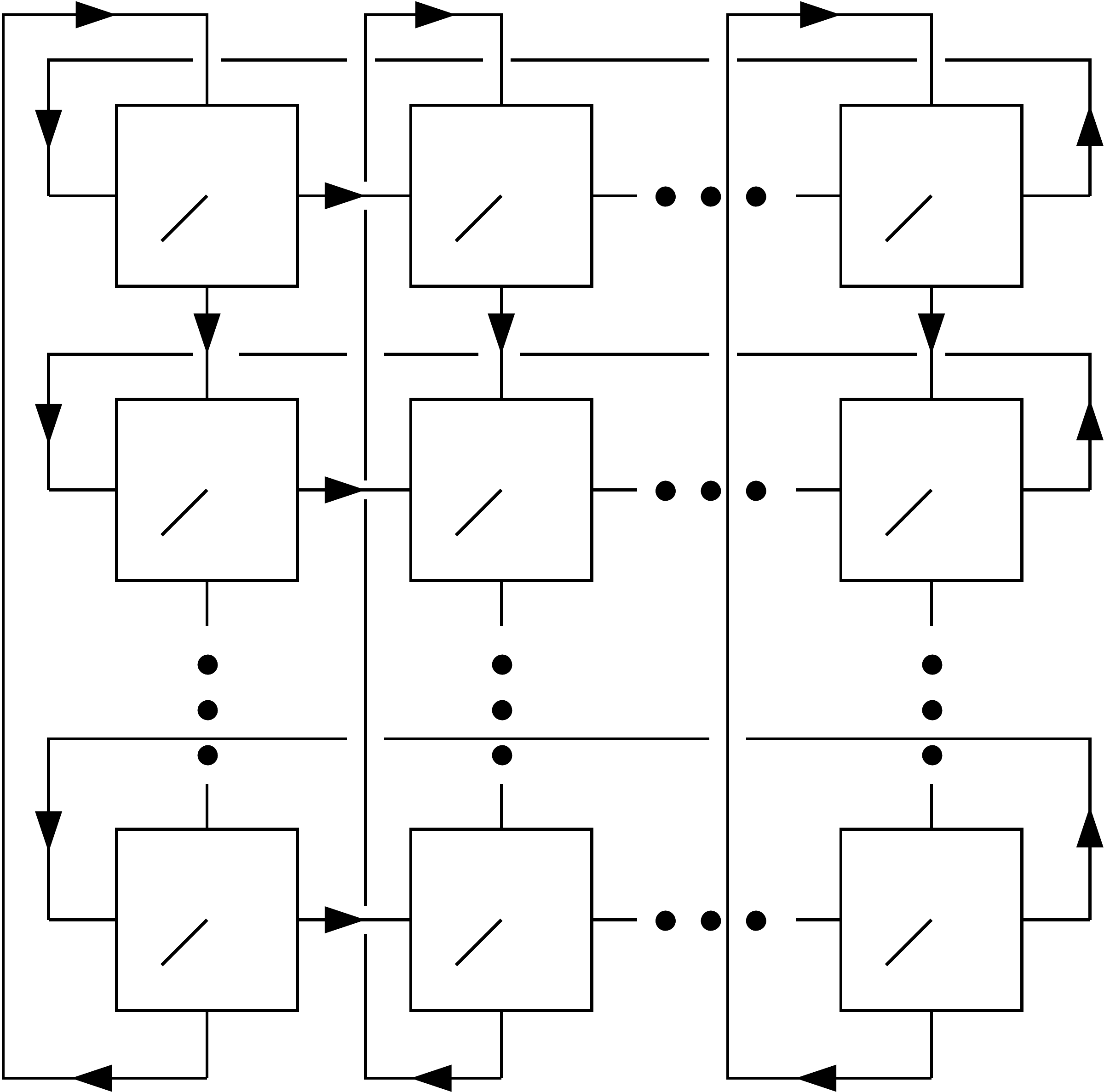}}
\end{equation}
with $A^{i}_{\alpha\beta\gamma\delta}$, $i=1,\dots,d$,
$\alpha,\beta,\gamma,\delta=1,\dots,D$ a five-index tensor, where the
index denoted $i$ corresponds to the physical system and ``goes out of the
paper''.
\end{definition}
Note that in the definition \eqref{eq:peps-def} we have implicitly defined
the direction of the arrows, and thus the assignments of kets and bras in
$A$:
\[
A \equiv \sum_{iltrb} A^i_{ltrb}\ket{i}_p\ket{l,t}_v\bra{r,b}_v\equiv
\raisebox{-1.5em}{\includegraphics[height=3.5em]{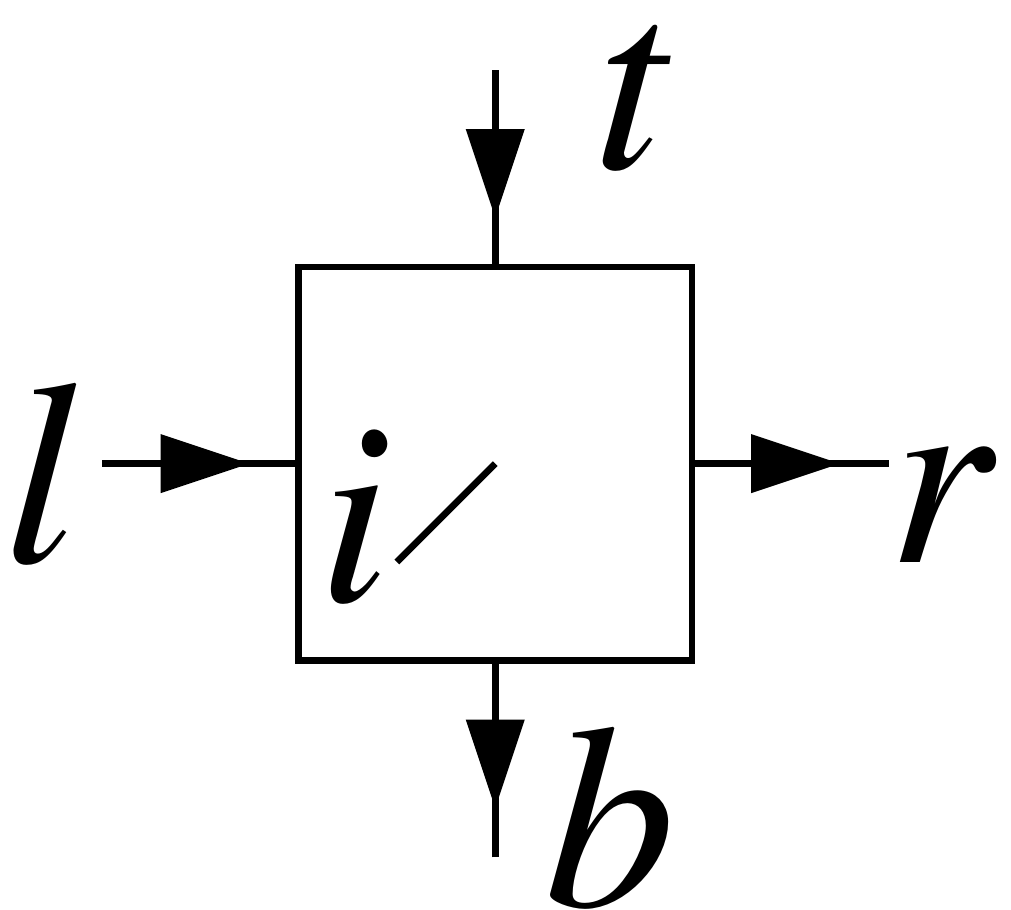}}\ .
\]

Let us now introduce two simplifications to the MPS tensors, which will
allow us to characterize all MPS based on their non-local properties.

First, we define the ``projector'' -- the map which creates the
physical system from the virtual layer.
\begin{definition}
For an MPS given by a tensor $A$,
\begin{equation}
\label{eq:P-of-B}
\mathcal P(A):=\sum_{i\alpha\beta}
A^i_{\alpha\beta}\ket{i}_p\bra{\alpha,\beta}_v\ ,
\end{equation}
is the map which maps the virtual system to the physical one. The
definition extends directly to PEPS.
\end{definition}
The importance of the map $\mathcal P(A)$ lies in the fact that it tells
us how virtual and the physical space of the MPS are connected. On the one
hand, it tells us which physical states can be created by acting on the
virtual degrees of freedom, and on the other hand, it tells us which virtual
configurations -- which in turn enforce physical configurations on the
surrounding sites -- can be realized by acting on the physical system.
Thus, studying the properties of $\mc P(A)$ will allow us to infer
properties of the MPS.

Let us now see which simplifications we can make in the analysis of $\mc
P(A)$, given that we are only interested in the non-local properties of
the MPS (PEPS).  Define $\mc R(A):=\mathrm{rg}\,\mc P(A)$, and
$\mc D(A):=(\mathrm{ker}\,\mc P(A))^\bot=\mathrm{span}(\{A^i\}_i)$.
Then, $\mc P(A)$ can be inverted on $\mc D(A)$ and $\mc R(A)$, respectively:
\[
\exists \mc P(A)^{-1}:\
    \mc P(A)^{-1}\mc P(A)=\openone\big\vert_{\mc D(A)}\ \wedge\
    \mc P(A)\mc P(A)^{-1}=\openone\big\vert_{\mc R(A)}\ .
\]
Note that $\mc R(A)$ characterizes the local support of the MPS:
Projecting sites $2,\dots,L$ on any basis state $i_2,\dots,i_L$ leaves us
with $\sum_{i_1}\tr[A^{i_1}X]\ket{i_1}$ (where $X=A^{i_2}\cdots A^{i_L}$),
and thus, the single-site reduced operator is supported on $\mc R(A)$.
Thus,  we can restrict the MPS to the subspace $\mc R(A)^{\otimes L}$,
i.e., each local system can be mapped to a $\mathrm{dim}{\mc
R}(A)$--dimensional system by a local isometry, i.e., without changing any
non-local properties.  This implies that we can w.l.o.g.\ restrict our
analysis to MPS with the following property.

\begin{observation}
    \label{obs:surjective}
Any MPS/PEPS $\ket{{\mathcal{M}}(A)}$ can be characterized (up to local isometries)
by a tensor $A$ for which $\mc{P}(A)$ has a right inverse (denoted by
$A^{-1}$ in the diagram) such that:
\begin{equation}
    \label{eq:inj:rightinv-always}
\raisebox{-1.5em}{\includegraphics[height=4em]{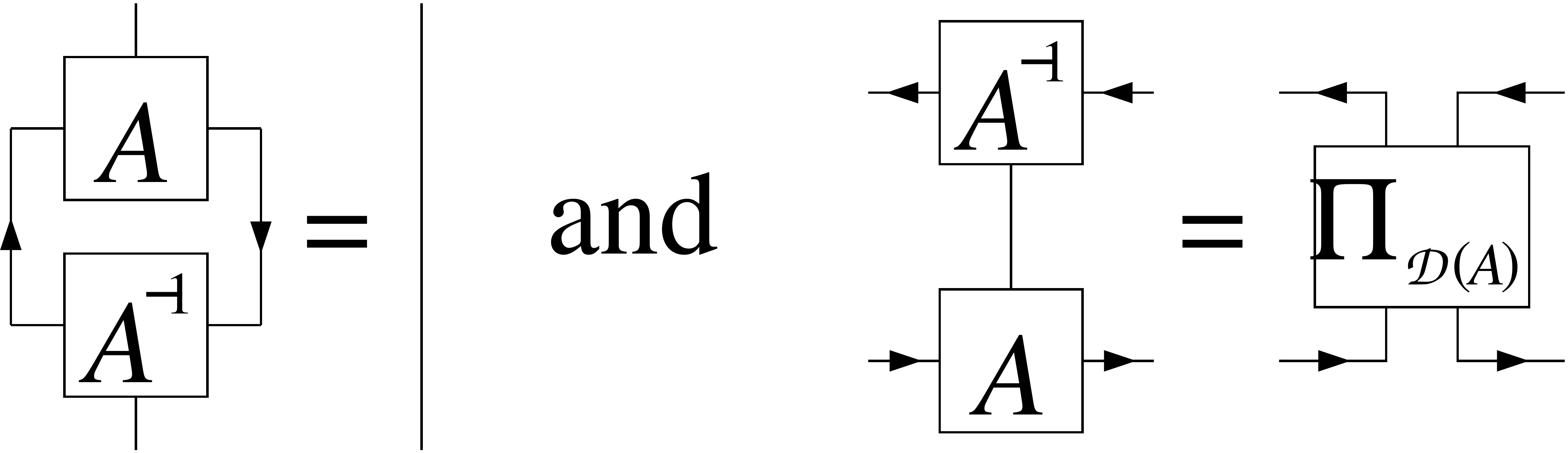}}\ ,
\end{equation}
where $\Pi_{\mc D(A)}\equiv\openone\big\vert_{\mc D(A)}$ is the 
orthogonal projector on
$D(A)$.  W.l.o.g., we will assume a description with this property from
now on.
\end{observation}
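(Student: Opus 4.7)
The observation is essentially a formalisation of the paragraph immediately preceding it, so the plan is to pin down its two claims precisely: first, that the state $\ket{\mathcal{M}(A)}$ is already supported in $\mathcal{R}(A)^{\otimes L}$, so that a local isometric truncation of the physical index is lossless; and second, that once this truncation has made $\mathcal{P}(A)$ surjective, the right inverse whose existence is asserted can be written down explicitly.

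For the first claim, I would project $\ket{\mathcal{M}(A)}$ onto an arbitrary computational basis state $\ket{i_2,\dots,i_L}$ on all but one site; this leaves the vector $\sum_{i_1}\mathrm{tr}[A^{i_1}X]\ket{i_1}$ on site $1$, with $X=A^{i_2}\cdots A^{i_L}$. Rewriting $\mathrm{tr}[A^{i_1}X]=\langle i_1|\mathcal{P}(A)|v\rangle$, where $v$ is the vectorisation of $X^T$, shows that this single-site vector lies in $\mathcal{R}(A)$. Expanding $\ket{\mathcal{M}(A)}$ as a superposition of such product projections then gives $\ket{\mathcal{M}(A)}\in\mathcal{R}(A)\otimes(\mathbb{C}^d)^{\otimes(L-1)}$, and repeating the argument site by site yields $\ket{\mathcal{M}(A)}\in\mathcal{R}(A)^{\otimes L}$. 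I would then pick any isometry $V:\mathbb{C}^{\dim\mathcal{R}(A)}\hookrightarrow\mathbb{C}^d$ with image $\mathcal{R}(A)$ and replace $A$ by the truncated tensor $\tilde A^{\tilde i}=\sum_i (V^\dagger)^{\tilde i}{}_{i}\,A^i$; since the state is supported on $\mathcal{R}(A)^{\otimes L}$, one has $V^{\otimes L}\ket{\mathcal{M}(\tilde A)}=\ket{\mathcal{M}(A)}$, so no non-local information is lost, and by construction $\mathcal{R}(\tilde A)$ is the full (truncated) physical space.

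With $\mathcal{P}(\tilde A)$ now surjective, the second step is pure linear algebra: its restriction to $\mathcal{D}(\tilde A)=(\ker\mathcal{P}(\tilde A))^\perp$ is a bijection onto the new physical space, and defining $\mathcal{P}(\tilde A)^{-1}$ as the inverse of this bijection, extended by zero on $\mathcal{R}(\tilde A)^\perp$ (which is trivial here), immediately yields $\mathcal{P}(\tilde A)\mathcal{P}(\tilde A)^{-1}=\openone$ and $\mathcal{P}(\tilde A)^{-1}\mathcal{P}(\tilde A)=\Pi_{\mathcal{D}(\tilde A)}$. The second identity is precisely the content of the diagrammatic relation~\eqref{eq:inj:rightinv-always}.

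The PEPS case proceeds identically: the local support on any given site is still $\mathcal{R}(A)$ by the same projection-onto-product-state argument, only with the 1D trace replaced by the 2D tensor contraction over the remaining bonds, so the local isometric truncation and the construction of the partial inverse go through verbatim. I do not expect a real obstacle -- the statement is more a definitional reduction than a theorem -- and the only step deserving some care is the first one: verifying that $\mathcal{R}(A)$ really captures the \emph{full} single-site support rather than just a portion of it, since this is precisely what licenses the lossless local truncation. This follows directly from the explicit form of $\ket{\mathcal{M}(A)}$ as a contraction of the $A$ tensors combined with the definition of $\mathcal{P}(A)$.
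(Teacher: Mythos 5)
Your proposal is correct and follows essentially the same route as the paper, which justifies the Observation in the paragraph preceding it: the single-site support is $\mathcal R(A)$ (seen by projecting all other sites onto basis states), so a local isometry truncates the physical space to $\mathcal R(A)$ losslessly, after which the Moore--Penrose-type inverse of $\mathcal P(A)$ restricted to $\mathcal D(A)=(\ker\mathcal P(A))^\perp$ gives the stated right inverse and the projector $\Pi_{\mathcal D(A)}$. Your only additions are explicit bookkeeping (the vectorisation identity and the verification $V^{\otimes L}\ket{\mathcal M(\tilde A)}=\ket{\mathcal M(A)}$), which are both correct.
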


The purpose of this paper is to characterize MPS and PEPS by looking at
the structure of the subspace $\mc D(A)$, and especially at its
symmetries. As it turns out, it is sufficient to consider two cases: First,
$\mc D(A)=\mathrm{span}(\{A^i\}_i)= {\bm L}(\mathbb C^D)$, the space of
all linear operators on $\mathbb C^D$, and second, the case where $\mc
D(A)=\mathrm{span}(\{A^i\}_i)$ is an arbitrary \cstar-algebra, i.e., a
linear space of matrices closed under multiplication and hermitian
conjugation.

In order to see why we only need to consider these two cases, take an MPS
with tensor $A$, and group its sites into super-blocks of $k$ sites each.
This results in a new MPS with tensor
\begin{equation}
    \label{eq:block-A-to-B}
\includegraphics[height=2.6em]{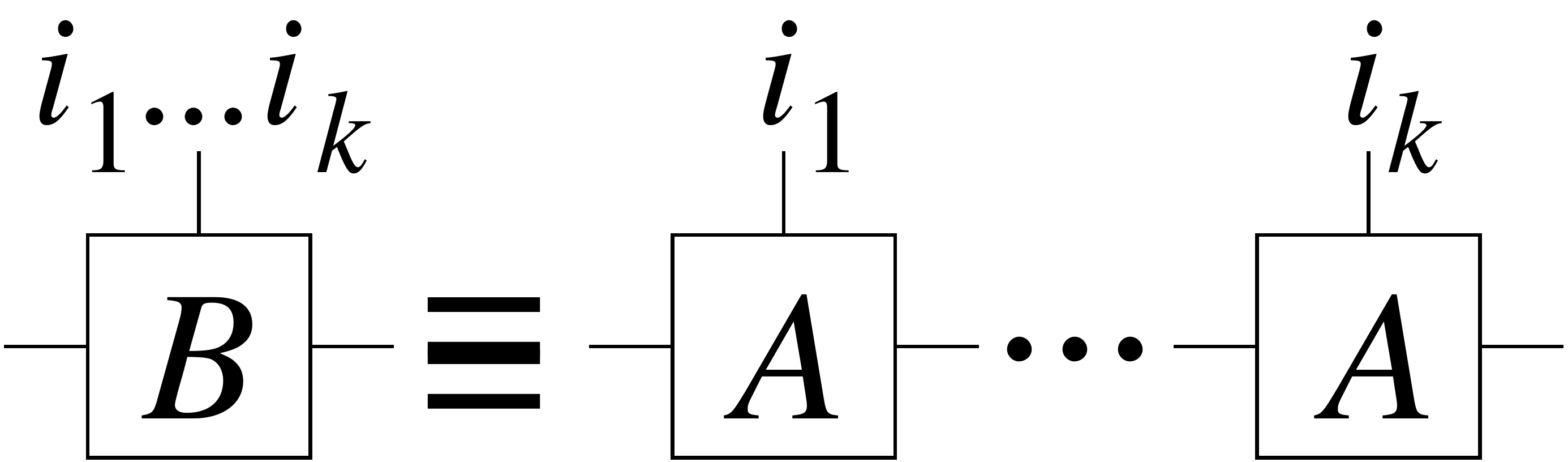}\ .
\end{equation}
The map $\mc P(B)$ for this new tensor goes from a $D^2$ to a
$d^k$--dimensional space:
 Thus, for $k>2\log_d D$, we have that $D^2<d^k$, which means
that the map $\mc P(B)$ will typically be injective, and thus
$D(A)={\bm L}(\mathbb C^D)$.\footnote{See~\cite{perez-garcia:mps-reps} for
a discussion of how to understand ``typical'' in this context.} These MPS
are known as \emph{injective}, and they appear as unique ground states of
their associated parent Hamiltonians.

The second case -- $D(A)$ being a \cstar-algebra -- arises e.g.\ if the
$A^i$ are block diagonal matrices, but within each block span the whole
space of linear operators. As shown in~\cite{perez-garcia:mps-reps}, this
does in fact cover the case of a general MPS, as any MPS can be brought
into this form.  While these states are no longer unique ground states of
local Hamiltonians, their parent Hamiltonians have a finite ground state
degeneracy, and the ground states are described by the individual blocks
of the $A_i$.

In the following, we will first review the situation of injective MPS, and
show how to prove that they are unique ground states of local
Hamiltonians.  We will then turn towards the case where $\mc D(A)$ is a
general \cstar-algebra, which we translate into a condition on the
symmetry of $A$ under unitaries. While in one dimension this reproduces
the results previously derived using the block structure of the matrices
$A^i$~\cite{fannes:FCS,perez-garcia:mps-reps}, it will enable us to
generalize these results to the two-dimensional scenario, where we will
find states exhibiting topological order and anyonic excitations, all of
which can be understand purely in terms of symmetries.

\section{MPS: the injective case\label{sec:MPS-injective}}

\subsection{Definition and basic properties}

We start by analyzing the \emph{injective} case in which
$D(A)=\mathrm{span}(\{A^i\}_i)={\bm L}(\mathbb C^D)$.  According to
Observation~\ref{obs:surjective}, we can choose $A$ such that $\mathcal
P(A)$ is invertible. This leads us to the following formal definition of
injective.

\begin{definition}[Injectivity]
    \label{def:injective-mps}
    A tensor $A$ is called \emph{injective} if $\mathcal P(A)$
    has a left inverse
    \begin{equation}
        \label{eq:inj-linv}
        \raisebox{-1.2em}{\includegraphics[height=3em]{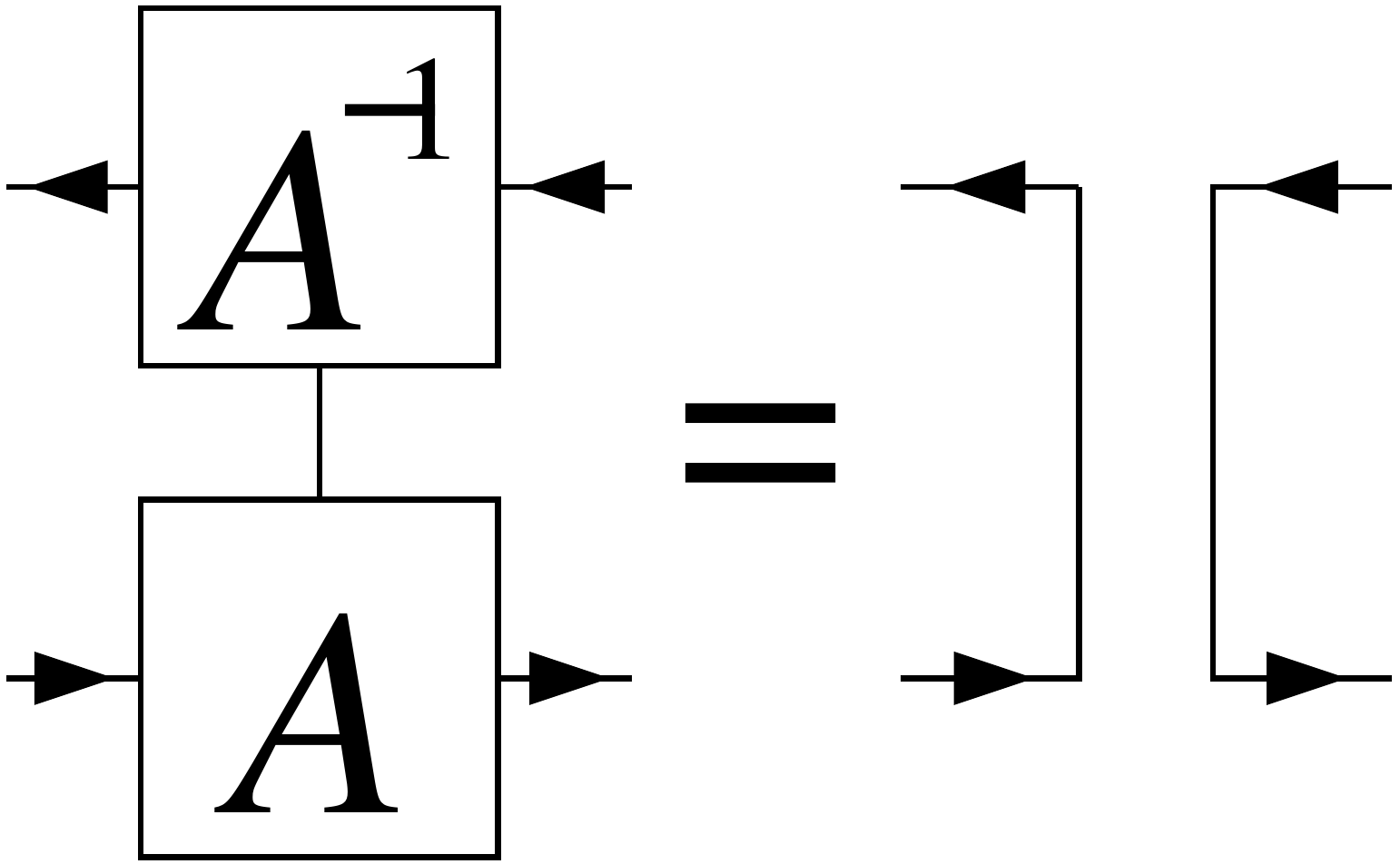}}\
            .
    \end{equation}
    (The corresponding MPS $\ket{{\mathcal{M}}(A)}$ will also be termed injective.)
    Intuitively, injectivity means that we can achieve any action on the
    virtual indices by acting on the physical spins.
\end{definition}

\begin{lemma}[Stability under concatenation]
    \label{lemma:inj-stable}
    The injectivity property \eqref{eq:inj-linv} is stable under
    concatenation of tensors: If $A$ and $B$ are injective, then the
    tensor obtained by concatenating $A$ and $B$ is also injective, since
    \[
    \includegraphics[height=3em]{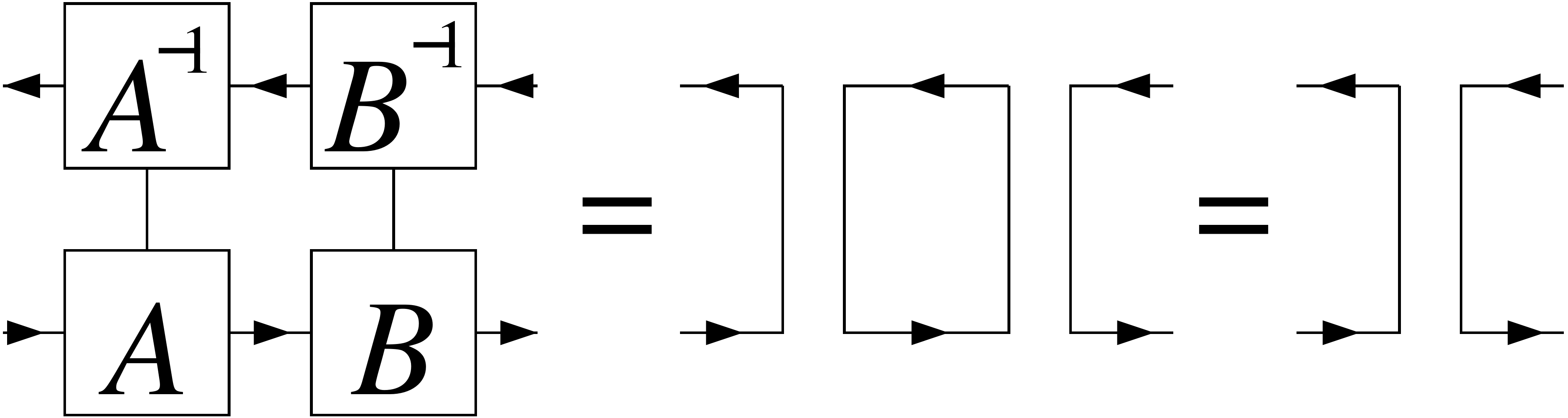}\quad\raisebox{1em}{.}
    \]
\end{lemma}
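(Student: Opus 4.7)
The plan is to exhibit an explicit left inverse of the concatenated tensor by composing the given left inverses $A^{-1}$ and $B^{-1}$ in parallel on the two physical legs. Concretely, if $A$ has virtual indices $\alpha,\beta$ and physical index $i$, and $B$ has virtual indices $\beta,\gamma$ and physical index $j$, then the concatenated tensor $C^{ij}_{\alpha\gamma} = \sum_\beta A^{i}_{\alpha\beta} B^{j}_{\beta\gamma}$ defines a map $\mc P(C)$ from the virtual pair $(\alpha,\gamma)$ into the two-site physical Hilbert space. My candidate left inverse is the map that applies $A^{-1}$ to the $A$-physical leg and $B^{-1}$ to the $B$-physical leg, contracting the resulting bond index to return an element indexed by $(\alpha,\gamma)$.

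To verify this works, I would compose $\mc P(C)$ with the candidate inverse and simplify using \eqref{eq:inj-linv} twice. Applying $A^{-1}$ to the $A$-physical leg turns the $A$-block into the identity from $\alpha$ to the bond; what remains is the tensor $B$ with a free line $\alpha$ passed through. Applying $B^{-1}$ to the $B$-physical leg then turns the remaining $B$-block into the identity from the bond to $\gamma$. Contracting the two identities along the bond yields the identity from $(\alpha,\gamma)$ to itself, which is precisely the left-inverse relation for the concatenated tensor.

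I do not expect any genuine obstacle: the argument is a two-step graphical manipulation that mirrors the picture in the statement, each step being an invocation of the defining property \eqref{eq:inj-linv}. The only bookkeeping to watch is the separation of the two physical legs (so that $A^{-1}$ and $B^{-1}$ act on distinct tensor factors), and the fact that the resulting left inverse indeed restricts to a map whose image sits in the appropriate subspace $\mc D(C)$; both are immediate from the definitions, and the same scheme generalizes to an arbitrary chain of injective tensors by induction.
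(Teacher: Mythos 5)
Your proposal is correct and is exactly the paper's argument: the displayed diagram in Lemma~\ref{lemma:inj-stable} is precisely the parallel application of $A^{-1}$ and $B^{-1}$ to the two physical legs with the inner bond contracted, reducing via \eqref{eq:inj-linv} to the identity on the outer virtual indices. The only detail you gloss over is that the two inner bonds close into a loop contributing a factor $\tr\,\openone=D$, which the paper explicitly absorbs into the omitted normalization.
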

Note that we generally omit normalization constants in the diagrams (the
contraction of the loop is $\tr\,\openone=D$).

\subsection{Parent Hamiltonians}

Let us now see how injective MPS give rise to \emph{parent Hamiltonians},
to which they are unique ground states. To this end, note that the
two-particle reduced operator of the MPS
$\ket{{\mathcal{M}}(A)}$
is given by
\[
\rho^{[2]}(A)=\sum_{i_3,\dots,i_L}\psi^{[2]}_{i_3,\dots,i_L}(A)
\]
with $\psi^{[2]}_{i_3,\dots,i_L}(A)$ the projector onto the
two-particle state obtained
by projecting sites $3,\dots,L$ to the basis state $i_3,\dots,i_L$,
\[
\ket{\psi^{[2]}_{i_3,\dots,i_L}(A)}=
\sum_{i_1,i_2}\tr[A^{i_1}A^{i_2}X^{i_3,\dots,i_L}]\ket{i_1,i_2}\
\]
with $X^{i_3,\dots,i_L}=A^{i_3}\cdots A^{i_L}\in D(A)$. Thus, $\rho^{[2]}(A)$ is
supported on the subspace
\begin{equation}
\mathcal S_2=\left\{ \sum_i\tr[A^iA^jX]\ket{i,j}
    \middle\vert X\in \lin{D}\right\} =
    \left\{
    \raisebox{-1.7em}{\,\includegraphics[height=4em]{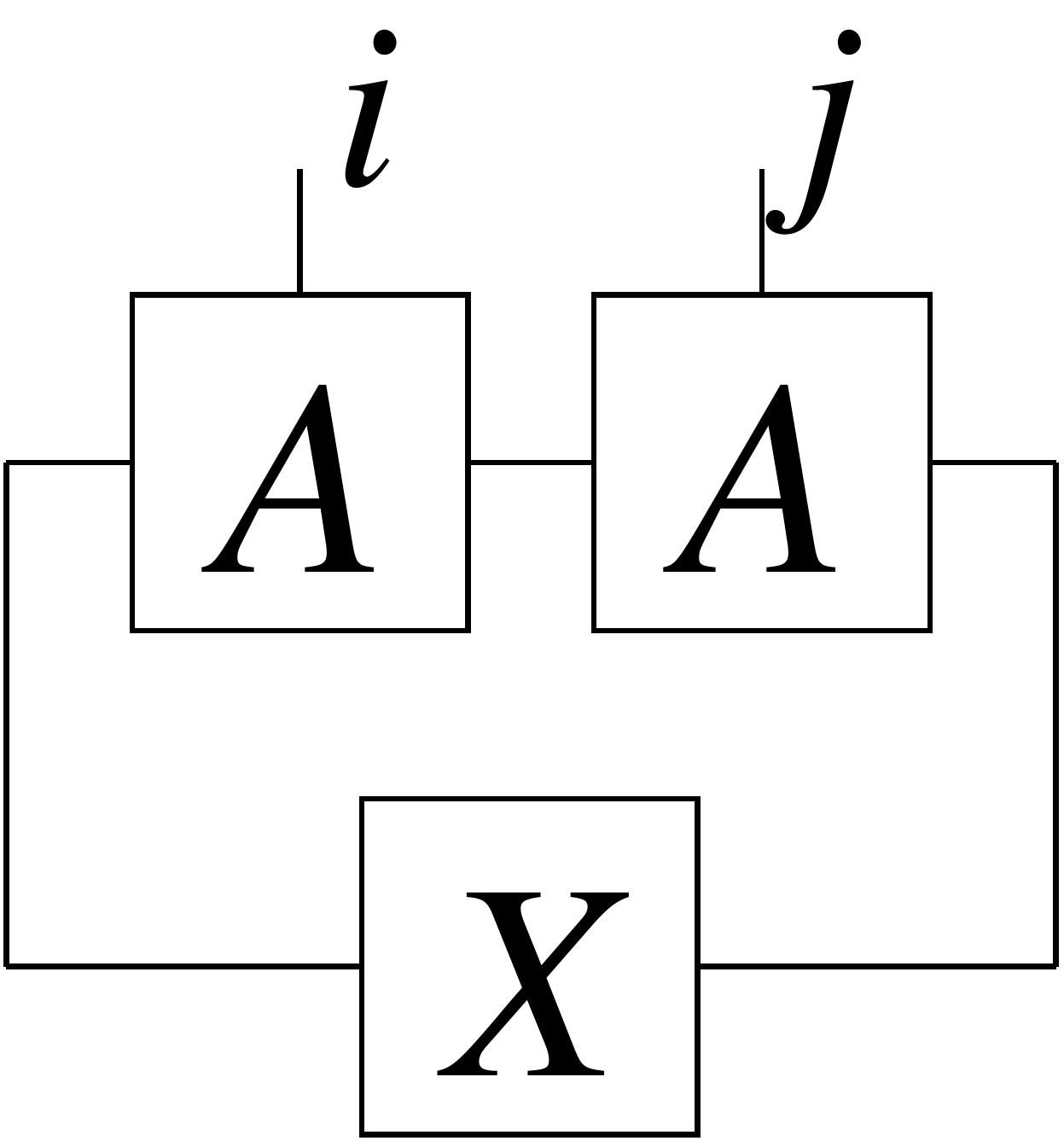}}\
    \,\middle\vert
    X\in \lin{D}
    \right\}\ .
\label{eq:trBBX}
\end{equation}
Moreover, the injectivity of $A$ implies that $X^{i_3,\dots,i_L}$ spans
the space of all $D\times D$ matrices, and thus, $\rho^{[2]}$ has actually
full rank on $\mathcal{S}_2$.  Analogously to \eqref{eq:trBBX}, one can define
a sequence of subspaces
\[
\mathcal S_k=\Big\{\sum_{i_1,\dots,i_k}
    \tr[A^{i_1}\cdots A^{i_k}X]\ket{i_1,\dots,i_k}
    \Big\vert X\in\lin D\Big\}
\]
which by the same arguments exactly support the $k$-body reduced operator
$\rho^{[k]}(A)$.

The idea for obtaining a parent Hamiltonian is now as follows: Define a
two-body Hamiltonian
which has $\mc S_2$ as its ground state subspace, and let
the parent Hamiltonian be the sum of these local terms. The proof consists of
two parts: First,
we show that for a chain of length $k$ with open boundaries, the
ground state subspace is $\mc S_k$ (i.e., optimal), and second, when
closing the boundaries, the only state remaining is $\ket{{\mathcal{M}}(A)}$.
This is formalized in the following two theorems.

\begin{theorem}[Intersection property]
\label{thm:inj:intersection}
Let $A$ and $B$ be injective tensors. Then,
\begin{equation}
    \left\{\raisebox{-1.7em}{\,\includegraphics[height=4em]{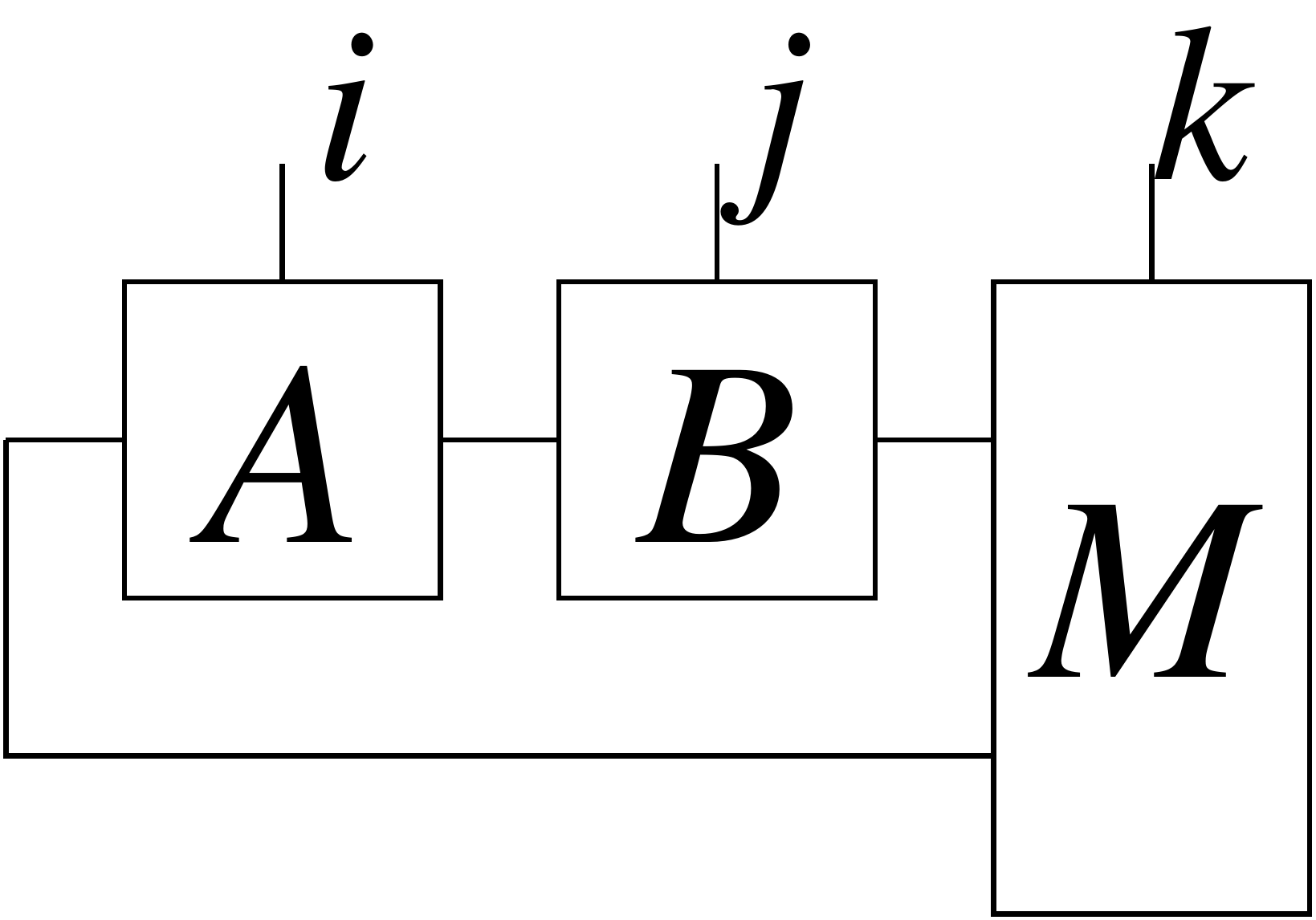}}\ \,\middle\vert \ M \right\}
    \cap
    \left\{\raisebox{-1.7em}{\,\includegraphics[height=4em]{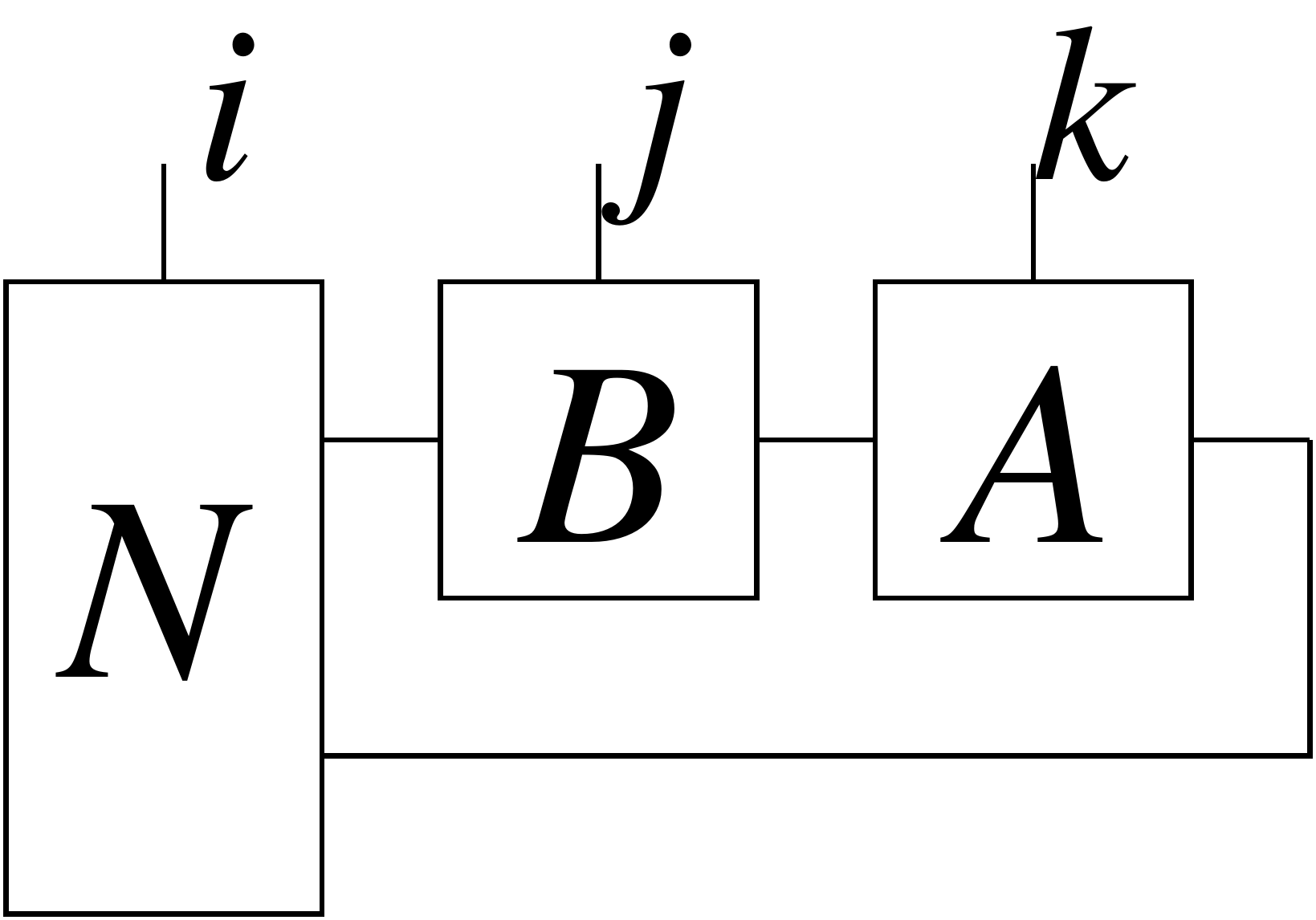}}\ \,\middle\vert \ N \right\}=
    \left\{\raisebox{-1.7em}{\,\includegraphics[height=4em]{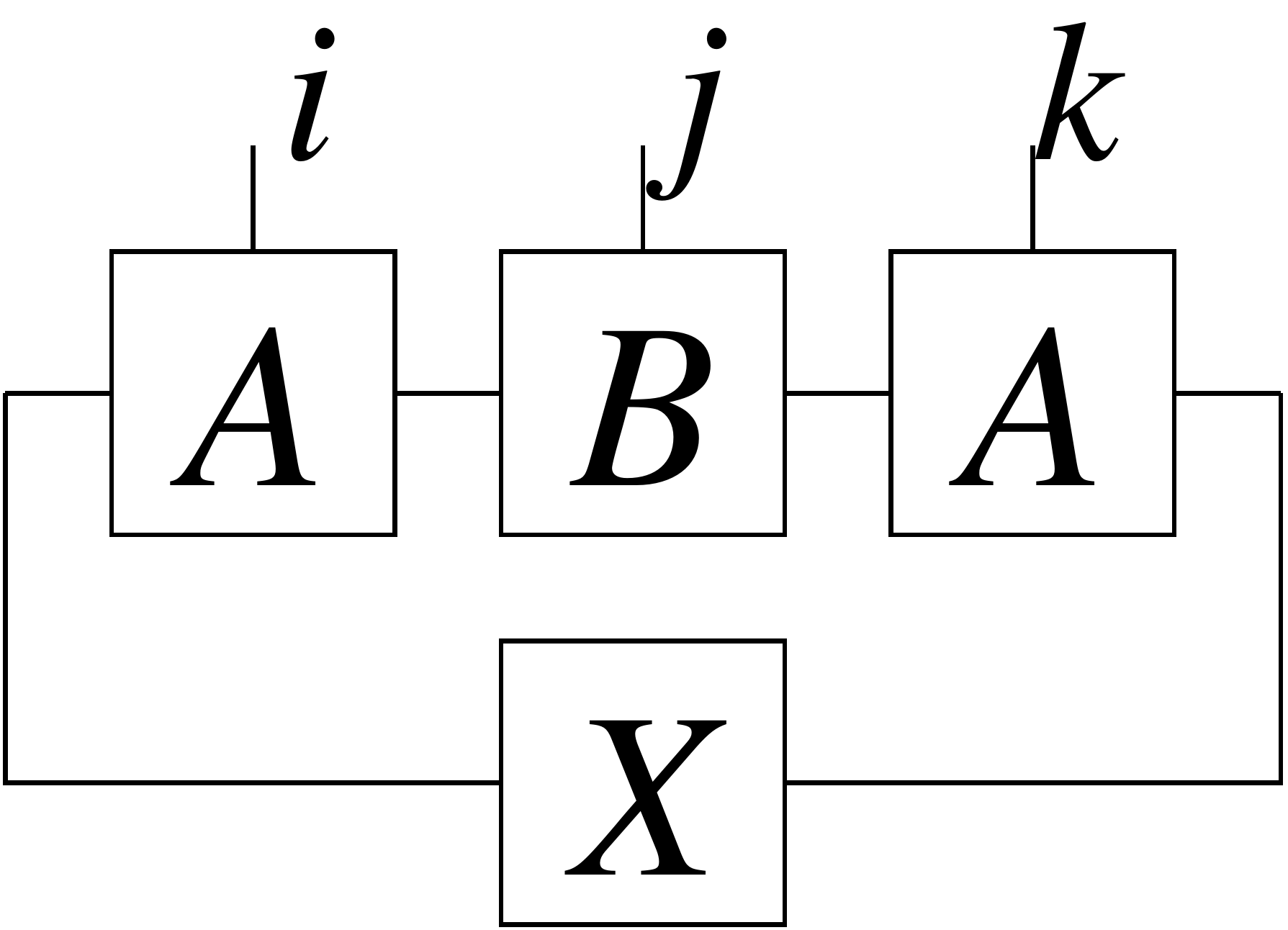}}\
    \,\middle\vert \ X \right\}\ ,
\label{eq:inj:intersection}
\end{equation}
with $M,N,X\in\lin D$.
\end{theorem}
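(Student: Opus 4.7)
The easy direction ($\supseteq$) is immediate by inspection: given $X$ on the right-hand side, choosing $M$ to absorb $X$ into the rightmost $B$-tensor, and analogously $N$ to absorb $X$ into the leftmost $A$-tensor, places the state into both sets on the left. The substance is in the forward inclusion, which I would prove by exploiting injectivity of $A$ and $B$ to strip the tensors away and reduce the claim to a purely linear-algebraic identity between boundary matrices.

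Fix $\ket{\psi}$ in the intersection, so $\ket{\psi}$ admits both an $M$-representation and an $N$-representation. The key diagrammatic tool is Eq.~(\ref{eq:inj-linv}): contracting the physical leg of $A$ against its left inverse $A^{-1}$ collapses that site to an identity wire on the virtual indices. My plan is to apply $A^{-1}$ (and analogously $B^{-1}$) at physical sites where the corresponding tensor appears simultaneously in both representations, thereby eliminating the ``common'' $A$- and $B$-tensors and leaving an equality of two smaller diagrams whose only remaining data are $M$ on one side and $N$ on the other, plus a residual $A$- or $B$-tensor that could not be cancelled. Equating the coefficients of the basis vectors $\ket{i_1,\ldots,i_k}$ of the reduced diagram yields a bilinear identity of the schematic form $M^{i}A^{j}=B^{i}N^{j}$ for all relevant $i,j$. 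Finally, injectivity of $A$ implies that $\{A^{j}\}_{j}$ spans $\bm{L}(\mathbb{C}^{D})$, so one can pick scalars $\{c_{j}\}$ with $\sum_{j}c_{j}A^{j}=\openone$; multiplying the bilinear relation by $c_{j}$ and summing over $j$ produces $M^{i}=B^{i}X$ with the single matrix $X:=\sum_{j}c_{j}N^{j}$. Substituting back into the $M$-form gives precisely the $X$-form on the right-hand side.

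The main point of care is the bookkeeping when applying the inverses: because the network is cyclic (closed by the trace), one must keep track of which virtual indices remain free after inversion and how the surviving $A^{j}$ or $B^{i}$ tensors connect to the free boundary matrices $M$ and $N$. Once the bilinear identity has been extracted cleanly, the spanning step is a one-liner from injectivity and there is no further subtlety. The argument can be made almost entirely diagrammatic, and Lemma~\ref{lemma:inj-stable} may be invoked to group multi-site blocks of $A$'s (resp.\ $B$'s) into a single injective super-tensor, further streamlining the index tracking.
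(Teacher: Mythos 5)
Your proposal is correct and follows essentially the same route as the paper's proof: the easy inclusion by absorbing $X$ into the adjacent tensor to define admissible $M$ and $N$, and the hard inclusion by applying the left inverses of $\mc P(A)$ and $\mc P(B)$ to strip tensors from the two representations of the same state and comparing what remains. The only (harmless) variation is in the final extraction: the paper inverts at \emph{both} tensor-carrying sites and reads the factorized form of the boundary tensor directly off the resulting diagram, whereas you invert only on the shared block, obtain a bilinear identity between the boundary tensors and the residual injective tensor, and resolve it with scalars $c_j$ satisfying $\sum_j c_j A^j=\openone$ --- an equivalent, coordinate-level use of the very same injectivity hypothesis (left-invertibility of $\mc P(A)$ is precisely the statement that $\{A^j\}_j$ spans $\lin{D}$).
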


\begin{proof}
It is clear that the right side is contained in
the intersection, since for any $X$ we can choose
\begin{equation}
\raisebox{-1.3em}{\includegraphics[height=3.2em]{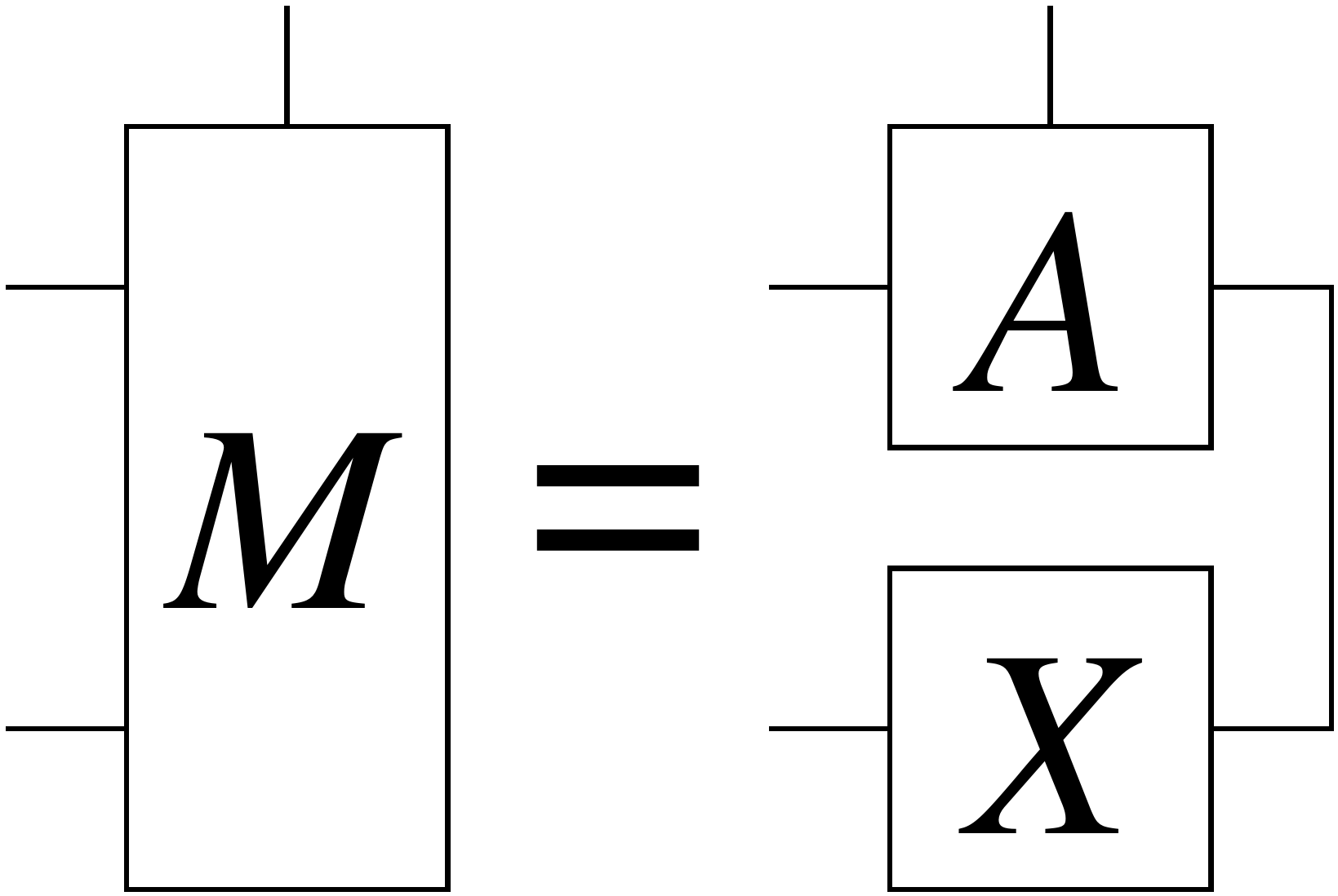}}\mbox{\quad and \quad}
\raisebox{-1.3em}{\includegraphics[height=3.2em]{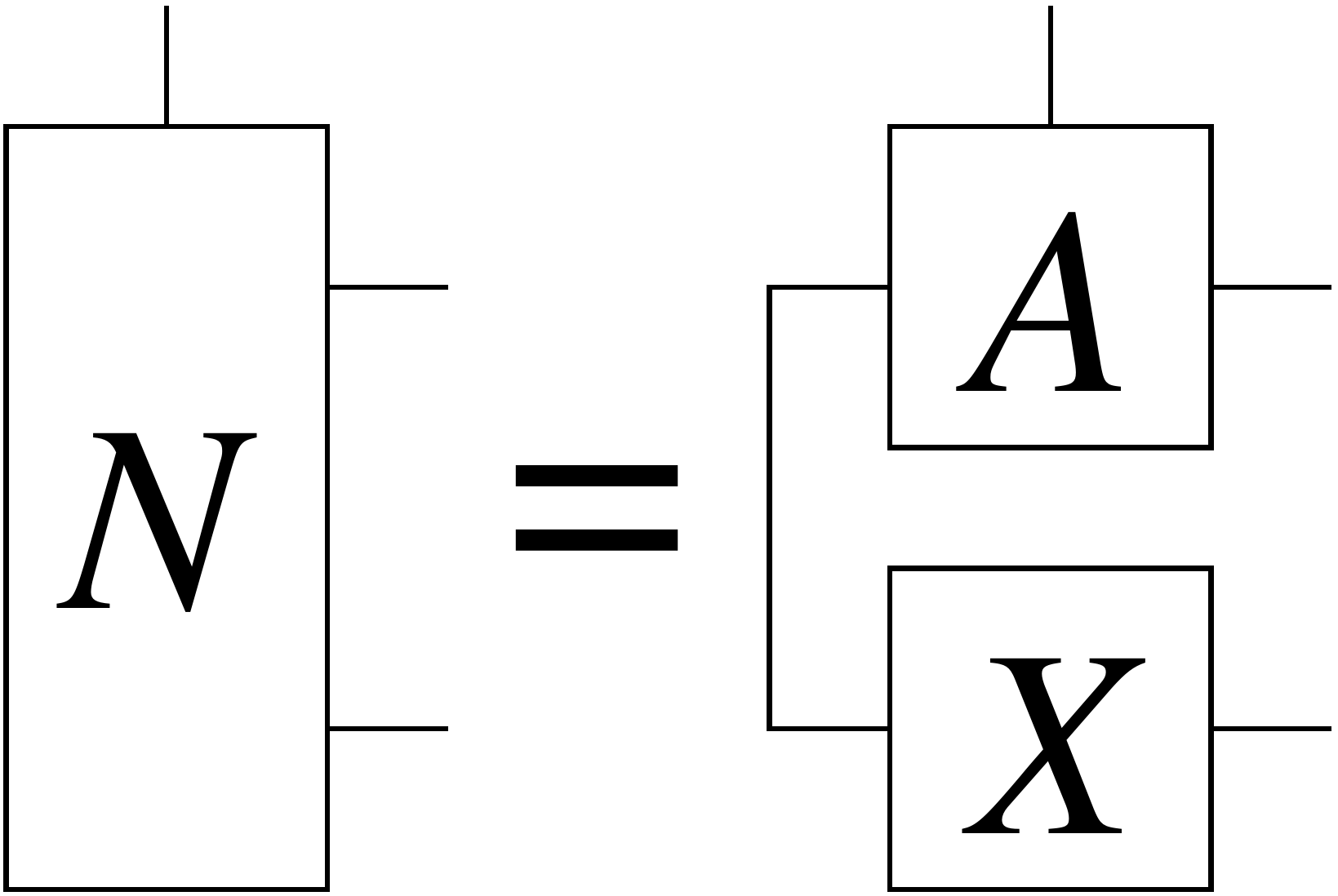}}\ .
\label{eq:inj:form-of-NandM}
\end{equation}
Conversely, for any $\ket{\psi}$ in the l.h.s.\ of
(\ref{eq:inj:intersection}), there exist $M$, $N$ such that
\[
    \ket{\psi}=
    \raisebox{-1.7em}{\,\includegraphics[height=4em]{figs1/trBBM}} =
    \raisebox{-1.7em}{\,\includegraphics[height=4em]{figs1/trNBB}}\ \ .
\]
Applying the left inverse of $\mc P(A)$ and $\mc P(B)$ to the $i$ and $j$
index, we find that
\[
\includegraphics[height=4.5em]{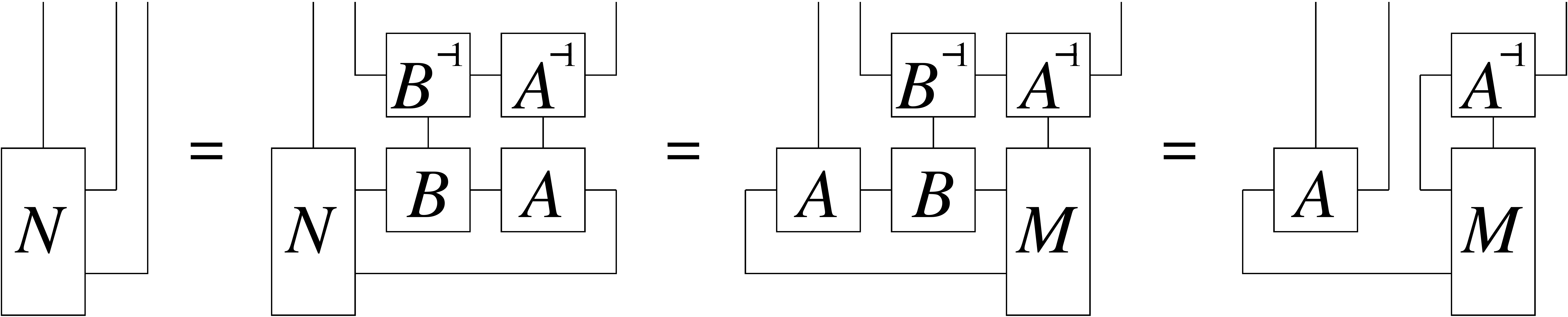}\ ,
\]
i.e., $N$ is of the form (\ref{eq:inj:form-of-NandM}), and thus
$\ket{\psi}$ is contained in the r.h.s. of (\ref{eq:inj:intersection}).
\end{proof}

\begin{theorem}[Closure property] For injective $A$ and $B$,
\label{thm:inj:closure}
\begin{equation}
    \left\{\raisebox{-1.7em}{\,\includegraphics[height=4em]{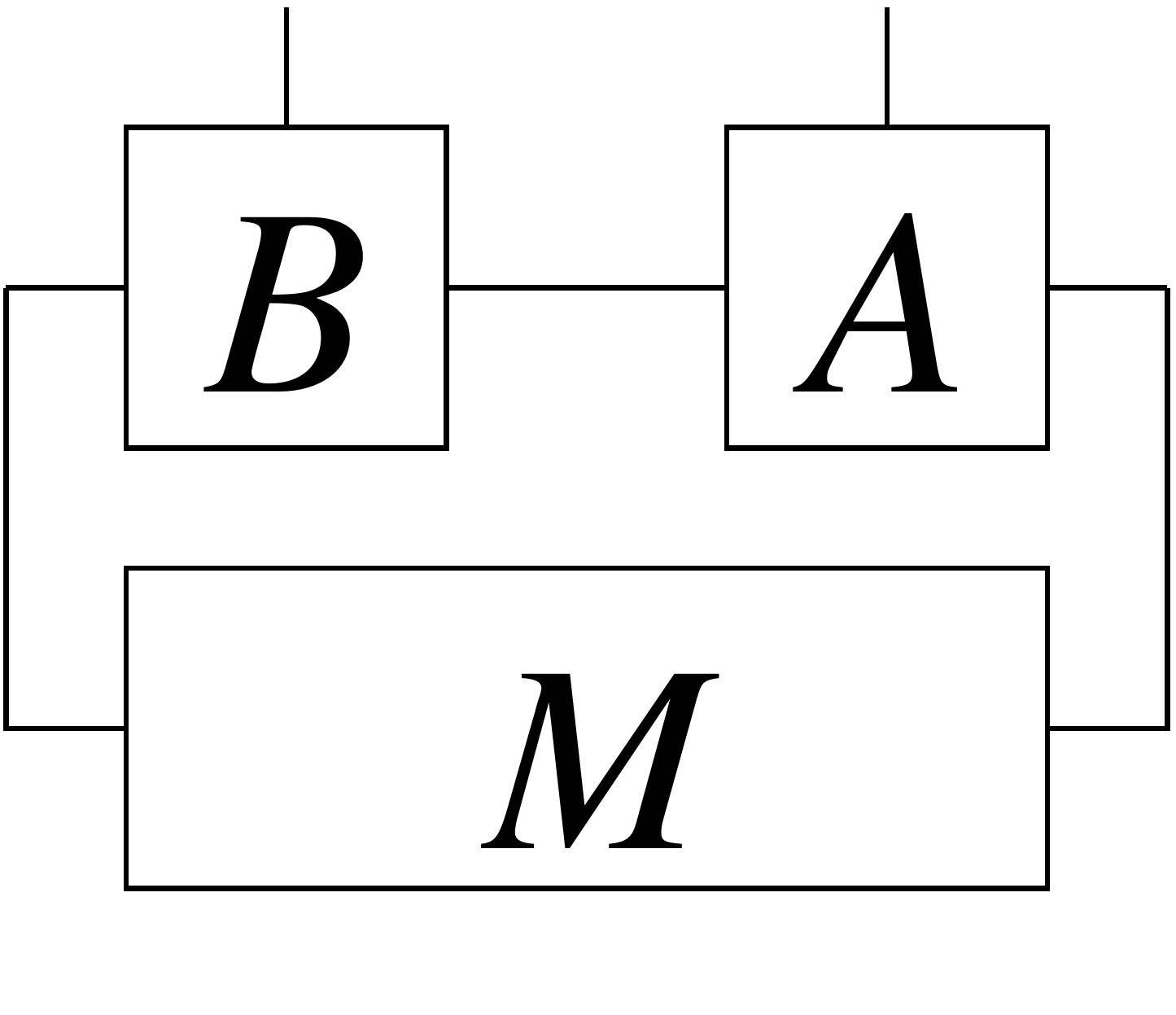}}\ \,\middle\vert \ M \right\}
    \cap
    \left\{\raisebox{-1.7em}{\,\includegraphics[height=4em]{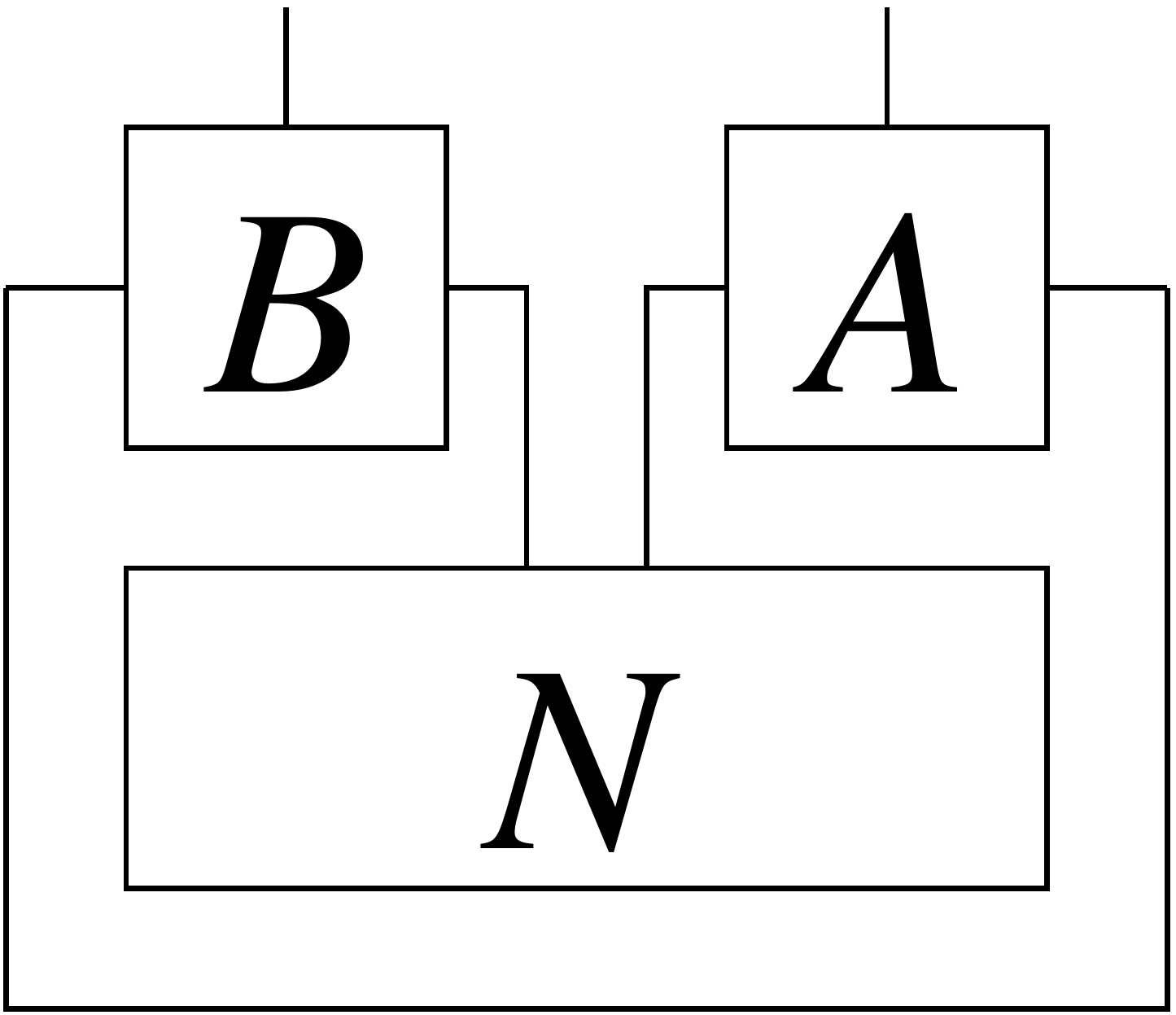}}\ \,\middle\vert \ N \right\}=
    \raisebox{-.9em}{\,\includegraphics[height=2.5em]{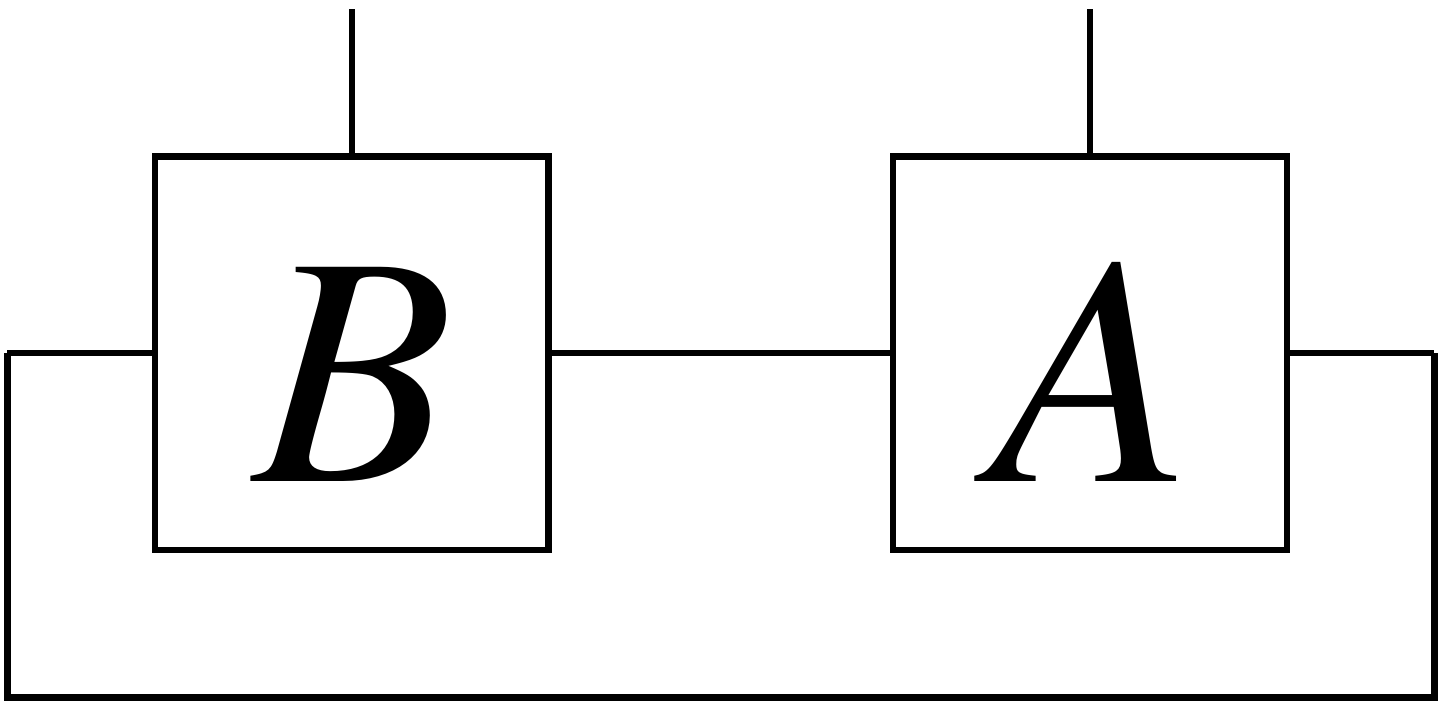}}
    \ .
\label{eq:inj:closure}
\end{equation}
\end{theorem}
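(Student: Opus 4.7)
The approach is to reduce the closed-loop intersection condition to a matrix equation on the virtual level via injectivity, and then solve it by elementary means.

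Take any $\ket\psi$ in the intersection on the left-hand side of~\eqref{eq:inj:closure}. By assumption there exist $M,N\in\lin D$ with
\[
\ket\psi\;=\;\sum_{ij}\tr[B^iC^jM]\ket{ij}\;=\;\sum_{ij}\tr[B^iNC^j]\ket{ij}.
\]
These are the two closed-loop expressions with $M$ and $N$ inserted on the two different bonds of the loop; equating coefficients yields the scalar identity $\tr[B^iC^jM]=\tr[B^iNC^j]$ for every pair $i,j$.

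Next I would invoke injectivity: by the very definition of the injective case, $\{B^i\}_i$ and $\{C^j\}_j$ each span $\lin D$. Taking linear combinations of the identity above, it is therefore equivalent to
\[
\tr[XYM]\;=\;\tr[XNY]\qquad\forall\,X,Y\in\lin D,
\]
which is the diagrammatic statement obtained by applying the left inverses from~\eqref{eq:inj-linv} to each of the two physical legs of the closed-loop diagrams.

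Finally, I extract $M$ and $N$ by elementary manipulations. Setting $X=\openone$ gives $\tr[(M-N)Y]=0$ for all $Y$, hence $M=N$. Feeding this back yields $\tr[X(YM-MY)]=0$ for all $X,Y$, so $M$ commutes with every matrix and must be a scalar, $M=\mu\openone$. Substituting back, $\ket\psi=\mu\sum_{ij}\tr[B^iC^j]\ket{ij}$, which is precisely the right-hand side of~\eqref{eq:inj:closure}; the reverse inclusion is trivial (take $M=N=\openone$).

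The only real obstacle is a notational one: keeping straight that the two insertion positions are genuinely distinct on the closed loop, so that injectivity produces two \emph{independent} free virtual arguments $X$ and $Y$ rather than a single one — a single argument would not suffice to force $M\propto\openone$. Once the problem has been translated to the matrix identity $\tr[XYM]=\tr[XNY]$, everything else is a one-line linear-algebra computation.
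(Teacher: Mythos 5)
Your proof is correct and is essentially the paper's argument in component form: invoking that $\{B^i\}$ and $\{C^j\}$ span $\lin{D}$ is exactly the act of applying the left inverses of $\mathcal P(B)$ and $\mathcal P(C)$ to the two physical legs, which is how the paper frees up the two independent virtual arguments $X$ and $Y$. The only difference is cosmetic — you extract $M=N\propto\openone$ in two short linear-algebra steps rather than reading it off the diagram.
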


\begin{proof}
As in the proof of Theorem~\ref{thm:inj:intersection}, the r.h.s.\ is
trivially contained in the intersection by choosing $N=M=\openone$.
Conversely, by taking an arbitrary element in the intersection and applying
the left inverses of $\mathcal P(A)$ and $\mathcal P(B)$, we find that
\[
\includegraphics[height=5.5em]{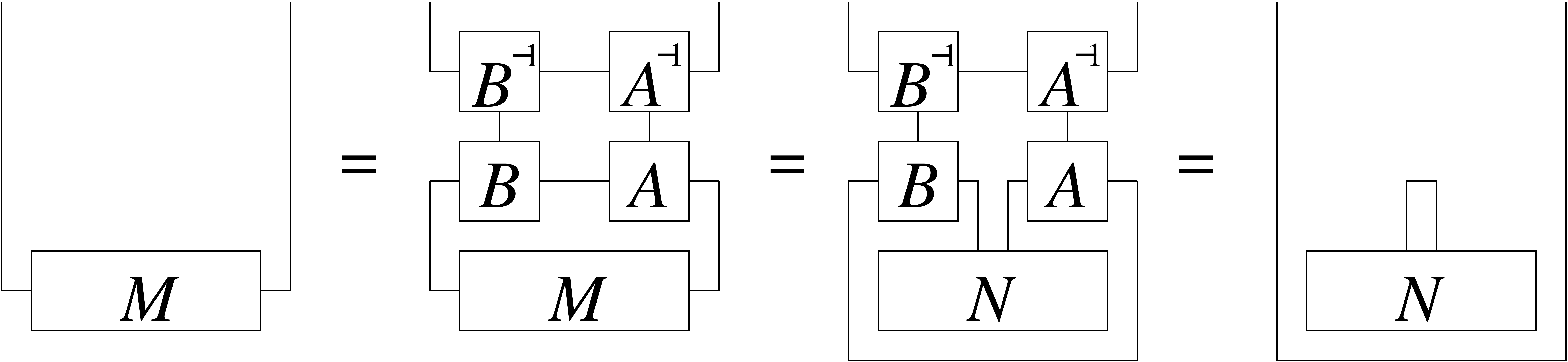}
\]
which proves (\ref{eq:inj:closure}).
\end{proof}

Let us now put Theorems~\ref{thm:inj:intersection}
and~\ref{thm:inj:closure} together to show that the MPS $\ket{{\mathcal{M}}(A)}$
arises as the unique frustration free ground state of a local Hamiltonian.

\begin{theorem}[Parent Hamiltonians]
    \label{thm:inj:parent-ham}
    Let $A$ be injective, and $\mc S_2$ as in Eq.~\eqref{eq:trBBX}.
    Define
\[
h_i=\openone_d^{\otimes(i-1)}\otimes (1-\Pi_{\mathcal S_2})
    \otimes \openone_d^{L-i-2}
\]
as the orthogonal projector on the subspace orthogonal to $\mathcal S_2$
on sites $i$ and $i+1$ (modulo $L$). Then,
\[
H_\mathrm{par}=\sum_{i=1}^L h_i
\]
has $\ket{{\mathcal{M}}(A)}$ as its unique and frustration free ground state.
\end{theorem}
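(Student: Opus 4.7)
The plan is to establish two things: that $\ket{\mathcal{M}(A)}$ lies in the zero-energy subspace (frustration-freeness plus membership), and that this subspace is one-dimensional. Both parts are by now small steps on top of Theorems~\ref{thm:inj:intersection} and~\ref{thm:inj:closure}.

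\medskip

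For frustration-freeness, I would just recall the computation immediately preceding Eq.~(\ref{eq:trBBX}): the two-site reduced operator of $\ket{\mathcal{M}(A)}$ is supported on $\mathcal{S}_2$, so by translation invariance $h_i\ket{\mathcal{M}(A)} = 0$ for every $i$. Since $h_i\geq 0$, this says $\ket{\mathcal{M}(A)}$ is a frustration-free ground state and $E_0=0$.

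\medskip

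For uniqueness I would proceed in two steps. \emph{Step 1 (open chain).} By induction on $k$, the common kernel of $h_1,\ldots,h_{k-1}$ on an open chain of length $k$ is exactly $\mathcal{S}_k$. The base case $k=2$ is the definition. For the inductive step I would block the first $k-1$ sites into a single tensor, which is injective by Lemma~\ref{lemma:inj-stable}, and then apply Theorem~\ref{thm:inj:intersection}: the induction hypothesis says the state has ``injective-block form with a free boundary matrix on the right,'' while $\ker h_{k-1}$ says the last two sites are in $\mathcal{S}_2$ (again a free-matrix form); the Intersection Theorem collapses these two independent free-matrix descriptions into the single free-matrix form that \emph{defines} $\mathcal{S}_k$.

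\medskip

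\emph{Step 2 (closing the ring).} Applied to the periodic chain, Step~1 (rotated by translation invariance) shows that the ground space of $H_\mathrm{par}-h_L$ is
\[
\Big\{\sum_{i_1,\ldots,i_L}\tr[A^{i_1}\cdots A^{i_L}\,X]\,\ket{i_1,\ldots,i_L}\ \Big\vert\ X\in\lin{D}\Big\},
\]
i.e.\ the loop with a free matrix inserted at the seam between sites $L$ and $1$; by the same argument applied to a different omitted term (say $h_1$) one gets the loop with a free matrix inserted at a \emph{different} seam. I would then invoke Theorem~\ref{thm:inj:closure}, taking the two ``injective blocks'' to be $A^{i_1}$ on one side and the concatenation $A^{i_2}\cdots A^{i_L}$ on the other (injective by Lemma~\ref{lemma:inj-stable}), to conclude that the intersection of these two loop spaces contains only the loop with no insertion, i.e.\ $\mathrm{span}\{\ket{\mathcal{M}(A)}\}$. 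Since the ground space of $H_\mathrm{par}$ lies inside both, uniqueness follows.

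\medskip

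The main obstacle I foresee is not conceptual but bookkeeping: one must verify that ``reduced density on sites $1,\ldots,k$ supported on $\mathcal{S}_k$'' indeed lifts, for a \emph{global} state on the chain, to the factorized form ``injective $k$-block $\times$ free matrix-valued tail,'' which is the precise shape required by the Intersection Theorem. Injectivity and Lemma~\ref{lemma:inj-stable} are exactly what makes this lift work, and once it is in place the two theorems slot into the induction and closing steps without further calculation.
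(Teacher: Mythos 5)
Your proposal is correct and follows essentially the same route as the paper: induction via the Intersection Property to identify the open-chain kernel with $\mathcal S_k$, followed by the Closure Property applied to two open chains cut at different bonds (the paper packages this as $H_\mathrm{left}=h_1+\tfrac12h_2+\dots+\tfrac12h_{L-1}$ and $H_\mathrm{right}=\tfrac12h_2+\dots+\tfrac12h_{L-1}+h_L$, which is just a weighted version of your ``omit $h_L$'' and ``omit $h_1$'' intersection). No gaps.
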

\begin{proof}
For $k<L$, define $\bm j[k-1]:=(i_2,\dots,i_{k-1})$ and
$B^{\bm j[k-1]}:=
A^{i_2}\cdots A^{i_{k-1}}$. Rewriting
\[
\mc S_{k-1}\otimes\mathbb C^d=
\Big\{\sum_{i_1,\bm j[k-1],i_k}\hspace{-1em}
    \tr[A^{i_1} B^{\bm j[k-1]}M^{i_{k}}]
    \ket{i_1,\bm j[k-1],i_{k}} \Big\vert
    M\in \mathbb{C}^d\otimes\lin D \Big\}
\]
(and similarly for $\mathbb C^d\otimes \mc S_{k-1}$ and $S_{k}$),
Theorem~\ref{thm:inj:intersection} implies that
$
\mc S_{k-1}\otimes\mathbb C^d\cap\mathbb C^d\otimes\mc S_{k-1}=\mc S_{k}
$,
and thus by induction
\begin{equation}
\label{eq:inj:S-L-as-intersect}
\mathcal S_L = \mc S_2\otimes (\mathbb C^d)^{\otimes (k-2)} \cap
    \mathbb C^d \otimes \mc S_2\otimes (\mathbb C^d)^{\otimes (k-3)} \cap  \
    \cdots \ \cap (\mathbb C^d)^{\otimes (k-2)}\otimes \mathcal S_2\ ,
\end{equation}
i.e., the subspace supporting the length $L$ chain is given by the
intersection of the two-body supports $\mc S_2$.

Now let $H_\mathrm{left}=h_1+\tfrac12h_2+\dots+\tfrac12h_{L-1}$, and
$H_\mathrm{right}=\tfrac12h_2+\dots+\tfrac12h_{L-1}+h_L$. As the $h_i$ are
projectors, the null space of $H_\mathrm{left}$ is given by the intersection
\eqref{eq:inj:S-L-as-intersect}, i.e., by
\[
\mc S_\mathrm{left}=\mc S_L =
    \Big\{\sum_{i_1,\bm j[k]}\tr[A^{i_1} B^{\bm j[k]}M]
    \ket{i_1,\bm j[k]} \Big\vert
    M\in \lin D \Big\}\ .
\]
Correspondingly, the null space of $H_\mathrm{right}$ is
\[
\mc S_\mathrm{right} =
    \Big\{\sum_{i_1,\bm j[k]}\tr[A^{i_1} N B^{\bm j[k]}]
    \ket{i_1,\bm j[k]} \Big\vert
    N\in \lin D \Big\}\ ,
\]
and thus by the closure property, Theorem~\ref{thm:inj:closure},
$\ket{{\mathcal{M}}(A)}=\mc S_\mathrm{left}\cap \mc S_\mathrm{right}$
is the unique zero-energy (i.e., frustration free) ground
state of $H_\mathrm{par}=H_\mathrm{left}+H_\mathrm{right}$.
\end{proof}

Note that all these results hold equally for injective
PEPS~\cite{perez-garcia:parent-ham-2d}. We will
discuss the case of PEPS in detail for the non-injective scenario, which
includes the injective one as a special case.  Note also that the results
of this section do not rely on the translational invariance of the PEPS --
the central Theorems~\ref{thm:inj:intersection} and~\ref{thm:inj:closure}
hold for any pair $A$, $B$ of injective tensors.

\section{MPS: the $G$-injective case \label{sec:mps-noninj} }

\subsection{Definition and basic properties}

In the following, we will consider the case where $A$ is not injective,
but where nevertheless
\begin{equation}
\label{eq:noninj:alg-A}
\mathcal D \equiv \mathcal D(A) =
    \Big\{\sum_{i} \lambda_i A^i \Big| \lambda_i\in\mathbb C \Big\}
\end{equation}
forms a \cstar--algebra.

In the following, we will characterize the structure of $\mathcal D$,
and thus of $A$, in terms of symmetries.
Using Observation~\ref{obs:surjective} -- that $\mathcal P(A)$ has a left
inverse on $\mathcal D$ -- together with an appropriate characterization
of $\mathcal D$ will allow us to base proofs on this left inverse, similar
to the injective case.

\begin{theorem}
\label{thm:noninj:mcA-via-Ug}
For any \cstar-algebra $\mathcal D\subset \lin D$
there exists a finite group $G$ and a unitary
representation $g\mapsto U_g$ such that $\mathcal D$ is the commutant of
$U_g$, i.e.,
\begin{equation}
\label{eq:noninj:mcA-via-Ug}
\mathcal D = \big\{ X\in\lin D
    \big\vert [X,U_g]=0\ \forall\,g\in G\big\} \ .
\end{equation}
\end{theorem}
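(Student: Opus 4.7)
The plan is to reduce everything to the Artin--Wedderburn structure theorem for finite-dimensional unital C*-algebras. Note first that $\mathcal D$ must contain $\openone_D$, since any commutant does; in the application to MPS this is automatic by Observation~\ref{obs:surjective} combined with the block-canonical form. The structure theorem then yields a unitary $V$ on $\mathbb C^D$ and multiplicities $\{d_k,m_k\}_{k=1}^K$ such that in the induced decomposition $\mathbb C^D\simeq\bigoplus_k\mathbb C^{d_k}\otimes\mathbb C^{m_k}$,
\[
V^\dagger \mathcal D\, V \;=\; \bigoplus_k \lin{d_k}\otimes \openone_{m_k}, \qquad V^\dagger \mathcal D'\, V \;=\; \bigoplus_k \openone_{d_k}\otimes \lin{m_k}.
\]
By the finite-dimensional double-commutant theorem $\mathcal D''=\mathcal D$, so it suffices to produce a finite group $G$ and a unitary representation $g\mapsto U_g$ whose complex linear span equals $\mathcal D'$; then $\{U_g\}'=(\mathrm{span}\{U_g\})'=\mathcal D''=\mathcal D$.

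For each block $k$ I would take the Heisenberg--Weyl (Pauli) group $H_{m_k}$ generated by the clock $Z_{m_k}=\mathrm{diag}(1,\omega,\ldots,\omega^{m_k-1})$, the shift $X_{m_k}$, and the central phase $\omega=e^{2\pi i/m_k}$; this is a finite group of order $m_k^3$, and the $m_k^2$ monomials $X_{m_k}^a Z_{m_k}^b$ form a basis of $\lin{m_k}$, so the image of $H_{m_k}$ under its defining representation $V^{(k)}$ linearly spans $\lin{m_k}$. The tempting direct product $H_{m_1}\times\cdots\times H_{m_K}$ represented by $\bigoplus_k \openone_{d_k}\otimes V^{(k)}_{h_k}$ is \emph{not} enough: linear combinations of such block-diagonal operators have correlated coefficients across blocks and cannot realize an element supported in a single block. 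I would remedy this by adjoining one further generator, the block-separating unitary $P=\sum_k \omega_K^{\,k}\,P_k$, where $P_k$ is the projector onto the $k$-th block and $\omega_K=e^{2\pi i/K}$; $P$ has order $K$ and commutes with all of the above. The full group is $G=\mathbb Z_K \times H_{m_1}\times\cdots\times H_{m_K}$ (finite), with representation
\[
U_{(n,h_1,\ldots,h_K)} \;=\; V\Big(P^n\cdot\bigoplus_k \openone_{d_k}\otimes V^{(k)}_{h_k}\Big)V^\dagger.
\]

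To verify $\mathrm{span}\{U_g\}=\mathcal D'$, I would use that block $k$ of this operator carries the phase $\omega_K^{nk}$, so a discrete Fourier transform over the label $n$ extracts each block's contribution independently, while within each block the $\{V^{(k)}_{h_k}\}_{h_k\in H_{m_k}}$ already span $\lin{m_k}$. The main obstacle I anticipate is exactly this cross-block coupling: without the extra $\mathbb Z_K$ generator, the commutant of $\{U_g\}$ would be strictly larger than $\mathcal D$ (in particular it would acquire block-permuting operators whenever two summands share the same $(d_k,m_k)$). Once this is handled, the conclusion follows from the structure theorem, the double-commutant theorem, and the standard fact that the Weyl pair spans the full matrix algebra.
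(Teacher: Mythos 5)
Your proof is correct, and while it starts from the same Wedderburn decomposition as the paper, the logical skeleton is genuinely different. The paper's proof takes a finite group with pairwise-inequivalent irreducible representations $D^i$ of the prescribed dimensions, sets $U_g\cong\bigoplus_i D^i(g)\otimes\openone_{m_i}$, and reads off \eqref{eq:noninj:mcA-via-Ug} directly from Schur's lemma, relegating the existence of such a group to a footnote. You instead arrange for the \emph{linear span} of $\{U_g\}$ to equal the commutant $\mathcal D'$ and then invoke the finite-dimensional bicommutant theorem, using the Weyl pair as an explicit finite group whose defining representation spans $\lin{m}$. The two constructions in fact converge: on the $k$-th block your representation consists of $d_k$ copies of the irreducible representation $\chi_k\otimes V^{(k)}$ of dimension $m_k$, and these are pairwise inequivalent thanks to the distinct $\mathbb Z_K$ characters, so your $U_g$ is exactly of the paper's form (with the roles of dimension and multiplicity swapped relative to the paper's labelling) and Schur's lemma would have closed the argument without the bicommutant detour. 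What your route buys is explicitness (the group has order $K\prod_k m_k^3$ and is written down concretely), and it is more robust than the paper's footnote, whose recipe can yield equivalent irreps if two one-dimensional blocks are both assigned the trivial character; your opening remark that $\mathcal D$ must be unital for the statement to hold as written is also a point the paper leaves implicit. One small correction: the cross-block coupling you identify as the main obstacle is in fact already absent without the extra $\mathbb Z_K$ generator, because the central phases $\omega\openone\in H_{m_k}$ let you rescale the marginal coefficient sums independently in each block (normalize every marginal to total sum $1$ and take the product distribution); adjoining $P$ is nonetheless harmless and turns the spanning argument into a clean discrete Fourier transform.
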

\begin{proof}
Any \cstar-algebra $\mc{D}$ can be decomposed as
\begin{equation}
    \label{eq:stdform:cstar-dec}
\mathcal D \cong \bigoplus_{i=1}^I \openone_{d_i}\otimes \lin {m_i},
\end{equation}
where $\cong$ denotes unitary equivalence,
$A\cong B:\Leftrightarrow \exists\,W,\,WW^\dagger=\openone\colon
A = WBW^\dagger$.  Now choose a finite group $G$ and a representation $U_g$
with irreducible representations $D^i$ of dimensions $d_i$ and
multiplicity $m_i$, respectively,\footnote{
    This can be achieved, e.g., by choosing $I$ finite groups $G_i$ which
    have a $d_i$-dimensional irreducible representation $\tilde D_i$, and
    letting $G=G_1\times\dots\times G_I$, and $D^i(g_1,\dots,g_I)=\tilde
    D_i(g_i)$.  In particular, the choice of $G$ and $U_g$ is not unique.
    }
\[
U_g \cong \bigoplus_{i=1}^I D^i(g)\otimes \openone_{m_i}\ .
\]
Now Schur's lemma implies Eq.~(\ref{eq:noninj:mcA-via-Ug}).
\end{proof}

Note that conversely, any unitary representation $U_g$
has a \cstar-algebra as its commutant \eqref{eq:noninj:mcA-via-Ug}, since
$U_g^\dagger = U_{g^{-1}}$. Thus, the characterization in terms of
unitaries is equivalent to the characterization in terms of $\mathcal D$,
the span of the $A^i$.  This motivates the following definition.

\begin{definition}
\label{def:Ug-inj}
Let $g\mapsto U_g$ be a unitary representation of a finite group $G$.  We
say that an MPS tensor $A$ is \emph{$G$--injective} if\\
i) $\forall i,g:\ U_g A^i U_g^\dagger = A^i$, and \\
ii) the map $\mc P(A)$ [Eq.~\ref{eq:P-of-B}]
has a left inverse on the subspace
\begin{equation}
\label{eq:mpssym:symspace}
\mathcal S=\{X|[X,U_g]=0\}
\end{equation}
of
$U_g$--invariant matrices,
\begin{equation}
\label{eq:noninj-linv-eq}
\mcP^{-1}(A)\mcP(A) = \left.\openone\right|_\mcS\ .
\end{equation}
\end{definition}

In graphical notation, we will denote unitary representations as circles
labelled $U_g$, or simply $g$ when unambigous. Note that the arrow now
point towards the ket on which to apply $U_g$.  Condition i) of
Definition~\ref{def:Ug-inj} then reads
\[
\includegraphics[height=1.6em]{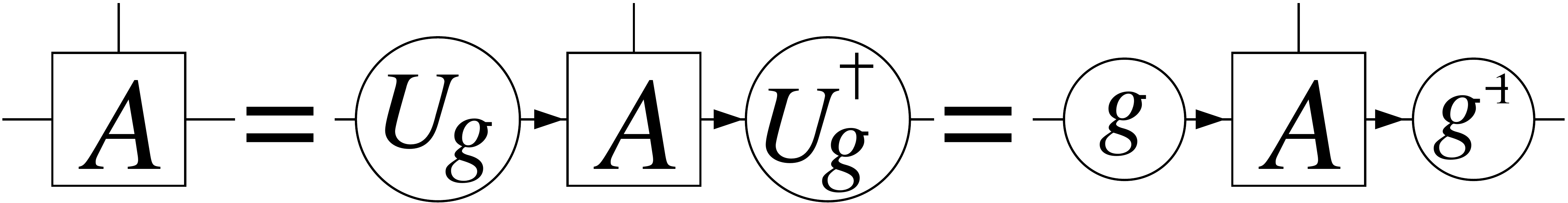}\ .
\]
Note that the group $G$ is more important than its representation $U_g$
(though we will require certain properties at some point), and in fact, a
different representation can be attached to each bond.

\begin{lemma}
The orthogonal projector $\bm\sigma$ onto the $U_g$--invariant subspace
\eqref{eq:mpssym:symspace} is given by
\[
\bm\sigma(X) = \frac{1}{|G|}\sum_g U_g X U_g^\dagger\ .
\]
Here, $|G|$ is the cardinality of the group $G$.
Thus, condition \eqref{eq:noninj-linv-eq} corresponds to
\[
\raisebox{-1em}{
\includegraphics[height=3em]{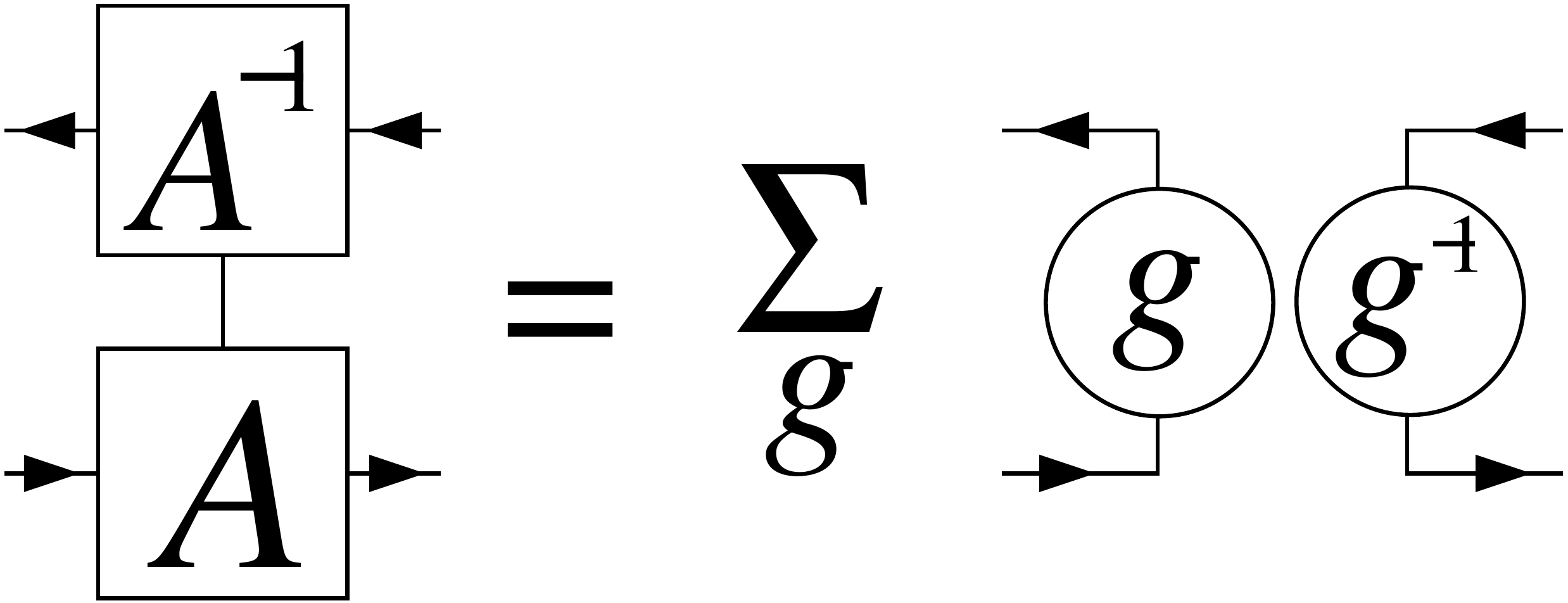}
}\ .
\]
(We generally omit normalization in diagrams.)
\end{lemma}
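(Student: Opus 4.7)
The plan is to verify the two claims separately. For the first, I would check the standard three properties that characterize an orthogonal projector onto $\mcS$: that $\bm\sigma$ maps $\lin D$ into $\mcS$, that it acts as the identity on $\mcS$, and that it is self-adjoint with respect to the Hilbert--Schmidt inner product $\langle Y, X\rangle = \tr(Y^\dagger X)$.

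For invariance of the image, I would compute, for any $h\in G$,
\[
U_h\,\bm\sigma(X)\,U_h^\dagger
= \frac{1}{|G|}\sum_{g\in G} U_{hg}\,X\,U_{hg}^\dagger
= \bm\sigma(X)
\]
by reindexing the sum, so $\bm\sigma(X)\in\mcS$. For $X\in\mcS$ every summand equals $X$, hence $\bm\sigma(X)=X$; together this shows $\bm\sigma^2=\bm\sigma$ with range exactly $\mcS$. Self-adjointness follows from the substitution $g\mapsto g^{-1}$ combined with unitarity of $U_g$:
\[
\langle Y,\bm\sigma(X)\rangle
= \frac{1}{|G|}\sum_g \tr\bigl((U_g^\dagger Y U_g)^\dagger X\bigr)
= \langle\bm\sigma(Y),X\rangle\ ,
\]
which together with idempotency and the correct range identifies $\bm\sigma$ as the orthogonal projector onto $\mcS$.

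For the diagrammatic reformulation of condition~\eqref{eq:noninj-linv-eq}, I would observe that the equation $\mcP^{-1}(A)\mcP(A)=\openone|_{\mcS}$ means precisely that, viewed as an operator on all of $\lin D$, the composition $\mcP^{-1}(A)\mcP(A)$ equals $\bm\sigma$. Substituting the just-established formula $\bm\sigma(X)=\tfrac{1}{|G|}\sum_g U_g X U_g^\dagger$ and translating to tensor-network notation immediately yields the depicted diagram, with $U_g$ and $U_g^\dagger$ threaded on the two virtual legs and a sum over $g$ (with the $1/|G|$ factor absorbed into the ``normalization omitted'' convention already used throughout).

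I do not expect a significant obstacle: the content is the standard group-averaging (Reynolds operator) argument for finite groups, and the diagrammatic half is just a rewriting of~\eqref{eq:noninj-linv-eq} using the explicit formula for $\bm\sigma$. The only minor care required is to keep track of the direction conventions on the arrows (which determine on which side $U_g$ versus $U_g^\dagger$ appears), consistent with the convention introduced just before the lemma.
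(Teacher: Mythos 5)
Your proposal is correct and follows essentially the same group-averaging argument as the paper: it establishes that $\bm\sigma$ is an idempotent, self-adjoint map with range exactly the $U_g$--invariant subspace (the paper computes $\bm\sigma^2=\bm\sigma$ directly, while you obtain it equivalently from ``image in $\mcS$'' plus ``identity on $\mcS$'', and you additionally spell out the self-adjointness that the paper merely asserts). The diagrammatic reformulation is handled identically, so no further comment is needed.
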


\begin{proof}
Since
\begin{align*}
\bm\sigma^2(X)&=\frac{1}{|G|^2}\sum_{gh} U_hU_gXU_g^\dagger U_h^\dagger \\
&= \frac{1}{|G|^2} \sum_{hg} U_{hg}XU_{hg}^\dagger =
\frac{1}{|G|} \sum_k U_k X U_k^\dagger = \bm\sigma(X)\ ,
\end{align*}
$\bm\sigma$ is a projection. As it leaves $\mcS$ invariant,
\[
U_h\bm\sigma(X) U_h^\dagger =
\frac{1}{|G|}\sum_g U_{hg}X U_{hg}^\dagger = \bm\sigma(X)
\]
(i.e., the image is contained in $\mcS$),
and it is hermitian, it is the orthogonal projector on $\mcS$.
\end{proof}

We will now show the analogue of Lemma~\ref{lemma:inj-stable}:
$G$--injectivity is stable under concatenation of tensors. To this end,
we will use the following identity.

\begin{lemma}
\label{lemma:1d-ghh-anyrep}
For any unitary representation $U_g$ of a finite group,
\[
\raisebox{-1.05em}{
\includegraphics[height=2.8em]{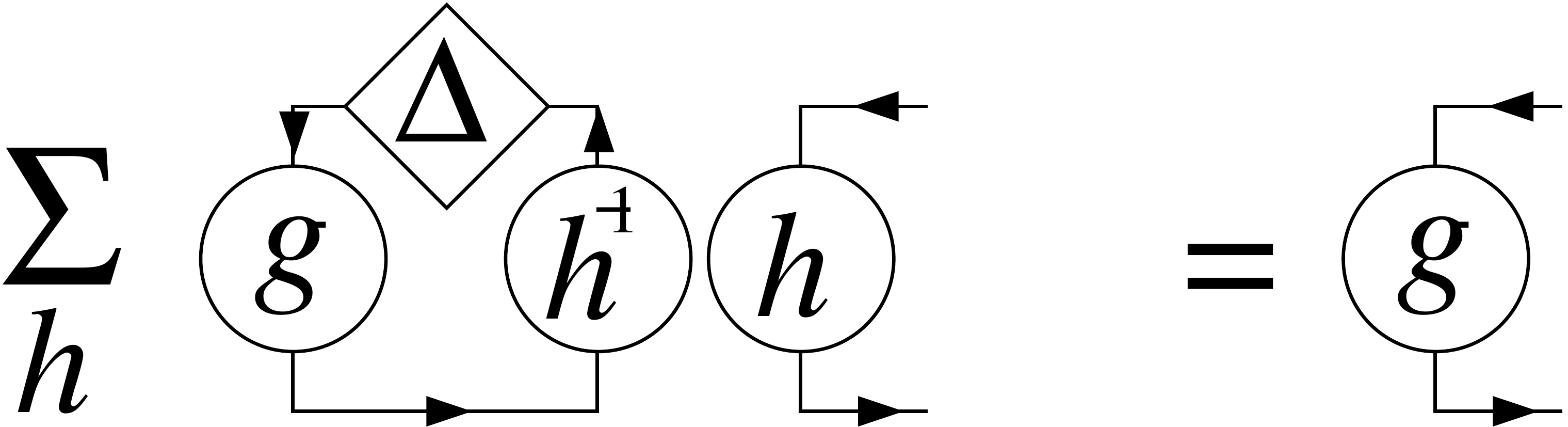}
}
\]
with
\begin{equation}
        \label{eq:noninj:deltadef}
\Delta \cong \frac{1}{|G|}
    \bigoplus_i \frac{d_i}{m_i}\openone_{d_i} \otimes \openone_{m_i}\ ,
\end{equation}
where $(d_i,m_i)$ are dimensions and multiplicites of the
irreducible representations $D^i$ of $U_g$, and $\Delta$ is diagonal in 
the basis
in which $U_g\cong\bigoplus_i D^i(g)\otimes \openone_{m_i}$.
In a formula,
\begin{equation}
        \label{eq:noninj:tr-ghdeltaUh-is-Ug}
\sum_h \tr[U_h^\dagger \Delta U_g]U_h=
\sum_h \tr[U_{gh^{-1}}\Delta]U_h=U_g\ .
\end{equation}
\end{lemma}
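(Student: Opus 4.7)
The plan is to reduce the identity \eqref{eq:noninj:tr-ghdeltaUh-is-Ug} to the Schur orthogonality relations by passing to a basis in which the representation is manifestly block-diagonal. Both sides of the identity are covariant under conjugation by a unitary intertwiner, so there is no loss in assuming literally $U_g = \bigoplus_i D^i(g)\otimes \openone_{m_i}$ and $\Delta = \frac{1}{|G|}\bigoplus_i \frac{d_i}{m_i}\openone_{d_i}\otimes \openone_{m_i}$ in this basis. Also, by cyclicity of the trace and $U_h^\dagger U_g = U_{h^{-1}g}$, the two forms in \eqref{eq:noninj:tr-ghdeltaUh-is-Ug} agree (under the substitution $h\mapsto h^{-1}$ in the sum), so it suffices to establish the first.

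First I would compute the scalar coefficient $c(h):=\tr[U_h^\dagger \Delta U_g]$. In the chosen basis,
\[
U_h^\dagger \Delta U_g = \tfrac{1}{|G|}\bigoplus_i \tfrac{d_i}{m_i}\, D^i(h^{-1}g)\otimes \openone_{m_i},
\]
and the partial trace over the multiplicity space contributes a factor $m_i$ that cancels the $1/m_i$ in $\Delta$; the result is $c(h) = \frac{1}{|G|}\sum_i d_i\, \chi_i(h^{-1}g)$, a weighted sum of irreducible characters. Observe that the peculiar $d_i/m_i$ normalization in $\Delta$ is chosen precisely so that this partial-trace cancellation produces the characters with the correct weights $d_i$.

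Next I would check block-wise that $\sum_h c(h)\,U_h$ acts on the $j$-th isotypic component as $D^j(g)\otimes\openone_{m_j}$. This amounts to proving
\[
\tfrac{1}{|G|}\sum_{h,i} d_i\,\chi_i(h^{-1}g)\, D^j(h) \;=\; D^j(g)
\]
for each irrep $D^j$. Expanding $\chi_i(h^{-1}g) = \sum_{\alpha,\beta}\overline{D^i(h)_{\beta\alpha}}\,D^i(g)_{\beta\alpha}$ and invoking
\[
\tfrac{1}{|G|}\sum_h \overline{D^i(h)_{\beta\alpha}}\,D^j(h)_{kl} \;=\; \tfrac{1}{d_i}\delta_{ij}\delta_{\beta k}\delta_{\alpha l}
\]
collapses the sum to $D^j(g)_{kl}$: the factor $d_i$ from $c(h)$ cancels the $1/d_i$ of Schur orthogonality, $\delta_{ij}$ kills all off-diagonal contributions, and the remaining Kronecker deltas reconstruct the matrix element $D^j(g)_{kl}$. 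Reassembling the blocks yields $\sum_h c(h) U_h = \bigoplus_j D^j(g)\otimes\openone_{m_j} = U_g$, as claimed. There is no real obstacle here: the entire content of the lemma is that the $d_i/m_i$ weighting built into $\Delta$ is exactly what is needed so that tracing out multiplicities and then applying Schur orthogonality reconstitutes $U_g$; once $\Delta$ is written in its block form the computation is essentially a bookkeeping exercise.
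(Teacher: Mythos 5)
Your proposal is correct and follows essentially the same route as the paper: both compute the coefficient $\tr[U_h^\dagger\Delta U_g]$ as the weighted character sum $\tfrac{1}{|G|}\sum_i d_i\,\chi_i(h^{-1}g)$ (the $m_i$ from the multiplicity trace cancelling the $1/m_i$ in $\Delta$) and then reconstitute $U_g$ blockwise via the Schur orthogonality relations, the only cosmetic difference being that the paper first substitutes $h=kg$ to factor out $U_g$ while you keep $g$ inside the character. No gaps.
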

\begin{proof} The group orthogonality theorem implies
\begin{align}
        \nonumber
\sum_{g}\tr[D^i(g^{-1})]D^j(g)
&= \sum_{g,m,k,l}\bar D^i_{mm}(g) D^j_{kl}(g)\ket{k}\bra{l}\\
        \label{eq:noninj:sum-CHIi-Di-is-PROJi}
&
=\frac{|G|}{d_i}\sum_{m,k,l}\delta_{i,j}\delta_{m,k}\delta_{m,l}\ket{k}\bra{l}\\
        \nonumber
&    =\frac{|G|}{d_i}\delta_{i,j}\openone_{d_i}\ .
\end{align}
 Thus,
\begin{align*}
\sum_h\tr[U_{gh^{-1}}\Delta]U_h
& = \sum_k \tr[U_{k^{-1}}\Delta]U_k U_g \\
& \cong \sum_k
    \left[\sum_i m_i \tr[D^i(k^{-1}) \tfrac{d_i}{m_i|G|}]\right]
    \bigoplus_j\, \big[D^j(k)D^j(g)\big] \otimes \openone_{m_i} \\
& = \bigoplus_j D^j(g)\otimes\openone_{m_i}\\
& \cong U_g\ .
\end{align*}
\end{proof}

For a restricted class of representations, we can make a stronger
statement.
\begin{definition}[Semi-regular representations] A unitary representation
$g\mapsto U_g$ of a finite group $G$ is called \emph{semi-regular} if
$U_g$ contains \emph{all} irreducible representations of $G$.
\end{definition}

\begin{lemma}[Linear independence of semi-regular representations]
        \label{lemma:noninj:semireg-trace-ug-delta}
Let $g\mapsto U_g$ be a semi-regular representation of a group $G$. Then,
\[
\tr[U_g^\dagger U_h\Delta]=\delta_{g,h}
\]
with $\Delta$ as of \eqref{eq:noninj:deltadef}.
Note that for the regular representation, $\Delta\propto\openone$.
\end{lemma}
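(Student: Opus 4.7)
The plan is to evaluate the trace directly using the block decomposition of $U_g$ and $\Delta$, and then invoke the completeness (second orthogonality) relation for characters, which is precisely where the semi-regularity hypothesis enters.

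First, I would rewrite $\tr[U_g^\dagger U_h \Delta] = \tr[U_{g^{-1}h}\Delta]$ using that $g\mapsto U_g$ is a representation and $U_g$ is unitary, and set $k=g^{-1}h$. The task then reduces to showing $\tr[U_k\Delta]=\delta_{k,e}$.

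Next, I would plug in the decompositions
\[
U_k\cong\bigoplus_i D^i(k)\otimes\openone_{m_i}, \qquad
\Delta\cong \frac{1}{|G|}\bigoplus_i\frac{d_i}{m_i}\openone_{d_i}\otimes\openone_{m_i},
\]
so that block-by-block the product on the $i$-th summand is $\tfrac{d_i}{|G|m_i}\,D^i(k)\otimes\openone_{m_i}$, whose trace is $\tfrac{d_i}{|G|}\chi_i(k)$, where $\chi_i$ is the character of $D^i$. Summing the blocks gives
\[
\tr[U_k\Delta]=\frac{1}{|G|}\sum_i d_i\,\chi_i(k) = \frac{1}{|G|}\sum_i \chi_i(e)\,\chi_i(k).
\]

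The key step is then the second orthogonality relation for characters of a finite group: the sum $\sum_i \overline{\chi_i(e)}\chi_i(k)$, taken over \emph{all} inequivalent irreducible representations, equals $|G|\,\delta_{k,e}$ (by the completeness of the character table as a basis for class functions). This is exactly where semi-regularity is used: by assumption every irrep of $G$ occurs in $U_g$, so the index $i$ above ranges over the full set of irreps of $G$, and the completeness relation applies. Without semi-regularity, some irreps would be missing from the sum and the identity would generically fail. Plugging this in yields $\tr[U_k\Delta] = \delta_{k,e} = \delta_{g,h}$, as desired. The remark about the regular representation is then immediate: there $m_i=d_i$ for all $i$, so $\Delta\cong \tfrac{1}{|G|}\bigoplus_i \openone_{d_i}\otimes\openone_{m_i}\propto\openone$.

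The only possible obstacle is checking that the change-of-basis implied by $\cong$ in the decompositions of $U_g$ and $\Delta$ is compatible in both equations, i.e.\ that they are simultaneously block-diagonal in the same basis; but this is built into the definition \eqref{eq:noninj:deltadef}, which states that $\Delta$ is diagonal in exactly the basis that block-diagonalizes $U_g$, so the block-by-block trace computation is legitimate.
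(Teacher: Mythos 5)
Your proof is correct and follows essentially the same route as the paper: both reduce to computing $\tr[U_k\Delta]=\frac{1}{|G|}\sum_i d_i\chi_i(k)$ block by block and then identify this sum as $\delta_{k,e}$ --- the paper by recognizing $\sum_i d_i\chi_i(k)$ as the character of the regular representation, you by citing the equivalent second orthogonality relation. Your explicit remark that semi-regularity is what guarantees the sum runs over \emph{all} irreps is exactly the point the paper leaves implicit.
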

\begin{proof}
This follows as
$|G|\tr[U_k\Delta]=|G|\sum_{i}m_i\tr[\tfrac{d_i}{m_i|G|}D^i(k)]$
is the character of the regular representation, which is $|G|\,\delta_{k,1}$.
\end{proof}

We are now ready to prove the analogue of Lemma~\ref{lemma:inj-stable} for
the $G$--injective case.

\begin{lemma}[Stability under concatenation]
Let $A$ and $B$ be $G$--injective tensors. Then $C^{ij}=A^iB^j$ is also
$G$--injective with left inverse
\begin{equation}
\label{eq:noninj:linv}
\raisebox{-0.8em}{
\includegraphics[height=1.8em]{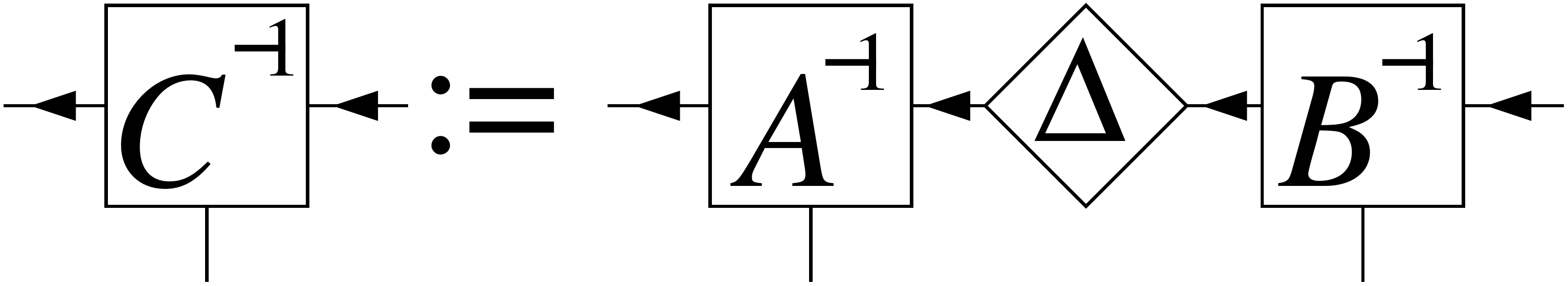}
}\ .
\end{equation}
Here, $\Delta$ is defined as in Lemma~\ref{lemma:1d-ghh-anyrep}.
\end{lemma}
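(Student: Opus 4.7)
The plan is to verify the two defining conditions of $G$--injectivity for $C$ separately, doing the symmetry first (which is essentially formal) and then producing the left inverse explicitly.

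\textbf{Symmetry.} First I would check that $C^{ij}=A^iB^j$ inherits the symmetry $U_g C^{ij}U_g^\dagger = C^{ij}$ on its outer virtual legs. Inserting $\openone=U_g^\dagger U_g$ on the contracted middle bond gives
\[
U_g A^i B^j U_g^\dagger =(U_g A^i U_g^\dagger)(U_g B^j U_g^\dagger)=A^iB^j,
\]
by condition (i) for $A$ and $B$. Diagrammatically this is just absorbing the matching $g$ and $g^{-1}$ into the adjacent tensors.

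\textbf{Left inverse.} The real content is showing that the diagram \eqref{eq:noninj:linv} defines a left inverse for $\mathcal{P}(C)$ on the $U_g$-invariant subspace $\mathcal S$ of matrices on the outer virtual legs of $C$. The strategy is diagrammatic: compose the proposed inverse with $\mathcal{P}(C)$, use the single--site identity $\mathcal{P}(A)^{-1}\mathcal{P}(A)=\bm\sigma$ (and similarly for $B$) on each physical leg, and then collapse the resulting middle--bond configuration using Lemma~\ref{lemma:1d-ghh-anyrep}. Concretely, after applying $\mathcal{P}(A)^{-1}$ on leg $1$ and $\mathcal{P}(B)^{-1}$ on leg $2$, one obtains the projectors $\bm\sigma=\tfrac1{|G|}\sum_h U_h\otimes U_h^\dagger$ sitting on the pair of virtual legs of $A$ and, independently, of $B$. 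Together with the $\Delta$ and the summation over $g$ coming from the proposed inverse, the middle bond carries a combination of the form
\[
\sum_{h_1,h_2,g}\tr[\,U_{h_1^{-1}g h_2}\Delta\,]\,(\cdots),
\]
which by \eqref{eq:noninj:tr-ghdeltaUh-is-Ug} collapses to a single $U_{\text{something}}$ on the middle bond. That remaining group element can then be absorbed into the outer $U_g$'s using the symmetry of the original tensors, leaving exactly the projector $\bm\sigma$ on the outer legs, which acts as the identity on $\mathcal S$.

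\textbf{Main obstacle.} The delicate step is the bookkeeping of the group variables at the middle bond: one must make sure that when the two independent projectors $\bm\sigma_A,\bm\sigma_B$ arising from $\mathcal{P}(A)^{-1}\mathcal{P}(A)$ and $\mathcal{P}(B)^{-1}\mathcal{P}(B)$ are combined with the $\Delta$--weighted sum appearing in the proposed inverse, the resulting traces collapse via Lemma~\ref{lemma:1d-ghh-anyrep} to a single $U_g$ rather than leaving a residual average. The choice of $\Delta\propto\bigoplus_i(d_i/m_i)\openone$ in \eqref{eq:noninj:deltadef} is precisely engineered so that the orthogonality relation \eqref{eq:noninj:sum-CHIi-Di-is-PROJi} produces $U_g$ on the middle bond for each outer group label $g$, and then the remaining $U_g$'s on the two sides of the middle bond cancel by the symmetry of $A$ and $B$. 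Once this cancellation is carried out diagrammatically, the assertion of the lemma follows immediately.
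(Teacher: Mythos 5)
Your proposal is correct and follows essentially the same route as the paper: the $G$--invariance of $C$ is obtained by inserting $U_g^\dagger U_g$ on the contracted middle bond, and the left-inverse property is verified by composing with $\mathcal P(C)$, reducing to the single-site twirls $\bm\sigma$ via $\mathcal P(A)^{-1}\mathcal P(A)$ and $\mathcal P(B)^{-1}\mathcal P(B)$, and collapsing the resulting $\Delta$--weighted sum on the middle bond to a single $U_g$ using Lemma~\ref{lemma:1d-ghh-anyrep}. The paper's proof is exactly this two-line diagrammatic argument, so no further comparison is needed.
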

\begin{proof}
$G$--invariance of $C$ follows from
\begin{equation}
\label{eq:mpssym:gsym-grow}
\raisebox{-0.4em}{
\includegraphics[height=1.8em]{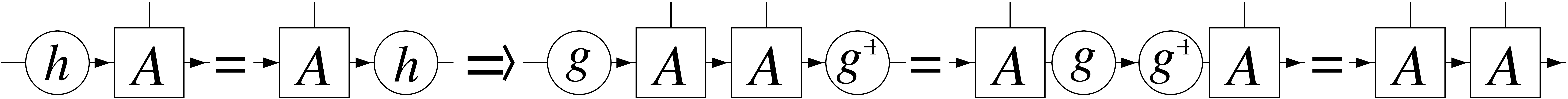}
}\ .
\end{equation}
Moreover, \eqref{eq:noninj:linv} is the left inverse of $\mathcal P(C)$ on
the $U_g$--invariant subspace since
\[
\raisebox{-1.2em}{
\includegraphics[height=3.15em]{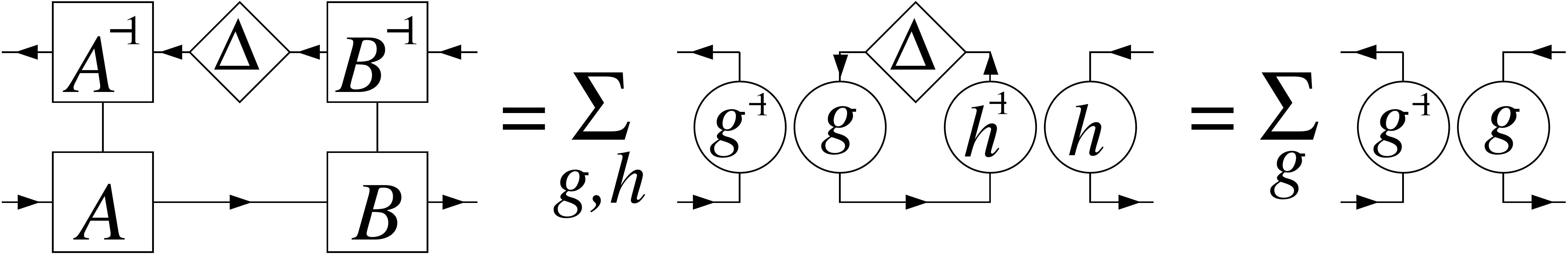}
}
\]
from Lemma~\ref{lemma:1d-ghh-anyrep}.
\end{proof}

\subsection{Parent Hamiltonians}

Let us now proceed to the relation of $G$--injective MPS and parent
Hamiltonians. The construction is exactly analogous to the injective case:
We define
\[
\mathcal S_k=\Big\{\sum_{i_1,\dots,i_k}
    \tr[A^{i_1}\cdots A^{i_k}X]\ket{i_1,\dots,i_k}
    \Big\vert X\in\lin D\Big\}\ ,
\]
and prove the analogues of Theorem~\ref{thm:inj:intersection}
(Intersection Property) and Theorem~\ref{thm:inj:closure} (Closure
Property), but now for $G$--injective MPS. While the intersection
property will be the same as in the injective case, the closure will give
rise to a subspace whose dimension equals the number of
conjugacy classes of $G$, for a properly chosen group $G$.

\begin{theorem}[Intersection property]
\label{thm:noninj:intersection}
Let $A$, $B$ be $G$--injective. Then,
\begin{equation}
\label{eq:1dsym:intersect}
    \left\{\raisebox{-1.2em}{\,\includegraphics[height=3em]{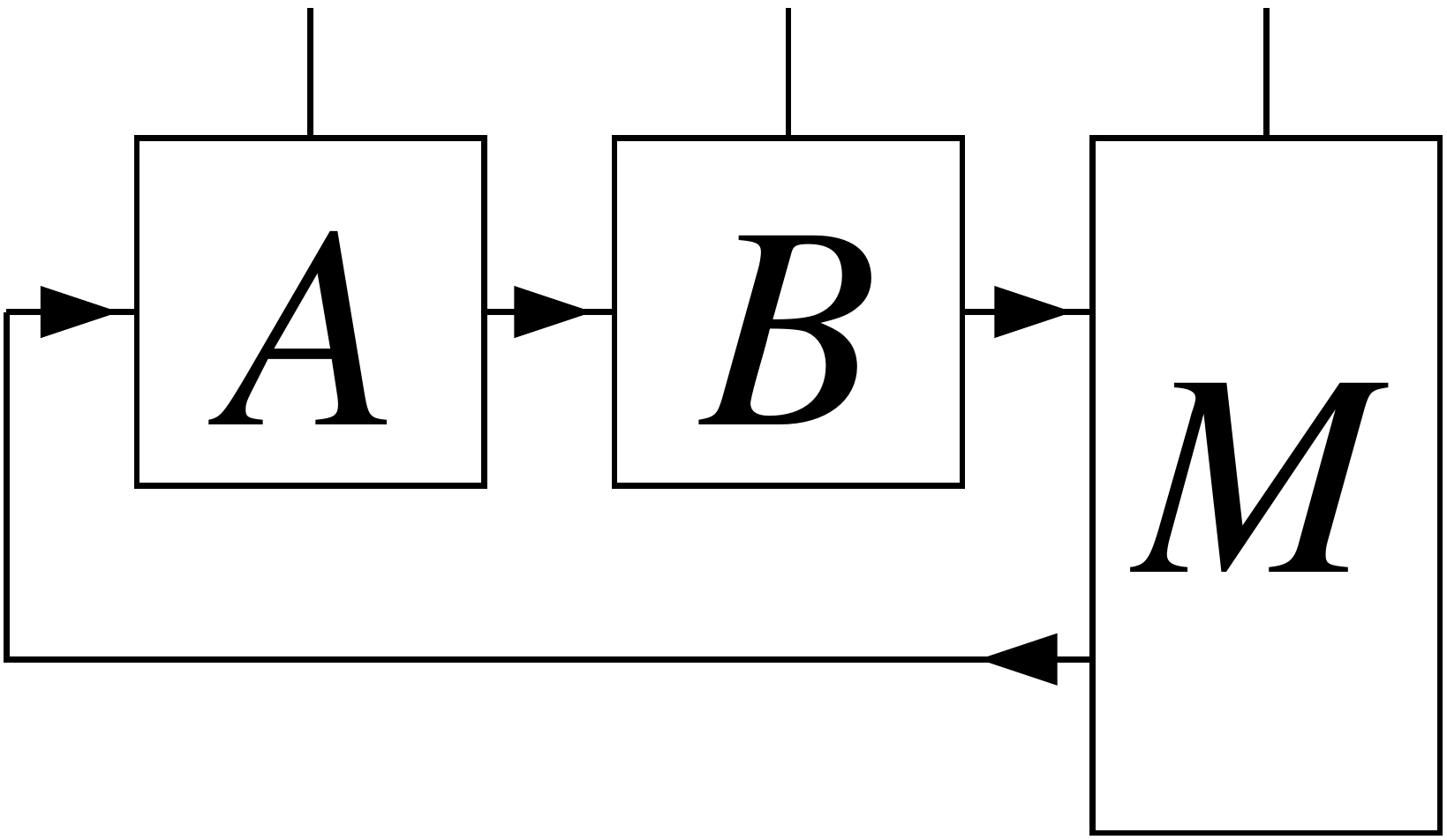}}\ \,\middle\vert \ M \right\}
    \cap
    \left\{\raisebox{-1.2em}{\,\includegraphics[height=3em]{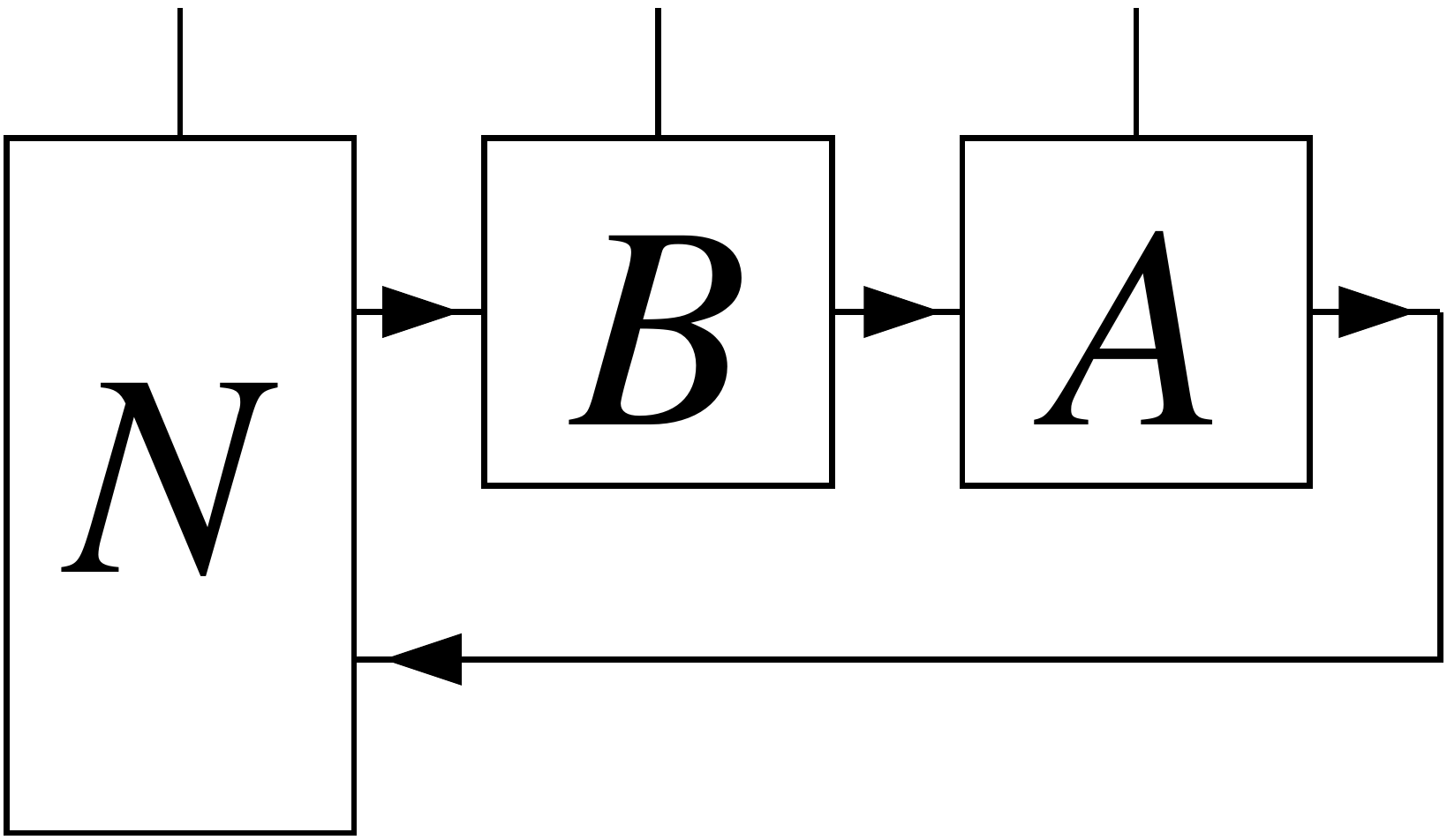}}\ \,\middle\vert \ N \right\}=
    \left\{\raisebox{-1.2em}{\,\includegraphics[height=3em]{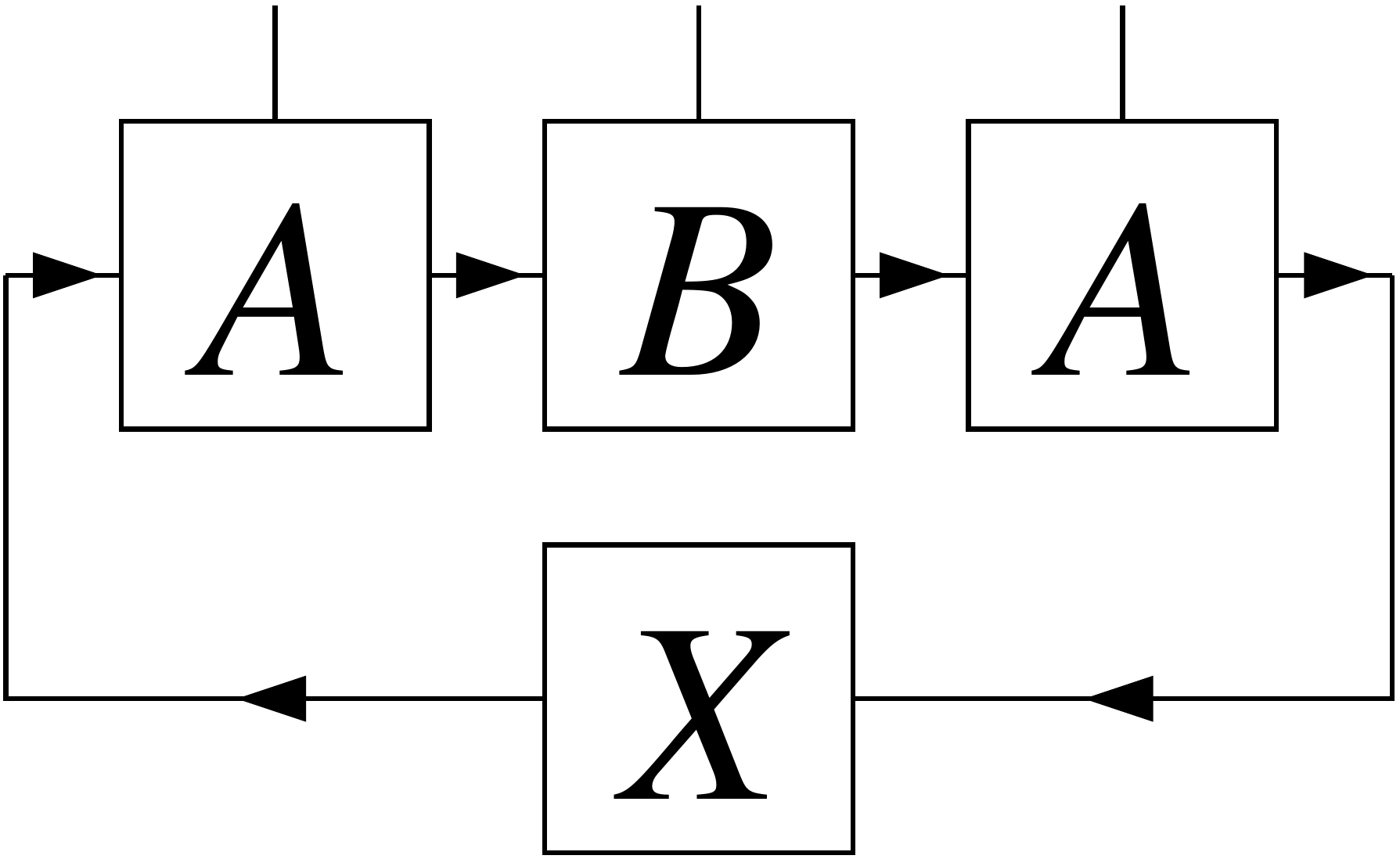}}\
    \,\middle\vert \ X \right\}\ ,
\end{equation}
\end{theorem}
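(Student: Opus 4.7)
The plan is to mimic the proof of Theorem~\ref{thm:inj:intersection} while accounting for the fact that $\mcP(A)$ and $\mcP(B)$ no longer have true left inverses, but only left inverses on the symmetric subspace $\mcS$. The containment $\supseteq$ transfers verbatim from the injective case: it is a purely graphical identity obtained by taking $M$ and $N$ to be the appropriate contractions of $X$ with the adjacent tensor, exactly as in Eq.~\eqref{eq:inj:form-of-NandM}, and uses no form of injectivity.

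For the nontrivial inclusion, suppose $\ket\psi$ lies in the intersection with witnesses $M$ and $N$. My first step is to reduce to the case $M,N\in\mcS$. The key observation is that $[A^iB^j,U_g]=0$ for every $g\in G$ (by condition (i) of Definition~\ref{def:Ug-inj}), so cyclicity of the trace yields
\[
\sum_{ij}\tr[A^iB^j\,U_gMU_g^\dagger]\ket{ij}=\sum_{ij}\tr[A^iB^jM]\ket{ij},
\]
and averaging over $G$ shows that the l.h.s.\ diagram depends on $M$ only through $\bm\sigma(M)$. The same argument works for $N$ on the right, so I may replace $M$ and $N$ by their $G$-averages. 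I then apply $\mcP^{-1}(A)$ and $\mcP^{-1}(B)$ to the two physical indices of the equality of diagrams. By Eq.~\eqref{eq:noninj-linv-eq} each left inverse equals the projector $\bm\sigma$ on its virtual slot, but since $M$ and $N$ now lie in $\mcS$ these projectors act trivially on the nontrivial factors, and one obtains a virtual equation formally identical to the one underlying the proof of Theorem~\ref{thm:inj:intersection}. Reading it off produces a virtual matrix $X$ that realises $\ket\psi$ in the right-hand side of Eq.~\eqref{eq:1dsym:intersect}.

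The principal obstacle I anticipate is precisely the appearance of $\bm\sigma$ in place of the identity when inverting $\mcP(A)$ and $\mcP(B)$: a direct attempt to run the injective proof line by line stalls because the virtual equation one reads off carries $\bm\sigma$-projections and is therefore too weak to determine $M$ and $N$ as full matrices. The preliminary $G$-averaging is what removes this obstruction, by ensuring that the quantities being compared already lie in $\mcS$ where $\bm\sigma$ is the identity; in its absence one would have to invoke Lemma~\ref{lemma:1d-ghh-anyrep} to reconstruct the missing $U_g$-components by hand.
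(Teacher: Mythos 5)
Your proposal is correct and follows essentially the same route as the paper: the easy inclusion via the explicit choice of $M$ and $N$, the preliminary replacement of $M$ and $N$ by their $G$--twirls $\bm\sigma(M)$, $\bm\sigma(N)$ (justified exactly as you do, by the $G$--invariance of the tensors together with cyclicity of the trace), and then the application of the left inverses, which act as the identity on the already-symmetrized boundary tensors. The paper likewise declares ``we will always assume symmetrized tensors from now on'' before inverting, so the obstruction you identify and the way you remove it are precisely the paper's.
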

\begin{proof}
Using the $G$--invariance of $A$ and $B$, we can infer that we can
restrict $M$, $N$, and $X$ to also be $G$ invariant in the virtual
indices. For instance, for any $M$,
\[
\raisebox{-1em}{\includegraphics[height=3em]{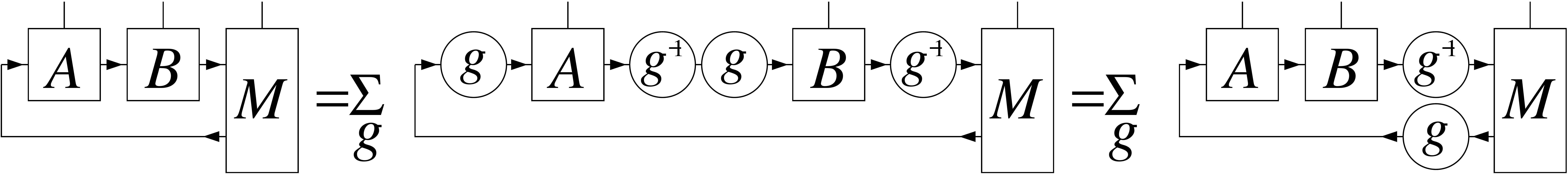}}\ ,
\]
this is, we can replace $M$ by
\[
\raisebox{-1em}{\includegraphics[height=3em]{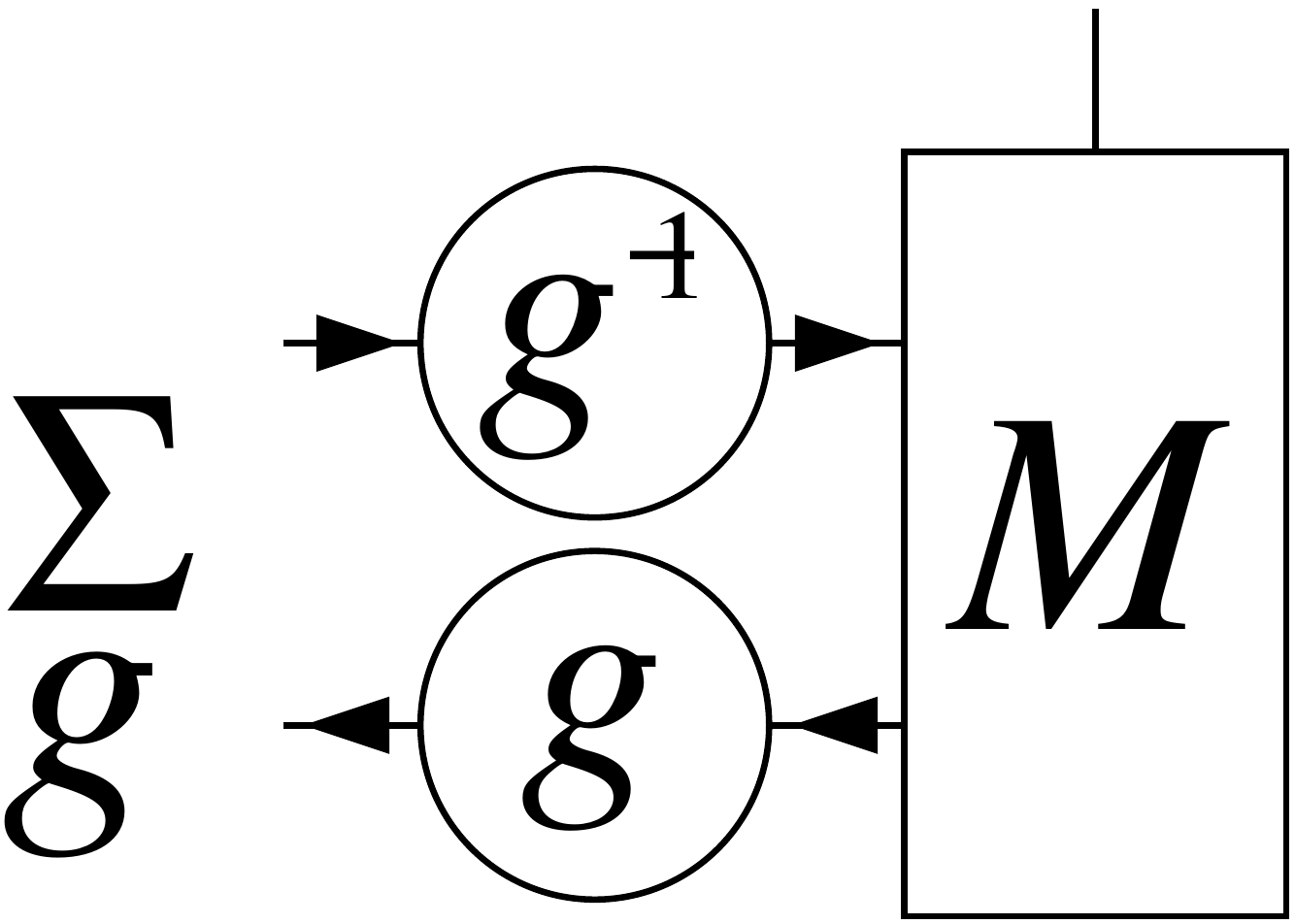}}\ ;
\]
the same holds true for $N$ and $X$.
We will always assume symmetrized tensors from now on.

We are now ready to prove \eqref{eq:1dsym:intersect}.  First, the r.h.s.
is clearly contained in the l.h.s., by choosing $M^i=A^iX$, $N^i=XA^i$.
On the other hand, each element in the intersection can be simultaneously
characterized by a pair $N$, $M$ of tensors, and by applying the inverse
maps we find that
\[
\raisebox{-1.0em}{\includegraphics[height=10em]{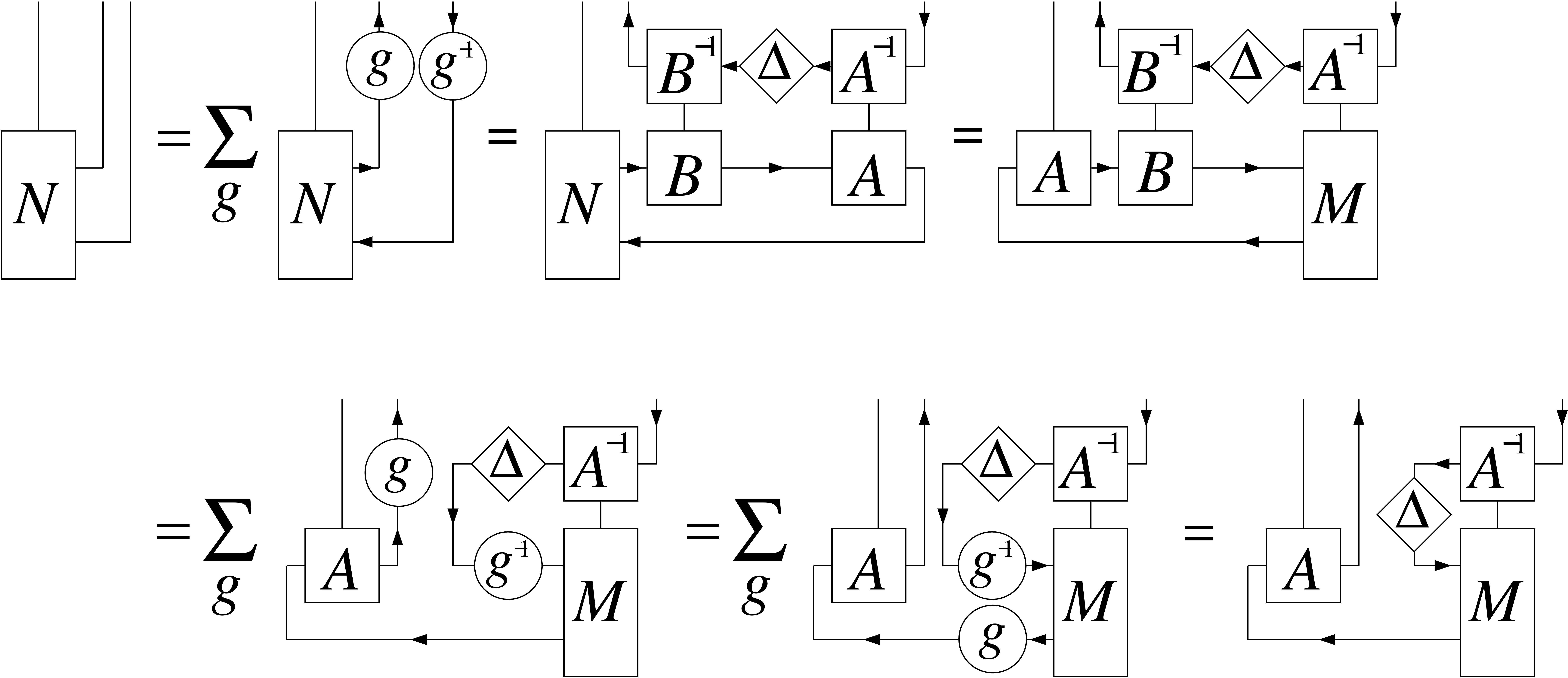}}
\quad .
\]
This shows that $M$ (and equally $N$) are of the form required by
Eq.~(\ref{eq:1dsym:intersect}), and thus proves the theorem.
\end{proof}

\begin{theorem}[Closure property] For $G$--injective $A$ and $B$,
\label{thm:noninj:closure}
\begin{equation}
    \left\{\raisebox{-1.7em}{\,\includegraphics[height=4em]{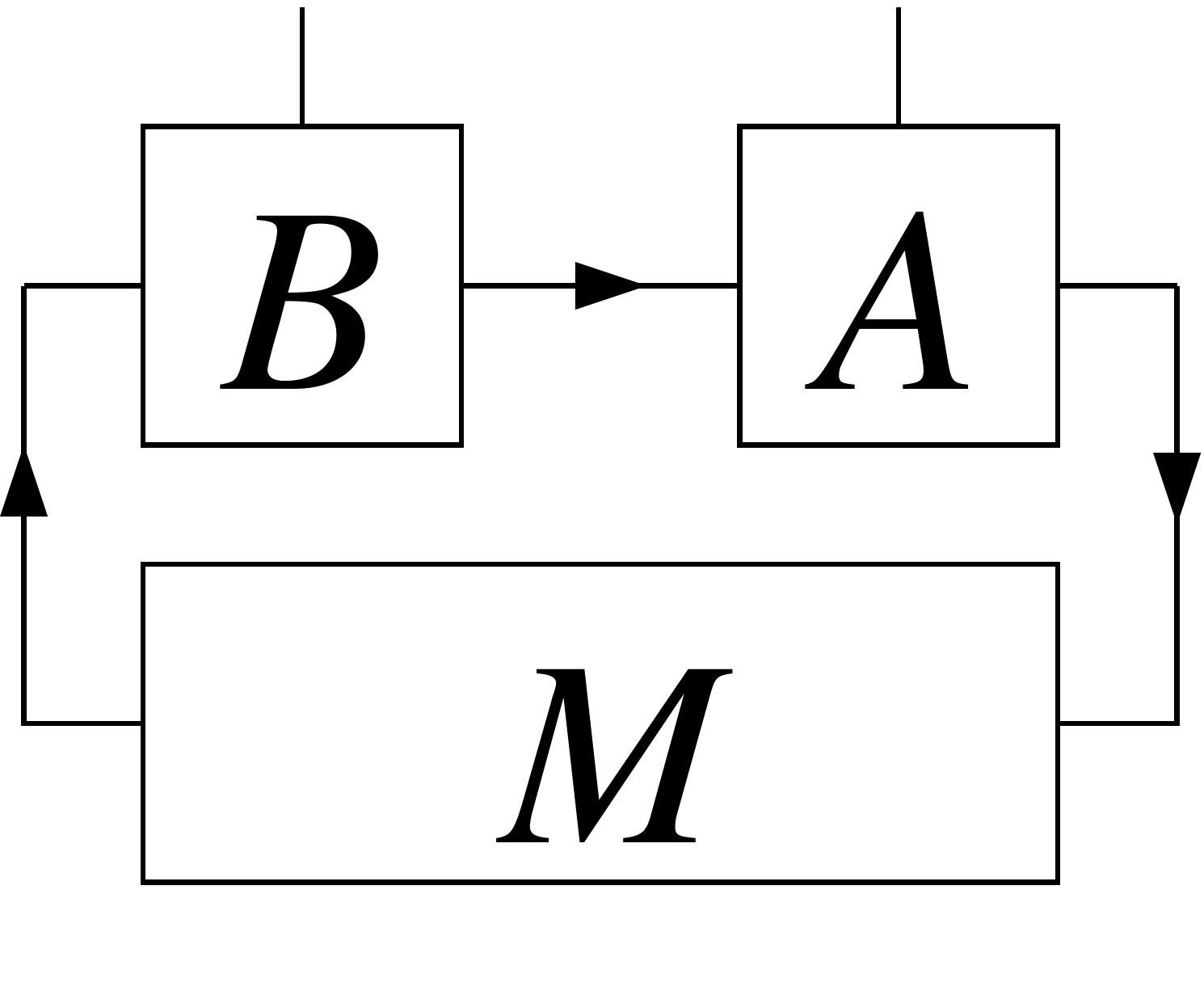}}\ \,\middle\vert \ M \right\}
    \cap
    \left\{\raisebox{-1.7em}{\,\includegraphics[height=4em]{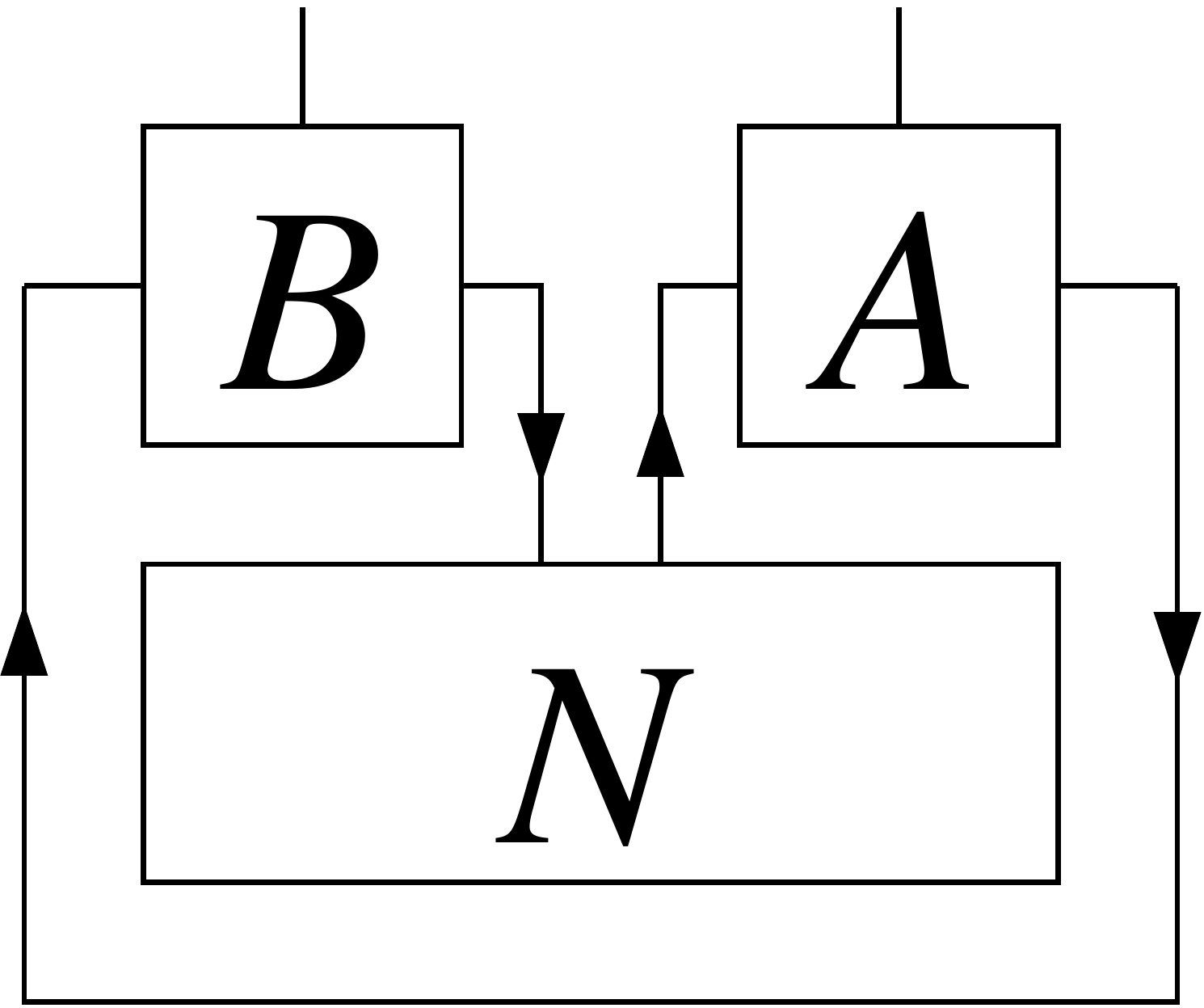}}\ \,\middle\vert \ N \right\}=
    \left\{\raisebox{-1.5em}{\,\includegraphics[height=3em]{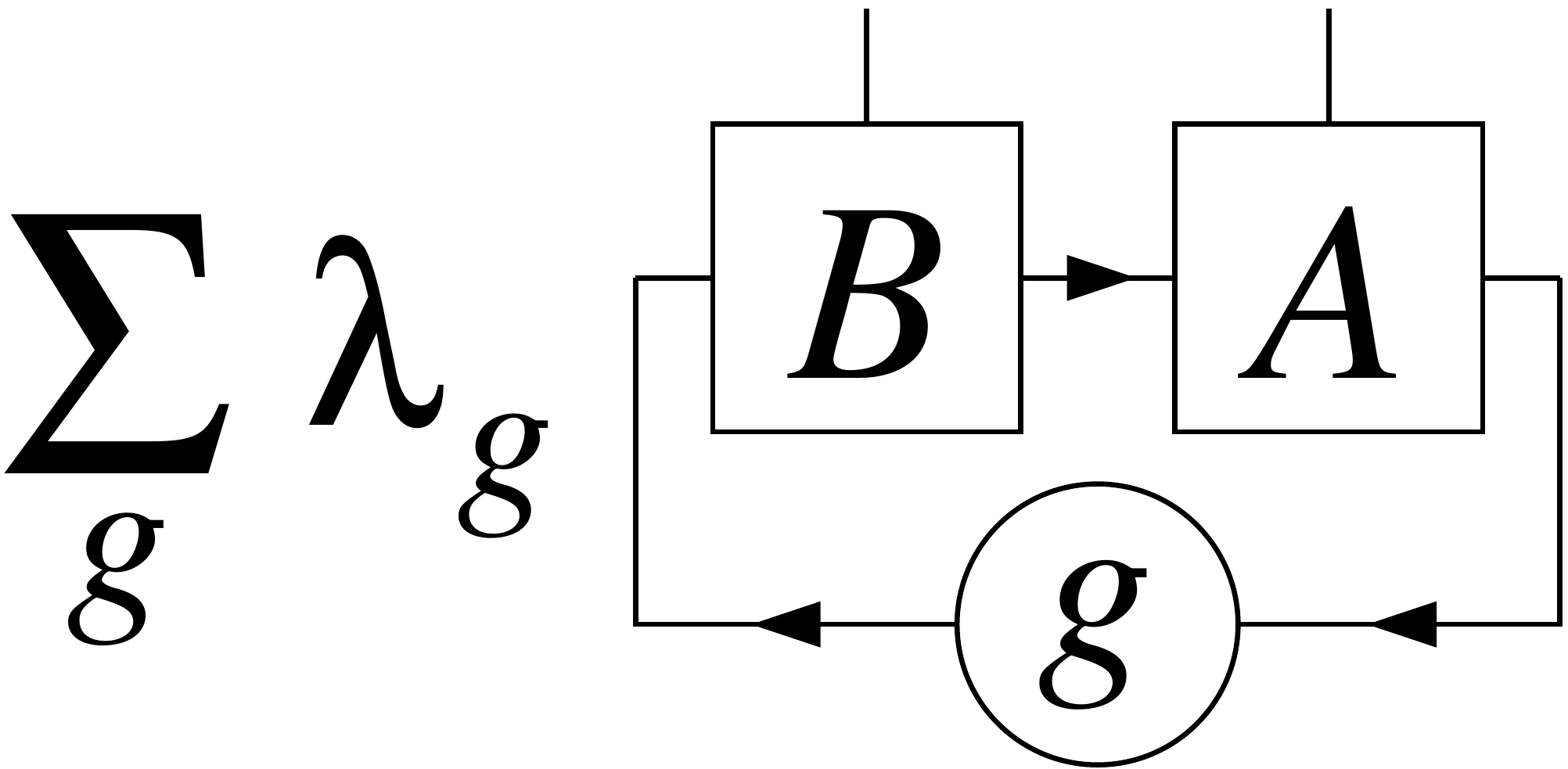}}\
    \,\middle\vert \ \lambda_g\right\}
    \ .
\label{eq:1dsym:closure}
\end{equation}
\end{theorem}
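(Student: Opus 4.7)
The plan is to mirror the strategy of the injective closure proof (Theorem~\ref{thm:inj:closure}), inserting the $G$--injective machinery at each step. I expect the argument to split into a trivial inclusion and a harder inclusion that requires both the symmetrization trick from Theorem~\ref{thm:noninj:intersection} and the identities of Lemma~\ref{lemma:1d-ghh-anyrep}.

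For the easy inclusion $\supseteq$, I would fix coefficients $\{\lambda_g\}$ and set $M = N = \sum_g \lambda_g U_g$. Because of $G$--invariance (condition i of Definition~\ref{def:Ug-inj}), every $U_g$ commutes with all $A^i$ and $B^j$, so a $U_g$ inserted at the $M$-position can be slid through the $B$-block to the $N$-position without changing the diagram. Summing against $\lambda_g$ shows that both left-hand-side descriptions produce exactly the right-hand-side element.

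For the hard inclusion $\subseteq$, I would take an element $\ket\psi$ of the intersection, described simultaneously by some $M$ (inserted between the $B$- and $A$-blocks) and some $N$ (inserted between the $A$- and $B$-blocks). As a first step, I repeat the symmetrization argument used in Theorem~\ref{thm:noninj:intersection}: the $G$--invariance of $A$ and $B$ together with the cyclic trace lets us replace $M$ by $\bm\sigma(M)$ and $N$ by $\bm\sigma(N)$, so we may assume $M, N \in \mathcal{S}$ (i.e.\ in the commutant of $U_g$). Then I apply the left inverse $\mathcal{P}^{-1}(A)$ on every $A$-site and $\mathcal{P}^{-1}(B)$ on every $B$-site, just as in the injective proof. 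Because the inputs at each bond are $G$--invariant (the $\bm\sigma$ projections are automatically supplied by the left inverses), each strip is legitimate and collapses the physical loop to a purely virtual identity involving $M$ at one bond and $N$ at a shifted bond, connected by a closed loop of $\bm\sigma$'s.

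The remaining step is to read off the form of $M$ and $N$ from this virtual equation, and this is where I expect the real work to be. Using $\bm\sigma(X) = \tfrac1{|G|}\sum_g U_g X U_g^\dagger$, the closed loop of $\bm\sigma$'s becomes a sum over group elements of $U_g$--insertions, and the identity $\sum_h \tr[U_{gh^{-1}}\Delta] U_h = U_g$ from Lemma~\ref{lemma:1d-ghh-anyrep} collapses nested $\bm\sigma$'s into a single group average. Matching the resulting virtual identity position by position then forces $M$ to lie in $\mathrm{span}\{U_g\}$, say $M = \sum_g \lambda_g U_g$, with $N$ necessarily given by the same coefficients (since moving $M$ across the $B$-block via $G$--invariance exactly reproduces the $N$-insertion). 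This places $\ket\psi$ in the right-hand side of \eqref{eq:1dsym:closure}. The main obstacle, as in Theorem~\ref{thm:noninj:intersection}, is the careful diagrammatic bookkeeping of the $\bm\sigma$'s and $\Delta$'s after stripping; everything else is a direct adaptation of the injective proof combined with the group-theoretic identities already established.
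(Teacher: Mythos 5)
Your route coincides with the paper's: both inclusions are argued the same way, and the hard direction proceeds by symmetrizing $M$ and $N$, stripping with the left inverses, and collapsing the resulting twirls via Lemma~\ref{lemma:1d-ghh-anyrep}. The one step that does not go through as stated is the last one, where you assert that the stripped virtual identity ``forces $M$ to lie in $\mathrm{span}\{U_g\}$, with $N$ necessarily given by the same coefficients.'' This is both more than you need and more than you can extract in general. Reading off coefficients ``position by position'' presupposes that the $U_g$ are linearly independent, which is guaranteed only for semi-regular representations (Lemma~\ref{lemma:noninj:semireg-trace-ug-delta}); Definition~\ref{def:Ug-inj} imposes no such restriction, so for a general $G$--injective tensor the expansion of an element of $\mathrm{span}\{U_g\}$ in the $U_g$ is not unique and ``the same coefficients'' for $N$ is not a well-posed claim. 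Moreover, the symmetrized $M$ a priori lives in the \emph{commutant} of $\{U_g\}$, which for non-abelian $G$ is a different algebra from $\mathrm{span}\{U_g\}$, so membership in the latter is not something the stripping hands you directly. The paper avoids all of this by stopping one step earlier: after applying the inverses, the expression obtained from the $N$-side is of the form $\sum_{h,k}(\cdots)\,U_{kh^{-1}}$, and Lemma~\ref{lemma:1d-ghh-anyrep} lets one identify the \emph{state} (not the boundary matrix) as $\sum_g\lambda_g\MPS{A|U_g}$ with the explicit scalars $\lambda_g=\tr[U_{g^{-1}}N\Delta]$, $g\equiv kh^{-1}$, which is exactly the right-hand side of \eqref{eq:1dsym:closure}. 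Since you already have every ingredient in place, replacing your final inference by this direct read-off is a local repair rather than a change of strategy.
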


Before proving the theorem, let us give an intuition why the closure can
be done using any $U_g$ (and nothing else). To this end, regard $B$ as the
$(L-1)$-fold blocking of $A$'s. The closures $U_g$ are exactly the operators
which commute with $A$, and therefore, they can be moved to any position in the chain.
Thus, no local block of $A$'s needs to hold the closing $U_g$,
i.e., the state looks the same locally independent of the closure.

\begin{proof}
We may again assume that $M$ and $N$ are $G$--invariant.  It is
clear that the r.h.s.\ is contained in the l.h.s., by moving $U_g$
to the relevant link and setting $M=U_g$ or $N=U_g$, respectively.
On the other
hand, any element in the intersection can be written using some $M$ and
$N$, for which it holds that
\[
\raisebox{-2.0em}{
\includegraphics[height=5.5em]{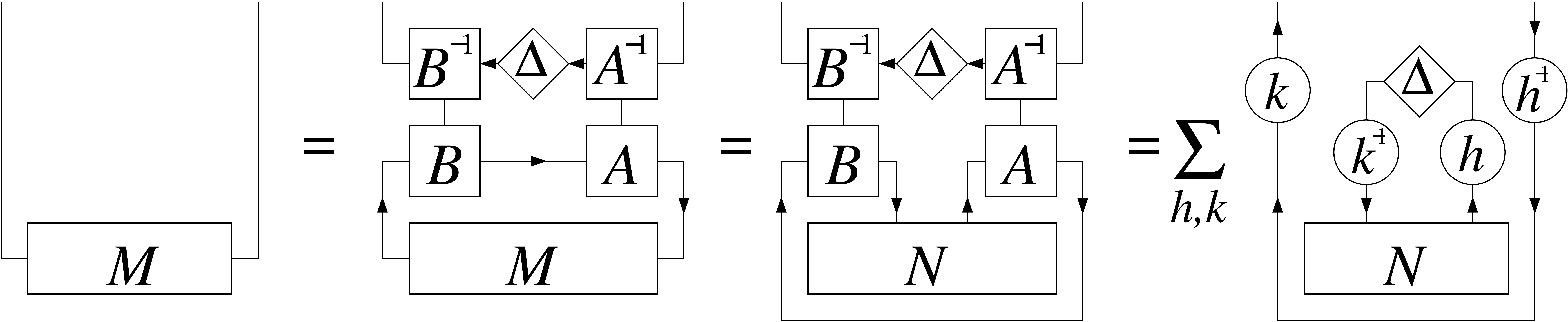}
}\quad .
\]
(In the first step, we have used that $M$ is $G$-invariant.) Thus, any
element of the intersection
is of the form of the r.h.s.\ of \eqref{eq:1dsym:closure}, with
$\lambda_g=\tr[U_{g^{-1}}N\Delta]$, $g\equiv kh^{-1}$.
\end{proof}

Let us now formally define translational invariant MPS with an operator in
the closure, as they appear in the above theorem.
\begin{definition} For an MPS tensor $A$ and $K\in\lin{D}$, we define
\begin{equation}
    \label{eq:noninj:mps-a-Ugclosed-def}
\ket{{\mathcal{M}}(A|K)}:=\sum_{i_1,\dots,i_L}
\tr[A^{i_1}\cdots A^{i_L}K]\ket{i_1,\dots,i_L}
\end{equation}
to be the MPS given by $A$ with closure $K$ \footnote{Note that one can
use a different closure to reduce dramatically the bond dimension. One
example is the W-state, whose bond dimension grows to infinity with L with
the standard closure \cite{perez-garcia:mps-reps,sanz:wielandt}, but has
bond dimension 2 with a different closure.}.
\end{definition}

\begin{theorem}[Parent Hamiltonians]
    \label{thm:noninj:parentham}
    Let $A$ be $G$--injective, $\mc S_2$ as in \eqref{eq:trBBX}, and let
\[
h_i=\openone_d^{\otimes(i-1)}\otimes (1-\Pi_{\mathcal S_2})
    \otimes \openone_d^{L-i-2}\ .
\]
Then,
\[
H_\mathrm{par}=\sum_{i=1}^L h_i
\]
has a subspace of frustration free ground states spanned by the MPS
$\MPS{A|U_g}$ with $U_g$-closed boundaries.
\end{theorem}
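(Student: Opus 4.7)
The plan is to follow the structure of the proof of Theorem~\ref{thm:inj:parent-ham} in the injective case, simply substituting Theorems~\ref{thm:noninj:intersection} and~\ref{thm:noninj:closure} for their injective counterparts. Since the $G$--injective Intersection Property produces a right-hand side of exactly the same form as in the injective case (parameterized by a single free matrix $X$), the blocking-and-induction argument that yielded Eq.~\eqref{eq:inj:S-L-as-intersect} carries over unchanged: regrouping the chain as $\mc S_{k-1}\otimes\mathbb C^d$ and $\mathbb C^d\otimes\mc S_{k-1}$ and applying Theorem~\ref{thm:noninj:intersection} to their intersection yields $\mc S_k$, and iterating gives $\mc S_L$ as the intersection of all two-body supports.

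Next, I would split $H_\mathrm{par}=H_\mathrm{left}+H_\mathrm{right}$ with exactly the half-weighted decomposition used in the injective proof, so that $H_\mathrm{left}$ (which omits the closing term $h_L$) has kernel
\[
\mc S_\mathrm{left}=\Big\{\sum \tr[A^{i_1}\cdots A^{i_L}M]\ket{i_1,\dots,i_L}\,\Big|\,M\in\lin D\Big\}
\]
(free matrix at the closing bond) and $H_\mathrm{right}$ (which omits $h_1$) has the analogous kernel $\mc S_\mathrm{right}$ with a free matrix $N$ inserted at bond $1$--$2$. To intersect these subspaces, I would view the $(L-1)$--fold block $A^{i_2}\cdots A^{i_L}$ as a single $G$--injective tensor $C$ (using the concatenation-stability lemma to justify that the block remains $G$--injective), and invoke Theorem~\ref{thm:noninj:closure} applied to the pair $(A,C)$. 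The right-hand side of~\eqref{eq:1dsym:closure} then identifies
\[
\mc S_\mathrm{left}\cap\mc S_\mathrm{right}=\mathrm{span}\{\MPS{A|U_g}\}_{g\in G}\ .
\]
Each such state is frustration free essentially by construction: it lies in $\mc S_\mathrm{left}$ (take $M=U_g$) and in $\mc S_\mathrm{right}$ (take $N=U_g$ and use $G$--invariance of $A$ to commute $U_g$ through to the canonical bond), hence is annihilated by every local projector $h_i$.

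The step that I expect to require the most care is the induction underlying the Intersection Property: Theorem~\ref{thm:noninj:intersection} is stated for a pair of $G$--injective tensors, so at every level of the induction one has to invoke the concatenation-stability lemma to make sure the blocked tensors remain $G$--injective and thus the two-block statement applies. A secondary point worth verifying is that the $\lambda_g$--parameterization of the right-hand side of the Closure Property is indeed the full linear span of the $\MPS{A|U_g}$ rather than some affine subset; this is immediate from the form of the equation, but is the step that upgrades the set equality of Theorem~\ref{thm:noninj:closure} into the \emph{subspace} statement needed here. With these two observations in hand, the proof reduces to a more-or-less mechanical translation of the injective argument.
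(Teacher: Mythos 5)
Your proposal is correct and follows exactly the route the paper takes: the paper's own proof simply states that the argument is identical to that of Theorem~\ref{thm:inj:parent-ham} with Theorems~\ref{thm:noninj:intersection} and~\ref{thm:noninj:closure} substituted for their injective counterparts, which is precisely the translation you carry out (including the two points you flag, namely concatenation-stability at each induction step and reading the $\lambda_g$--parameterization of the closure as the linear span of the $\MPS{A|U_g}$).
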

\begin{proof}
The proof is exactly the same as for Theorem~\ref{thm:inj:parent-ham},
except that is is now based on Theorems~\ref{thm:noninj:intersection}
and~\ref{thm:noninj:closure}; the latter
leading to the degeneracy of the ground state subspace.
\end{proof}

\begin{theorem}[Structure of ground state subspace]
Let $D^1,\dots,D^I$ be the irreducible representations of $U_g$, of
dimension $d_i$. Then, the ground state subspace of
Theorem~\ref{thm:noninj:parentham} is $I$-fold degenerate, and it is
spanned by the MPS $\MPS{A|\Pi_i}$, where
\begin{equation}
    \label{eq:noninj:all-indep-closures}
        \Pi_i = \frac{d_i}{|G|}\sum_g \tr[D^i(g^{-1})] U_g
\end{equation}
is the projector onto the subspace supporting the irreducible
representation $D^i$ in $U_g$ [proven in
\eqref{eq:noninj:sum-CHIi-Di-is-PROJi}].

Moreover, if $U_g$ is a semi-regular representation, $I$ is equal to the
number of conjugacy classes of $G$, and the subspace is spanned by the
linearly independent states $\MPS{A|U_g}$, where for each conjugacy class
$g$, one representative is chosen.
\end{theorem}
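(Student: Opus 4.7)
The plan is to exploit that $K\mapsto\MPS{A|K}$ is a linear map which, thanks to the $G$--invariance of $A$, factors through the commutant $\mathcal S$ of $\{U_g\}$, and whose restriction to $\mathcal S$ is injective by $G$--injectivity. First I would observe that $U_gA^iU_g^\dagger=A^i$ together with cyclicity of the trace yields $\MPS{A|K}=\MPS{A|U_hKU_h^\dagger}$ for every $h\in G$, so by averaging
\[
\MPS{A|K}=\MPS{A|\bm\sigma(K)}\ .
\]
In particular $\MPS{A|U_g}=\MPS{A|\bm\sigma(U_g)}$, so by Theorem~\ref{thm:noninj:parentham} the ground state subspace equals the image, under $K\mapsto\MPS{A|K}$, of the linear subspace $\mathcal C:=\mathrm{span}\{\bm\sigma(U_g):g\in G\}\subset\mathcal S$.

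Second, I would identify $\mathcal C$ with the center of the algebra $\mathrm{Im}(\rho):=\mathrm{span}\{U_g:g\in G\}$. In the block decomposition $U_g\cong\bigoplus_i D^i(g)\otimes\openone_{m_i}$, each irrep $D^i$ generates $\lin{d_i}$ by Burnside's theorem, so $\mathrm{Im}(\rho)\cong\bigoplus_i\lin{d_i}\otimes\openone_{m_i}$, whose center is the $I$--dimensional span of the central idempotents $\Pi_i$ (the projectors onto the isotypic blocks). That these $\Pi_i$ coincide with the operators \eqref{eq:noninj:all-indep-closures} is the content of the orthogonality relation \eqref{eq:noninj:sum-CHIi-Di-is-PROJi}. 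Consequently the ground state subspace is already spanned by the $I$ states $\MPS{A|\Pi_i}$.

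Third, to upgrade ``spans'' to ``basis'' I would show that the restriction of $K\mapsto\MPS{A|K}$ to $\mathcal S$ is injective. By the stability of $G$--injectivity under concatenation, the length-$L$ blocked tensor $A^{\otimes L}$ is again $G$--injective, so $\mathcal P(A^{\otimes L})$ admits a left inverse on the $G$--invariant subspace of the two combined virtual ends. Reading $\MPS{A|K}$ as the periodic contraction of $\mathcal P(A^{\otimes L})$ against $K$ and applying this left inverse, together with the weight $\Delta$ of Lemma~\ref{lemma:1d-ghh-anyrep}, recovers $K$ from $\MPS{A|K}$ whenever $K\in\mathcal S$; in particular $\MPS{A|K}=0$ forces $K=0$. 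Since the $\Pi_i$ are linearly independent in $\mathcal S$, the states $\MPS{A|\Pi_i}$ are linearly independent, which establishes both the dimension count and the claimed basis.

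Finally, in the semi-regular case every irreducible representation of $G$ occurs in $U_g$, so $I$ equals the number of irreps of $G$, which equals the number of conjugacy classes of $G$. Lemma~\ref{lemma:noninj:semireg-trace-ug-delta} provides linear independence of $\{U_g\}_{g\in G}$ in $\lin{D}$; since the class sums $\bm\sigma(U_g)$ indexed by one representative per conjugacy class are supported on disjoint subsets of $\{U_k\}$, they are linearly independent in $\mathcal C$, and by the injectivity from the previous step the corresponding $I$ states $\MPS{A|U_g}$ are linearly independent and therefore form a basis. I expect the main obstacle to lie in the third step, namely in carrying out the site-wise application of the left inverse---with the $\Delta$ insertions prescribed by Lemma~\ref{lemma:1d-ghh-anyrep}---cleanly enough through the periodic closure to genuinely reconstruct $K$ on $\mathcal S$ rather than some further-twirled variant of it.
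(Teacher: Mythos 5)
Your proof is correct and follows essentially the same route as the paper: conjugation-invariance of the closure reduces everything to the span of the class sums, the central idempotents $\Pi_i$ identify that span as $I$-dimensional, the left inverse of the $L$-fold blocked tensor yields linear independence [this is exactly the paper's Eq.~\eqref{eq:noninj:Ug-lin-indep-diag}], and Lemma~\ref{lemma:noninj:semireg-trace-ug-delta} handles the semi-regular case. The only (harmless) differences are presentational: you phrase the span step as identifying $\mathcal C$ with the center of $\mathrm{Im}(\rho)$ rather than via orthonormality of characters on class functions, and you prove injectivity of $K\mapsto\MPS{A|K}$ on all of $\mathcal S$ where the paper only needs it on the span of the class sums; the "obstacle" you flag in step three is already resolved by the paper's own left-inverse computation.
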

\noindent
Note that the first part of the theorem corresponds to the known form of
the different ground states, corresponding to the block structure of the
$A^i$'s~\cite{fannes:FCS,perez-garcia:mps-reps}, while the second part is
the new symmetry-based classification of ground states which can be
extended to the two-dimensional scenario.
\begin{proof}
First, the $G$--invariance of the $A^i$ implies that
\[
\MPS{A|U_g}=\MPS{A|U_hU_gU_h^\dagger}=\MPS{A|U_{hgh^{-1}}}\ ,
\]
i.e., all
closures from the same conjugacy class are equivalent. Let
$D^1,\dots,D^J$, $J\ge I$, be all irreducible representations of $G$.
Since the characters $\chi_i(g)=\tr[D^i(g)]$ form an orthonormal set for
the space of class functions (i.e.\ the functions which are constant over
conjugacy classes), the ground state space is equally spanned by
\[
\sum_g \tr[D^i(g^{-1})]\MPS{A|U_g} =
\MPS{A|\sum_g\tr[D^i(g^{-1})]U_g} \propto
\MPS{A|\Pi_i}\ ;
\]
Note that $\Pi_i=0$ for $i>I$ (i.e.\ if $U_g$ does not contain $D^i$).

Conversely, linear independence of $\MPS{A|\Pi_i}$, $i=1,\dots,I$  can be
seen as follows: For any class function $\lambda_g$, and with $C$ the
$L$-fold blocking of $A$, we have that
\begin{equation}
\raisebox{-1.0em}{\includegraphics[height=5em]{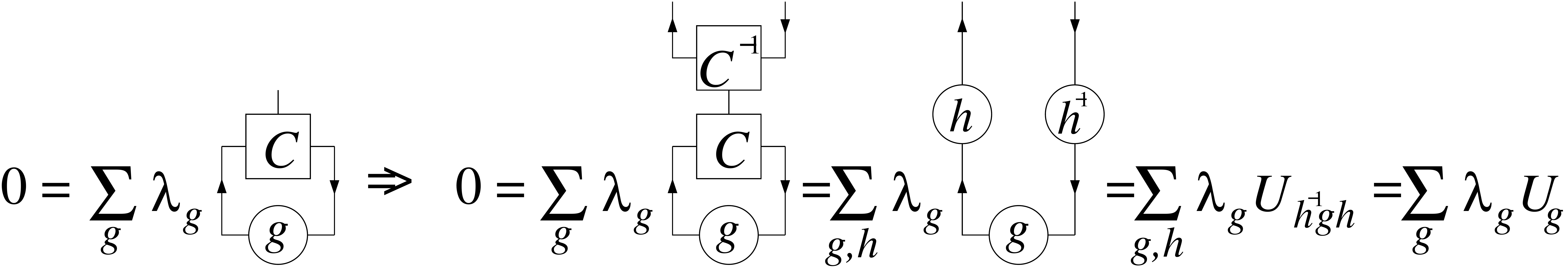}}\ .
\label{eq:noninj:Ug-lin-indep-diag}
\end{equation}
Thus,
\begin{align*}
0 &=\sum_i \mu_i \MPS{A|\Pi_i}
=\sum_{i,g} \mu_i\tfrac{d_i}{|G|}\tr[D^i(g^{-1})] \MPS{A|U_g} \\
\stackrel{\eqref{eq:noninj:Ug-lin-indep-diag}}{\Rightarrow}
\quad
0 &= \sum_{i,g} \mu_i \tfrac{d_i}{|G|}\tr[D^i(g^{-1})] U_g = \sum_{i} \mu_i \Pi_i \\
\Rightarrow \quad 0 &=\mu_i \quad\forall\,i\ ,
\end{align*}
which proves linear independence. (Note that we had to use that $D^i$ is
contained in $U_g$, otherwise $\Pi_i\equiv 0$.)

The second statement follows from the fact that $\MPS{A|U_g}$ is constant
on conjugacy classes, and that the number of conjugacy classes equals the
number of irreducible representations of $G$.
\end{proof}

\section{Two dimensions: $G$--injective PEPS\label{sec:PEPS-noninj}}

\subsection{Definition and basic properties}

Having understood the one-dimensional case of $G$--injective MPS, let us
now turn towards two dimensions. We will introduce some new conventions
for the diagrams (as in principle, we need a third dimension), which we
will explain right after the definition of $G$--injectivity.

\begin{definition}
    \label{def:2d-Ug-inj}
Let $g\mapsto U_g$ be a semi-regular representation of a finite group $G$.
We call a PEPS tensor $A$ $G$--injective if\\
i) It is invariant under $U_g$ on the virtual level,
\begin{equation}
    \label{eq:2d-ug-sym}
    \raisebox{-2.0em}{
        \includegraphics[height=5em]{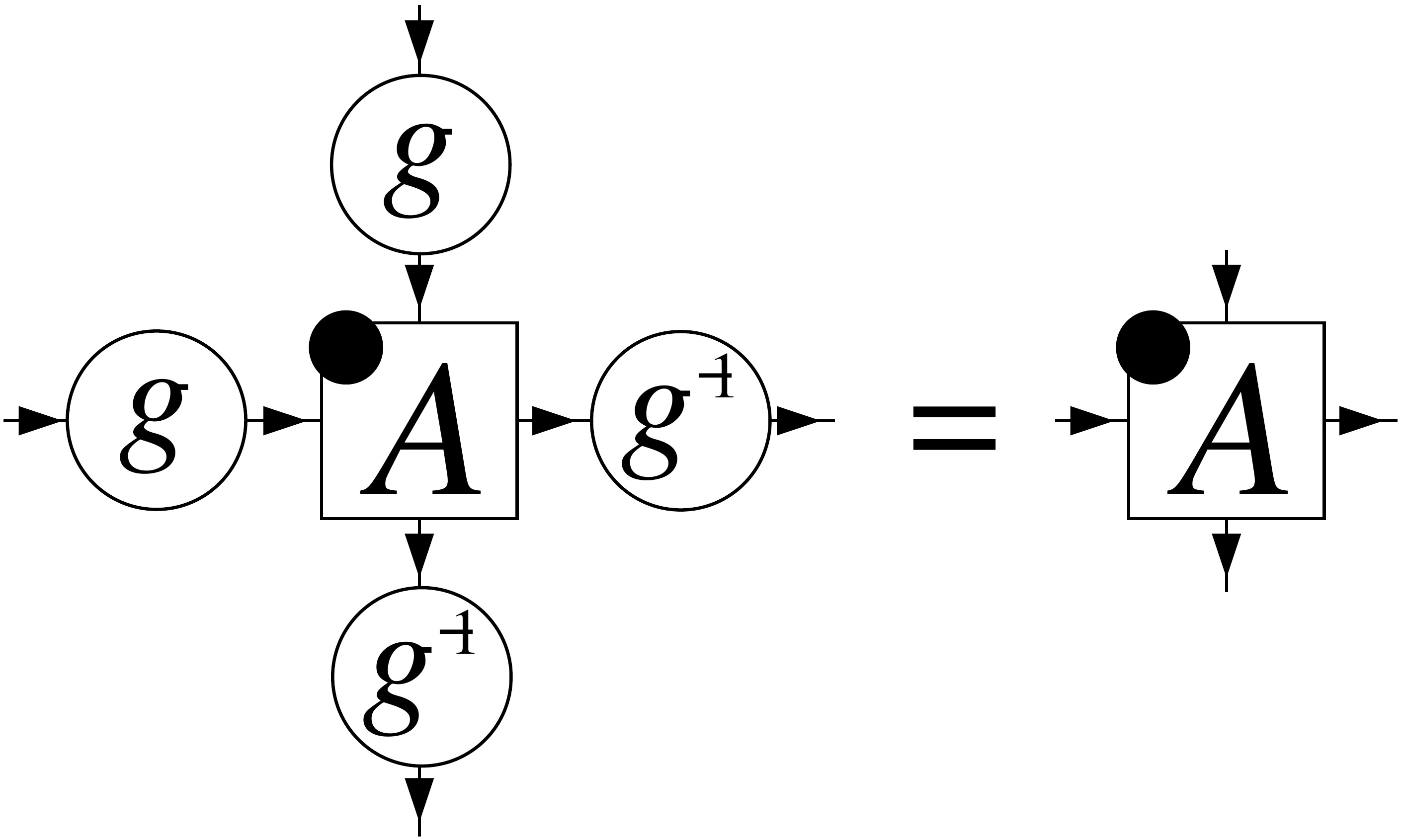}
        }
\end{equation}
ii) There exists a left inverse to $\mathcal P(A)$ such that $\mc
P(A)^{-1}P(A)=\Pi_{\mc U}$, the projector on the $U_g$-invariant subspace:
\begin{equation}
    \label{eq:2d-linv}
\raisebox{-2.0em}{\includegraphics[height=5em]{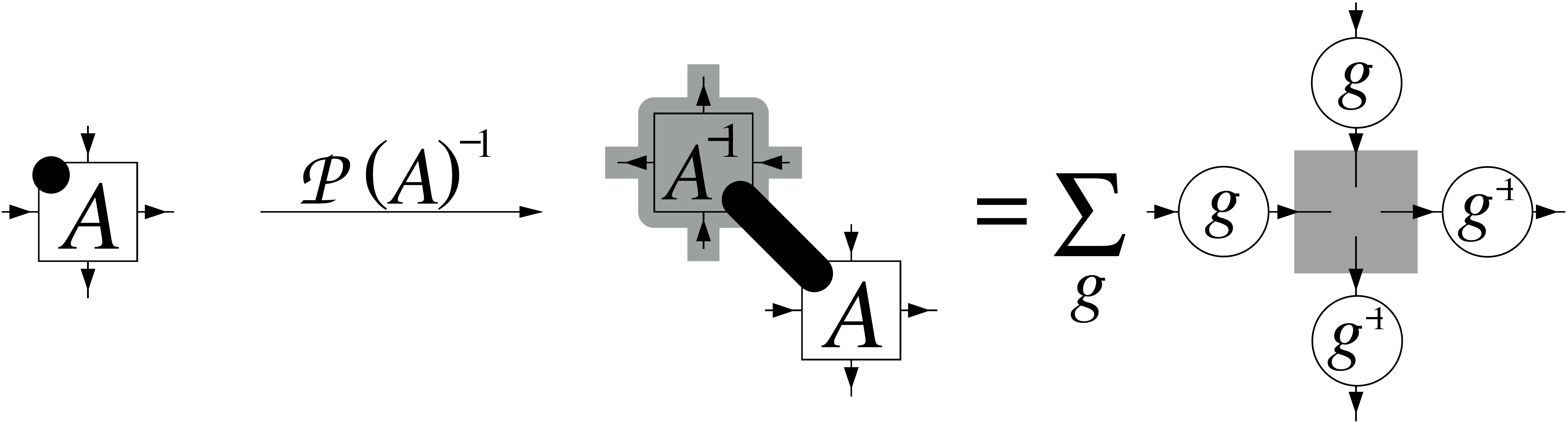}}
\end{equation}
\end{definition}

Let us briefly explain the differences in notation: We will try to avoid
three-dimensional plots as far as possible. PEPS tensors are generally
depicted ``from the top'': The four legs in \eqref{eq:2d-ug-sym} are the
virtual indices, and a black dot denotes the physical index (for the
inverse, it is in the lower right instead of the upper left corner). As we
use the let inverse to make the virtual subspace accessible via
the physical indices, we will depict only the situation
\emph{after} applying the left inverse whenever possible, as depicted on
the very right of \eqref{eq:2d-linv}. In order to distinguish the ``original''
virtual level and the one after the application of the left-inverse, we
shade the latter gray. (This corresponds to the lower and upper layer,
respectively, in the 1D case.)

Note that there is no need to choose the same representation $U_g$ for the
horizontal and vertical direction. In fact, as mentioned earlier
representations are assigned to \emph{links}, and every link can carry its
own representation -- all that matters is that the two tensors acting on a
link act with the same representation.  It is this possibility of changing
the representations which enables us to prove that $G$--injectivity is is
preserved under blocking.

\begin{lemma}[Stability under concatenation]
Let $A$ and $B$ be $G$--injective tensors.  Then,
\begin{equation}
    \label{eq:2d:c-is-ab}
        \raisebox{-0.7em}{
        \includegraphics[height=1.8em]{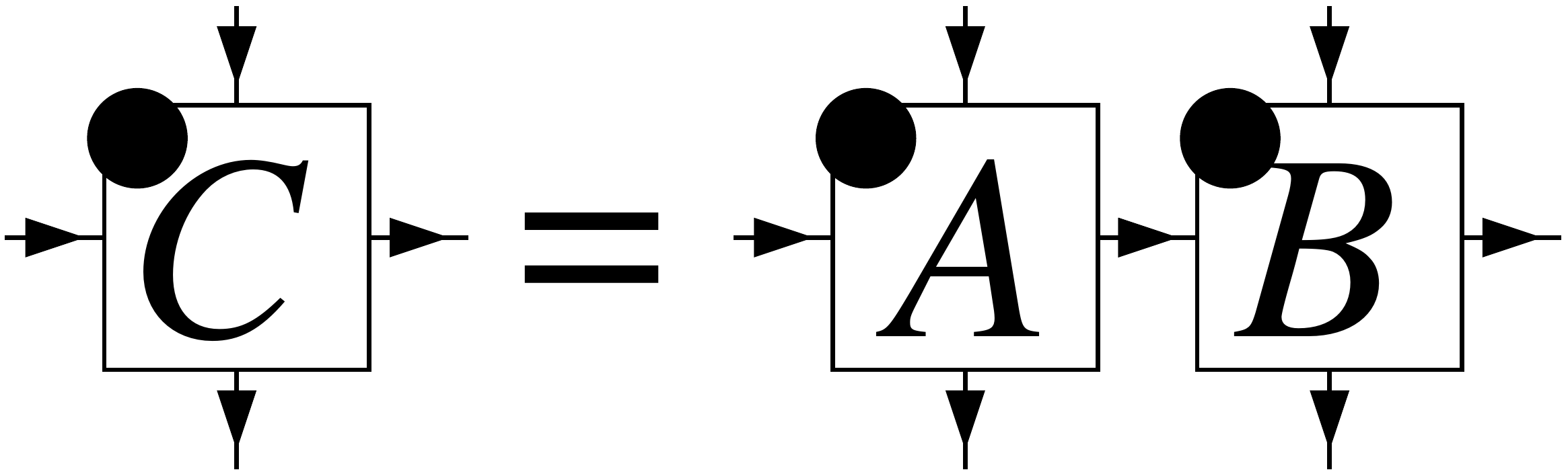}
        }
\end{equation}
(with blocked up, down, and physical indices)
is also $G$--injective with left inverse
\begin{equation}
    \label{eq:2d:ab-inv}
    \raisebox{-0.9em}{\includegraphics[height=2.0em]{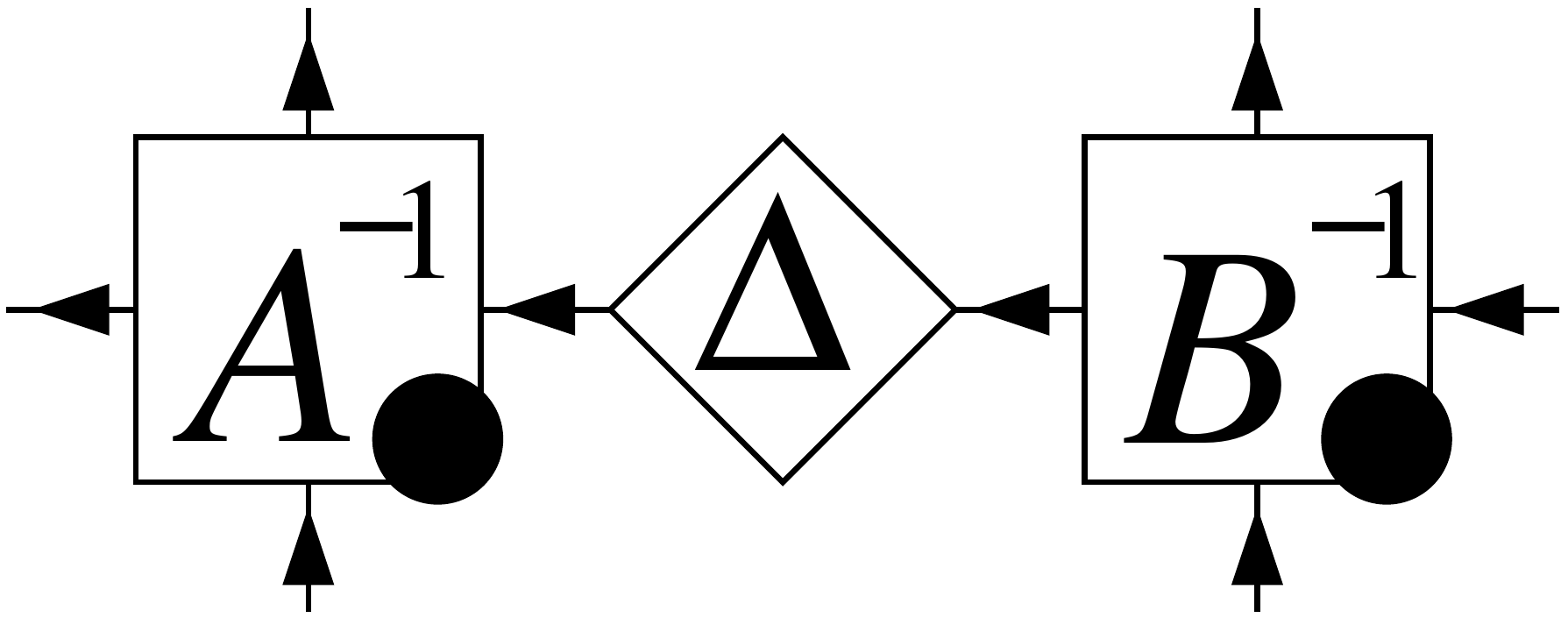}}
\end{equation}
\end{lemma}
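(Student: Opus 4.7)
The plan is to mirror the one-dimensional proof (\ref{eq:mpssym:gsym-grow})--(\ref{eq:noninj:linv}), treating the single internal bond between the blocked tensors the same way it was treated in 1D, and reducing the proof to the group-theoretic identity of Lemma~\ref{lemma:1d-ghh-anyrep}.

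First, I would verify $G$--invariance of $C$. Apply $U_g$ to every external virtual leg of $C$ and insert $U_g U_g^\dagger=\openone$ on the internal bond that connects $A$ to $B$. One of these factors can then be absorbed by $A$ using Def.~\ref{def:2d-Ug-inj}(i) applied to $A$ alone (its four legs now all carry $U_g$), and the other by the analogous use of Def.~\ref{def:2d-Ug-inj}(i) for $B$. This leaves $C$ unchanged, which is the 2D analogue of the diamond diagram (\ref{eq:mpssym:gsym-grow}).

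Second, I would verify that the tensor in (\ref{eq:2d:ab-inv}) is a left inverse of $\mathcal P(C)$. Composing with $\mathcal P(C)=\mathcal P(A)\otimes_{\mathrm{internal}}\mathcal P(B)$ and using Def.~\ref{def:2d-Ug-inj}(ii) separately on $A$ and on $B$ replaces $A^{-1}\mathcal P(A)$ and $B^{-1}\mathcal P(B)$ by the projectors $\Pi_{\mc U_A}$ and $\Pi_{\mc U_B}$ onto the $U_g$--invariant subspaces of $A$'s and $B$'s full four-leg virtual spaces, respectively, while the internal bond is connected through $\Delta$ as prescribed. The remaining task is to show that this composite object equals $\Pi_{\mc U_C}$, the projector onto the $U_g$--invariant subspace on the external legs of $C$.

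For this last step, I would expand each projector as $\Pi_{\mc U_A}=\tfrac1{|G|}\sum_g U_g^{\otimes 4}$ (and analogously for $B$), so that the internal bond carries a $U_g$ on the $A$ side, a $U_h$ on the $B$ side, and a $\Delta$ between them. Contracting along the internal bond yields a scalar factor $\tr[U_{gh^{-1}}\Delta]$ (the precise form uses the arrow conventions from Observation~\ref{obs:surjective}), and summing over $h$ with this weight reduces, by the identity (\ref{eq:noninj:tr-ghdeltaUh-is-Ug}) of Lemma~\ref{lemma:1d-ghh-anyrep}, to a single $U_g$ acting coherently on all external legs of $C$. The remaining sum over $g$ is exactly $\Pi_{\mc U_C}$, completing the proof.

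The main obstacle is diagrammatic bookkeeping: keeping track of arrow directions on the internal bond so that the trace factor comes out as $\tr[U_{gh^{-1}}\Delta]$ rather than something else, and verifying that the external legs of $A$ and $B$ indeed assemble into the correct $G$--invariant projector on the blocked external virtual space of $C$. Once the diagrams are arranged to isolate the internal bond, Lemma~\ref{lemma:1d-ghh-anyrep} does all the representation-theoretic work, exactly as it did in the one-dimensional proof of (\ref{eq:noninj:linv}).
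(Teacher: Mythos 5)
Your overall architecture matches the paper's proof (invariance of $C$ by cancelling $U_gU_g^\dagger$ on the inner link; reduction of the left-inverse property to a scalar factor $\tr[U_{gh^{-1}}\Delta]$ on the contracted internal bond), but the final representation-theoretic step is wrong as justified. Lemma~\ref{lemma:1d-ghh-anyrep} gives $\sum_h\tr[U_{gh^{-1}}\Delta]\,U_h=U_g$ for a \emph{single} copy of $U_h$, and that is all that is needed in one dimension, where after contracting the inner bond each of $g$ and $h$ survives on exactly one external leg. In two dimensions the situation is different: $B$ has three external legs (up, right, down), so the sum over $h$ multiplies $U_h^{\otimes 3}$, and you need $\sum_h\tr[U_{gh^{-1}}\Delta]\,U_h^{\otimes 3}=U_g^{\otimes 3}$. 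This does \emph{not} follow from Lemma~\ref{lemma:1d-ghh-anyrep}; it requires the coefficients to be literally $\delta_{g,h}$. For a concrete failure, take $G=\mathbb Z_3$ and the one-dimensional representation $U_g=\omega^g$: then $\tr[U_{gh^{-1}}\Delta]=\tfrac13\omega^{g-h}$ satisfies the single-copy identity, but $\sum_h\tfrac13\omega^{g-h}\,U_h^{\otimes3}=\tfrac13\omega^g\sum_h\omega^{2h}=0\neq U_g^{\otimes3}$. The correct ingredient is Lemma~\ref{lemma:noninj:semireg-trace-ug-delta}, $\tr[U_g^\dagger U_h\Delta]=\delta_{g,h}$, which collapses the double sum $\sum_{g,h}$ to a single sum and yields $U_g$ coherently on all six external legs. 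This is precisely why Definition~\ref{def:2d-Ug-inj} restricts to \emph{semi-regular} representations in two dimensions, whereas the one-dimensional Definition~\ref{def:Ug-inj} does not; your argument, as written, never uses semi-regularity and would therefore ``prove'' a statement that is false for general representations.

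A second, smaller omission: for $C$ to be $G$--injective in the sense of Definition~\ref{def:2d-Ug-inj}, the representation carried by the blocked up and down legs, $U_g\otimes U_g$ (or $U_g\otimes V_g$ for different bond representations), must itself be semi-regular. The paper opens its proof with exactly this observation --- a tensor product of semi-regular representations is again semi-regular --- and you should include it, since otherwise the blocked tensor does not even satisfy the hypotheses of the definition you are trying to verify.
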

\begin{proof}
First, note that for any two semi-regular representations $U_g$ and $V_g$,
$W_g=U_g\otimes V_g$ is again a semi-regular representation. Then, it is
clear that $C$ is also $G$--invariant, as the action of the $U_g$ on the
inner link cancels.  That \eqref{eq:2d:ab-inv} is
left-inverse to $C$ follows using $\tr[U_g\Delta]\propto\delta_{g,1}$,
Lemma~\ref{lemma:noninj:semireg-trace-ug-delta}:
\begin{equation}
    \label{eq:2d:linv-application}
\raisebox{-2em}{
\includegraphics[height=5.0em]{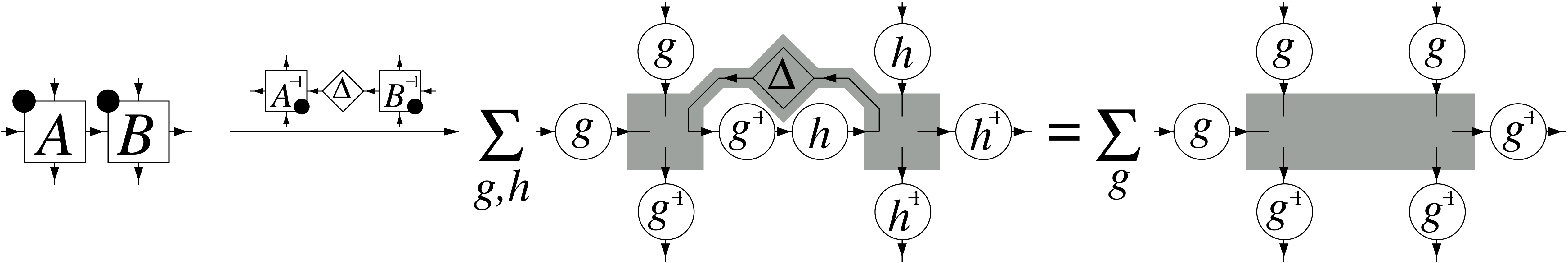}
}\ .
\end{equation}
\end{proof}
\begin{observation}
Note that $G$--injectivity is also preserved when contracting legs of an
already connected block, e.g.\ the up leg of $A$ with the down leg of $B$
in \eqref{eq:2d:c-is-ab}: The resulting tensor is clearly again
$G$--invariant, and the left-inverse is obtained by contracting the
corresponding legs of \eqref{eq:2d:ab-inv} with any any operator with
nonzero trace (e.g., $\Delta$): The group elements attached to the two
legs cancel out, as they belong to the same tensor.
\end{observation}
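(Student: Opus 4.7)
The plan is to verify the two claims of the observation separately, mirroring the two conditions of Definition~\ref{def:2d-Ug-inj}. For the $G$--invariance of the tensor $\tilde C$ obtained by contracting the up leg of $A$ with the down leg of $B$ in $C$, the idea is to start from $U_g$'s applied on every external virtual leg of $\tilde C$ and insert $U_g U_g^\dagger = \openone$ on the bond that was just closed (with the two factors placed on opposite sides of the bond, which is consistent because those legs carry opposite arrow orientations). The result is a configuration of $U_g$'s on \emph{every} virtual leg of the unmodified $C$, which by the $G$--invariance established in the preceding Lemma for the concatenation can be pushed through $C$ and removed; reversing the contraction then yields $\tilde C$ itself, proving $G$--invariance.

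For the left inverse, the plan is to take the candidate $\mcP(\tilde C)^{-1}$ to be the tensor obtained from \eqref{eq:2d:ab-inv} by contracting the two legs that correspond to the newly internal bond of $\tilde C$, inserting an operator $X$ with $\tr[X]\neq0$ between them (the observation suggests $X=\Delta$, for which $\tr[\Delta]=1$ by Lemma~\ref{lemma:noninj:semireg-trace-ug-delta} at $g=h=1$). Composing this with $\mcP(\tilde C)$ gives the composition $\mcP(C)^{-1}\circ\mcP(C)=\Pi_{\mc U}$ of the uncontracted case, but with one extra pair of virtual legs closed through $X$. Since $\Pi_{\mc U}$ acts as a twirl $\tfrac{1}{|G|}\sum_g U_g^{\otimes}\cdots(U_g^\dagger)^{\otimes}$ over every virtual leg, closing two legs on which it acts with opposite conjugation through $X$ produces the factor $\tr[U_g X U_g^\dagger]=\tr[X]$, independent of $g$, while the twirl restricted to the remaining legs is precisely $\Pi_{\mc U}$ for $\tilde C$. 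Dividing by $\tr[X]$ (absorbed into the normalization we routinely omit in diagrams) yields $\mcP(\tilde C)^{-1}\circ\mcP(\tilde C)=\Pi_{\mc U}$, as required.

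The essential physical content is the observation already highlighted in the text: because the two legs being closed belong to the same blocked tensor, the group action on them is internal, and $U_g U_g^\dagger$ collapses on the closing bond. The main potential obstacle is purely notational, namely keeping the arrow orientations on the two contracted legs consistent so that the $U_g$ and $U_g^\dagger$ factors actually meet along the internal bond rather than on the same side of it; once the arrows are set up as in Definition~\ref{def:2d-Ug-inj}, this is automatic. No genuinely new technical ingredient beyond the semi-regularity identity $\tr[U_g^\dagger U_h\Delta]=\delta_{g,h}$ and the Lemma on concatenation is required, so the argument is a direct diagrammatic consequence of results already established.
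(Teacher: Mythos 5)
Your proposal is correct and follows essentially the same route as the paper's own (very terse) justification: $G$--invariance by inserting $U_gU_g^\dagger=\openone$ on the newly closed bond and absorbing the resulting full set of $U_g$'s into the invariance of the uncontracted block, and the left inverse by noting that the twirl $\Pi_{\mc U}$ acts on the closed pair by conjugation, so contracting through $X$ yields the $g$--independent factor $\tr[U_gXU_g^\dagger]=\tr[X]\neq0$ while leaving the twirl on the remaining legs. This is precisely the cancellation the paper alludes to with ``the group elements attached to the two legs cancel out, as they belong to the same tensor.''
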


\subsection{Parent Hamiltonians}

The idea to construct parent Hamiltonians is essentially the same as in
one dimension. We define the local Hamiltonian as $1$ minus the
projector on the span of a $2\times 2$ block (the smallest block which allows
for an overlapping tiling of the lattice), and study how the ground state
subspace behaves when growing the block. For simplicity, we first grow the
$2\times 2$ block in one direction until we reach a
$2\times L$ lattice, and then in the other direction until we have the
full $L\times L$ lattice with open boundaries. Finally, we study what
happens when we close the boundaries. While the growing will work
essentially exactly as in 1D, closing the boundaries will give a richer
structure.

\begin{theorem}[Intersection property]
    \label{thm:2d:intersection}
Let $A$, $B$ be $G$--injective. Then,
\[
\left\{
    \vphantom{\raisebox{-1.2em}{\rule{3em}{0em}}}
    \smash{\underbrace{
    \raisebox{-1.2em}{\includegraphics[height=3em]{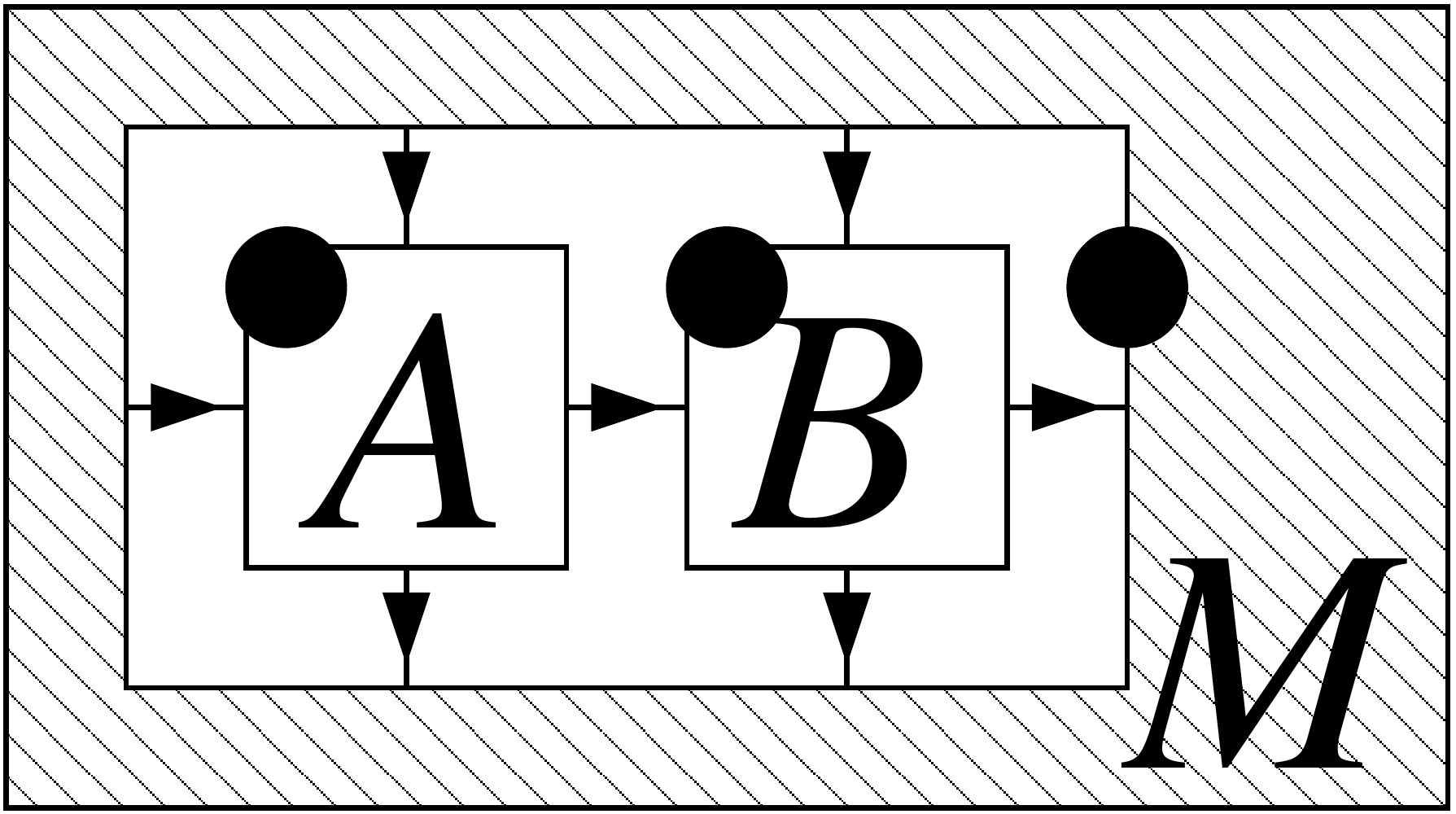}\rule[-0.2em]{0.2em}{0em}}
    }_{\displaystyle=:\ket{\alpha(M)}}}\
    \middle\vert M \right\}
\cap
\left\{
    \vphantom{\raisebox{-1.2em}{\rule{3em}{0em}}}
    \smash{\underbrace{
    \raisebox{-1.2em}{\includegraphics[height=3em]{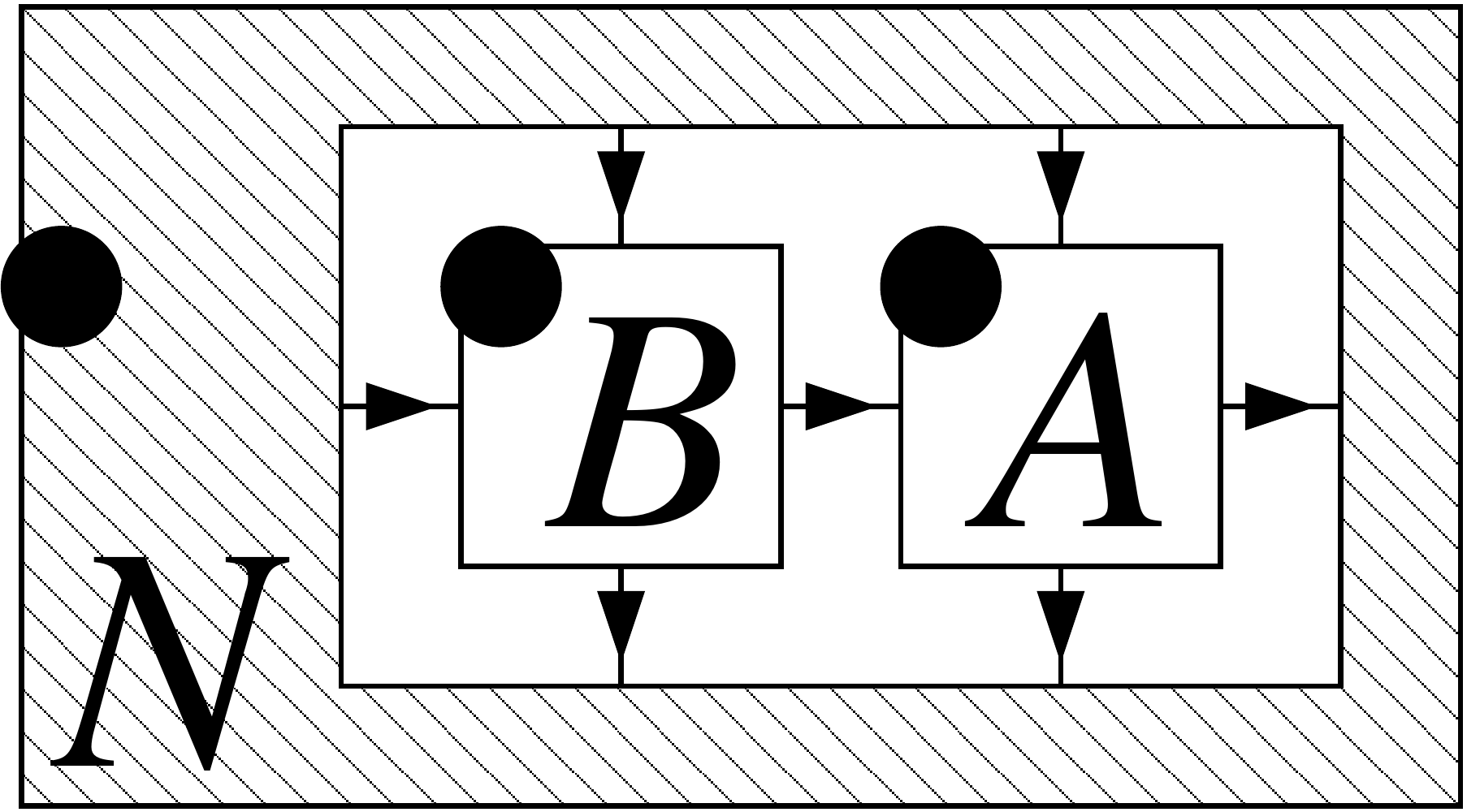}\rule[-0.2em]{0.2em}{0em}}
    }_{\displaystyle=:\ket{\beta(N)}}}\
    \middle\vert N \right\}
=
\left\{
    \vphantom{\raisebox{-1.2em}{\rule{3em}{0em}}}
    \smash{\underbrace{
    \raisebox{-1.2em}{\includegraphics[height=3em]{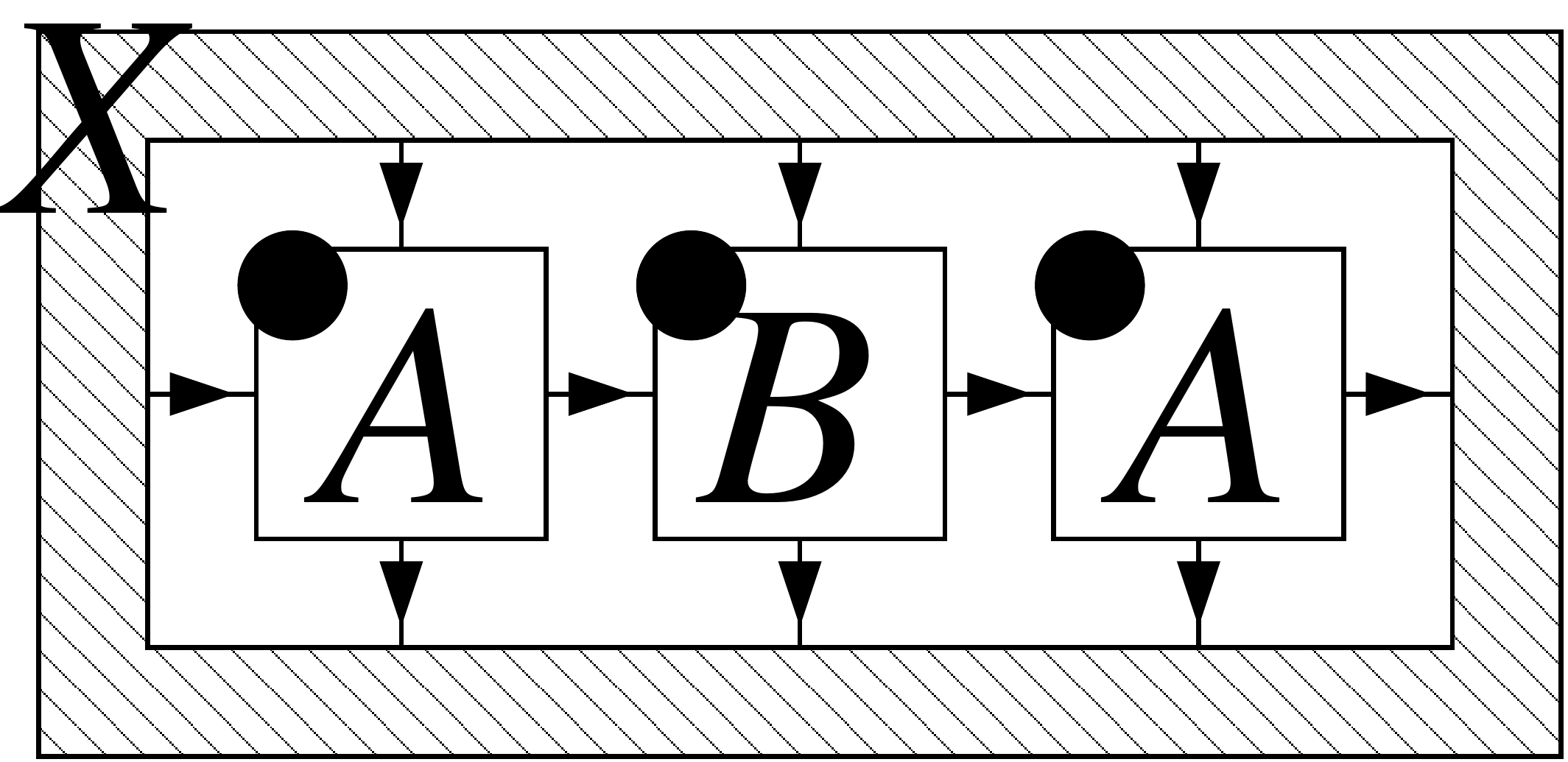}\rule[-0.2em]{0.2em}{0em}}
    }_{\displaystyle=:\ket{\zeta(X)}}}\
    \middle\vert X \right\}\ .
\vphantom{\underbrace{
    \raisebox{-1.2em}{\includegraphics[height=3em]{figs3/grow-N}}\
    }_{\displaystyle=:\ket{\beta(M)}}
}
\]
\end{theorem}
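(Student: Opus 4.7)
The plan is to port the proof of Theorem~\ref{thm:noninj:intersection} from 1D to the 2D setting, using $G$-injectivity in exactly the same algebraic way but on a multi-link boundary.

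First, I would use property~(i) of Definition~\ref{def:2d-Ug-inj} (the $U_g$-invariance of $A$ and $B$) to restrict attention to $M$, $N$, $X$ that are themselves $G$-invariant on the virtual legs projecting into the bulk. Concretely, twirling $M \mapsto \frac{1}{|G|} \sum_g U_g^{\otimes k} M (U_g^\dagger)^{\otimes k}$, where $k$ is the number of virtual legs of $M$ facing the bulk, leaves $\ket{\alpha(M)}$ unchanged, because each $U_g$ can be absorbed into the adjacent $B$-tensor by~\eqref{eq:2d-ug-sym}; the same holds for $N$ and $X$. From here on, all three matrices can be assumed $G$-invariant on the appropriate legs.

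The inclusion $\supseteq$ is immediate: given $X$, choose $M$ to be the contraction of $X$ with the $B$-block appearing in the diagram for $\ket{\alpha(M)}$; this manifestly yields $\ket{\alpha(M)} = \ket{\zeta(X)}$. The analogous choice of $N$ (contracting $X$ with the $A$-block) gives $\ket{\beta(N)} = \ket{\zeta(X)}$.

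For the nontrivial direction, I would take $\ket\psi = \ket{\alpha(M)} = \ket{\beta(N)}$ and apply the left inverse $\mcP^{-1}$ of property~(ii) of Definition~\ref{def:2d-Ug-inj} at every physical site of the tensor network that lies strictly on one side of the interface. By the stability-under-concatenation lemma, repeated application produces, on each bond of the interface between the $A$-region and the $B$-region, the $G$-projector $\bm\sigma$; since $M$ and $N$ have been symmetrized, these projectors act as the identity on their legs. The physical equality thus descends to a virtual one. Lemma~\ref{lemma:1d-ghh-anyrep}, used just as in the 1D proof of Theorem~\ref{thm:noninj:intersection}, collapses the $U_g\Delta U_g^\dagger$ structures produced by the inversion, after which one can isolate a single virtual operator $X$ living on the intersection plaquette that simultaneously expresses $M$ and $N$, giving $\ket\psi = \ket{\zeta(X)}$.

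The main obstacle is geometric rather than algebraic: unlike in 1D, where the interface between the two blocks is a single virtual bond on which $X$ naturally lives, in 2D the interface is an entire row of bonds, and one must argue that the multi-bond boundary admits a single joint operator $X$. I expect this to be handled by combining the semi-regularity statement of Lemma~\ref{lemma:noninj:semireg-trace-ug-delta} on each bond with the concatenation lemma: the row of left inverses behaves as the left inverse of the blocked boundary tensor, so the whole interface behaves effectively as a single ``super-bond'' carrying $X$. Once this repackaging is done, the 1D argument carries over essentially verbatim.
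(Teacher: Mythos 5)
Your proposal follows essentially the same route as the paper: symmetrize $M$ and $N$ by a joint twirl, note the trivial inclusion of the right-hand side, and recover the boundary operator by applying the left inverse of the blocked column of tensors, Eq.~\eqref{eq:2d:ab-inv}, to the physical sites on one side of the interface, exactly as in the 1D proof of Theorem~\ref{thm:noninj:intersection}. One caution on wording: the concatenated left inverse does \emph{not} produce an independent projector $\bm\sigma$ on each interface bond, but a single joint twirl $\tfrac{1}{|G|}\sum_g U_g^{\otimes k}(\cdot)(U_g^{\dagger})^{\otimes k}$ (with $\Delta$ insertions) over all boundary legs at once --- which is precisely why the joint symmetrization of $M$ and $N$ suffices --- and your closing ``super-bond'' picture, supplied by the stability-under-concatenation lemma, is the correct statement of this and is what the paper uses.
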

Here, we have chosen to shade the inside of the ``boundary condition''
tensors $M$, $N$, and $X$.
\begin{proof}
The proof is exactly analogous to the one-dimensional case,
Theorem~\ref{thm:noninj:intersection}. The r.h.s.\ is contained in the
l.h.s., as any $\ket{\zeta(X)}$ can be written as both $\ket{\alpha(M)}$
and $\ket{\beta(N)}$. Conversely, any element in the intersection can be
written as $\ket{\alpha(M)}=\ket{\beta(N)}$ for some $M$ and $N$; as in
the one-dimensional case, we can assume both to be $G$--invariant.
To recover $M$, we apply the left inverse \eqref{eq:2d:ab-inv} to the left
two physical modes of $\ket{\alpha(M)}=\ket{\beta(N)}$ and obtain
\[
\raisebox{-1.4em}{
\includegraphics[height=6.5em]{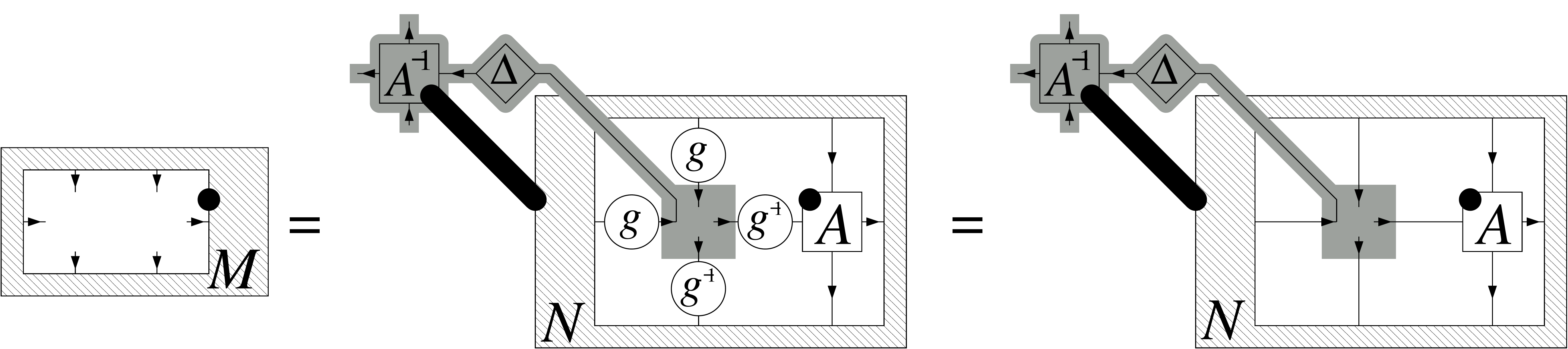}}\quad ,
\]
which implies that the state is of the form $\ket{\zeta(X)}$.
\end{proof}

\begin{theorem}[Closure property]
    \label{thm:2d:closure}
For $G$--injective $A$, $B$, $C$, and $D$,
\begin{align}
\left\{
    \vphantom{\raisebox{-2.2em}{\rule{5em}{0em}}}
    \smash{\underbrace{
    \raisebox{-2.2em}{\includegraphics[height=5em]{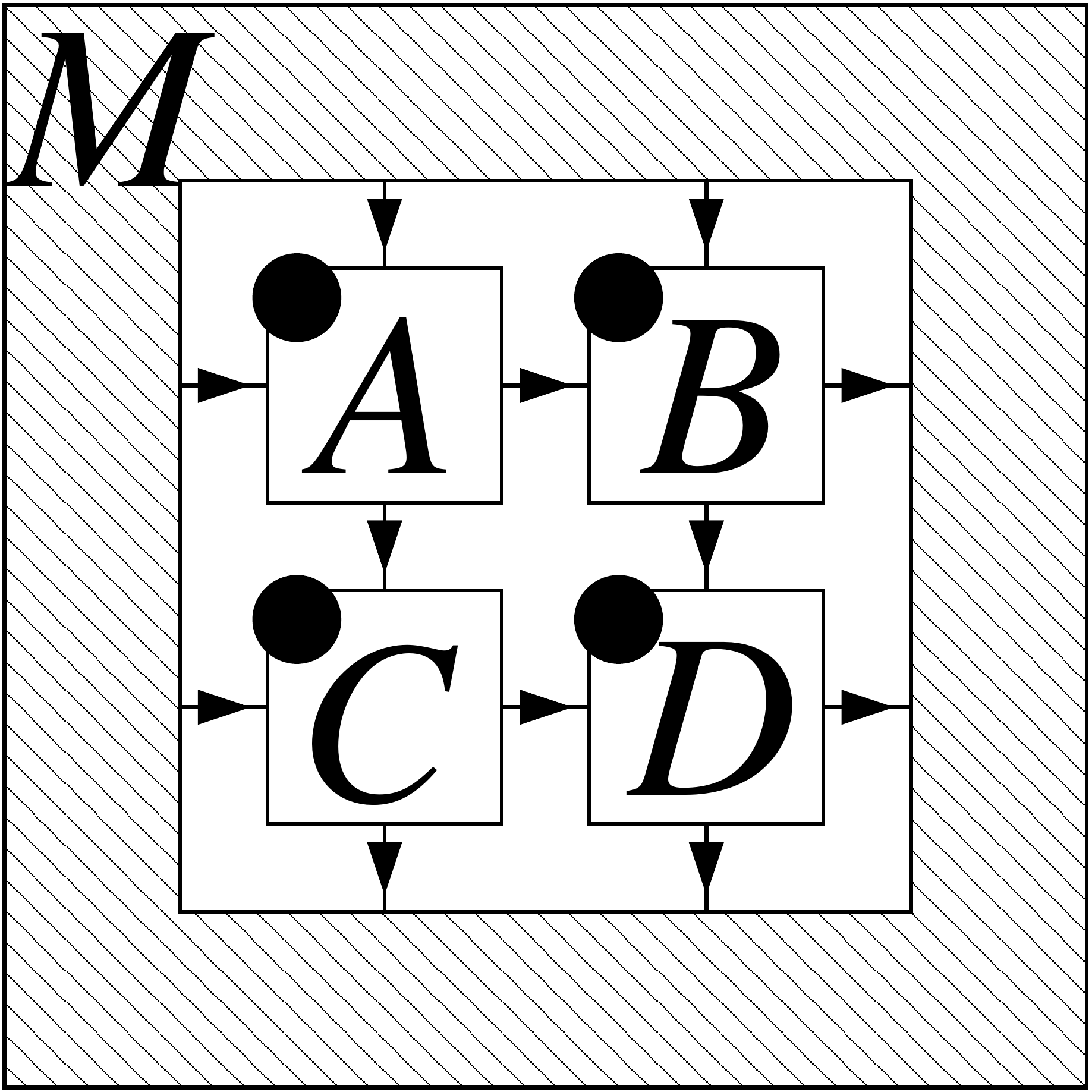}\rule[-0.2em]{0.2em}{0em}}
    }_{\displaystyle=:\ket{\alpha(M)}}}\
    \middle\vert M \right\}
&\cap
\left\{
    \vphantom{\raisebox{-2.2em}{\rule{5em}{0em}}}
    \smash{\underbrace{
    \raisebox{-2.2em}{\includegraphics[height=5em]{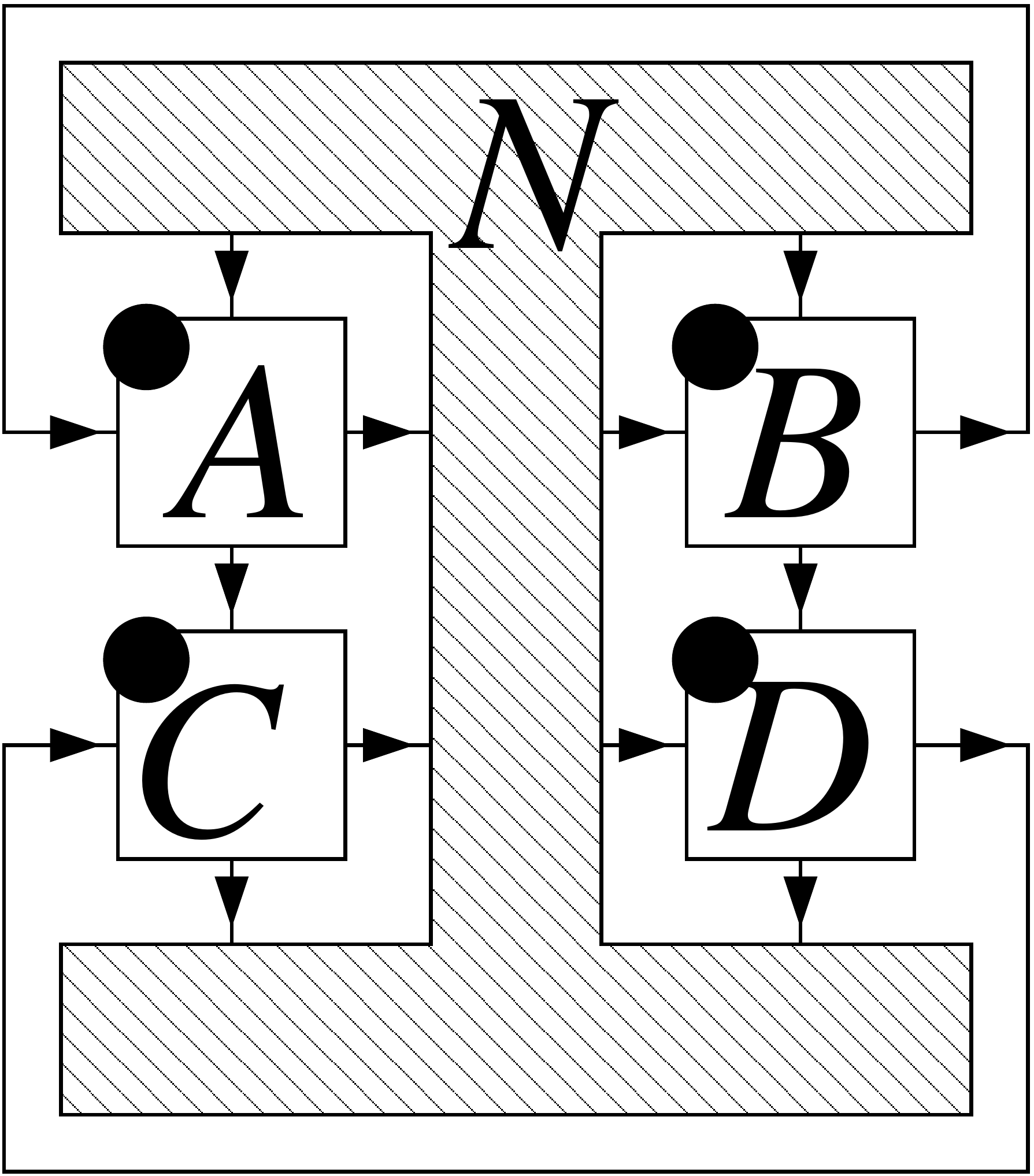}\rule[-0.2em]{0.2em}{0em}}
    }_{\displaystyle=:\ket{\beta(N)}}}\
    \middle\vert N \right\}
\cap
\left\{
    \vphantom{\raisebox{-2.2em}{\rule{5em}{0em}}}
    \smash{\underbrace{
    \raisebox{-2.2em}{\includegraphics[height=5em]{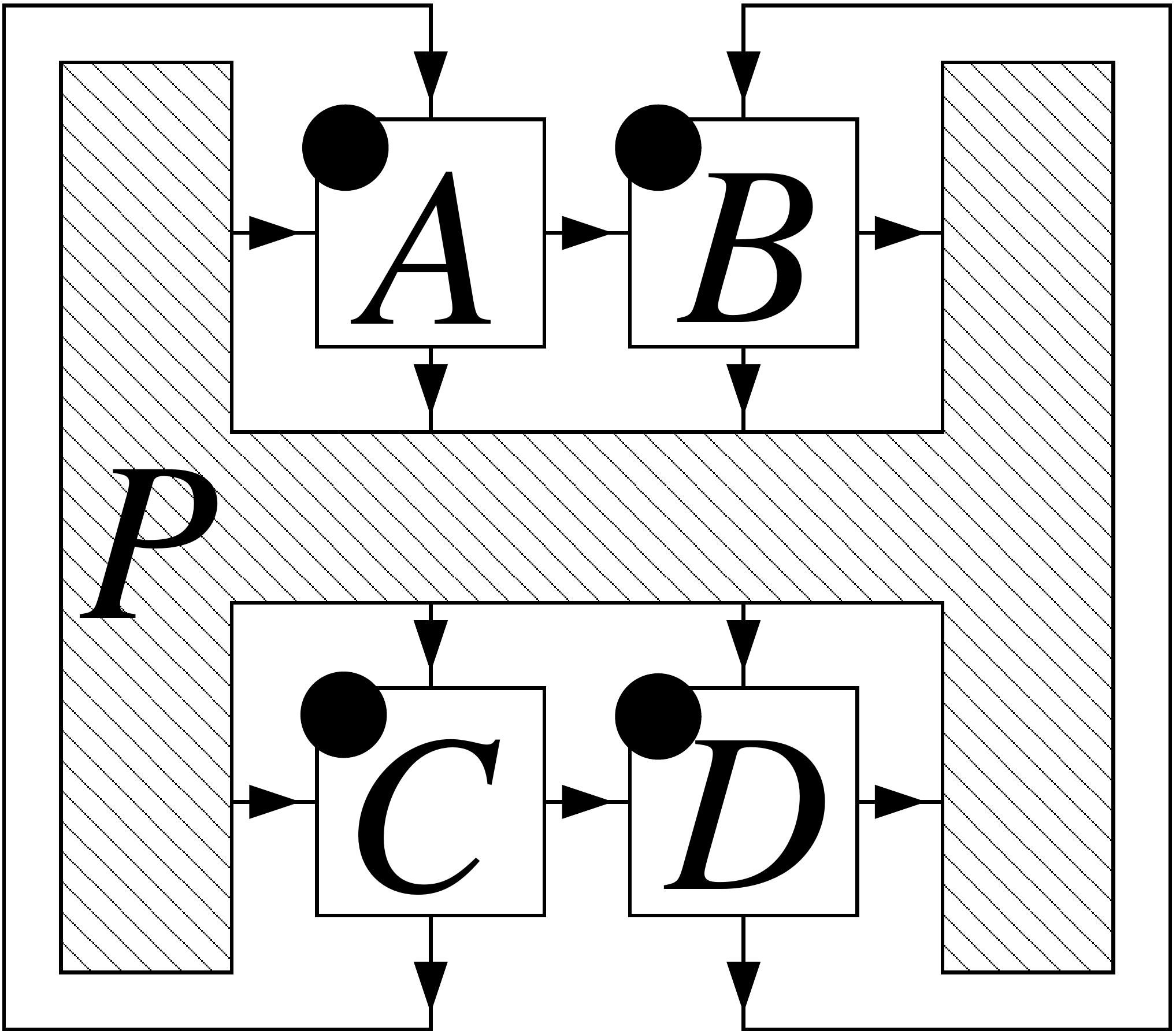}\rule[-0.2em]{0.2em}{0em}}
    }_{\displaystyle=:\ket{\gamma(P)}}}\
    \middle\vert P \right\}
\cap
\vphantom{\underbrace{
    \raisebox{-2.2em}{\includegraphics[height=2em]{figs3/close-out-in}}\
    }_{\displaystyle=:\ket{\beta(M)}}
}
\label{eq:2d:closure-intersection}
\\[1em] \nonumber
 & \cap
\left\{
    \vphantom{\raisebox{-2.2em}{\rule{5em}{0em}}}
    \smash{\underbrace{
    \raisebox{-2.2em}{\includegraphics[height=5em]{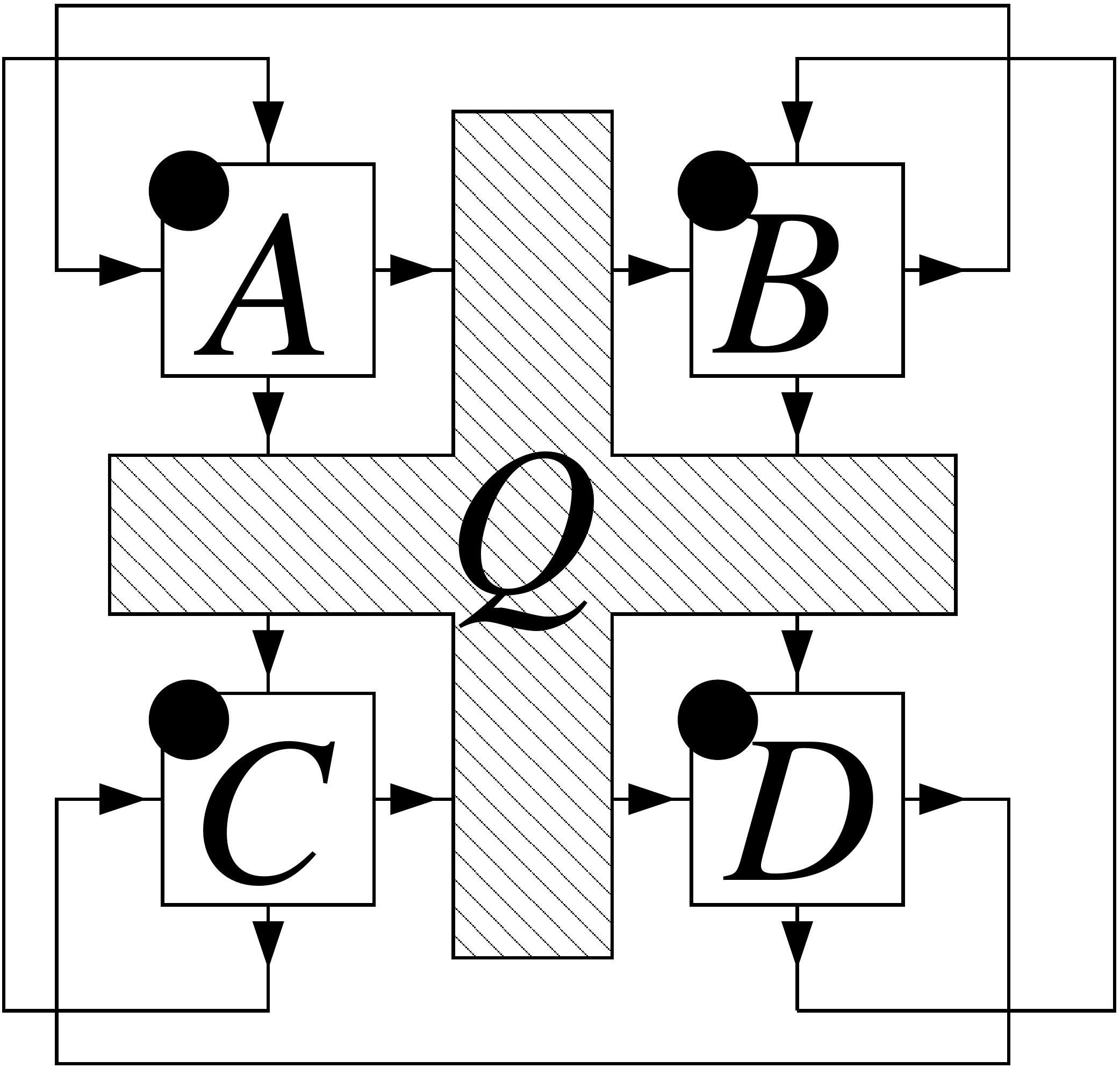}\rule[-0.2em]{0.2em}{0em}}
    }_{\displaystyle=:\ket{\delta(Q)}}}\
    \middle\vert Q \right\}
=
\left\{
    \vphantom{\raisebox{-2.2em}{\rule{5em}{0em}}}
    \smash{\underbrace{
    \raisebox{-2.2em}{\includegraphics[height=5em]{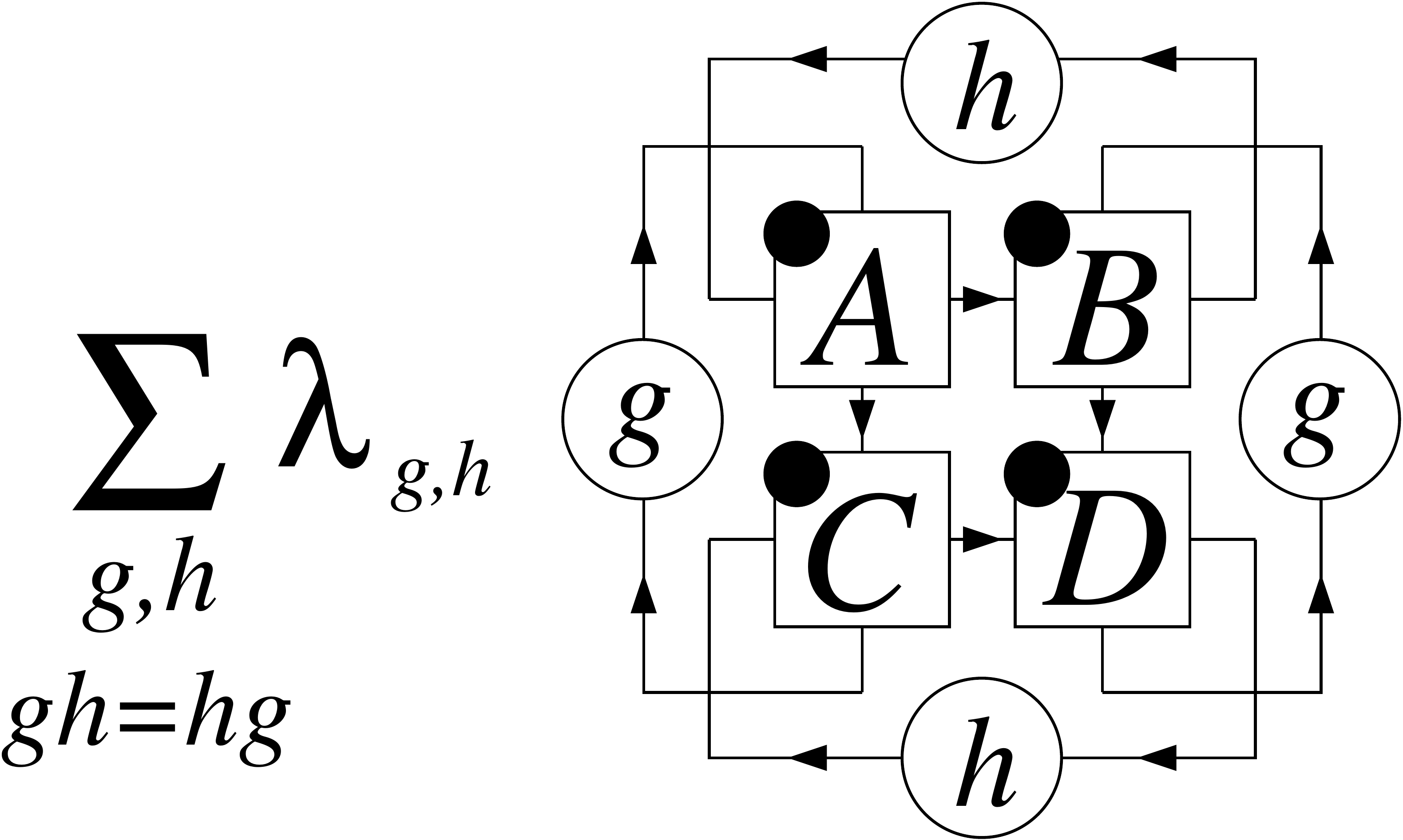}\rule[-0.2em]{0.2em}{0em}}
    }_{\displaystyle=:\ket{\zeta}}}\
    \middle\vert \lambda_{g,h} \right\}\ .
\vphantom{\underbrace{
    \raisebox{-2.2em}{\includegraphics[height=2em]{figs3/close-out-in}}\
    }_{\displaystyle=:\ket{\beta(M)}}
}
\end{align}
Here, the sum runs over all pairs $(g,h)\in G\times G$ such that $gh=hg$.
\end{theorem}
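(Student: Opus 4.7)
The plan is to mimic the proof of the 1D closure property (Theorem~\ref{thm:noninj:closure}), but to execute it separately along each of the two independent cycles of the torus, and then to reconcile the two results at the corner where the horizontal and vertical closure strands meet. First, by the same twirling argument used in Theorem~\ref{thm:2d:intersection}, I would assume without loss of generality that $M$, $N$, $P$, and $Q$ are $G$--invariant under conjugation on both their index pairs; the states $\ket{\alpha(M)}$, $\ket{\beta(N)}$, $\ket{\gamma(P)}$, $\ket{\delta(Q)}$ are unchanged under this replacement by virtue of the bulk invariance \eqref{eq:2d-ug-sym}.

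Containment of the r.h.s.\ in each of the four sets is immediate: given any commuting pair $(g,h)$, I would use \eqref{eq:2d-ug-sym} to slide $U_g$ around the horizontal cycle and $U_h$ around the vertical cycle, as in the intuition preceding the proof of Theorem~\ref{thm:noninj:closure}, until the two insertions reach the position occupied by $M$, $N$, $P$, or $Q$, and absorb them into that boundary tensor. This realises $\ket{\zeta}$ in each of the four forms.

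For the non-trivial direction, I take an element simultaneously expressible as $\ket{\alpha(M)} = \ket{\beta(N)} = \ket{\gamma(P)} = \ket{\delta(Q)}$ and peel off the bulk tensors one at a time by applying the blocked left inverse \eqref{eq:2d:ab-inv}, exactly as in the intersection proof. When the peeling is carried out along, say, the horizontal row that does not touch $M$, the computation reduces to the 1D closure calculation of Theorem~\ref{thm:noninj:closure}: comparing the expressions coming from $M$ and from $N$, the horizontal strand is forced to carry an insertion of the form $\sum_g \lambda_g U_g$ with $\lambda_g = \tr[U_{g^{-1}}(\cdot)\Delta]$, where linear independence of the $U_g$'s is guaranteed by Lemma~\ref{lemma:noninj:semireg-trace-ug-delta}. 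Carrying out the same reduction on a column using $M$ versus $P$ forces an analogous insertion $\sum_h \mu_h U_h$ on the vertical strand. After both reductions, $M$ itself must therefore be of the form $\sum_{g,h} \lambda_{g,h}\, U_g \otimes U_h$ acting on the two strands meeting at the corner that carries it.

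The hard part will be deriving the commutation relation $gh = hg$. Geometrically, the two closure strands cross at a single tensor of the torus, and $M$ sits at that crossing; moving $U_g$ horizontally past the crossing and $U_h$ vertically past it must produce the same state regardless of the order. To formalize this I would compare $\ket{\alpha(M)}$ with $\ket{\delta(Q)}$: applying the left inverse to the bulk tensor at the corner extracts the product $U_g U_h$ from one representation and $U_h U_g$ from the other, and semi-regularity (Lemma~\ref{lemma:noninj:semireg-trace-ug-delta}) lets me equate coefficients to conclude $\lambda_{g,h} = 0$ whenever $gh \neq hg$. This yields exactly the r.h.s.\ of \eqref{eq:2d:closure-intersection} and completes the proof.
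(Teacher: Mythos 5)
Your proposal is correct and follows essentially the same route as the paper: assume $G$--invariant boundaries, get the easy inclusion by sliding the closure strings, and for the converse apply the blocked left inverses and compare $\ket{\alpha(M)}$ pairwise with $\ket{\beta(N)}$, $\ket{\gamma(P)}$, and $\ket{\delta(Q)}$ to force a single horizontal strand $U_h$, a single vertical strand $U_g$, and the consistency condition at the plaquette where the two seams cross, which is precisely $gh=hg$. The only organizational difference is that the paper performs one four-block decomposition (introducing group elements $a,b,c,d$ with $h_1=b^{-1}a$, $h_2=d^{-1}c$, $v_1=c^{-1}a$, $v_2=d^{-1}b$, so that $v_2h_1=h_2v_1$ holds automatically and reduces to the commutation relation once $h_1=h_2$ and $v_1=v_2$ are established), whereas you extract the same constraints sequentially; both orderings are sound.
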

Note that if $A$, $B$, $C$, and $D$ arise from blocking the original
tensor, this corresponds to a closure with two unitaries $U_g^{\otimes L}$
and $U_h^{\otimes L}$ at the horizontal and vertical closure,
respectively.
\begin{proof}
The proof again follows closely the proof for one dimension. First, it is
clear that the r.h.s.\ in contained in the intersection by choosing $M$,
$N$, $P$, and $Q$ appropriately. To show that every element in the
intersection is of the form $\ket{\zeta}$, we consider an element of the
intersecion, which can be written as $\ket{\alpha(M)}=\ket{\delta(Q)}$
with boundaries $M$ and $Q$, respectively. As before, we can assume the
boundary conditions to be $G$--invariant. To recover $M$, we apply the
left-inverse
\begin{equation}
    \label{eq:2d:closure-inv}
\raisebox{-2.4em}{
\includegraphics[height=5.5em]{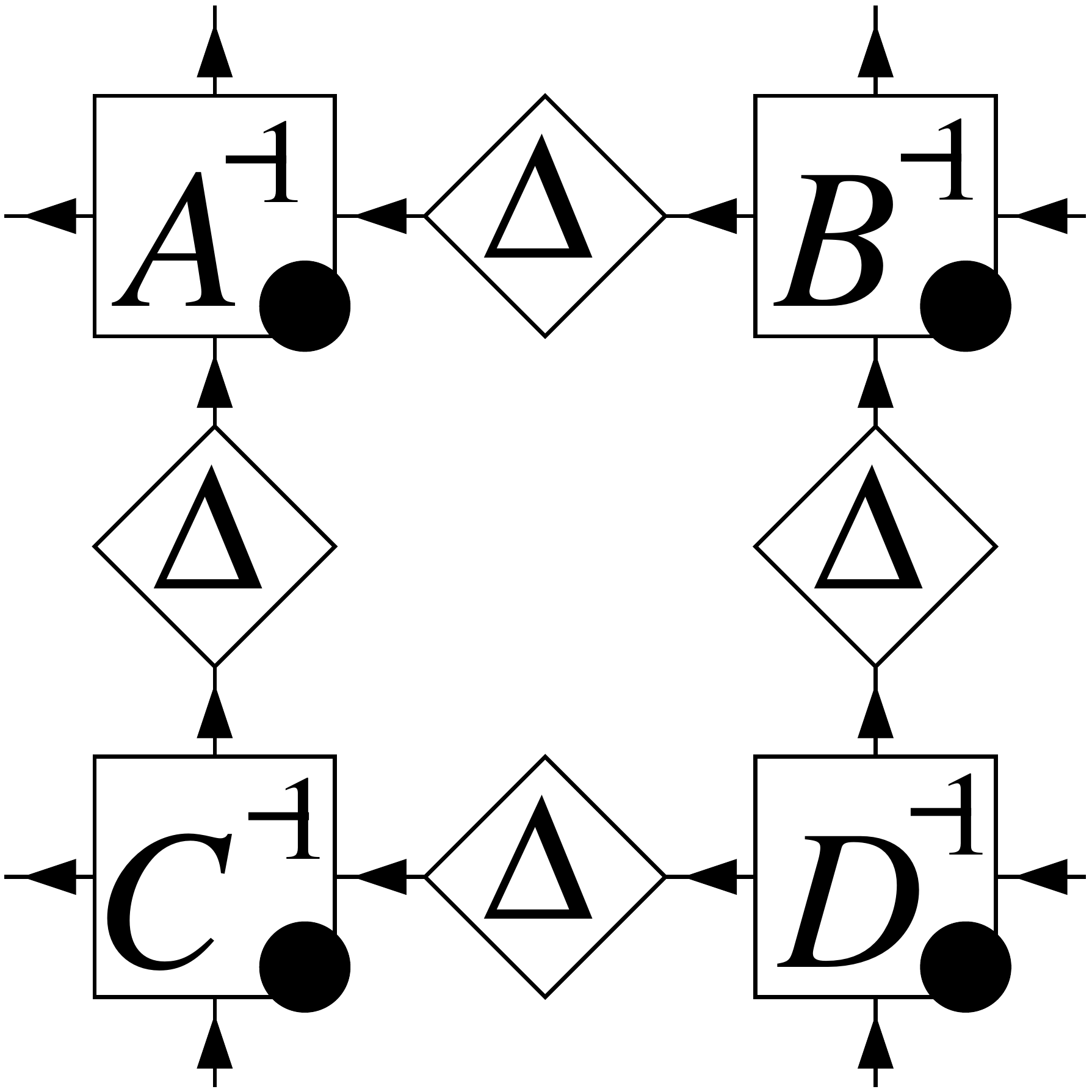}
}
\end{equation}
to $\ket{\alpha(M)}=\ket{\delta(Q)}$, and obtain
\begin{equation}
    \label{eq:2d:close-in-in}
\raisebox{-4em}{\includegraphics[height=9.5em]{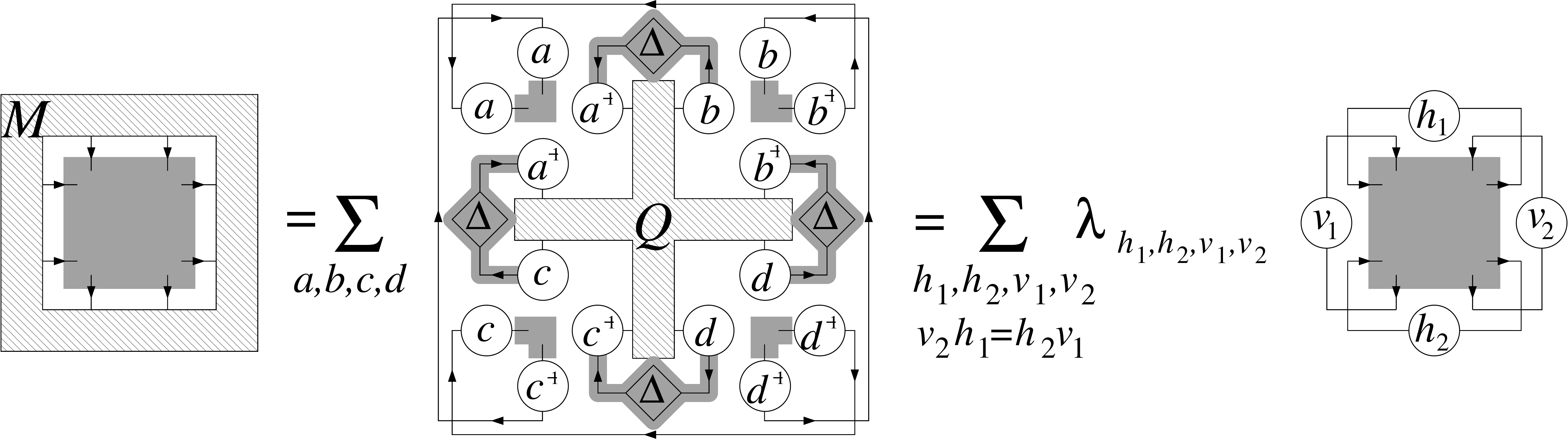}}\ ,
\end{equation}
where $h_1=b^{-1}a$, $h_2=d^{-1}c$, $v_1=c^{-1}a$, and $v_2=d^{-1}b$,
which results in the constraint $v_2h_1=h_2v_1$ on the sum.

Applying the inverse \eqref{eq:2d:closure-inv} also to
$\ket{\alpha(M)}=\ket{\beta(N)}$, we obtain
\[
\includegraphics[height=7.5em]{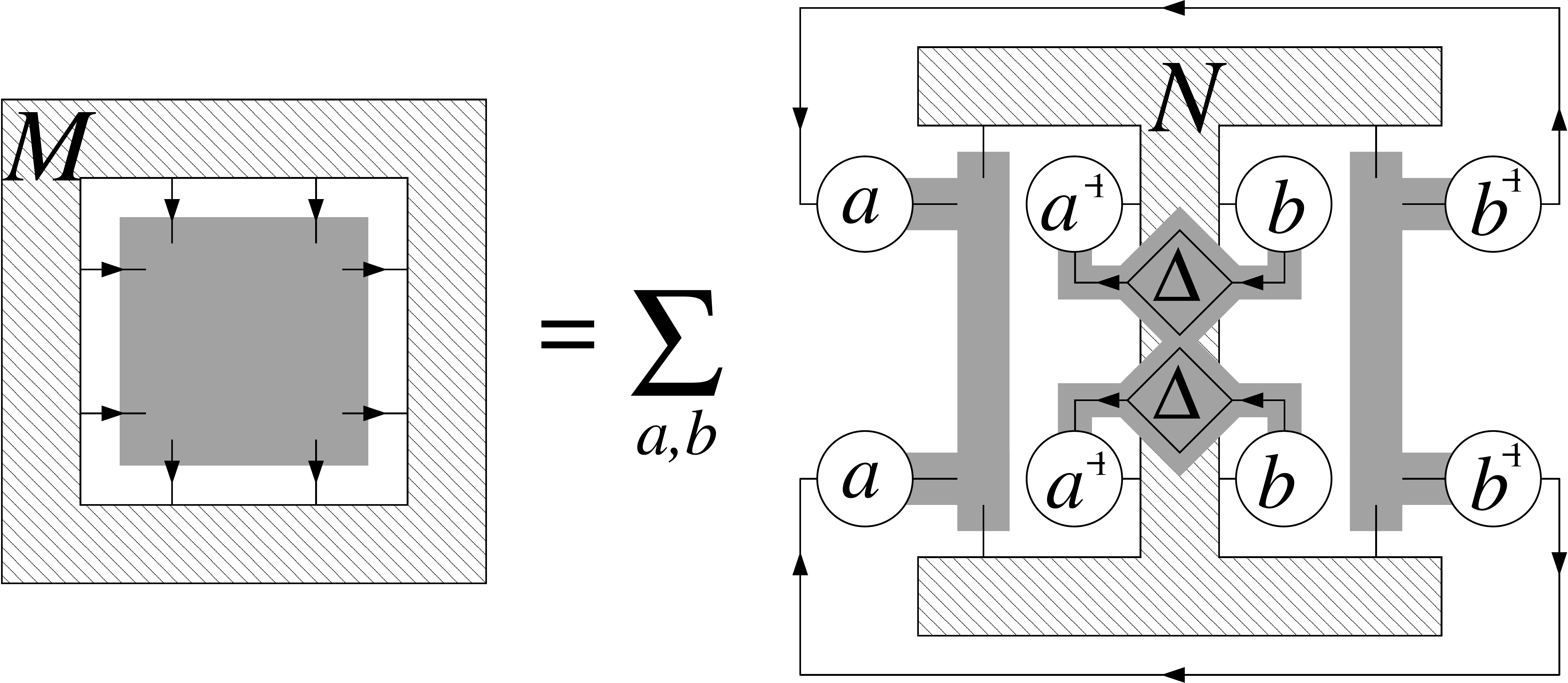}
\]
[note that the left and right block of \eqref{eq:2d:closure-inv} are the
inverses for the corresponding blocks in $\ket{\beta(N)}$], which proves
that $h_1=h_2=:h$ in \eqref{eq:2d:close-in-in}. Finally, the same for
$\ket{\alpha(M)}=\ket{\gamma(P)}$ yields $v_1=v_2=:g$, and the constraint
$v_2h_1=h_2v_1$ gives $gh=hg$.
\end{proof}

\begin{definition}
For a $G$--injective PEPS tensor $A$, and $g,h\in G$, define
\begin{equation}
    \label{eq:2d:peps-with-ug-uh}
\MPS{A|(g,h)} :=
\raisebox{-2.65em}{\includegraphics[height=6em]{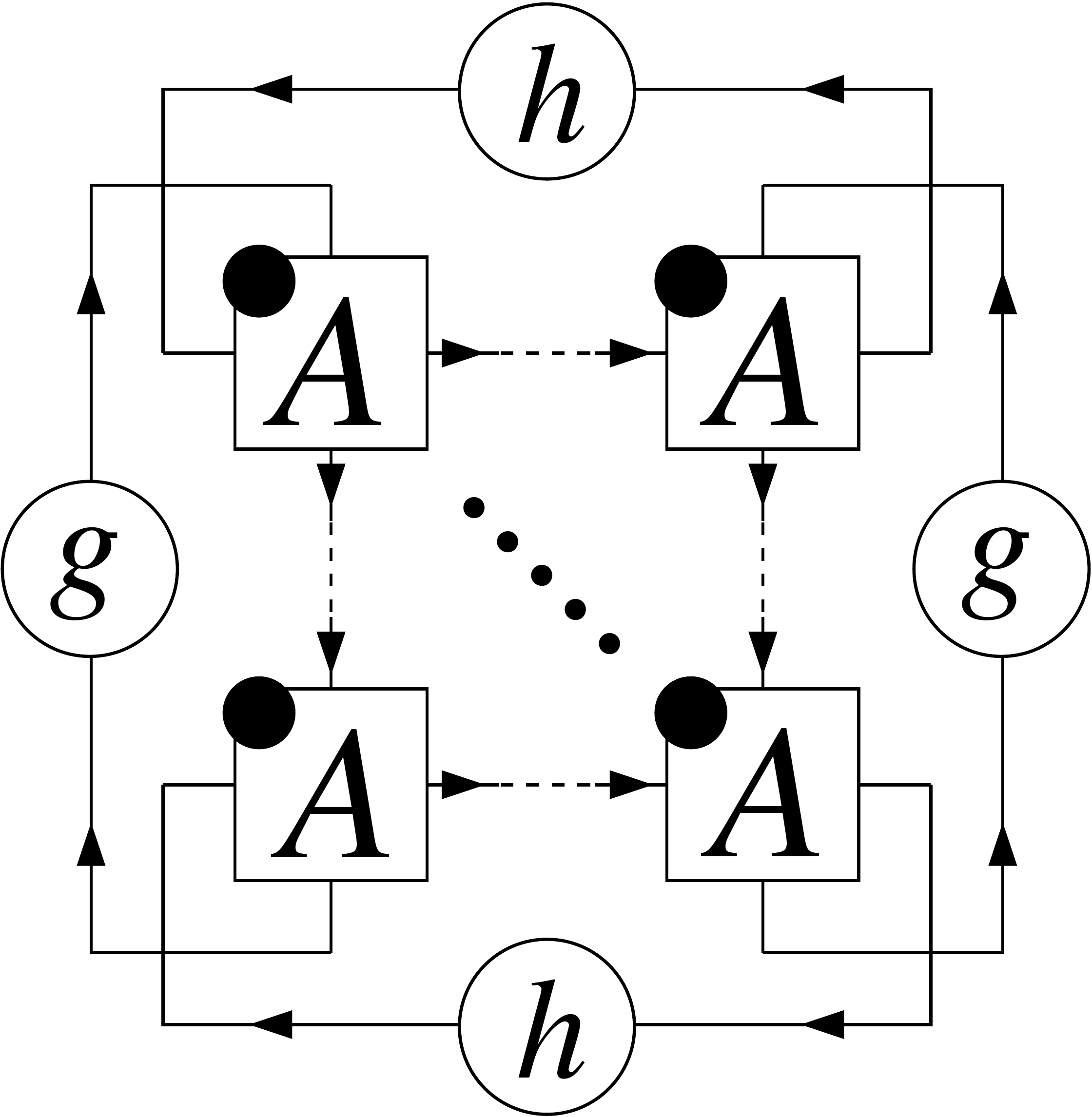}}
\end{equation}
to be the $L\times L$ PEPS built from the tensor $A$, with closures $g$
and $h$. (The representation of $g$ is a property of the bond and omitted
for brevity.)
\end{definition}

\begin{theorem}[Parent Hamiltonian]
    \label{thm:2d:parent-ham}
Let $A$ be $G$--injective,
\[
\mc S_{2\times 2}=\left\{
    \raisebox{-2.2em}{\includegraphics[height=5em]{figs3/close-out-out}}\
    \middle\vert M \right\}\ ,
\]
and define
\begin{equation}
\label{eq:2d:parentham-localterm}
h_{i,j}=(1-\Pi_{\mc S_{2\times 2}})\otimes \openone
\end{equation}
where the projector $\Pi_{\mc S_{2\times 2}}$ acts on the square
$\{i,i+1\}\times\{j,j+1\}$. Then,
\begin{equation}
\label{eq:2d:parentham}
H_\mathrm{par}=\sum_{i,j=1}^L h_{i,j}
\end{equation}
has a subspace of frustration free ground states spanned by the PEPS
$\MPS{A|(g,h)}$ with $(g,h)$-closed boundaries, where $gh=hg$.
\end{theorem}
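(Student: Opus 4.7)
The plan is to parallel the one-dimensional proof of Theorem~\ref{thm:noninj:parentham}, using Theorems~\ref{thm:2d:intersection} and~\ref{thm:2d:closure} in place of their 1D analogues. A state is a frustration-free zero-energy ground state of $H_\mathrm{par}$ if and only if its reduction to every $2\times 2$ block (including those that wrap around the torus) is supported on $\mathcal{S}_{2\times 2}$; the goal is to identify the intersection of all these local supports with the span of the $\MPS{A|(g,h)}$ for commuting pairs.

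First, I would analyze the $L\times L$ lattice with open boundaries, i.e.\ temporarily drop the local terms $h_{i,j}$ that wrap around. A block-and-grow argument analogous to~\eqref{eq:inj:S-L-as-intersect} lets me enlarge the region of $A$-tensors: using the stability of $G$-injectivity under concatenation, strips of $A$-tensors can be regarded as single blocked $G$-injective tensors. Applying the intersection property Theorem~\ref{thm:2d:intersection} repeatedly to overlapping strips, first horizontally and then vertically, gives inductively that the intersection of all non-wrapping $\mathcal{S}_{2\times 2}$-constraints equals
\[
\mathcal{S}_{L\times L}^{\mathrm{open}} = \left\{\mbox{$L\times L$ grid of $A$-tensors with an arbitrary virtual boundary tensor $M$ around the four open edges}\right\}.
\]

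Second, to incorporate the four wrap-around terms, I would split the remaining local constraints so that each ``half'' forces the boundary tensor $M$ to be closed on one pair of opposite edges while leaving the other pair open. The four resulting null spaces correspond exactly to the four sets $\{\ket{\alpha(M)}\}$, $\{\ket{\beta(N)}\}$, $\{\ket{\gamma(P)}\}$, $\{\ket{\delta(Q)}\}$ appearing in Theorem~\ref{thm:2d:closure}, with $A$, $B$, $C$, $D$ identified as the appropriately blocked $G$-injective tensors making up each quadrant. By Theorem~\ref{thm:2d:closure}, their intersection is precisely the span of the states $\MPS{A|(g,h)}$ with $gh=hg$, which are thus the frustration-free ground states.

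The main obstacle is the bookkeeping in the growing step: unlike the 1D case where only two partial Hamiltonians ($H_\mathrm{left}$ and $H_\mathrm{right}$) were needed, here the intersection must be iterated in both spatial directions while keeping track of the shape of the virtual boundary tensor, and one has to carefully identify which blocked tensors play the role of $A$, $B$, $C$, $D$ in Theorem~\ref{thm:2d:closure} when combining the four edge-conditions. Once $\mathcal{S}_{L\times L}^{\mathrm{open}}$ has been identified via the intersection property, the combinatorial heart of the theorem—the constraint $gh=hg$ distinguishing 2D from 1D—is already captured by Theorem~\ref{thm:2d:closure} and the argument reduces to an immediate invocation of that result.
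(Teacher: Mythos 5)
Your proposal follows essentially the same route as the paper's (very terse) proof: grow simply-connected patches via the intersection property, identify the null spaces of four open-boundary sub-Hamiltonians with the four sets $\ket{\alpha(M)},\ket{\beta(N)},\ket{\gamma(P)},\ket{\delta(Q)}$, and conclude with Theorem~\ref{thm:2d:closure}. The only quibbles are cosmetic -- there are $2L-1$ wrap-around terms rather than four, and each of the other three sub-Hamiltonians must be a full simply-connected patch (reusing non-wrapping terms), not just a subset of the wrap-around constraints -- but your closing remarks show you intend exactly this.
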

\begin{proof}
The proof goes exactly as in one dimension: We divide the Hamiltonian in
four blocks with open boundary conditions (one which does not cover the
boundary, two which cover the horizontal or vertical boundary,
respectively, and one which covers both boundaries). For each of the
blocks, by virtue of the intersection property
(Theorem~\ref{thm:2d:intersection}) the ground state subspace is given by
one of the sets in~\eqref{eq:2d:closure-intersection}, and thus,
Theorem~\ref{thm:2d:closure} yields the desired result.
\end{proof}

\begin{definition}[Pair-conjugacy classes]
    \label{def:2d:pair-cc}
For $(g,h)\in G\times G$, $(g',h')\in G\times G$, define the equivalence
relation
\[
(g,h)\sim (g',h')\quad :\Leftrightarrow
    \exists x\in G:\ (g,h)=(xg'x^{-1},xh'x^{-1})\ .
\]
It divides $G\times G$ into disjoint equivalence classes
\[
C[(g,h)] =\{ (g',h')|(g,h)\sim   (g',h')\}
\]
which we call \emph{pair-conjugacy classes}.
To each pair-conjugacy class $C$, we define
\[
\MPS{A|C}:=\MPS{A|(g,h)}\quad (g,h)\in C\ .
\]
\end{definition}
Note that $\MPS{A|C}$ is well-defined, since the $g$-invariance of $A$
shows that $\MPS{A|(g,h)}=\MPS{A|(g',h')}$ for $(g,h)\sim(g',h')$.

\begin{theorem}[Structure of the ground state subspace]
    \label{thm:2d:gs-struct}
Let $C_1\,\dot\cup\,\cdots\,\dot\cup\, C_K\subset G\times G$ be the
pair-conjucacy classes of $G\times G$ for which $gh=hg$ for $(g,h)\in C_k$.
Then, the ground state subspace of
Theorem~\ref{thm:2d:parent-ham} is $K$-fold degenerate and spanned by the
states $\MPS{A|C_k}$.
\end{theorem}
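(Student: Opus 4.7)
The approach has two parts: first show that $\{\MPS{A|C_k}\}_{k=1}^K$ spans the ground state subspace, then show these states are linearly independent.

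Spanning is essentially immediate. Theorem~\ref{thm:2d:parent-ham} identifies the ground state subspace as the span of $\MPS{A|(g,h)}$ over commuting pairs $(g,h)$, and the observation following Definition~\ref{def:2d:pair-cc} (a direct consequence of the $G$-invariance of $A$) shows that $\MPS{A|(g,h)}$ depends only on the pair-conjugacy class of $(g,h)$. Grouping the commuting pairs into the classes $C_1,\dots,C_K$ therefore yields a spanning set of at most $K$ vectors.

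For linear independence I would imitate the 1D argument used just after Theorem~\ref{thm:noninj:parentham} (in particular Eq.~\eqref{eq:noninj:Ug-lin-indep-diag}). Suppose $\sum_k \mu_k \MPS{A|C_k}=0$ and pick representatives $(g_k,h_k)\in C_k$. Using iterated application of the stability-under-concatenation lemma, block the entire $L\times L$ torus into a single $G$-injective tensor $T$ whose only remaining virtual legs form the horizontal and vertical non-contractible cycles. Applying the left inverse $\mc P(T)^{-1}$ (built up from copies of \eqref{eq:2d:ab-inv}) to every physical site peels off the physical layer entirely, leaving a relation purely at the virtual level:
\[
\sum_k \mu_k \, \bigl(U_{g_k}\bigr)_{\mathrm{hor}}\otimes\bigl(U_{h_k}\bigr)_{\mathrm{vert}} \cdot \Delta_{\mathrm{tot}} = 0,
\]
where $\Delta_{\mathrm{tot}}$ collects the $\Delta$ insertions produced by the left inverses and, by repeated use of $\tr[U_g\Delta]\propto\delta_{g,1}$ (Lemma~\ref{lemma:noninj:semireg-trace-ug-delta}), is non-degenerate on the two remaining loops. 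Semi-regularity of $U_g$ then implies via Lemma~\ref{lemma:noninj:semireg-trace-ug-delta} that the operators $U_g\otimes U_h$ are linearly independent in $(g,h)$; pairing the above relation against $U_{g_k}^\dagger\otimes U_{h_k}^\dagger$ and taking the trace extracts $\mu_k=0$, using that distinct pair-conjugacy classes have non-equivalent representatives and that $G$-invariance allows the closures to be freely slid along the cycles.

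The main obstacle is the diagrammatic bookkeeping in the linear-independence step: in two dimensions the repeated left inverse produces many $\Delta$ factors distributed across the torus, and one must verify they collapse consistently so that the resulting virtual operator identity really does isolate the pair $(U_{g_k},U_{h_k})$ on the two non-contractible loops. I would handle this inductively by first blocking columns into a cylinder (reducing to an essentially 1D problem on the horizontal cycle, carrying a vertical closure as a passive label), then contracting the cylinder itself; at each stage the $G$-invariance of $A$ ensures the inserted $U_g$'s can be freely slid through tensors and fused using \eqref{eq:noninj:tr-ghdeltaUh-is-Ug}, reducing the whole expression to the clean form above.
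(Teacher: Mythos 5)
Your proposal is correct and follows essentially the same route as the paper: spanning via the pair-conjugacy-class invariance of $\MPS{A|(g,h)}$ together with Theorem~\ref{thm:2d:parent-ham}, and linear independence by applying the left inverse of the blocked $L\times L$ tensor to reduce the relation to $\sum_k\lambda_k\,U_{h_k^x}\otimes U_{g_k^x}=0$ on the two boundary loops and then invoking Lemma~\ref{lemma:noninj:semireg-trace-ug-delta}. The bookkeeping you worry about is exactly what the paper's stability-under-concatenation lemma and the twirl over conjugates $x$ take care of, and your handling of it (disjointness of the classes plus linear independence of the semi-regular representation) matches the paper's argument.
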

\begin{proof}
The proof resembles the one-dimensional case:  As observed in
Definition~\ref{def:2d:pair-cc}, closures from the same pair-conjugacy class
describe the same state. It remains to show that the states corresponding
to different pair-conjugacy classes are linearly independent.
Let
\begin{equation}
    \label{eq:2d:linindep-init}
0=\sum\lambda_k\MPS{A|C_k}=\sum_k\lambda_k\MPS{A|(g_k,h_k)}\ ,
\end{equation}
with
$(g_k,h_k)\in C_k$ representatives of $C_k$. Denoting the $L\times L$
block of $A$'s by $C$, and applying the left inverse of $C$, we have that
\[
\raisebox{-2em}{\includegraphics[height=5em]{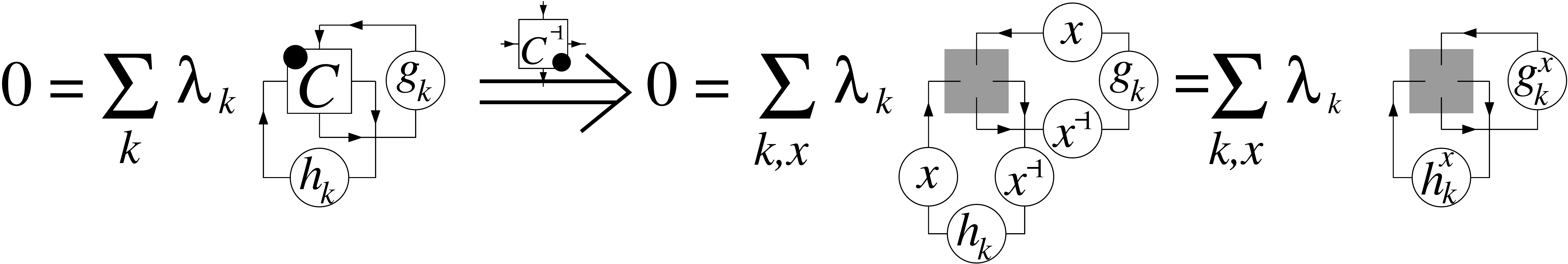}}\ ,
\]
with $(g_k^x,h_k^x)\in C_k$ other representatives of $C_k$.  This is,
\eqref{eq:2d:linindep-init} implies that $\sum\lambda_k U_{h_k^x}\otimes
U_{g_k^x}=0$, and since semi-regular representations are linearly
independent (Lemma~\ref{lemma:noninj:semireg-trace-ug-delta}), this gives
$\lambda_k=0\ \forall k$.
\end{proof}

It is straightforward to see that the number of different closures, i.e.,
the dimension of the ground state subspace, is equal to the number of
particle types of the quantum double model of $G$~\cite{kitaev:toriccode}:
A pair-conjugacy class $C\equiv C[(h,k)]$, $hk=kh$,  can be characterized
by specifying i)  the conjugacy class $C[h]$ of $h$, and ii) the conjugacy
class of the normalizer $N[h]=\{k:hk=kh\}$ of $h$ which contains $k$.
Since the number of conjugacy classes of $N[h]$ is equal to the number of
irreducible representations, this is equal to the number of particle types
in the $G$ quantum double~\cite{preskill:lecturenotes}.

Let us now give an intuitive explanation why the closures should be done using two
closed loops of identical unitaries $U_g^{\otimes N}$ and $U_h^{\otimes N}$,
and why one should require $gh=hg$. The key observation here (which will
be essential for the introduction of topological excitations later on) is
that strings of $U_g$'s can be deformed freely due to the $G$--invariance
of $A$.  To this end, consider a string of $U_g$'s in the lattice,
\[
\raisebox{-2.2em}{
\includegraphics[height=7.5em]{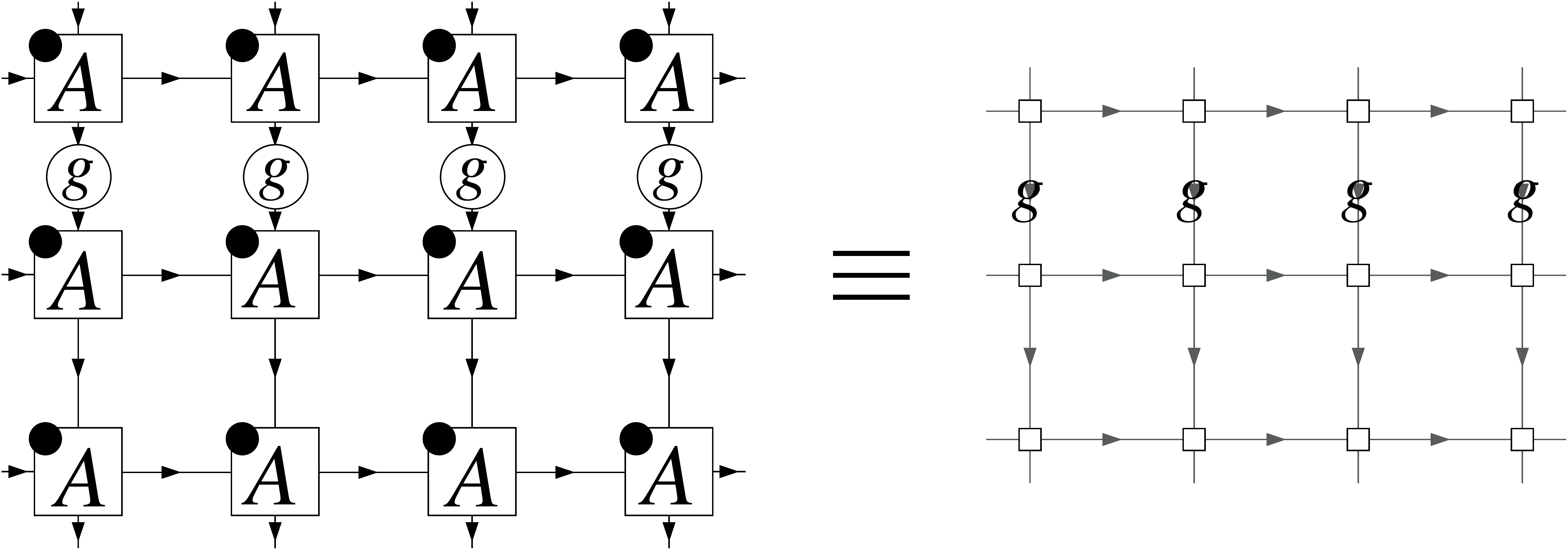}
} \ ,
\]
where on the right, we have introduced a new simplified notation which
allows for more compact diagrams in situations where we care only about the
pattern of $U_g$'s on the virtual level.
Using the $G$--invariance of the tensors, we can deform the string
continuously: e.g.,
\begin{equation}
        \label{eq:2d:move-strings}
\raisebox{-2.3em}{
\includegraphics[height=6em]{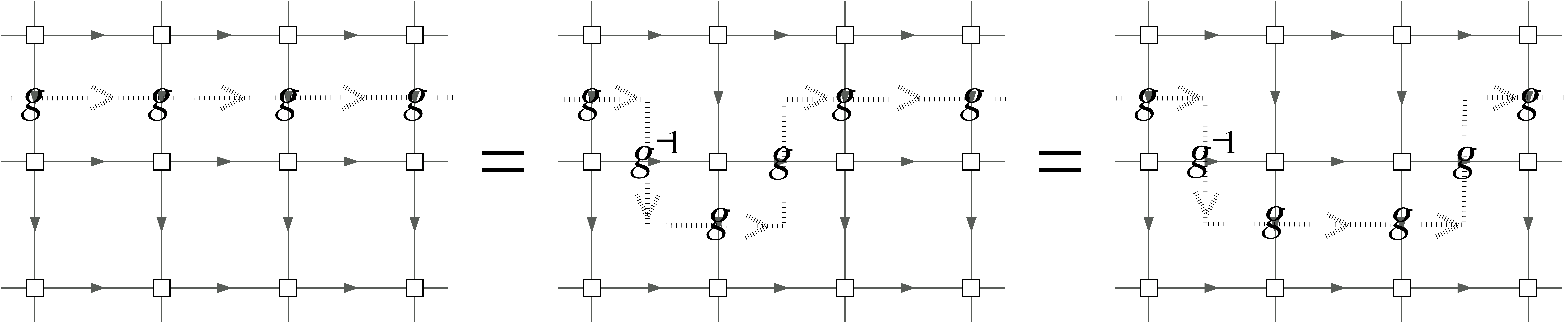}
}\ ,
\end{equation}
and so further.  Note that whether $g$ or $g^{-1}$ has to be used depends
on whether the (oriented) string of $U_g$'s crosses the (oriented) edges
from the right or from the left.

The possibility to arbitrarily deform strings of $g$'s and $h$'s, together
with the fact that $g$ and $h$ commute, i.e., the that the two strings
don't interact, implies that we can move the loops anywhere we want on
the torus. Thus, the closure
cannot be detected locally, and the $\MPS{A|(g,h)}$ are all ground states
of the Hamiltonian~\eqref{eq:2d:parentham}. (Note, however, that this does
not imply that the reduced density operators are the same, but only that
they all live within the same subspace $\mc S_2$. Indeed, in one dimension
the reduced operators can look different; e.g., the GHZ state MPS
$A^i=\ket{i}\bra{i}$, $i=0,1$ gives a Hamiltonian for which all
$\alpha\ket{0,\dots,0}+\beta\ket{1,\dots,1}$ are ground states.

\section{Isometric PEPS\label{sec:iso-PEPS}}

\subsection{Definition an basic properties}

We will now consider a subclass of $G$--injective PEPS, namely those for
which $\mc P(A)$ is an isometry (mapping the space of $G$--invariant
tensors unitarily to the range of $\mc P(A)$). Using
Observation~\ref{obs:surjective} which says that we can w.l.o.g.\ restrict
the physical system to the range of $\mc P(A)$, we obtain that a
$\mc P(A)$ is isometric if $\mc P(A)^{-1}=\mc P(A^\dagger)$, where the
dagger is with respect to the virtual levels:
\[
A=\sum_{i\alpha\beta\gamma\delta}A^i_{\alpha\beta\gamma\delta}
\ket{i}_p\ket{\alpha,\beta}\bra{\gamma,\delta}\quad
\rightarrow\quad
A^\dagger:= \sum_{i\alpha\beta\gamma\delta}{\bar{A}}^i_{\alpha\beta\gamma\delta}
\ket{i}_p\ket{\gamma,\delta}\bra{\alpha,\beta}\ .
\]
(Loosely speaking, the reason to use the dagger instead of the conjugation
is that the edges for $\mc P(A)$ and $\mc P(A)^{-1}$ have opposite
orientation.)

For clarity, we will illustrate proofs for the one-dimensional case where
appropriate.

\begin{definition}
    \label{def:iso:isopeps}
A $G$--injective PEPS is called $G$--isometric if $U_g\equiv L_g$ is the
left-regular representation, and if $\mc P(A)^{-1}=\mc P(A^\dagger$), or
equivalently, if $\mc P(A)$ restricted to its domain and range is unitary.
\end{definition}

Here, the left-regular representation $L_g:\mathbb C^{|G|}\rightarrow \mathbb
C^{|G|}$ acts as $L_g\ket{h}=\ket{gh}$.
The following lemma explains why we reqire the restriction to the
regular representation.

\begin{lemma}[Stability of isometry under concatenation]
        \label{lemma:iso:iso-stable-under-concat}
$G$--isometry of tensors is stable under concatenation.
\end{lemma}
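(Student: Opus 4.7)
The plan is to combine the stability of $G$-injectivity under concatenation (already established earlier in the section) with the special property of the regular representation that $\Delta \propto \openone$. What is extra in the isometric case is that the left inverse must equal $\mathcal P(\cdot^\dagger)$, and the regular representation is precisely the hypothesis that makes this pass through the concatenation formula \eqref{eq:2d:ab-inv}.

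First, I would note that a $G$-isometric tensor is in particular $G$-injective, so the concatenated tensor $C$ obtained from $A$ and $B$ as in \eqref{eq:2d:c-is-ab} is $G$-injective, and its left inverse $\mathcal P(C)^{-1}$ is explicitly given by \eqref{eq:2d:ab-inv}: one glues $\mathcal P(A)^{-1}$ and $\mathcal P(B)^{-1}$ along the internal bond with an operator $\Delta$ inserted there. Using the isometry hypothesis $\mathcal P(A)^{-1}=\mathcal P(A^\dagger)$ and $\mathcal P(B)^{-1}=\mathcal P(B^\dagger)$, the right-hand side of \eqref{eq:2d:ab-inv} becomes the dagger of the diagram that defines $C$, except for the insertion of $\Delta$ on the contracted link.

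Second, I would invoke the restriction to the regular representation: by \eqref{eq:noninj:deltadef}, all irreducible representations of $G$ appear in $L_g$ with multiplicity $m_i=d_i$, so $\Delta=\frac{1}{|G|}\openone$. The contracted link therefore carries (up to a fixed scalar that gets absorbed in the standard normalization of loops) simply the identity, and the diagram \eqref{eq:2d:ab-inv} collapses to $\mathcal P(C^\dagger)$. This shows $\mathcal P(C)^{-1}=\mathcal P(C^\dagger)$, i.e.\ the isometry condition of Definition~\ref{def:iso:isopeps} for the concatenated tensor.

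The main obstacle, and the one subtlety worth being careful about, is that after concatenation the virtual representation on each external bond of $C$ is $L_g^{\otimes k}$ rather than $L_g$ itself; hence $C$ is not literally $G$-isometric in the strictest reading of Definition~\ref{def:iso:isopeps}. The key observation that saves the argument is that $L_g^{\otimes k}$ is a direct sum of copies of the regular representation (its character is $|G|^{k-1}\,\chi_L$), so the formula \eqref{eq:noninj:deltadef} still yields $\Delta\propto\openone$ on the new bonds; thus the isometry property is preserved in the only sense relevant for all subsequent constructions in this section. In the write-up I would either (a) state the definition of $G$-isometric up to the equivalence $U_g\cong L_g^{\oplus n}$, or (b) phrase the lemma explicitly in terms of the property $\Delta\propto \openone$ plus $\mathcal P(A)^{-1}=\mathcal P(A^\dagger)$, which is exactly what propagates through the diagrammatic identity. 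The remaining work is bookkeeping of the scalar prefactors (loop traces and the $1/|G|^k$ coming from $\Delta$), which cancel using the same conventions already in force throughout Sec.~\ref{sec:PEPS-noninj}.
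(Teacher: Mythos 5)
Your proof is correct and follows essentially the same route as the paper's one-line argument: the regular representation gives $\Delta=\tfrac{1}{|G|}\openone$, so the left inverse of the concatenated tensor from \eqref{eq:noninj:linv} collapses to $(B^j)^\dagger(A^i)^\dagger=(A^iB^j)^\dagger$, i.e.\ to $\mathcal P(C^\dagger)$. The subtlety you raise about external bonds carrying $L_g^{\otimes k}$ is one the paper sidesteps by explicitly \emph{not} blocking indices at this stage (deferring that to the renormalization section), but your resolution via $L_g^{\otimes k}\cong L_g\otimes\openone^{\otimes(k-1)}$, which still yields $\Delta\propto\openone$, is also sound.
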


\begin{proof}
$U_g=L_g  \Rightarrow \Delta=\tfrac{1}{|G|}\openone$,
and thus the left-inverse of the concatenated tensor in
\eqref{eq:noninj:linv} is $(B^j)^\dagger(A^i)^\dagger=(A^iB^j)^\dagger$ --
note that the ordering of operators (as indicated by the arrows) is
reversed for the layer with the inverse $P(A)^{-1}$.
\end{proof}

Note that we do \emph{not} block indices at this stage, as this would
change the representation. We will show how this can be done in
the section on renormalization.

\subsection{Equivalence of virtual and physical system}

The importance of $G$--isometric PEPS lies in the fact that the way the
physical subspace and the virtual subspace (restricted to the
$G$--invariant subspace!) are connected corresponds to a unitary, i.e., a
physical operation. This implies that any unitary on the virtual level
which leaves the $G$--invariant subspace invariant can be implemented by
means of a unitary acting on the physical system.
\begin{lemma}
        \label{lemma:iso:sym-virt-can-be-done-on-phys}
Let $A$ be a $G$--isometric PEPS. Then the unitary transformations $V$
which can be implemented on the virtual level by unitarily acting on the
physical system are exactly those which commute with the symmetry, i.e.,
those for which
\begin{equation}
    \label{eq:iso:V-comm-sym}
    \raisebox{-.5em}{\includegraphics[height=2.5em]{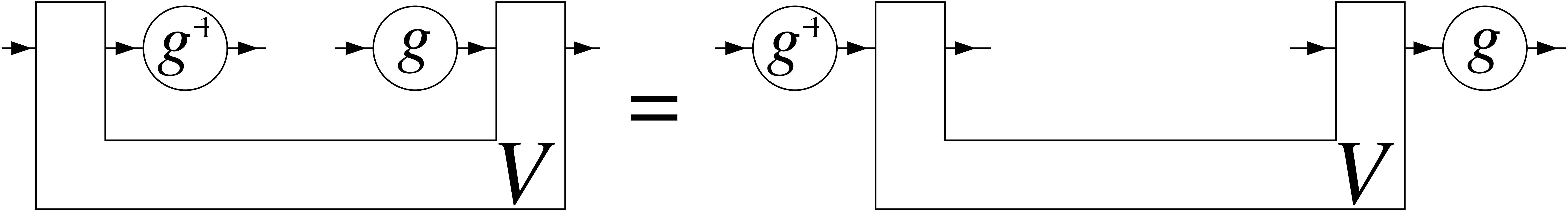}}\ .
\end{equation}
\end{lemma}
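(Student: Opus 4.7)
The plan is to use the defining property of $G$-isometry, $\mcP(A)^{-1}=\mcP(A^\dagger)$, which says that $\mcP(A^\dagger)\mcP(A)=\Pi_{\mc{U}}$ and $\mcP(A)\mcP(A^\dagger)=\Pi_\mcR$, where $\Pi_{\mc{U}}$ projects onto the $G$-invariant subspace $\mc{U}=\{X\mid[X,U_g]=0\ \forall g\}$ and $\Pi_\mcR$ onto the physical range $\mcR\equiv\mathrm{rg}\,\mcP(A)$; equivalently, $\mcP(A)$ restricts to a unitary $\mc{U}\to\mcR$. Both directions of the lemma reduce to this single fact.

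\emph{Sufficiency.} Given a virtual unitary $V$ that commutes with each $U_g$, and hence preserves $\mc{U}$, I would define the physical unitary
\[
W := \mcP(A)\,V\,\mcP(A^\dagger)\;+\;(\openone-\Pi_\mcR)\ ,
\]
extending by the identity on $\mcR^\perp$. Unitarity of $W$ on $\mcR$ follows from $\mcP(A^\dagger)\mcP(A)=\Pi_{\mc{U}}$ together with $V$ being unitary on $\mc{U}$, and the implementation identity $W\mcP(A)=\mcP(A)\,V$, which is precisely the diagrammatic content of~\eqref{eq:iso:V-comm-sym}, follows at once from $\mcP(A)\Pi_{\mc{U}}=\mcP(A)$.

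\emph{Necessity.} Conversely, suppose such a physical $W$ exists with $W\mcP(A)=\mcP(A)\,V$. Applying $\mcP(A^\dagger)$ from the left and using $\mcP(A^\dagger)\mcP(A)=\Pi_{\mc{U}}$ gives
\[
\Pi_{\mc{U}}\,V\;=\;\mcP(A^\dagger)\,W\,\mcP(A)\ .
\]
The right-hand side has its range contained in $\mc{U}$, so the image of $V\Pi_{\mc{U}}$ lies in $\mc{U}$, i.e.\ $V$ preserves $\mc{U}$ and, by unitarity, also $\mc{U}^\perp$. The main (and essentially only) subtlety is that $V$'s action on $\mc{U}^\perp$ is invisible to the physical system because $\mcP(A)$ annihilates $\mc{U}^\perp$; one must therefore read the conclusion ``$V$ commutes with the symmetry'' as asserting the existence of a commuting representative of $V$, and since $V|_{\mc{U}^\perp}$ is unconstrained, such a representative can always be chosen. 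Once this convention is fixed, both implications reduce to the $G$-isometry identity of Definition~\ref{def:iso:isopeps}.
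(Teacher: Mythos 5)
Your proof is correct and follows essentially the same route as the paper's: both directions amount to conjugating by the unitary $\mathcal P(A)\colon\mc U\to\mc R$, with sufficiency given by $W=\mc P(A)V\mc P(A^\dagger)$ (extended by the identity off the range) and necessity by applying $\mc P(A^\dagger)$ to the implementation identity. Your explicit remark about the unconstrained action of $V$ on $\mc U^\perp$ is a point the paper handles implicitly by simply defining the induced virtual unitary as $\mc P(A)^{-1}U\mc P(A)$, which is supported on $\mc U$ and hence automatically commutes with the symmetry.
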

\begin{proof}
Let $V$ be a unitary which satisfies \eqref{eq:iso:V-comm-sym}. Then,
$V$ acts unitarily on the $G$--invariant subspace. Thus, the physical
operation
\[
\includegraphics[height=4.8em]{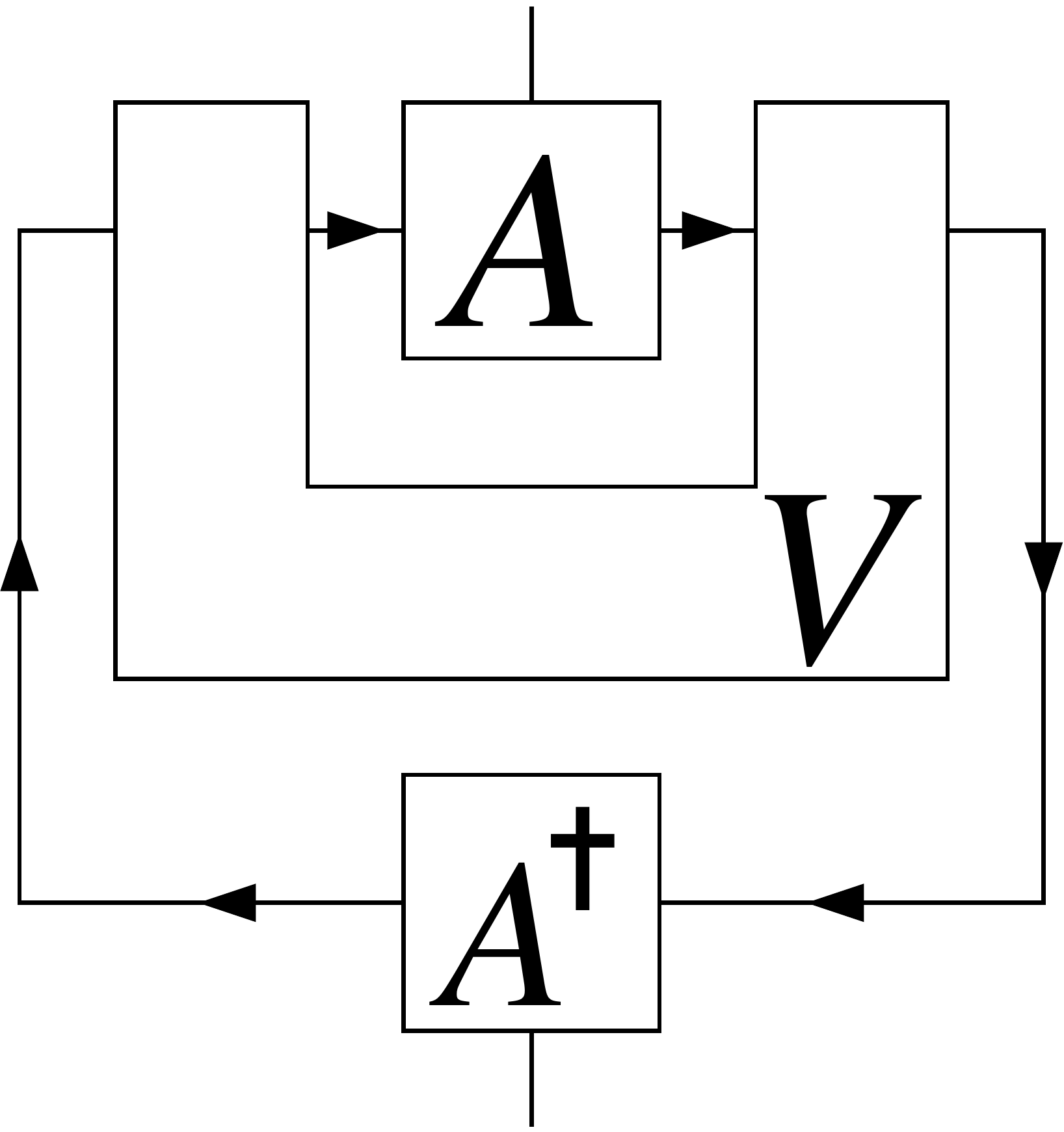}
\]
is unitary, and
\[
\raisebox{-2.2em}{\includegraphics[height=6em]{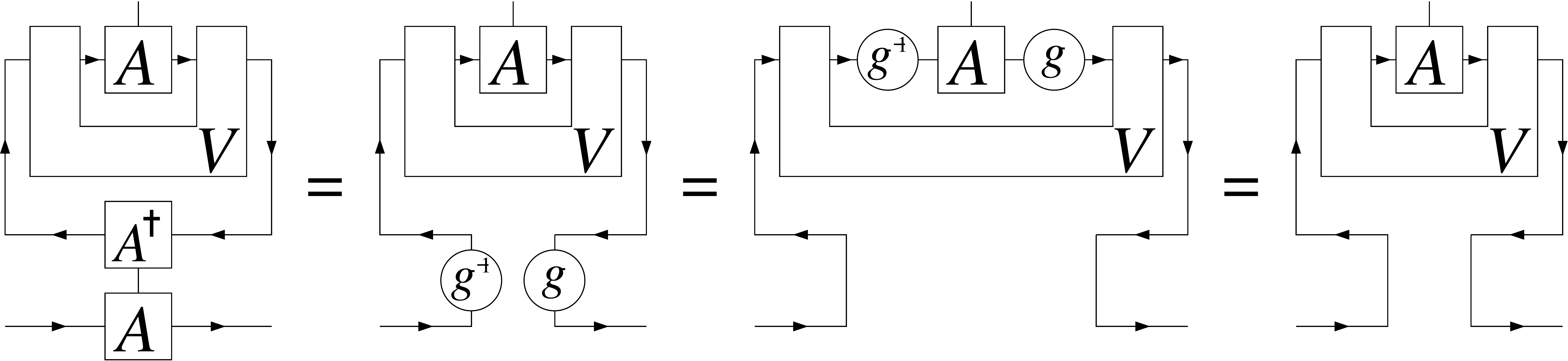}}\
,
\]
i.e., it acts as $V$ on the virtual indices.
Conversely, any unitary operation $U$ on the physical level results in a
$G$--invariant unitary $V=\mc P(A)^{-1}U\mc P(A)$ on the virtual level by
virtue of
\[
\raisebox{-1.8em}{\includegraphics[height=6em]{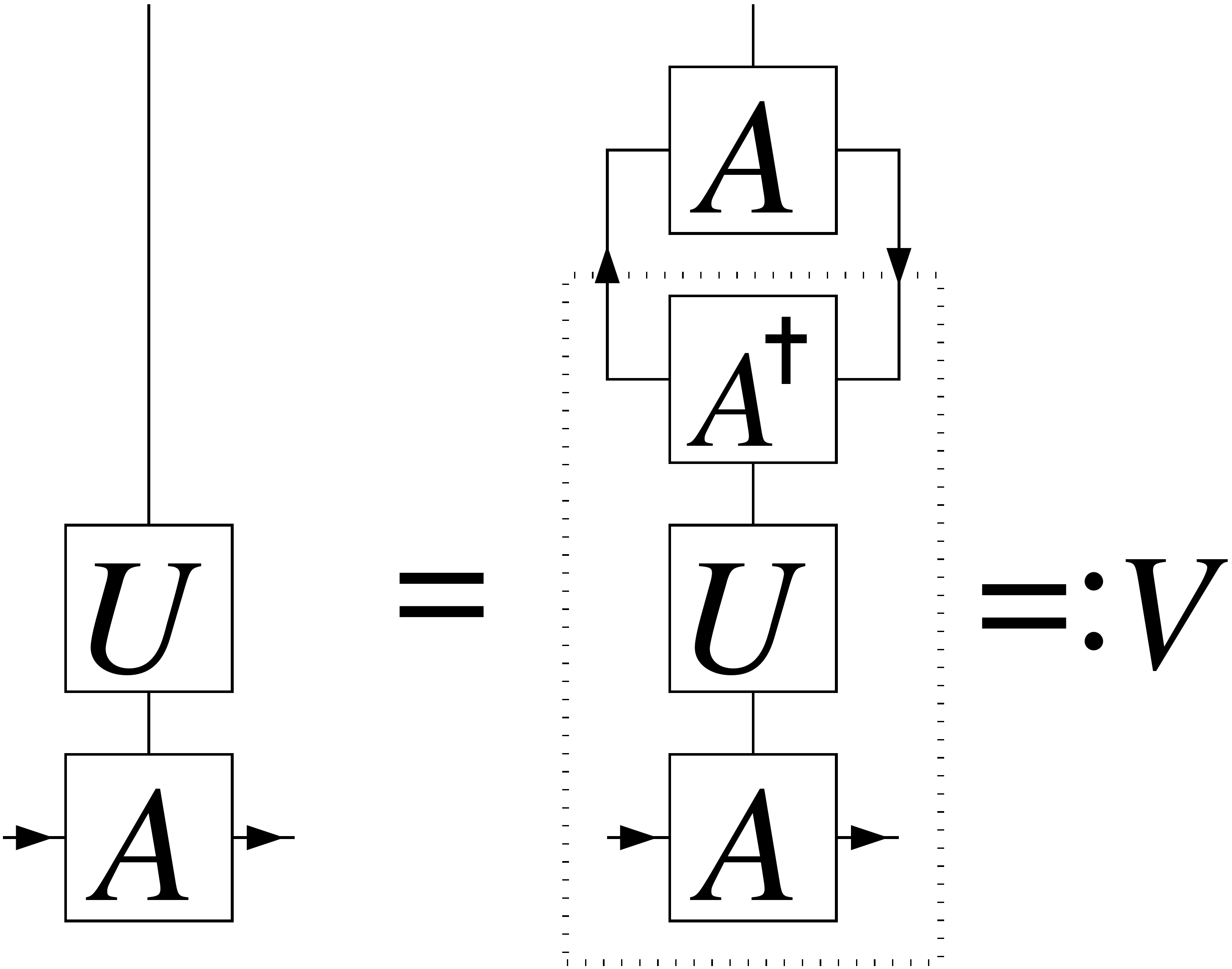}}\ ,
\]
where we have used \eqref{eq:inj:rightinv-always}.
\end{proof}

A particularly appealing perspective on isometric PEPS is the following.
\begin{observation}
    \label{obs:iso:accessible-virt}
Every $G$--isometric PEPS tensor $A$ is isomorphic to
\begin{equation}
        \label{eq:iso:iso-accessbonds}
\raisebox{-1.9em}{\includegraphics[height=5em]{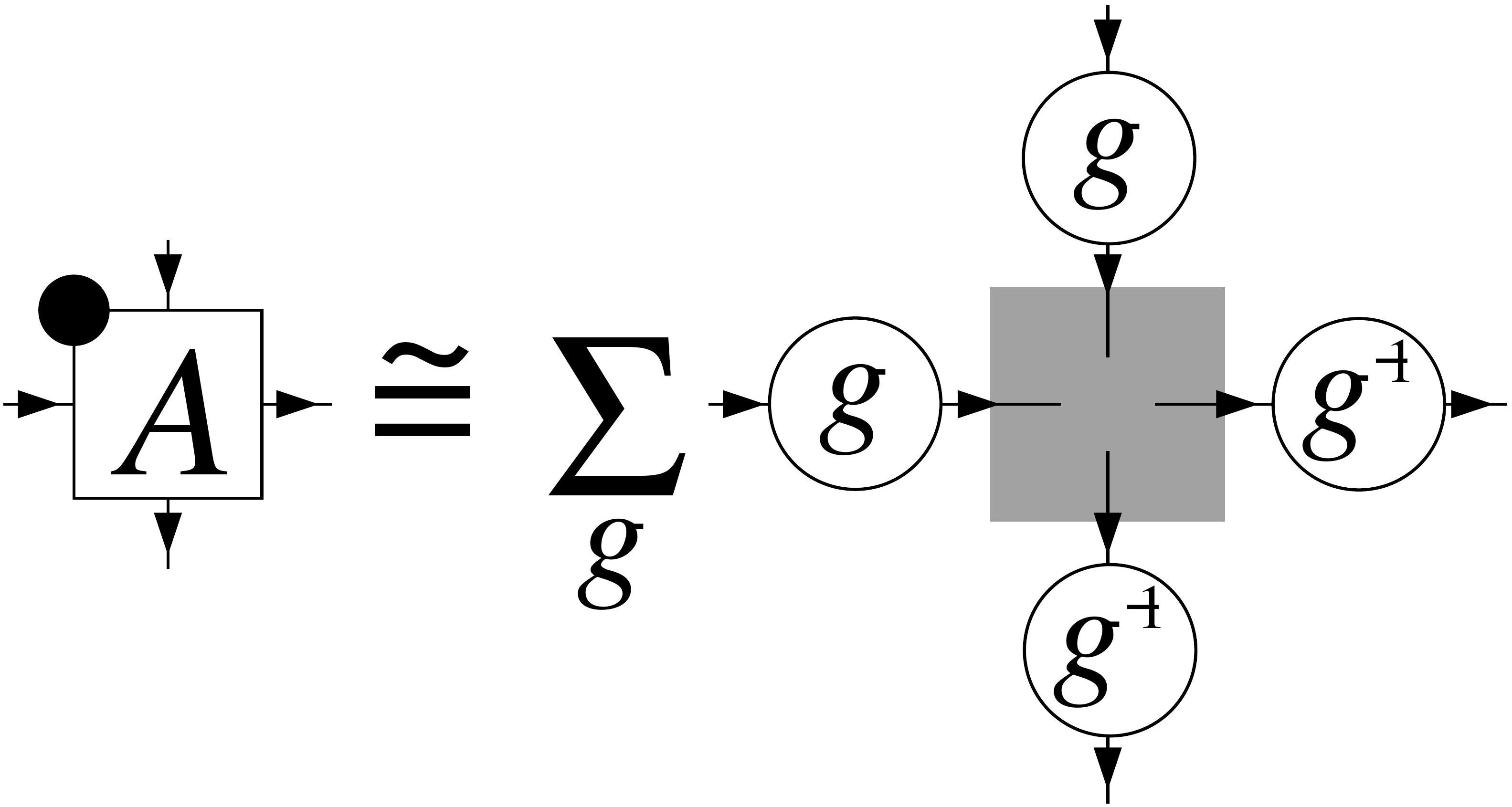}}
\quad ,
\end{equation}
by acting unitarily on the physical system.\footnote{Note that this
decomposes the physical level into four systems, represented by kets {\em
or} bras depending on the directions of the arrows.}  Moreover, this
property is stable under concatenation. E.g.,
\begin{equation}
        \label{eq:iso:iso-accessbonds-pair}
\raisebox{-1.9em}{\includegraphics[height=5em]{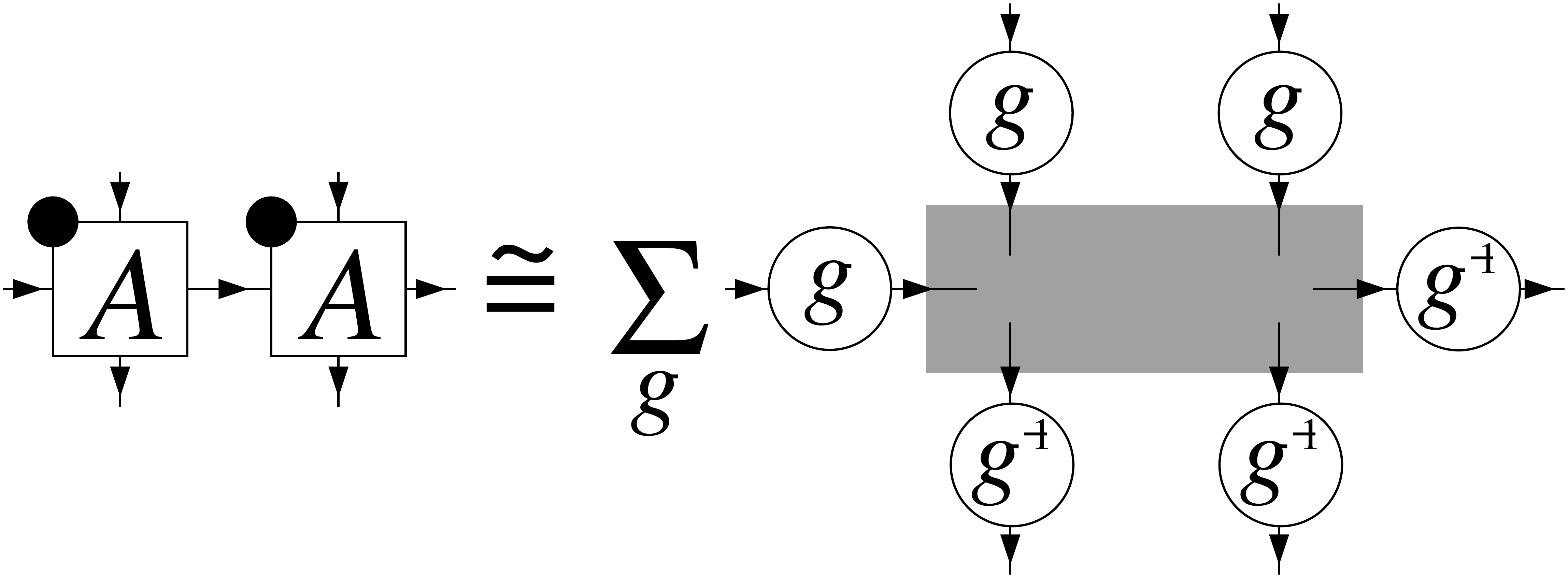}}
\quad ,
\end{equation}
up to discarding unneeded local degrees of freedom (i.e., restricting the
physical system to the subspace actually used, cf.~Observation~\ref{obs:surjective}).
\end{observation}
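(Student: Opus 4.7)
The plan is to exploit the defining property of $G$-isometric PEPS directly: that $\mc P(A)$ maps the $G$-invariant subspace $\mcS\subset(\mathbb C^{|G|})^{\otimes 4}$ of the virtual indices unitarily onto its image $\mcR(A)$ in the physical Hilbert space. By Observation~\ref{obs:surjective} we may assume the physical level coincides with $\mcR(A)$, so $\mc P(A)\colon\mcS\to\mcR(A)$ is a bijective isometry between spaces of equal dimension. Since $\mcS$ sits inside the product of four virtual spaces, the inverse of this isometry combined with the inclusion $\mcS\hookrightarrow(\mathbb C^{|G|})^{\otimes 4}$ provides exactly the decomposition of the physical level into four virtual-sized pieces announced in the statement.

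Concretely, I would apply $\mc P(A^\dagger)=\mc P(A)^{-1}$ to the physical index of $A$ (extended to a unitary on $\mcR(A)\to\mcS\hookrightarrow(\mathbb C^{|G|})^{\otimes 4}$ if one insists on having a true unitary on the physical level). After this physical-level rotation, the tensor $A$ becomes $\mc P(A)^{-1}\circ \mc P(A)$ regarded as a map between virtual indices, which by \eqref{eq:noninj-linv-eq} is the orthogonal projector $\Pi_\mcU=\bm\sigma=\tfrac{1}{|G|}\sum_g U_g^{\otimes 4}$ onto $\mcS$. Drawing this diagrammatically gives exactly the right-hand side of \eqref{eq:iso:iso-accessbonds}: four bonds pass straight through the site, each threaded by a common group element $g$ summed over $G$, precisely the compact ``string'' notation introduced in \eqref{eq:2d:move-strings}.

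For stability under concatenation, I would invoke Lemma~\ref{lemma:iso:iso-stable-under-concat}: a block $C=AB$ of $G$-isometric tensors is again $G$-isometric, so the same argument applied site-by-site produces the picture \eqref{eq:iso:iso-accessbonds-pair}. Because the physical-level rotations at the two sites act on disjoint Hilbert spaces, they can be performed independently; the two averaged copies of $U_g$ meet along the shared inner bond and are forced to carry the same group element there (so only a single sum survives on the common link), consistent with the diagram. The main subtlety, flagged in the footnote to \eqref{eq:iso:iso-accessbonds}, is purely bookkeeping: one has to split the physical space of each site into four factors correctly matched to the arrow directions of $A$ (kets versus bras), and discard the unused complement of $\mcS$ via Observation~\ref{obs:surjective} so that the ``four virtual systems'' emerging after the rotation really are the (possibly reduced) physical level of the rotated tensor. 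This is where the proof needs care, but no new conceptual ingredient beyond the isometry property itself.
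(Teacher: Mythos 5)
Your first half is correct and is exactly the paper's argument: apply the isometric left inverse $\mc P(A^\dagger)=\mc P(A)^{-1}$ to the physical index, so that the tensor becomes $\mc P(A)^{-1}\mc P(A)=\Pi_{\mc U}=\tfrac{1}{|G|}\sum_g U_g\otimes U_g\otimes U_g^\dagger\otimes U_g^\dagger$, which is the right-hand side of \eqref{eq:iso:iso-accessbonds}.

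The gap is in your treatment of concatenation. If you rotate the two sites independently, each site contributes its \emph{own} group average, and the contraction of the inner virtual bond does \emph{not} force the two group elements to coincide: what you obtain is $\sum_{g,h}$ of (outer legs of the first site threaded by $U_g$) $\otimes$ (outer legs of the second threaded by $U_h$) $\otimes$ (a pair of physical systems on the inner bond in the state $\sum_m\ket{gm}\bra{hm}$, which records only $g^{-1}h$). The double sum does not collapse by itself; identifying $\tr[U_gU_h^\dagger\Delta]=\delta_{g,h}$ (Lemma~\ref{lemma:noninj:semireg-trace-ug-delta}) would require \emph{tracing out} that physical pair, i.e.\ closing the inner bond into a loop as happens in \eqref{eq:2d:linv-application} when one applies the \emph{blocked} inverse $\mc P(C)^{-1}$ — a genuinely joint operation, not two site-local rotations meeting on a link. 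The paper supplies the missing step explicitly: measure the inner pair $(a,b)$ to learn $ab^{-1}=g^{-1}h$, apply the correction $U_{g^{-1}h}$ to the remaining subsystems of the second site (or do both as one controlled unitary to stay reversible), after which the inner pair factors out and can be discarded — this is precisely the ``unneeded local degrees of freedom'' of the statement. Your fallback of invoking Lemma~\ref{lemma:iso:iso-stable-under-concat} and running the first-half argument on the blocked tensor $C=AB$ does work (and the paper notes this), but then the unitary acts on the joint physical space of the block and the restriction to $\rg\,\mc P(C)\subsetneq\rg\,\mc P(A)\otimes\rg\,\mc P(B)$ via Observation~\ref{obs:surjective} is doing the real work; the sentence claiming that ``only a single sum survives on the common link'' automatically is false as stated and should be replaced by one of these two arguments.
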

Eq.~\eqref{eq:iso:iso-accessbonds} follows directly from applying the left
inverse~\eqref{eq:2d-linv}, using that it is an isometry. While stability
under concatenation can be inferred from
Lemma~\ref{lemma:iso:iso-stable-under-concat} (``Isometry is stable under
concatenation''), it is instructive to see how
\eqref{eq:iso:iso-accessbonds-pair} can be obtained from two tensors of
the form \eqref{eq:iso:iso-accessbonds} in a reversible way. Concatenating
two tensors of the form \eqref{eq:iso:iso-accessbonds}, we face a state of
the form
\[
\raisebox{-2.5em}{\includegraphics[height=6.2em]{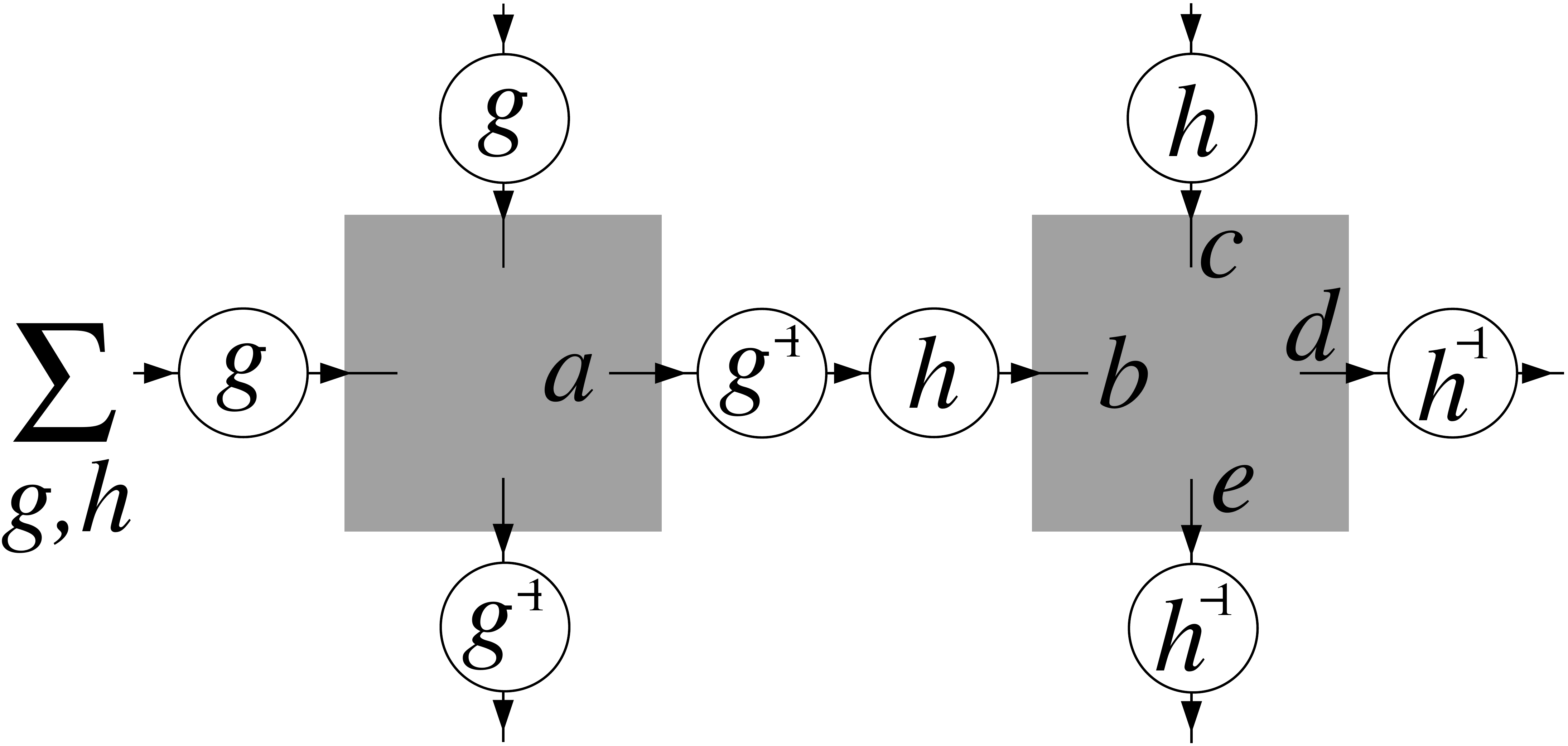}}
\quad .
\]
Then, we measure the two sites labeled $a$ and $b$ in the standard
basis, and thus infer $g^{-1}h=ab^{-1}$.
(We will use $a$ $b$, etc.\ for both the site and its state.)
Now, apply $U_{g^{-1}h}$ to sites $c$, $d$, and $e$.
This maps $U_h\mapsto U_hU_{g^{-1}h}^\dagger=U_g$ (on site $c$) or
$U_{h^{-1}}\mapsto U_{g^{-1}h}U_{h^{-1}}=U_{g^{-1}}$ (on sites $d$ and
$e$, as they are bras), and thus yields a tensor of the form
\eqref{eq:iso:iso-accessbonds-pair}.

Do we loose reversibility in this process, given that we measure? No: One
way to see this is to replace the measurement followed by $U_{g^{-1}h}$ by a
unitary controlled by the state of $a$ and $b$, which keeps the process
reversible.  In particular, the state of $a$ and $b$ is separable from the
rest after this operation (it is the only state which still depends
on $h$). This, however, implies that we could equally well discard (i.e.,
measure) it, and rebuild it if we need to reverse the mapping.

\subsection{Blocking and renormalization\label{sec:RG}}

An aspect which has to be taken care of separately for $G$--isometric PEPS
is blocking. Recall that the proofs for the 2D case involved taking a
block of $G$--injective tensors $A$ with representation $U_g$
and regarding it as a new $G$--injective $4$-index tensor $B$ with representation
$V_g=U_g\otimes\cdots\otimes U_g$. This was possible since a tensor
product of semi-regular representations is again semi-regular. In
contrast, the regular case needs some special attention.

The main observation to deal with blocking in the regular case is to
observe that for the left-regular representation $L_g$,
\[
(L_g)^{\otimes N}\otimes (L_g^\dagger)^{\otimes M} \cong L_g \otimes
\openone^{\otimes (N+M-1)}\ .
\]
(This follows since both representations have the same character,
which means they are isomorphic.)
This implies that any number of bonds with a regular representation can be
mapped -- on the virtual level -- to only one bond carrying the
$L_g$--symmetry, whereas all other bonds have no symmetry.

Let us now first show that this isomorphism can be implemented by a
physical operation, before discussing what this implies for the structure
of the PEPS. For simplicity, we restrict to the case of two bonds,
$L_g\otimes L_g \cong L_g\otimes\openone$. The isomorphism is implemented
by the map
\[
T=\sum_{a,b}\ket{a,ab}\bra{a,b}\ :\quad T^\dagger (L_g\otimes
L_g)T=L_g\otimes \openone\ .
\]
Let us now consider two tensors $A$ and $B$ adjacent to the $L_g\otimes
L_g$ bond. By virtue of Observation \ref{obs:iso:accessible-virt}, and
grouping all other indices, they are isomorphic to
\[
\raisebox{-1em}{
\includegraphics[height=3em]{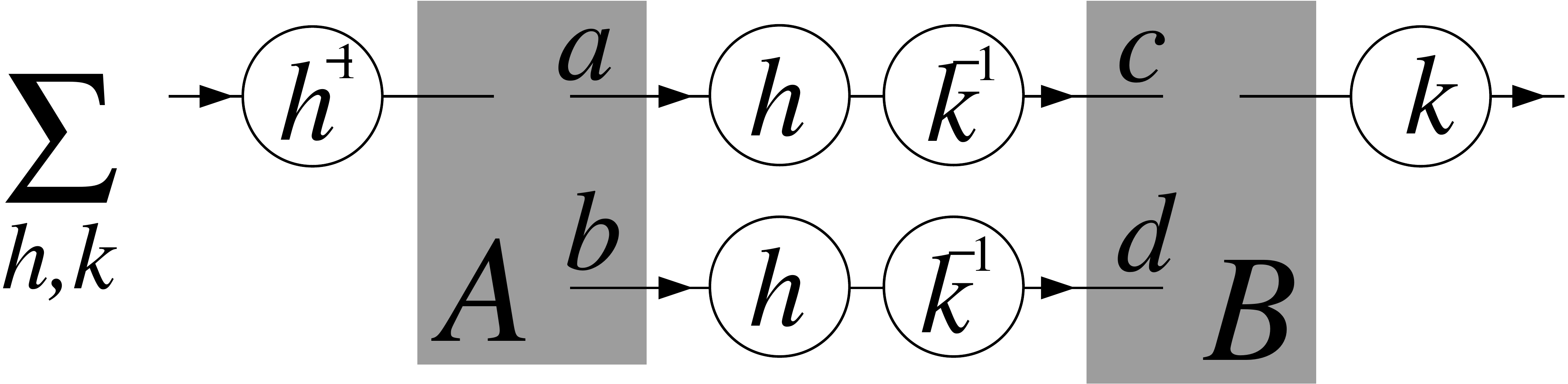}
}\ .
\]
Now, we  apply the unitary $T$ ($T^\dagger$) to the pair of
indices labelled $a$, $b$ ($c$, $d$): This physical transformation changes
the network to tensors $A'$, $B'$ with symmetry $L_g\otimes\openone$:
\[
\raisebox{-1em}{
\includegraphics[height=3em]{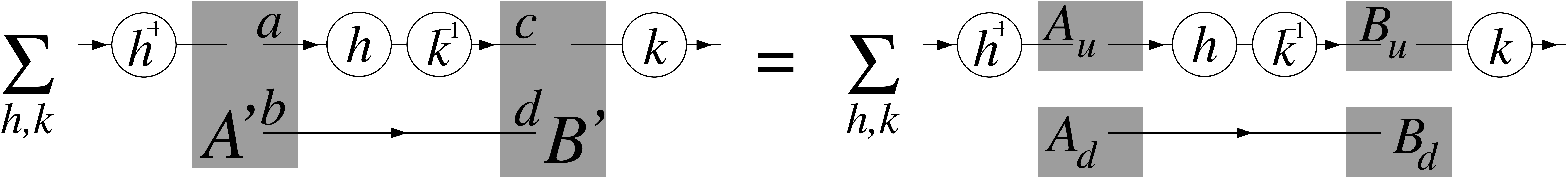}
}\ .
\]
As illustrated on the r.h.s., this means that $A'$ and $B'$ are actually a
tensor product of two independent tensors: $A\cong A_u\otimes A_d$ and
$B\cong B_u\otimes B_d$. Whereas the upper tensors have an
$L_g$--symmetry, the lower tensors have no symmetry, and form a maximally
entangled pair between adjacent sites. Thus, these degrees of freedom do
not contribute to the non-local properties of the PEPS, and can thus be
discarded.

\begin{observation}[Blocking indices preserves $G$--isometry]
Blocking indices of $G$--isometric tensors gives a new $G$--isometric
tensor, up to local isomorphisms and discarding maximally entangled pairs
between adjacent sites.
\end{observation}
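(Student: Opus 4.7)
The plan is to reduce the general blocking statement to the two-bond case that was just worked out in the text preceding the observation. When we block an $m\times n$ patch of $G$--isometric tensors into a single super-tensor, concatenation already produces an isometric tensor by Lemma~\ref{lemma:iso:iso-stable-under-concat}. The only issue is that each boundary with an adjacent super-block now carries several parallel virtual bonds, each with a copy of the left-regular representation $L_g$ (or its dual $L_g^\dagger$, depending on orientation), whereas the definition of $G$--isometric requires only a single $L_g$-bond per link. The task therefore reduces to showing that, at each shared boundary, the bundle of parallel bonds can be collapsed to a single $L_g$-bond plus maximally entangled pairs on symmetry-free bonds, and that this collapse can be implemented by a unitary acting only on the physical sites.

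The representation-theoretic input is the isomorphism
\[
(L_g)^{\otimes N}\otimes (L_g^\dagger)^{\otimes M} \cong L_g\otimes \openone^{\otimes (N+M-1)}\ ,
\]
which I would justify by a character computation: both sides have character concentrated at the identity with the same total dimension. The intertwining unitary $T$ can be built iteratively from the two-bond map $T_{2}=\sum_{a,b}\ket{a,ab}\bra{a,b}$ used in the text (and an analogous map for bonds of opposite orientation, using $L_g L_g^\dagger \cong \openone\otimes L_g$ realized by a similar convolution with group multiplication). Applying $T_2$ repeatedly along the bond bundle reduces it, pair by pair, to a single $L_g$-carrying bond accompanied by $N+M-1$ trivial bonds.

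To implement $T$ physically, I would invoke Observation~\ref{obs:iso:accessible-virt} to bring both super-blocks adjacent to the boundary into the form where every virtual bond is directly accessible on the physical level, and then apply $T$ on one side of the boundary and $T^\dagger$ on the other; by Lemma~\ref{lemma:iso:sym-virt-can-be-done-on-phys} these are legitimate physical unitaries because, in the form~\eqref{eq:iso:iso-accessbonds}, the virtual indices live one-to-one on the physical system. After this transformation, the super-tensors split (as in the r.h.s.\ of the two-bond example in the text) into a tensor product of (i) a new tensor carrying a single $L_g$-symmetry along that boundary, and (ii) a maximally entangled pair on the symmetry-free bonds, which is shared between the two adjacent super-blocks and local to the physical sites, hence discardable in the sense of Observation~\ref{obs:surjective}. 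Doing this for every boundary of every super-block yields the claimed $G$--isometric structure on the blocked lattice.

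The main obstacle I expect is the bookkeeping: keeping track of bond orientations when both $L_g$ and $L_g^\dagger$ appear simultaneously in a boundary bundle, ensuring that the physical unitaries on the two sides of the boundary agree on the assignment of the surviving $L_g$-bond, and verifying that the resulting reduced tensor is still isometric (i.e.\ that the dagger gives the left-inverse on the $L_g$-invariant subspace). Once the intertwiner $T$ is chosen to be unitary and the physical implementation goes through Lemma~\ref{lemma:iso:sym-virt-can-be-done-on-phys}, the isometry property transports along with the concatenation by Lemma~\ref{lemma:iso:iso-stable-under-concat}, so no new analytic content is needed beyond the character-theoretic identity above.
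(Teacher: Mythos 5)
Your proposal is correct and follows essentially the same route as the paper: the character identity $(L_g)^{\otimes N}\otimes(L_g^\dagger)^{\otimes M}\cong L_g\otimes\openone^{\otimes(N+M-1)}$, the intertwiner built from the two-bond map $T=\sum_{a,b}\ket{a,ab}\bra{a,b}$, physical implementation via the accessible-bonds form of Observation~\ref{obs:iso:accessible-virt}, and discarding the resulting maximally entangled pairs between adjacent blocks. One small slip: Lemma~\ref{lemma:iso:sym-virt-can-be-done-on-phys} only licenses virtual unitaries that \emph{commute} with the symmetry, which $T$ does not (it changes the representation) --- the correct justification is the other one you also give, namely that in the form \eqref{eq:iso:iso-accessbonds} the virtual indices sit one-to-one on the physical level, so any unitary on them is a physical unitary.
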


Let us now show that this procedure establishes a renormalization scheme
for PEPS. The idea of renormalization is to study non-local properties of
quantum many-body states by grouping several sites into one, identifying
and discarding ``irrelevant'' (i.e., local) degrees of freedom, and
iterating until convergence is reached.

Consider now a $G$--isometric PEPS with tensor $A$ (and symmetry $L_g$),
and consider a block of $2\times 2$ tensors. The blocked tensor has then
symmetry $L_g\otimes L_g$, which by virtue of the above procedure
can be mapped to a $L_g\otimes \openone$ symmetry by local operations:
\[
\includegraphics[height=5.5em]{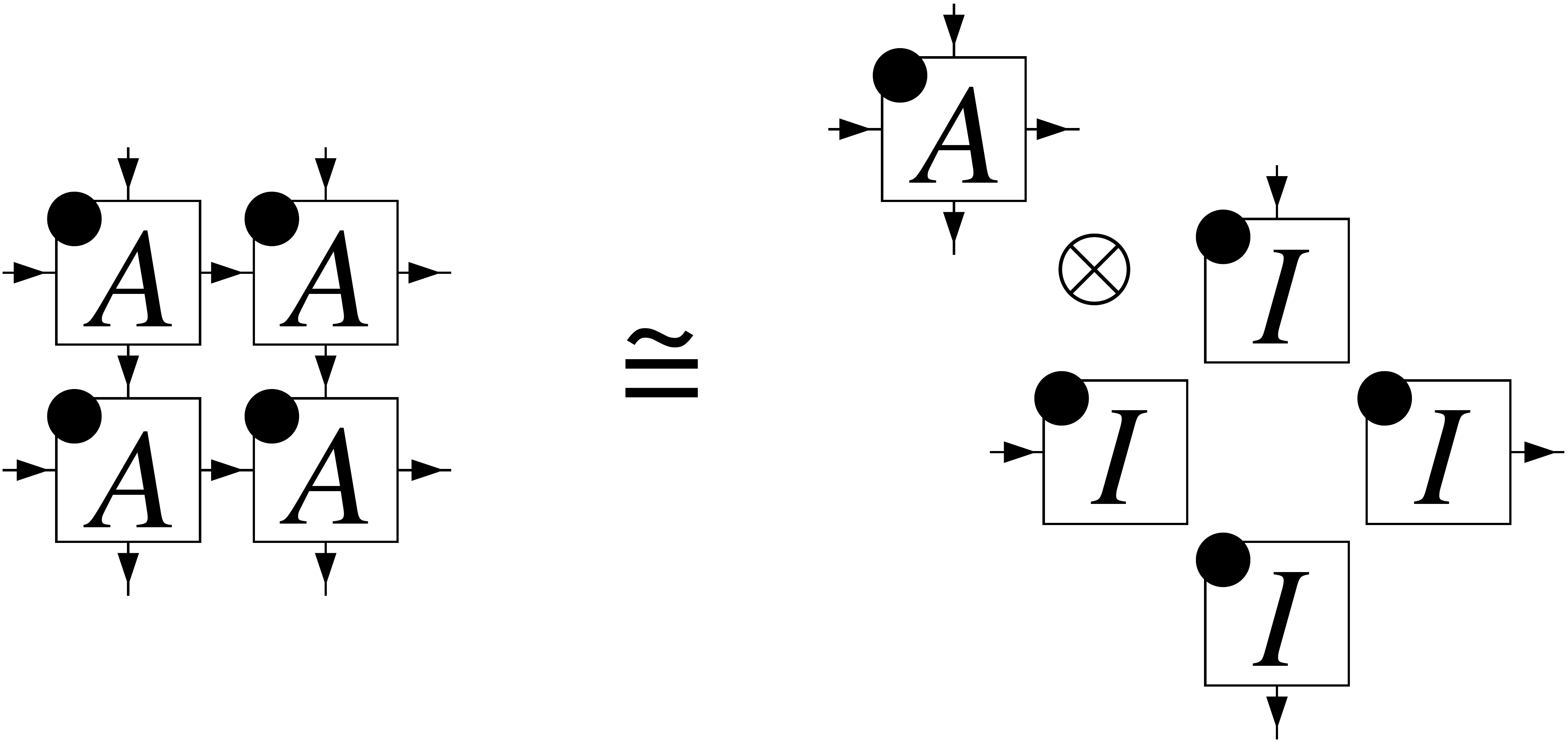}
\]
Here, the $I$ tensors are isometric tensors with no symmetry, i.e., they
map the virtual system one-to-one to the physical system.  Thus, by local
unitaries on $2\times 2$ blocks, the $G$--invariant PEPS with tensor $A$
has been transformed into a tensor product of the following states: First,
the same PEPS described by the tensor $A$, but on a coarse-grained lattice, and
second, the state given by the identity tensors $I$ -- a tensor product of
maximally entangled states between neighboring blocks, which can be
discarded if only looking for non-local properties. This results in the
following observation.

\begin{observation}
$G$--isometric PEPS are renormalization group fixed points.\footnote{
Since the regular representation is the only faithful one verifying
$U_g\otimes U_g\cong U_g\otimes \openone$, $G$--isometric PEPS seem to be
the unique fixed points of this type of renormalization procedure.
}  
\end{observation}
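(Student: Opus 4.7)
The plan is to execute the renormalization step concretely on a $G$--isometric PEPS with tensor $A$ and symmetry $L_g$, and then verify that what remains after discarding local degrees of freedom is again a $G$--isometric tensor of the same type. Since the preceding paragraph already describes the procedure pictorially, the proof amounts to assembling the preceding lemmas and observations into a clean argument and checking that nothing has to be added to the class of tensors.

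First I would group a $2\times 2$ region of tensors $A$ into a single super-site tensor $\tilde A$. By the observation that blocking preserves $G$--isometry (which rests on Lemma~\ref{lemma:iso:iso-stable-under-concat}), $\tilde A$ is itself $G$--isometric. Each super-edge between neighbouring $2\times 2$ blocks now carries two virtual bonds, each transforming under $L_g$; hence the representation on the super-edge is $L_g\otimes L_g$. The key algebraic input is the character-level identity $L_g\otimes L_g\cong L_g\otimes\openone$, which is realised by the explicit intertwiner $T=\sum_{a,b}\ket{a,ab}\bra{a,b}$. Crucially, by Lemma~\ref{lemma:iso:sym-virt-can-be-done-on-phys}, this virtual isomorphism can be implemented by a physical unitary on the two sites adjacent to the super-edge, because $T$ commutes with the symmetry $L_g\otimes L_g$ (translated through the isomorphism to $L_g\otimes \openone$).

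Next I would apply $T$ on every super-edge of the coarse lattice. After these physical unitaries, each super-edge splits into (i) a bond carrying the symmetry $L_g$ and (ii) a bond carrying the trivial symmetry $\openone$. I would then invoke Observation~\ref{obs:iso:accessible-virt}, which provides the canonical form for a $G$--isometric tensor: for each site, the physical Hilbert space decomposes (up to a local unitary) into factors in one-to-one correspondence with the incident virtual bonds, with each factor carrying the same representation as the corresponding bond. Because the trivial bonds carry no symmetry, the corresponding physical factors form unconstrained maximally entangled pairs between adjacent blocks. These are purely local degrees of freedom and, after the final local unitary that brings us into the canonical form, factor out of the tensor network.

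The remaining tensor network is a coarse-grained PEPS whose sites carry only the $L_g$-symmetric bonds, with a physical space again decomposing according to Observation~\ref{obs:iso:accessible-virt} into $G$-indexed factors, one per bond. Since the canonical form of Observation~\ref{obs:iso:accessible-virt} characterises $G$--isometric tensors with a given $G$ and $L_g$ uniquely up to local unitaries on the physical system, the coarse-grained tensor is equivalent to the original $A$. The fixed-point property follows, and the footnote remark that $L_g$ is the only faithful representation satisfying $U_g\otimes U_g\cong U_g\otimes\openone$ explains why this argument does not generalise to non-regular $U_g$. The main obstacle is the last uniqueness step: one must check that the canonical form really pins down the tensor up to local physical unitaries, so that an \emph{a priori} different $G$--isometric tensor after renormalization is in fact $A$ itself and not merely some other member of the same class.
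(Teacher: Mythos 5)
Your proposal follows the paper's own argument step for step: block $2\times 2$, observe that each coarse edge carries $L_g\otimes L_g$, apply the intertwiner $T=\sum_{a,b}\ket{a,ab}\bra{a,b}$ to split off a trivial bond, discard the resulting maximally entangled pairs between neighbouring blocks, and identify what remains with the original tensor via the canonical form of Observation~\ref{obs:iso:accessible-virt}. One correction: your justification for implementing $T$ physically via Lemma~\ref{lemma:iso:sym-virt-can-be-done-on-phys} does not work as stated --- $T$ does \emph{not} commute with $L_g\otimes L_g$; it intertwines $L_g\otimes L_g$ with $L_g\otimes\openone$, i.e.\ $T(L_g\otimes\openone)=(L_g\otimes L_g)T$, and a unitary that genuinely commuted with the symmetry could never change the representation carried by the bonds. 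The correct justification, and the one the paper uses, is Observation~\ref{obs:iso:accessible-virt} itself: after a physical unitary the bond factors of each site are literally present in the physical Hilbert space (twirled by $U_x\otimes U_x$), so $T$ can be applied directly to the two physical factors of the site on one side of the edge (and $T^\dagger$ on the other side), turning the twirl into $U_x\otimes\openone$. Finally, the ``main obstacle'' you flag at the end is already closed by Observation~\ref{obs:iso:accessible-virt}: it provides an explicit normal form reached by a physical unitary, so any two $G$--isometric tensors with the same group and the same arrangement of $L_g$--bonds are automatically locally unitarily equivalent, which is all that ``fixed point'' means here --- the claim is equality of the state up to local unitaries and discarded entangled pairs, not equality of tensors.
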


A similar procedure should also apply to the case of isometric tensors
with a non-regular representation $U_g$ of the symmetry, or even to
non-isometric PEPS: (Approximate)
decompositions of the tensorized symmetry $U_g^{\otimes k}$ will yield
(approximate) RG schemes for the PEPS. Another application of this RG
scheme will be illustrated in the examples section: It can be used to get
rid of extra symmetries in the original tensors which are of local nature
and vanish after a step of blocking.

\subsection{Structure of the ground state subspace}

We will now study the properties of PEPS with a $(g,h)$-closure,
Eq.~\eqref{eq:2d:peps-with-ug-uh}, which appear as ground states of the 2D
parent Hamiltonian~\eqref{eq:2d:parentham}, for the case of $G$--isometric
PEPS. We want to understand three things: First, is it possible to map
between arbitrary states in
the ground state subspace by acting only on a restricted region? Second,
are different ground states locally distinguishable? Third, what is the
entropy of a continguous block of spins?

All these questions can be addressed by considering the following
scenario with $G$--isometric $B$ and $D$:
\begin{equation}
        \label{eq:iso:L-shape-scenario}
\raisebox{-2.2em}{
    \includegraphics[height=5.5em]{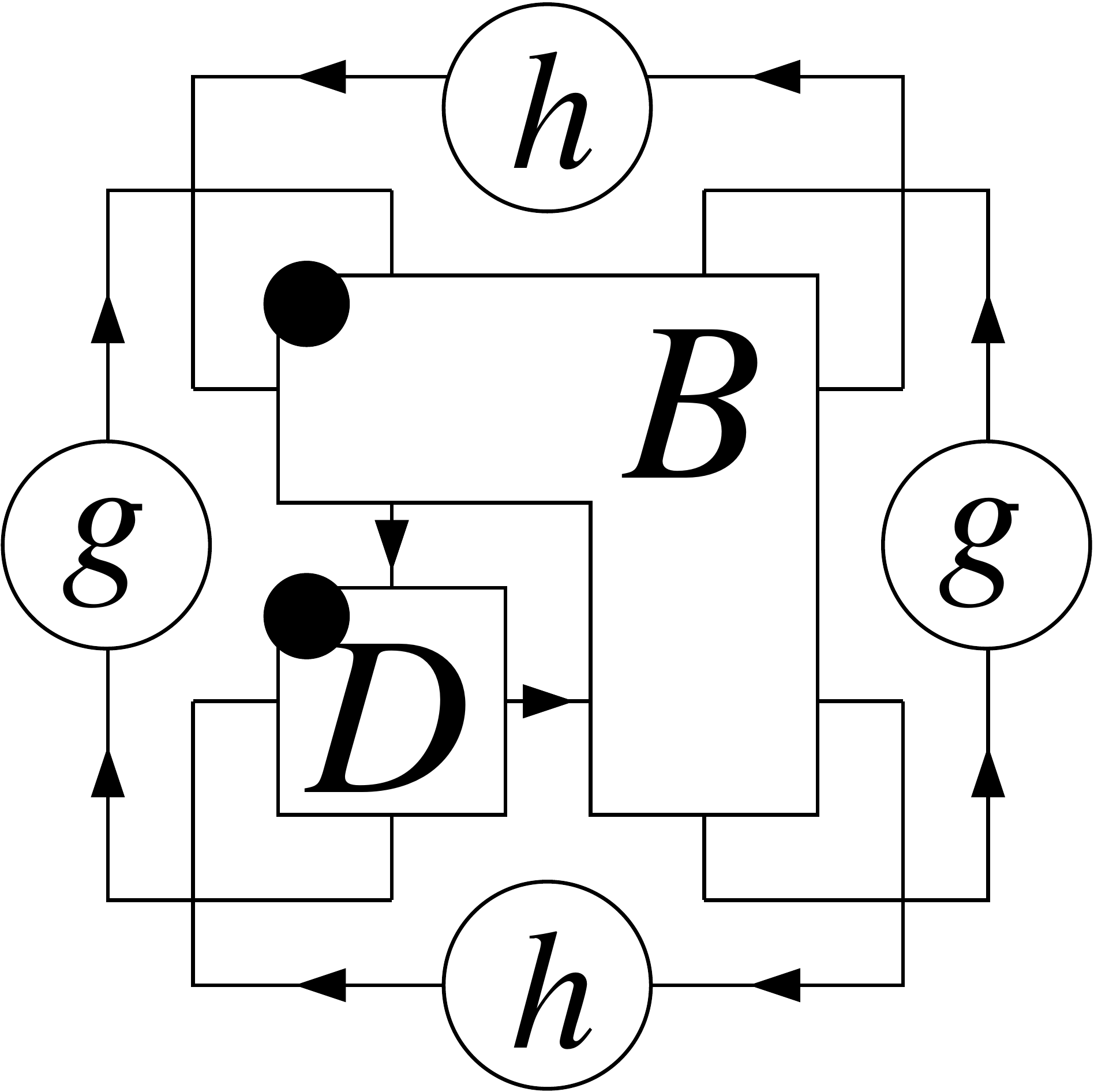}
}\ ,
\end{equation}
which by observation \ref{obs:iso:accessible-virt} is locally (where local
refers to the regions $B$ and $D$) isomorphic to
\begin{equation}
        \label{eq:iso:L-shape-scenario-accessible}
\raisebox{-5em}{
    \includegraphics[height=11em]{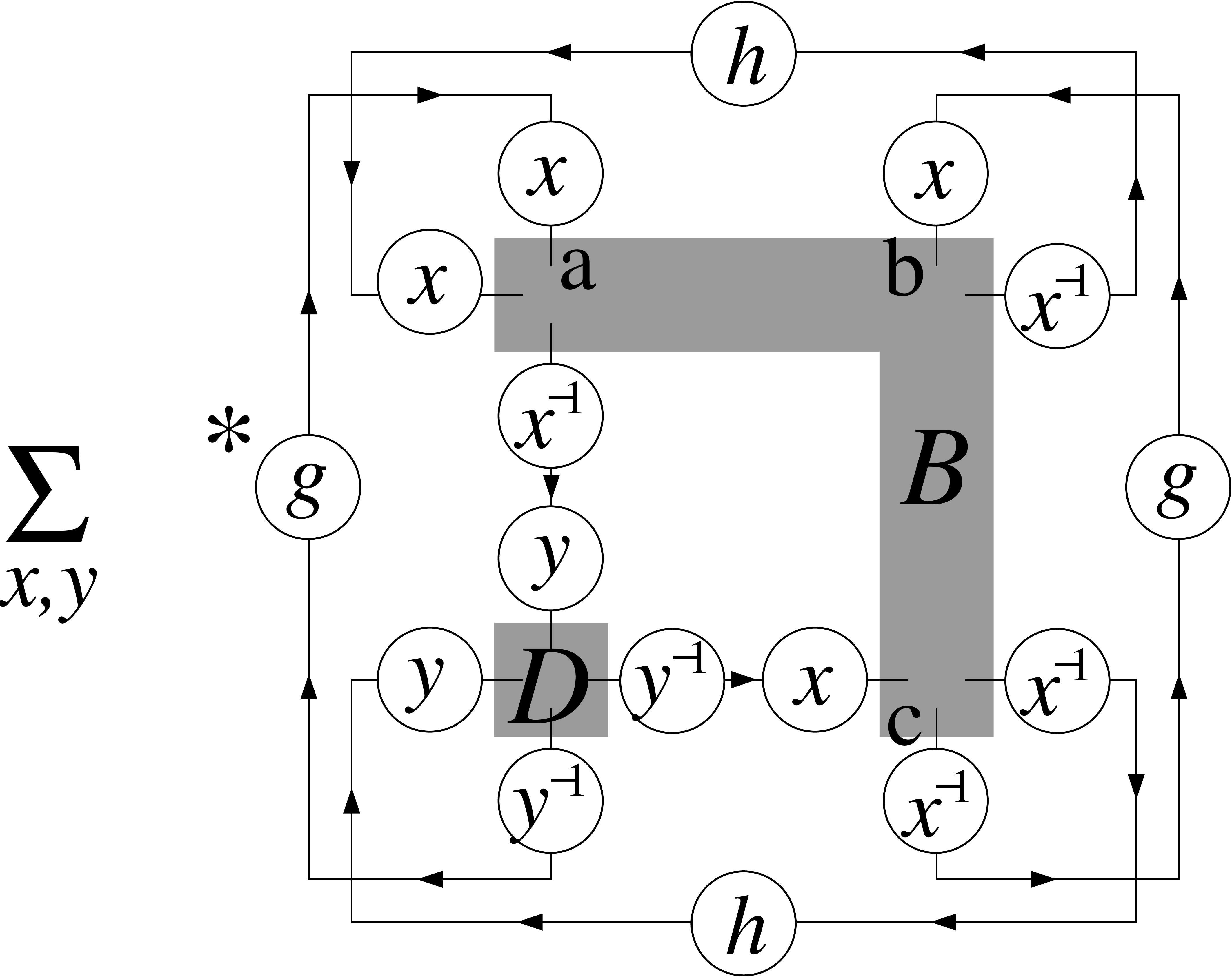}
}\qquad .
\end{equation}
When dividing a large lattice in two blocks as above, we need to
include an RG step in which we separate out $\partial D-4$ maximally
entangled states between $B$ and $D$; here, the boundary $\partial D$ is
the number of bonds crossing the boundary of $D$ before the RG step.

We will now devise a unitary transformation acting only on the $B$ region,
which will decouple $h$ and $g$ from $D$. To this end, consider the sites
marked $a$, $b$, and $c$ above. Since all group transformations are left
regular, we have $c=x^{-1}gxb$, and thus $cb^{-1}=x^{-1}gx$:
Having access to $b$ and $c$ actually gives us access to $x^{-1}gx$. Thus,
a unitary transform
\begin{equation}
        \label{eq:iso:tweezer-duplicate-operation}
\ket{c}\bra{b}\bra{a}\mapsto \ket{c}\bra{b}\bra{cb^{-1}a}=
\ket{x^{-1}g^{-1}xa}\ket{b}\bra{c}
\end{equation}
will remove the left $g$ in \eqref{eq:iso:L-shape-scenario-accessible}
(the one marked *) from the boundary. A
corresponding transformation allows us to remove the lower $h$,
leaving us with
\begin{equation}
    \label{eq:iso:block-disent-decomposition}
\raisebox{-5em}{
\includegraphics[height=11em]{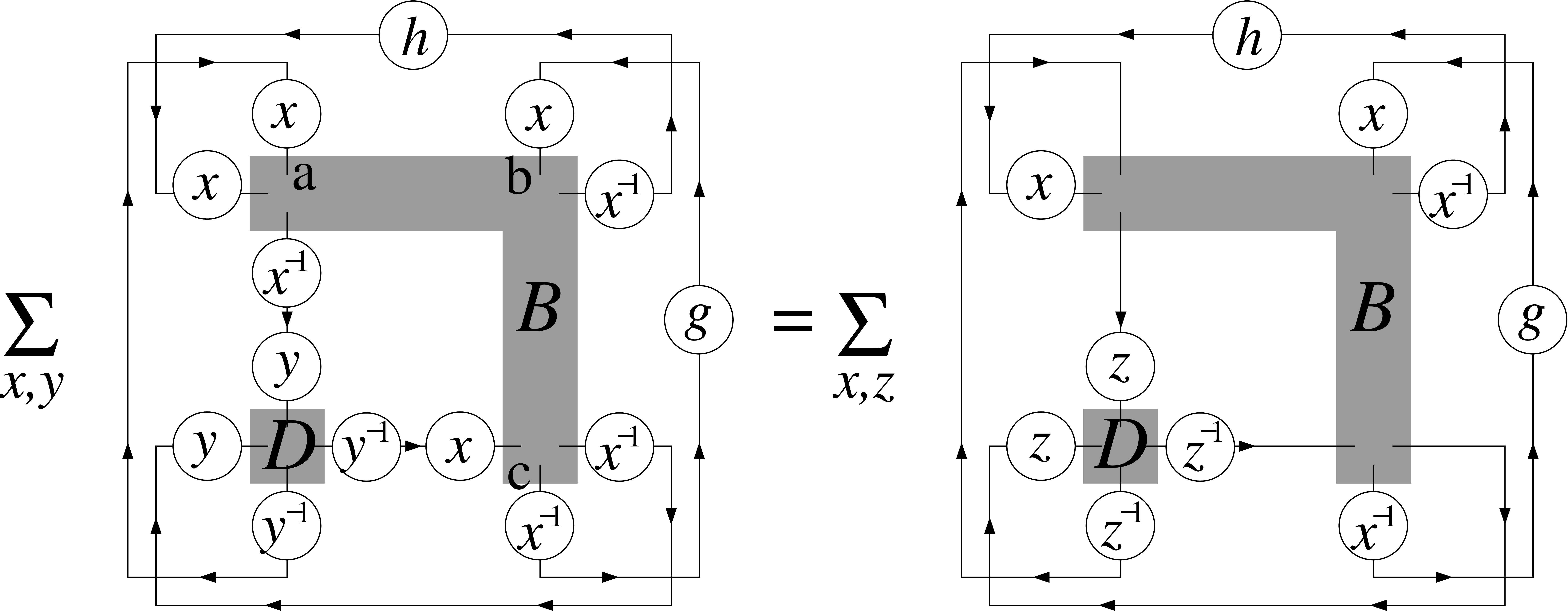}
}\quad ,
\end{equation}
where in the second step we have substituted $z:=x^{-1}y$. Now, however,
$g$, $h$, and $x$ only act on degrees of freedom
which are completely inside the
region $B$, while the degrees of freedom of $D$ are equal to the
corresponding ones of $B$, up to a constant shift $z$.
In formulas, the state above can be written (up to isomorphism) as
\[
\sum_z \ket{\mu(z)}^{\otimes 4}_{BD}\otimes\ket{\zeta(g,h)}_B\ ,
\]
with
\begin{align*}
\ket{\mu(z)}_{BD} & =  \sum_{b} \ket{zb}_B\ket{b}_D \ ,  \\
\ket{\zeta(g,h)}_{B} &= \sum_{x,p,q}
\ket{x^{-1}gxp}_B\ket{p}_B\ket{x^{-1}hxq}_B\ket{q}_B\ .
\end{align*}
Note that additionally, there are $\partial D-4$ maximally entangled
states between $B$ and $D$ which have been separated in the RG step.

Equation~\eqref{eq:iso:block-disent-decomposition} has several immediate
consequences.

\begin{theorem}[Local equivalence of ground states]
For $G$--isometric PEPS, it is possible to unitarily transform between any
two states in the ground state subspace $\sum_C\lambda_{C}\MPS{M|C}$ of
Theorem~\ref{thm:2d:gs-struct} by acting only on two stripes (each
of width one) wrapping around the torus.
\end{theorem}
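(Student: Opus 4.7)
The plan is to exploit the decomposition~\eqref{eq:iso:block-disent-decomposition}, adapted to the torus geometry. Choose $B$ to be the union of one horizontal and one vertical stripe, each of width one, wrapping around the torus, and let $D$ be the complementary region (which on the torus is topologically a disk). Given any two ground states $\ket{\psi}=\sum_C\lambda_C\MPS{A|C}$ and $\ket{\psi'}=\sum_C\lambda'_C\MPS{A|C}$ of equal norm, I aim to produce a unitary supported on $B$ that maps $\ket{\psi}$ to $\ket{\psi'}$.

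The first step is to invoke the string-deformation property~\eqref{eq:2d:move-strings}, which follows from the $G$-invariance of $A$, to bring each basis state $\MPS{A|(g,h)}$ into a form where the $U_g$-loop lies entirely within the horizontal stripe and the $U_h$-loop lies entirely within the vertical stripe. This only selects a convenient representative of the closure class and requires no non-trivial action on the state. Then I apply the tweezer transformation~\eqref{eq:iso:tweezer-duplicate-operation} iteratively along the $B$--$D$ boundary. Each elementary tweezer acts only on sites of $B$, so the total operation $V_B$ is supported on $B$. The same manipulation that produced~\eqref{eq:iso:block-disent-decomposition} now yields, on the torus,
\[
V_B\MPS{A|(g,h)} \;=\; \ket{\zeta_{g,h}}_B\,\otimes\,\ket{\Omega}_{BD},
\]
where $\ket{\Omega}_{BD}$ is a fixed collection of maximally entangled pairs across $\partial D$ (together with the $\partial D-4$ extra pairs peeled off by the RG step), completely independent of $(g,h)$, while $\ket{\zeta_{g,h}}$ is supported entirely inside $B$ and concentrated near the crossing of the two stripes. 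By linearity, $V_B\ket{\psi}=\bigl(\sum_C\lambda_C\ket{\zeta_C}\bigr)_B\otimes\ket{\Omega}_{BD}$, and similarly for $\ket{\psi'}$.

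Since $V_B$ is unitary and the $\MPS{A|C}$ are linearly independent by Theorem~\ref{thm:2d:gs-struct}, so are the $\ket{\zeta_C}$; consequently the two superpositions $\sum_C\lambda_C\ket{\zeta_C}$ and $\sum_C\lambda'_C\ket{\zeta_C}$ are unit vectors in the same finite-dimensional subspace of $B$'s Hilbert space, and can be related by a unitary $W_B$ acting on $B$. The required transformation is then $V_B^{\dagger}\,W_B\,V_B$, supported entirely on the two stripes.

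The main obstacle I expect is verifying that the tweezer construction, described in~\eqref{eq:iso:tweezer-duplicate-operation} for a single L-corner, extends coherently along stripes that wrap around the torus and meet at a crossing. Concretely, one must propagate the conjugating element $x$ consistently along each stripe so that the operation closes up when the stripe does, and ensure that at the crossing both the $g$- and $h$-strings can be absorbed simultaneously; the latter is where the commutation condition $gh=hg$ defining the pair-conjugacy classes is used. Once this topological bookkeeping is in place, the rest of the argument is essentially an iteration of~\eqref{eq:iso:block-disent-decomposition} combined with the elementary fact that any two unit vectors are related by a unitary.
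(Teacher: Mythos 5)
Your proposal is correct and follows essentially the same route as the paper: partition the torus into the two-stripe region $B$ and its topologically trivial complement $D$ as in \eqref{eq:iso:L-shape-scenario}, and use the decomposition \eqref{eq:iso:block-disent-decomposition} to show that all $(g,h)$-dependence is confined to $B$ while the $B$--$D$ entanglement is a fixed set of maximally entangled pairs, so a unitary on $B$ maps any ground state to any other. The ``obstacle'' you flag (propagating the conjugating element around the stripes and through their crossing) is exactly what the paper's derivation of \eqref{eq:iso:block-disent-decomposition} already handles, so your argument is a more explicit version of the paper's own proof rather than a different one.
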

\begin{proof}
By fixing two such stripes,  the PEPS is partitioned as in
\eqref{eq:iso:L-shape-scenario}, with the stripes labelled $B$ and the
(topologically trivial) rest $D$. Then, decomposition
\eqref{eq:iso:block-disent-decomposition} shows that the states
$\MPS{M|(g,h)}$ spanning the ground state manifold differ only within $B$,
allowing to unitarily transform between any two states
by acting only on $B$.
\end{proof}

\begin{corollary}[Local indistinguishability of ground states]
For $G$--isometric PEPS, the states in the ground state subspace of
Theorem~\ref{thm:2d:gs-struct} cannot be distinguished by local
operations, i.e., those which act only on a topologically trivial region.
\end{corollary}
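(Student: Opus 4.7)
The plan is to deduce the corollary from the preceding theorem by noting that a topologically trivial region can always be placed in the complement of the two stripes on which the ground states are allowed to differ. Concretely, I would argue that the reduced density matrix on any topologically trivial region is the same for all ground states in the subspace.

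First, I would fix an arbitrary topologically trivial region $R$ on the torus. Because $R$ is topologically trivial (contractible in the torus), I can find two width-one stripes $B_1$ and $B_2$, each wrapping non-trivially around one of the two independent cycles of the torus, such that $B := B_1 \cup B_2$ is disjoint from $R$; equivalently, $R$ is fully contained in the complementary (topologically trivial) region $D$. This is the only geometric input, and it is the step where one must be slightly careful: strictly speaking one needs that $R$ is contained in an open disk, so that its complement contains a pair of non-contractible loops. I would simply take this as the working definition of ``topologically trivial'' in this context.

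Next, I would invoke the decomposition \eqref{eq:iso:block-disent-decomposition} from the proof of the previous theorem. For this choice of $B$ and $D$, every ground state $\MPS{A|(g,h)}$ is, after a unitary acting only on $B$ together with the $\partial D - 4$ maximally entangled pairs peeled off by the RG step, of the form
\[
\sum_z \ket{\mu(z)}_{BD}^{\otimes 4} \otimes \ket{\zeta(g,h)}_B,
\]
where the maximally entangled pairs and the states $\ket{\mu(z)}_{BD}$ are manifestly independent of $(g,h)$, and the whole dependence on the pair-conjugacy class $C$ sits inside the factor $\ket{\zeta(g,h)}_B$ supported entirely on $B$. Tracing out $B$ therefore yields a reduced density matrix on $D$ which is the same for every choice of closure $(g,h)$, and hence for every ground state in the subspace.

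Finally, since $R \subset D$, the reduced state on $R$ is obtained by further tracing out $D \setminus R$ from this common reduced state, so it is likewise independent of the ground state. Consequently, the expectation value of any operator supported in $R$ is identical on all ground states of Theorem~\ref{thm:2d:gs-struct}, and no operation confined to $R$ can distinguish them. The main non-trivial ingredient is really just the previous theorem; the only new work is the geometric observation that any contractible region fits inside the complement of a pair of non-contractible stripes.
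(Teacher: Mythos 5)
Your proof is correct and follows essentially the same route as the paper: the paper's proof is a one-line reduction to the preceding local-equivalence theorem (any two ground states are related by a unitary supported on the complement $B$ of the topologically trivial region, hence have identical reduced states on $D \supseteq R$), which is exactly the argument you spell out by invoking the decomposition \eqref{eq:iso:block-disent-decomposition} directly. The only addition on your side is making explicit the geometric observation that a contractible region fits inside the complement of a pair of non-contractible stripes, which the paper leaves implicit.
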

\begin{proof}
This follows immediately by using that for any topologically trivial
region $D$, it is possible to transform between any two states
in the ground state subspace  by acting with a unitary on its complement,
$B$.
\end{proof}

\begin{theorem}[Topological entropy of $G$--isometric states]
For a $G$--isometric PEPS, the reduced density operator
$\rho_D$ of any topologically trivial block $D$  has
rank $|G|^{\partial D-1}$
and a flat spectrum. Here, the length $\partial D$ of the boundary
of $D$  is defined as the number of bonds crossing the boundary.
Thus, both the von Neumann entropy $S$ and all R\'enyi entropies
$S_\alpha$ of $\rho_D$ are
\[
S(\rho_D)=S_\alpha(\rho_D)=\log|G|(\partial D)-\log |G|\ ,
\]
where $\log |G|$ is the topological correction to the area
law~\cite{kitaev:topological-entropy,levin:topological-entropy,flammia:stringnet-renyientropy}.
\end{theorem}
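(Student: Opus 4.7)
The plan is to compute $\rho_D$ directly from the structural decomposition \eqref{eq:iso:block-disent-decomposition}. Since that decomposition is obtained by a unitary acting entirely on $B$, it leaves $\rho_D$ invariant, and the factor $\ket{\zeta(g,h)}_B$ drops out when tracing over $B$ because it is supported only inside $B$. Combined with the preceding local indistinguishability corollary, this shows that $\rho_D$ does not depend on the $(g,h)$-closure, and the whole computation reduces to tracing $B$ out of an explicit bipartite state.

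The state on the $BD$ cut then splits, on disjoint subsystems of $D$, into two pieces: (i) the $\partial D - 4$ maximally entangled pairs that were separated off in the RG step, each contributing $\openone/|G|$ to $\rho_D$ and hence a flat spectrum of rank $|G|$, and (ii) the ``corner'' block $\sum_{z\in G}\ket{\mu(z)}^{\otimes 4}_{BD}$ living on four sites of $D$. The main step is to diagonalize (ii). I would substitute $\ket{\mu(z)}=\sum_b \ket{zb}_B\ket{b}_D$, reparametrize the summation variables as $b_i = y d_i$ with $d_1 = e$, and then set $w = zy$ on the $B$ side. This factorizes the corner block as
\[
\sum_{d_2,d_3,d_4\in G} \ket{\phi_B(d)}\otimes\ket{\phi_D(d)}, \qquad \ket{\phi_{B/D}(d)}=\sum_{u\in G}\ket{u,\, ud_2,\, ud_3,\, ud_4}_{B/D},
\]
and a direct inner-product calculation shows that the vectors $\ket{\phi_{B/D}(d)}$ are mutually orthogonal with constant norm. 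Hence, up to normalization, this is a Schmidt decomposition with $|G|^3$ equal Schmidt coefficients, giving a flat reduced spectrum of rank $|G|^3$ on the $D$ side.

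Multiplying the two contributions, $\rho_D$ has rank $|G|^{\partial D - 4}\cdot |G|^3 = |G|^{\partial D - 1}$ with uniform spectrum, so the von Neumann entropy and every R\'enyi entropy evaluate to $(\partial D - 1)\log |G| = \log|G|(\partial D) - \log|G|$, which is precisely the area law with topological correction $-\log |G|$. The one non-routine step is the change of variables that exhibits the Schmidt decomposition of the corner block and isolates the single global $G$-constraint responsible for the $-\log|G|$ term; the rest is bookkeeping.
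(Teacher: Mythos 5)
Your proposal is correct and follows essentially the same route as the paper: both start from the decomposition \eqref{eq:iso:block-disent-decomposition}, separate off the $\partial D-4$ maximally entangled pairs from the RG step, and show that the corner block $\sum_z\ket{\mu(z)}^{\otimes 4}$ contributes a flat Schmidt spectrum of rank $|G|^3$. The only difference is presentational: where you exhibit the Schmidt decomposition explicitly via the substitution $b_i=yd_i$, $w=zy$, the paper implements the identical change of variables operationally, as local unitaries $\ket{p,q}\mapsto\ket{p,p^{-1}q}$ acting on pairs of copies in $B$ and $D$ that convert $\sum_z\ket{\mu(z)}^{\otimes 4}$ into $\bigl(\sum_z\ket{\mu(z)}\bigr)\otimes\ket{\mu(1)}^{\otimes 3}$, with the first factor separable.
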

\begin{proof}
We can again use the partitioning \eqref{eq:iso:L-shape-scenario}:
We can then infer from \eqref{eq:iso:block-disent-decomposition} that
(after unitary transformations on $B$ and $D$, and discarding the
unentangled part of $B$), the state is of the form $\ket{\mu(z)}^{\otimes
4}$, with
\[
\ket{\mu(z)}=\sum_b\ket{zb}_B\ket{b}_D\ .
\]
Additionally, we also have $(\partial D-4)$ maximally entangled states
$\ket{\mu(1)}$ from the initial RG step.
It is now possible to use one copy of $\ket{\mu(z)}$ as a reference frame
which allows to remove the reference to $z$ from the other copies: Take
two copies of $\ket{\mu(z)}$, and apply the operation
$\ket{p,q}\mapsto \ket{p,p^{-1}q}$ both on $B$ and $C$. Then,
\[
\ket{\mu(z)}\ket{\mu(z)}=\sum_{bc}\ket{zb,zc}_B\ket{b,c}_D
\mapsto
\sum_{bc}\ket{zb,b^{-1}c}_B\ket{b,b^{-1}c}_D =
\ket{\mu(z)}\ket{\mu(1)}\ .
\]
Thus, all but one $\ket{\mu(z)}$ can be mapped locally to a maximally
entangled state of Schmidt rank $|G|$, while the remaining ``reference
frame'' copy -- the only $z$-dependent contribution to the state --
turns into $\sum_z\ket{\mu(z)}$ which is separable. Thus, $B$ and $D$
share a total of $(\partial D-1)$ maximally entangled states, which yields
a flat Schmidt spectrum of rank $|G|^{\partial D-1}$.\footnote{
    Note that we could have used the same argument without the initial RG
    step: In that case, we would have arrived at $\partial D$ copies
    of $\ket{\mu(z)}$ which could have been converted to $(\partial D-1)$
    maximally entangled states the same way.}
\end{proof}

\begin{corollary}[Topological R\'enyi entropy of $G$-injective states]
Let $A$ be $G$--injective, with $U_g=L_g$ the regular representation.
Then, any reduced density operator $\rho_D$ of any topologically trivial
block $D$  has rank $|G|^{\partial D-1}$, and thus its zero R\'enyi
entropy is
\[
S_0(\rho_D)=\log|G|(\partial D)-\log |G|\ .
\]
\end{corollary}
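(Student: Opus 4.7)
The plan is to adapt the proof of the preceding theorem for $G$-isometric PEPS, replacing every use of the isometry $\mc P(A)^{-1}=\mc P(A^\dagger)$ by the mere left inverse that $G$-injectivity guarantees. The rank of a reduced density operator (equivalently, the zero R\'enyi entropy) depends only on the support of the state and is invariant under invertible local transformations, so the weaker injectivity property -- not full isometry -- suffices.

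First I would block the tensors in the topologically trivial region $D$ into a single tensor $\mc P(D)$ with $\partial D$ boundary virtual legs, and similarly block the complement $B$ into a single $\mc P(B)$ with the matching $\partial D$ legs. By stability of $G$-injectivity under concatenation and contraction of already-connected internal bonds, both blocked tensors are again $G$-injective, so $\mc P(B)$ and $\mc P(D)$ admit left inverses on the $G$-invariant subspace $\mc U \subset (\mathbb C^{|G|})^{\otimes \partial D}$ determined by $L_g^{\otimes \partial D}$.

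Next, I would write the PEPS as $\ket{\psi} = (\mc P(B) \otimes \mc P(D))\ket{\omega}$, where $\ket{\omega}=\sum_\alpha \ket{\alpha}\ket{\alpha}$ is the maximally entangled vector identifying the two copies of the boundary virtual space. The $G$-invariance of the blocked $\mc P(D)$ (which follows by deforming any $L_g^{\otimes \partial D}$-loop into the topologically trivial interior of $D$ and annihilating it using the local $G$-invariance at each tensor) gives $\mc P(D) = \mc P(D)\,\Pi_{\mc U}$, where $\Pi_{\mc U}$ is the orthogonal projector onto $\mc U$. A short computation shows $(\openone \otimes \Pi_{\mc U})\ket{\omega} = \ket{\omega_{\mc U}}$, the canonical maximally entangled state of Schmidt rank $\dim\mc U$ on the two copies of $\mc U$. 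Hence $\ket{\psi} = (\mc P(B) \otimes \mc P(D)) \ket{\omega_{\mc U}}$. Since both $\mc P(B)|_{\mc U}$ and $\mc P(D)|_{\mc U}$ are injective, the Schmidt rank of $\ket{\psi}$ equals the Schmidt rank of $\ket{\omega_{\mc U}}$, namely $\dim\mc U$. Character theory for the regular representation $L_g$, whose character is $|G|\,\delta_{g,e}$, then gives $\chi_{L_g^{\otimes \partial D}}(g) = |G|^{\partial D}\delta_{g,e}$ and therefore $\dim \mc U = \tfrac{1}{|G|}\sum_g \chi(g) = |G|^{\partial D - 1}$. Thus $\mathrm{rank}(\rho_D) = |G|^{\partial D - 1}$ and $S_0(\rho_D) = (\partial D)\log|G| - \log|G|$.

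The main obstacle I foresee is the second step, specifically the verification that the $G$-invariance of each local tensor assembles into the $L_g^{\otimes \partial D}$-invariance of the blocked $D$-tensor. Topological triviality of $D$ is essential here, since an $L_g$-string running around $\partial D$ must be deformed through the interior of $D$ and annihilated via the local $G$-invariances, a maneuver that would be obstructed on a region with a non-contractible loop. Once this is carefully established, the remaining steps reduce to the character-theoretic computation of $\dim \mc U$ and the elementary rank-preservation argument.
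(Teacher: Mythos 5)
Your proof is correct, but it takes a genuinely different route from the paper's. The paper disposes of this corollary in two lines: writing $\mathcal P(A)=UDV$ via a singular value decomposition, the invertible local operation $UD$ maps the $G$--injective tensor to a $G$--isometric one (this needs $U_g=L_g$, which is why the hypothesis is there), and since an invertible operation applied to each physical site preserves the Schmidt rank across any cut, the rank statement is inherited directly from the preceding theorem. Amusingly, this is exactly the rank-invariance principle you state in your opening paragraph and then abandon in favor of a self-contained computation. Your direct argument -- writing $\ket\psi=(\mathcal P(B)\otimes\mathcal P(D))\ket{\omega_{\mathcal U}}$ and counting $\dim\mathcal U=\tfrac{1}{|G|}\sum_g(|G|\delta_{g,e})^{\partial D}=|G|^{\partial D-1}$ -- is valid and has the virtue of isolating precisely what the rank depends on; note that neither approach can say anything about the spectrum (local filtering destroys flatness), which is why only $S_0$ survives from the isometric case. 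Two remarks on where the care is actually needed in your version. First, the step you single out as the main obstacle is not the delicate one: the invariance $\mathcal P(D)=\mathcal P(D)\Pi_{\mathcal U}$ holds for \emph{any} region, trivial or not, simply because applying $U_g$ to every leg of every tensor cancels pairwise on internal bonds and acts trivially by the local invariance -- no loop deformation is required. The genuinely subtle point is the injectivity of $\mathcal P(B)|_{\mathcal U}$ for the complement $B$, which is \emph{not} topologically trivial (it contains both non-contractible cycles of the torus); here one must check that the $\Delta$--insertion argument in the paper's observation on contracting legs of a connected block is topology-independent -- it is, since the two contracted legs carry the same group element and contribute only a factor $\tr[\Delta]\neq 0$ -- but this deserves to be said rather than subsumed under ``stability under concatenation.'' Second, your argument as written treats only the trivially closed state: for the ground states $\MPS{A|(g,h)}$ the closure strings must first be routed entirely into $B$ (this is where contractibility of $D$ really enters), and the injectivity claim must then be made for the string-decorated block.
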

\begin{proof}
This follows immediately from the preceding theorem, since any
$G$--injec\-tive $A$ with $U_g=L_g$ can be transformed to a $G$--isometric
one by a local (non-unitary) transformation on the physical system. (This
can be seen by doing a singular value decomposition of $\mc P(A)=UDV$:
$UD$ is the desired linear operation, which by
Observation~\ref{obs:surjective} has full rank.)
\end{proof}

\subsection{Commuting parent Hamiltonians}

Let us now show that the parent Hamiltonians of $G$--isometric PEPS are
special: They are sums of commuting local terms. We will demonstrate the
proof for MPS, but the generalization to PEPS is straightforward.

\begin{lemma} For $G$--isometric MPS, the local terms $h_i$
[Eq.~\eqref{eq:2d:parentham-localterm}] in the parent Hamiltonian
-- which project onto the complement of
$\mathcal S_2=\left\{\sum_{ij}\tr[A^iA^jX]\ket{ij}\middle\vert X\right\}$ --
are of the form
\begin{equation}
\label{eq:iso:ham-proj-from-A}
\raisebox{-1.2em}{
\includegraphics[height=3.5em]{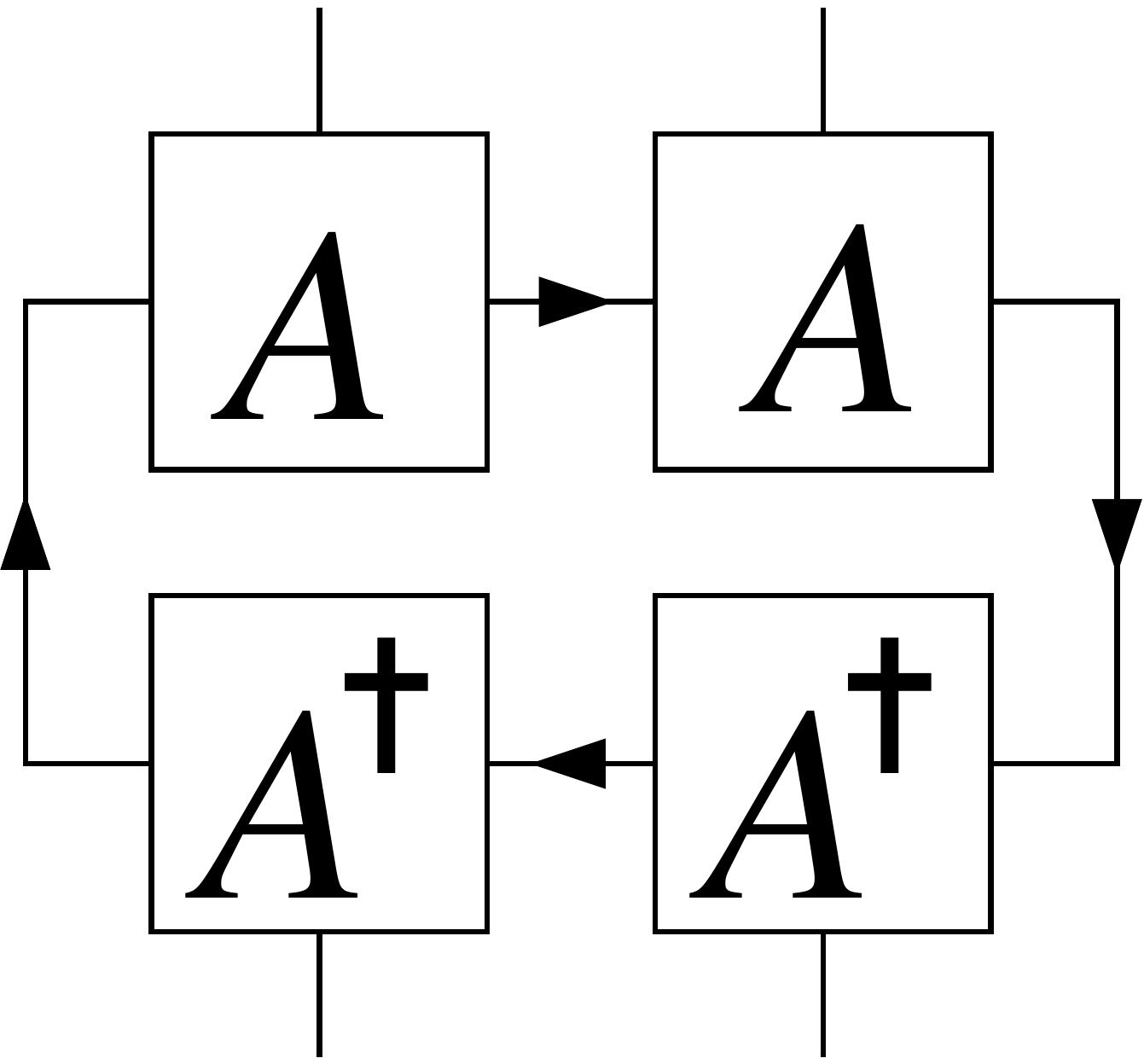}
}\ .
\end{equation}
\end{lemma}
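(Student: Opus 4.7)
The plan is to realize the local projector $\Pi_{\mathcal S_2}$ as a manifest tensor network built from two copies of $A$ and two copies of $A^\dagger$, closed on the virtual level by the projector onto the $G$-invariant subspace. The key observation is that $\mathcal S_2$ is nothing but the image of $\mathcal P(C)$ for the two-site blocked tensor $C^{ij}=A^iA^j$, and that $\mathcal P(C)$ is an isometry when $A$ is $G$-isometric.

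First, I would invoke Lemma~\ref{lemma:iso:iso-stable-under-concat}: since $A$ is $G$-isometric, the blocked tensor $C$ is again $G$-isometric, i.e.\ $\mathcal P(C)^{-1}=\mathcal P(C^\dagger)$ (with the dagger taken on the virtual layer). By the very definition of $\mathcal S_2$ in \eqref{eq:trBBX}, one has $\mathcal S_2=\mathcal P(C)(\lin D)$, and moreover the $G$-invariance of the $A^i$ (hence of $C^{ij}$) shows that $\mathcal P(C)$ annihilates everything orthogonal to the $G$-invariant subspace $\mathcal U$: for any $X$, $\tr[C^{ij}X]=\tr[C^{ij}\bm\sigma(X)]$, so $\mathcal P(C)=\mathcal P(C)\,\bm\sigma$.

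Next, combine this with $G$-isometry: on $\mathcal U$, $\mathcal P(C)$ is a unitary onto $\mathcal S_2$, so
\[
\Pi_{\mathcal S_2}=\mathcal P(C)\,\bm\sigma\,\mathcal P(C)^\dagger
= \mathcal P(C)\,\bm\sigma\,\mathcal P(C^\dagger)\ .
\]
Diagrammatically, the right-hand side is exactly the tensor network appearing in \eqref{eq:iso:ham-proj-from-A}: two $A$'s on top producing the physical kets $\ket{ij}$, two $A^\dagger$'s on the bottom producing the bras, with the virtual legs closed through the projector $\bm\sigma(\cdot)=\tfrac{1}{|G|}\sum_g U_g(\cdot)U_g^\dagger$ (diagrammatically encoded by the $U_g$/$\Delta$ structure already used in Lemma~\ref{lemma:1d-ghh-anyrep} and in \eqref{eq:noninj:linv}). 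Then $h_i=\openone-\Pi_{\mathcal S_2}$ inherits the same diagrammatic form, as claimed.

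The routine verification is showing that the diagram on the right of \eqref{eq:iso:ham-proj-from-A} is truly a projector, i.e.\ that it squares to itself; this follows from $\mathcal P(C^\dagger)\mathcal P(C)=\Pi_{\mathcal U}=\bm\sigma$ (the $G$-isometry relation) together with $\bm\sigma^2=\bm\sigma$, collapsing the middle of the stacked network. The only real subtlety, and the main thing to handle with care, is keeping track of the fact that $\mathcal P(C)$ is an isometry only on $\mathcal U$, not on all of $\lin{D^2}$, which is precisely why the $\bm\sigma$ closure (rather than a plain identity) must appear between the two layers; but this is automatic once the $G$-invariance of $C$ is used to replace $X$ by $\bm\sigma(X)$ on both the ket and bra sides.
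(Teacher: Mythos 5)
Your proposal is correct and follows essentially the same route as the paper: both arguments rest on the two facts that $\mathcal P(C)^\dagger\mathcal P(C)=\bm\sigma$ ($G$--isometry of the blocked tensor) and $\mathcal P(C)\bm\sigma=\mathcal P(C)$ ($G$--invariance), the paper simply verifying the resulting properties (projector, range in $\mathcal S_2$, identity on $\mathcal S_2$, self-adjointness) diagrammatically rather than packaging them into the single identity $\Pi_{\mathcal S_2}=\mathcal P(C)\,\bm\sigma\,\mathcal P(C)^\dagger$. Your explicit remark that $\bm\sigma$ (rather than the plain identity) must close the two layers is exactly the point the paper's diagrams encode via the $U_g$/$\Delta$ contraction.
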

\begin{proof}
First, $h_i$ is a projector, since
\[
\raisebox{-3em}
{\includegraphics[height=7em]{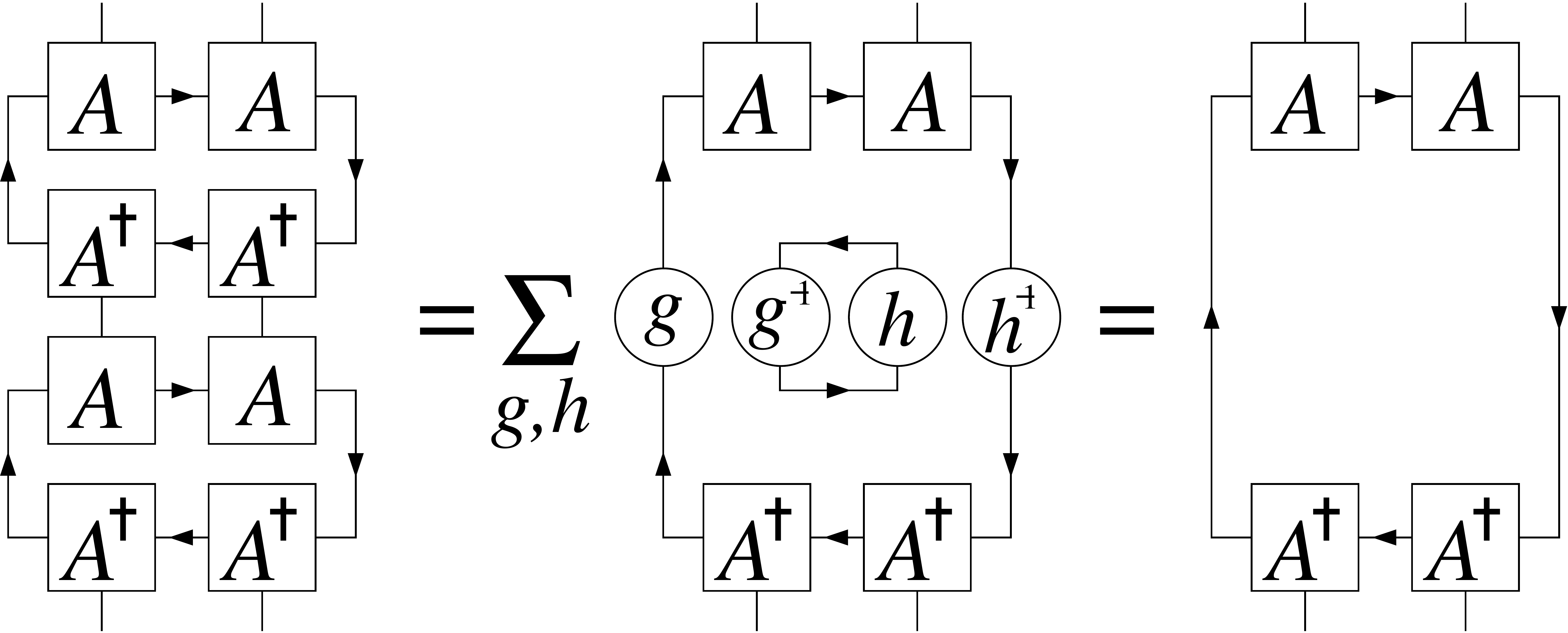}
} \ .
\]
Second, its range is clearly contained in $\mc S_2$, and third, any state of
the form $\sum_{ij}\tr[A^iA^jX]$ is preserved by $h_i$:
\[
\raisebox{-2.7em}
{\includegraphics[height=6.7em]{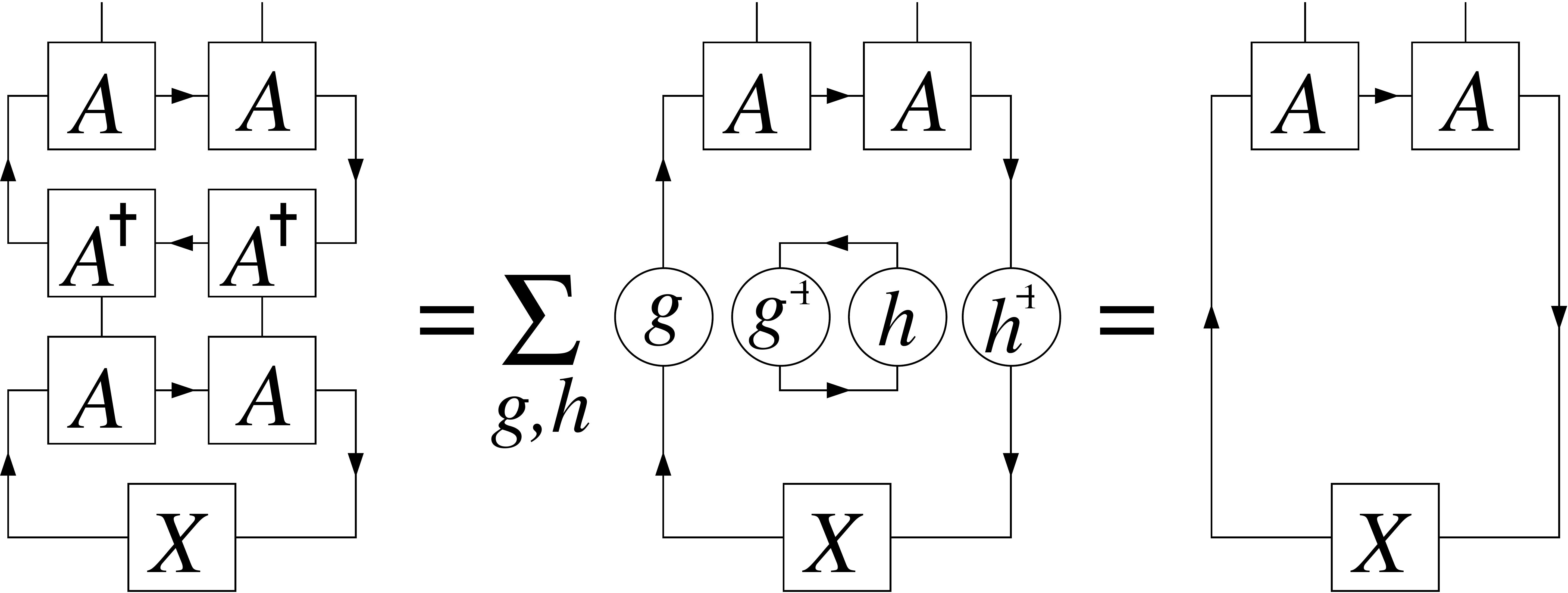}
} \ .
\]
Finally, $h_i$ is self-adjoint, which proves the lemma.
\end{proof}
Note that the corresponding operator \eqref{eq:iso:ham-proj-from-A}
defined for non-isometric PEPS (with $A^{-1}$ instead of $A^\dagger$)
fails the last condition: while it is a projector onto $\mc S_2$, it is not
self-adjoint, i.e., not a Hamiltonian.

\begin{theorem}[Commuting Parent Hamiltonians]
For $G$--isometric PEPS, the terms $h_i$ of the parent Hamiltonian
of Theorem~\ref{thm:2d:parent-ham}, cf.~\eqref{eq:iso:ham-proj-from-A},
commute.
\end{theorem}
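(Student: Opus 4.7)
The plan is to verify commutation separately for non-overlapping and overlapping local terms. For $h_i$ and $h_j$ with $|i-j|\ge 2$ (or, in 2D, for $2\times 2$ plaquettes whose supports are disjoint) commutation is automatic since the operators act on disjoint sites. The content of the theorem therefore lies entirely in overlapping pairs: $h_i$ and $h_{i+1}$ in 1D, or two plaquettes sharing an edge or a corner in 2D. I will describe the argument for the 1D case following the lemma's suggestion; the 2D case is entirely analogous once one draws the overlap diagrammatically, with the site-overlap replaced by an edge/corner overlap.

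Since $h_i=\openone-\Pi_i$ with $\Pi_i$ the projector onto $\mathcal S_2$ on sites $(i,i+1)$, it suffices to show $\Pi_i\Pi_{i+1}=\Pi_{i+1}\Pi_i$. Using the explicit formula \eqref{eq:iso:ham-proj-from-A}, $\Pi_i$ is pictured as two $A$'s on the top row (outputs) connected to two $A^\dagger$'s on the bottom row (inputs), with the dangling virtual legs on the two ends closed into loops producing the $G$-invariant projector (this is exactly what made $\Pi_i^2=\Pi_i$ in the preceding lemma). In the product $\Pi_i\Pi_{i+1}$, at the shared site $i+1$ one $A^\dagger$ coming from $\Pi_i$ meets one $A$ coming from $\Pi_{i+1}$ through a single contracted physical index. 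This contraction is exactly $\mathcal P(A)^\dagger \mathcal P(A)$, which by Definition~\ref{def:iso:isopeps} (isometry) and Observation~\ref{obs:surjective} equals the projector $\Pi_{\mcS}$ onto the $G$-invariant subspace on the two virtual legs emerging at site $i+1$.

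Now the key point: this $G$-invariant projector sits between the remaining $A$-tensor on site $i$ (from $\Pi_i$) and the remaining $A$-tensor on site $i+2$ (from $\Pi_{i+1}$), which are already $G$-invariant by Definition~\ref{def:Ug-inj}(i). The projector $\Pi_{\mcS}$ therefore acts trivially and can be removed, leaving a diagram that is symmetric in sites $i$ and $i+2$: three $A$'s across the top, three $A^\dagger$'s across the bottom, each adjacent pair connected vertically through a physical contraction and horizontally through the virtual bonds, and the two outer virtual legs closed into $G$-invariant loops. This diagram is manifestly the projector $\Pi_{\mathcal S_3}$ onto the three-site space $\mathcal S_3$ (it has the same structure as $\Pi_{\mathcal S_2}$ but for three sites; one checks it is a projector by the same argument used for $\Pi_i$). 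Carrying out the same reduction for $\Pi_{i+1}\Pi_i$ produces the identical diagram, so $\Pi_i\Pi_{i+1}=\Pi_{i+1}\Pi_i=\Pi_{\mathcal S_3}$.

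The main obstacle is the bookkeeping for the $G$-invariant closures at the boundary virtual legs and the absorption of the $G$-invariant projector produced by $\mathcal P(A)^\dagger\mathcal P(A)$: one must verify that this projector is indeed redundant when sandwiched between neighboring $A$ tensors, which uses $G$-invariance of $A$ together with the fact that both the left- and right-hand virtual ``loops'' already enforce $G$-invariance. Once this is handled, the 2D generalization is immediate: for plaquettes sharing an edge, two physical indices (instead of one) get contracted, each producing a $\Pi_{\mcS}$ that is absorbed by the adjacent $G$-invariant $A$ tensors; for plaquettes sharing only a corner, a single physical index gets contracted, with the same effect. In all cases the reduced diagram is symmetric in the two plaquettes, establishing commutation.
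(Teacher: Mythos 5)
Your proof is correct and is essentially the paper's argument read in the opposite direction: the paper starts from the symmetric three-site diagram and splits it, by inserting the (absorbable) group twirl at the shared site in two different ways, into $h_1h_2$ and $h_2h_1$, whereas you start from $\Pi_i\Pi_{i+1}$ and collapse the shared-site contraction $\mathcal P(A)^\dagger\mathcal P(A)=\Pi_{\mathcal S}$ down to that same symmetric diagram. The two key ingredients --- isometry turning the physical contraction into the twirl $\tfrac{1}{|G|}\sum_g U_g(\cdot)U_g^\dagger$, and $G$--invariance of the surrounding tensors absorbing it --- are exactly those used in the paper.
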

\begin{proof}
We start from the identity
\[
\raisebox{-1.5em}{\includegraphics[height=3.8em]{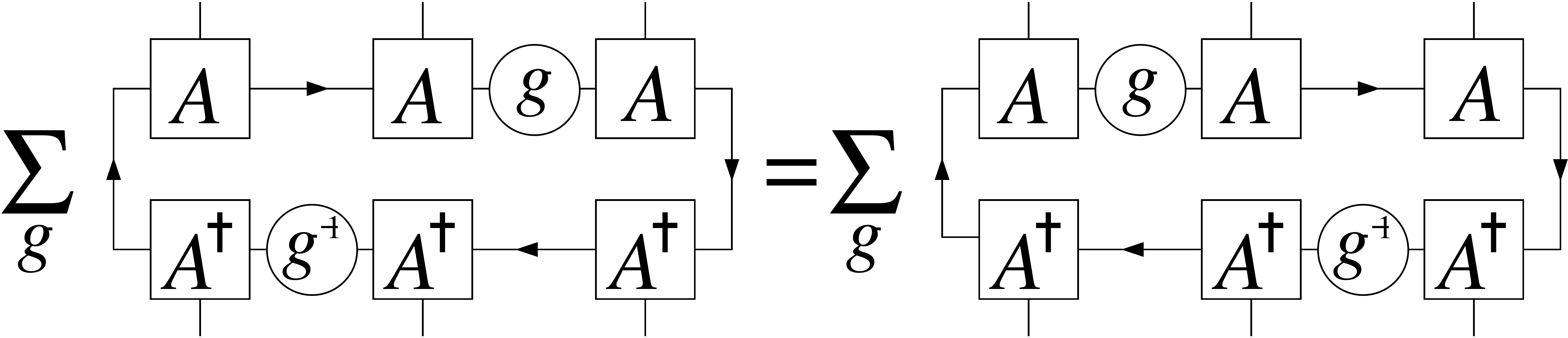}
}\ .
\]
(Note that this the projector onto the joint kernel of $h_1$ and $h_2$,
which is equal $h_1h_2$ for commuting $h_1$, $h_2$!)
The l.h.s.\ can be transformed to
\[
\raisebox{-2.5em}{
\includegraphics[height=6.2em]{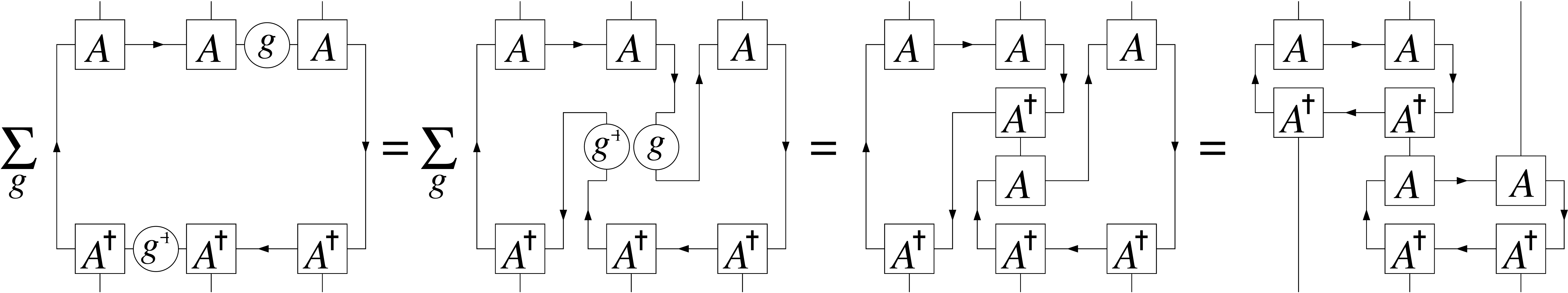}
}\ ,
\]
i.e., $h_1h_2$, while the corresponding transformation on the r.h.s.\
yields $h_2h_1$, proving that $h_1h_2=h_2h_1$.  The same
proof applies to two dimensions.
\end{proof}

\subsection{Anyons}

In the following, we show how to understand anyonic excitations of
$G$--isometric PEPS using the symmetry of the tensors. Intuitively,
excitations should be formed by (open) strings of $U_g$'s, similar to the
(closed) strings distinguishing the different ground states: Since such
strings can be continuously deformed, they cannot be detectable anywhere
locally, except for the endpoints. As we will show, these strings give
rise to one type of particles: fluxes, but there will also be a second type of
complementary particles: charges. 

Note that in the following, we consider the bulk of a PEPS formed by
$G$--isometric tensors $A$, and assume the boundaries to be far away: The
results generally hold independent of the form of the boundaries.

\subsubsection{Magnetic fluxes}

\begin{definition}[Fluxons]
    \label{def:anyons:fluxes}
Consider a PEPS formed of $G$--isometric tensors $A$, let $g\in G$, and
let $U_g=L_g$ be the left-regular representation.  A pair of \emph{magnetic
fluxes} or \emph{fluxons} in the state $(g,g^{-1})$ is defined
by placing a string of $U_g$ and $U_{g^{-1}}$ on the virtual level as
follows:
\begin{equation}
        \label{eq:anyons:fluxon-def}
\raisebox{-3.5em}{
\includegraphics[height=8em]{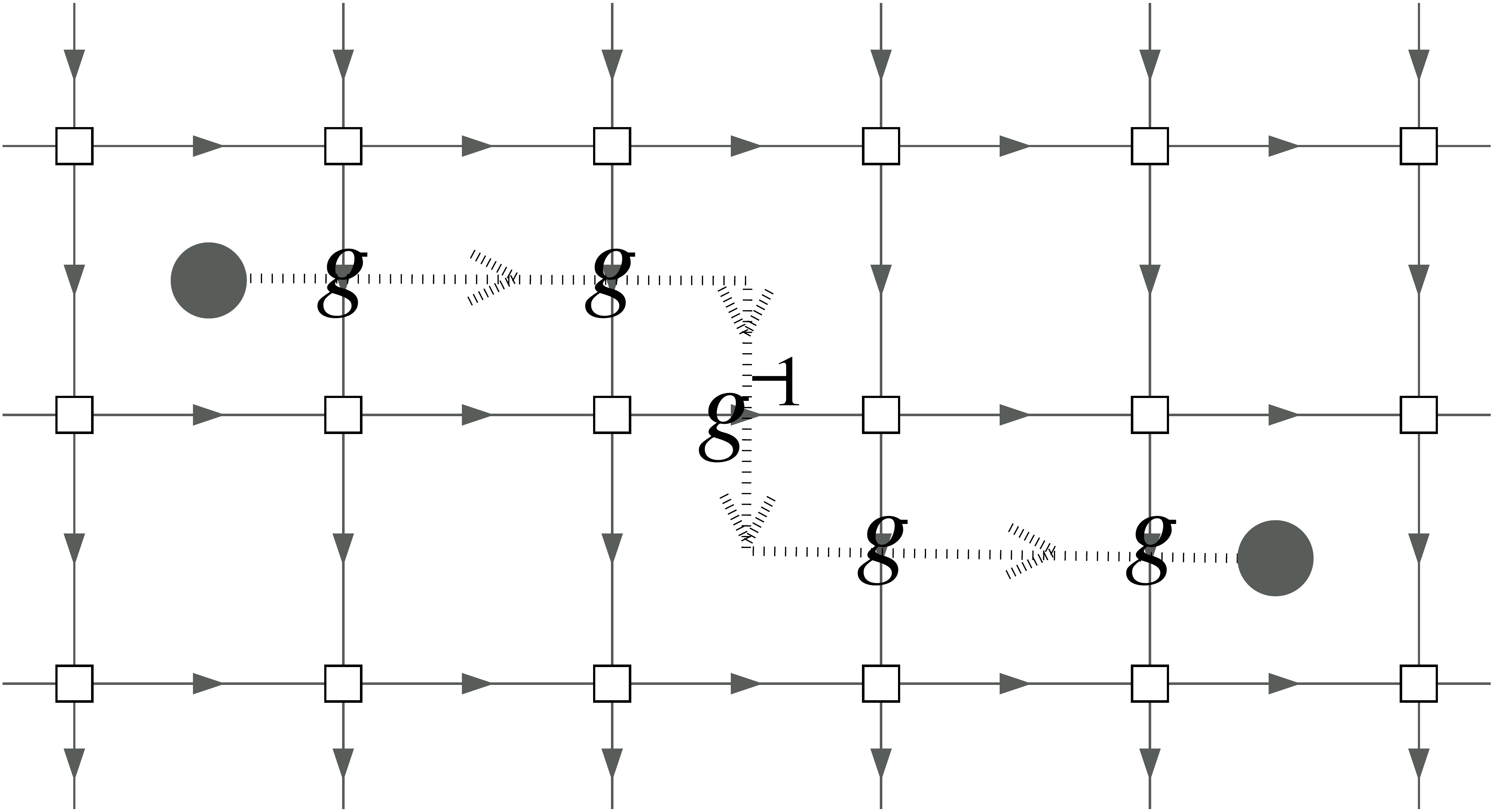}
}\ .
\end{equation}
The fluxons are attached to the two marked plaquettes forming the
endpoints of the string; they are characterized by the fact that there is
an odd number of $g$ and $g^{-1}$ on the adjacent edges.
Whether $g$ or $g^{-1}$ is used depends on the orientation of the string
relative to the bonds; we will generally talk of a ``string of $g$'s''
meaning both $g$ and $g^{-1}$. Also, we assign the state $g$ to the
starting point and $g^{-1}$ to the endpoint.  The \emph{particle type} of
the two fluxons is given by the conjugacy classes $C[g]$ and $C[g^{-1}]$,
while the exact states $g$ and $g^{-1}$ determine the internal state of
the fluxons.  We will show in the following why these choices make sense.
\end{definition}

\begin{lemma}[Deformations of the string]
The endpoint plaquettes of a string are fixed. Except for that, the string
connecting a pair of fluxons can be deformed at will without acting on the
system. This implies that it cannot be observed anywhere except for the
endpoints.
\end{lemma}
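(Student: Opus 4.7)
The plan is to reduce the deformation property to the explicit local move already established in equation \eqref{eq:2d:move-strings}, which was derived directly from the $G$-invariance of the tensor (condition i of Definition \ref{def:2d-Ug-inj}). The elementary deformation to prove is the following: a string segment passing along one edge of a plaquette can be replaced by a segment running along the three other edges of the same plaquette, without changing the physical state. All other deformations decompose into finite sequences of such elementary moves.

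First I would make precise what a single elementary move looks like. Using the $G$-invariance $U_g^{\otimes 4} A = A$ on the virtual legs, any tensor $A$ satisfies the identity that a $U_g$ placed on one virtual leg is equivalent (on the physical state) to $U_{g^{-1}}$ placed on each of the other three legs. Applied to the four tensors surrounding a plaquette, this gives exactly the move pictured in \eqref{eq:2d:move-strings}: the string can be pushed across the plaquette by absorbing the $U_g$'s into the tensors using $G$-invariance and re-emitting them along the complementary edges. The sign/orientation bookkeeping (i.e.\ whether $g$ or $g^{-1}$ appears on a given edge) is determined by whether the oriented string crosses the oriented edge from the right or the left, exactly as remarked after \eqref{eq:2d:move-strings}.

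Second, I would argue that this elementary move is legitimate precisely on plaquettes that are not endpoints. The endpoint plaquettes are defined by having an odd number of $U_g$'s on their adjacent edges, which is a gauge-invariant quantity under the elementary move (the move flips the parity on each of the four adjacent edges, hence preserves the parity at every adjacent plaquette). Therefore endpoints are rigidly fixed, while every interior plaquette admits the move, and by composing such moves one can deform the string into any isotopic path between the fixed endpoints.

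Finally, for the observability statement: let $R$ be any region whose closure is disjoint from the two endpoints, and let $\mathcal{O}$ be an operator supported on $R$. Since the string can be deformed away from $R$ entirely using only elementary moves outside the endpoint plaquettes, the state with the string is equal, as a vector in the full Hilbert space, to the state with the string supported outside $R$; hence $\langle \mathcal{O}\rangle$ is the same as on the vacuum PEPS (i.e.\ with no string at all), showing the string is invisible to $R$. The only real subtlety—essentially bookkeeping—is checking that the orientation rules make the composition of moves consistent, but this is exactly what \eqref{eq:2d:move-strings} already encodes, so there is no genuine obstacle beyond verifying the endpoint parity is preserved.
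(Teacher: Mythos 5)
Your proof follows essentially the same route as the paper's: deformation via the $G$-invariance identity \eqref{eq:2d:move-strings}, decomposed into elementary moves, plus a parity argument at the plaquettes to pin down the endpoints. Your parity bookkeeping (each elementary move toggles the occupation of all four legs of a vertex, and every adjacent plaquette sees exactly two of them, so its parity is preserved) is a correct and slightly more explicit version of the paper's one-line remark that ``$G$-invariance cannot change the parity of $g$'s around a plaquette.'' Two points of difference are worth noting. First, the paper supplements the parity argument with a forward reference to Theorem~\ref{thm:anyons:detect-fluxons}: since the endpoint plaquette can be \emph{measured} by a local physical operation, states with endpoints at different plaquettes are physically distinct, so no identity whatsoever (not only those generated by $G$-invariance) can relocate an endpoint; your argument only rules out relocation by the elementary moves themselves. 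Second, your final step overclaims: deforming the string out of a region $R$ shows that the state vector is literally unchanged, so it trivially has the same expectation values as itself, but it does \emph{not} by itself show that $\langle\mathcal O\rangle$ equals its value in the string-free PEPS, since the string still lives in the complement of $R$ and contributes to the contraction giving $\rho_R$. (The paper explicitly warns about exactly this distinction at the end of Sec.~\ref{sec:PEPS-noninj}, with the GHZ example.) Equality of reduced density operators with the vacuum requires the $G$-isometric structure, as in the local indistinguishability corollary; the paper's own proof of this lemma is equally terse on that point, so this is a shared looseness rather than an error unique to your argument.
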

\begin{proof}
The string can be defomed using the $G$--invariance of the tensors,
cf.~\eqref{eq:2d:move-strings}, as long as the endpoints are not involved.
On the other hand, the $G$--invariance cannot change the parity of $g$'s
around a plaquette, which implies that the endpoints cannot be moved.
(However, the string can be deformed to reach the endpoint from any side.)
Since according to the following theorem, endpoints can be measured, there
also cannot be any other way to move them away without acting on the
physical system.
\end{proof}

\begin{theorem} [Detection of fluxons]
        \label{thm:anyons:detect-fluxons}
The particle type $C[g]$ of a fluxon can be detected by a measurement on
the plaquette supporting it.
\end{theorem}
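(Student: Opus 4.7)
The plan is to use the physical accessibility of the virtual bonds around the plaquette and detect the fluxon via a plaquette-holonomy measurement. By Observation~\ref{obs:iso:accessible-virt} (concatenated form), the four $G$-isometric tensors meeting at the fluxon's plaquette are locally isomorphic, via a unitary supported on those four sites alone, to tensors in which each of the four virtual bonds bordering the plaquette is stored explicitly in the physical Hilbert space as a group-element register (one ket and one bra per bond, matching on either side). I would first apply this physical unitary to bring the plaquette neighborhood into that canonical form.

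Next, I would measure the four exposed registers in the group basis, obtaining labels $a_1,a_2,a_3,a_4\in G$, and form the ordered plaquette holonomy $W:=a_1a_2a_3^{-1}a_4^{-1}$, with the exponents dictated by the arrow convention on the edges. The claim is that $W\in C[g]$, and conversely that $W=\openone$ on any plaquette not pierced by the string, so a non-trivial conjugacy class announces the presence of a fluxon of type $C[g]$.

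To prove $W\in C[g]$, I would invoke the deformation lemma to move the string into a canonical configuration in which exactly one of the four edges adjacent to the plaquette carries a single $U_g$ while the other three edges are in the ``vacuum'' configuration of Eq.~\eqref{eq:iso:iso-accessbonds-pair}. Writing out the accessible registers in that configuration, using $L_g\ket{h}=\ket{gh}$, shows that the four measurement outcomes satisfy $a_1a_2a_3^{-1}a_4^{-1}=g$ exactly. The raw outcomes $a_i$ are, however, not physically well defined: the $G$-invariance \eqref{eq:2d-ug-sym} of $A$ allows one to conjugate the four bonds around any fixed site by a common $U_x$ without changing the state, and this gauge freedom sends $W\mapsto xWx^{-1}$. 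Therefore the measurable quantity is the conjugacy class $C[W]=C[g]$, which is precisely the particle type; distinct fluxon types $C[g]\neq C[g']$ give non-overlapping measurement outcomes and are therefore perfectly distinguishable.

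The main obstacle I anticipate is careful bookkeeping of orientations: because the string decorates edges with $g$ or $g^{-1}$ according to the relative orientation of the string and the edge, the definition of $W$ must correctly incorporate these signs, and one must check that the canonical-configuration computation yields $g$ rather than $g^{-1}$ so that the two fluxon endpoints of Definition~\ref{def:anyons:fluxes} are correctly assigned the complementary types $C[g]$ and $C[g^{-1}]$ (which are in general distinct particle types). A secondary subtlety is to verify locality of the procedure: since every step is supported on the four sites around the plaquette, the outcome is independent of the global closure $(g_0,h_0)$ and of the presence of distant fluxons, in agreement with the local indistinguishability of the topological sectors proved earlier.
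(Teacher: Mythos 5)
Your overall strategy --- use Observation~\ref{obs:iso:accessible-virt} to expose the virtual bonds as physical group registers, measure group elements around the plaquette, and argue that only the conjugacy class survives the residual twirl --- is the same as the paper's, but your concrete construction differs: you measure an ordered Wilson-loop holonomy $W=a_1a_2a_3^{-1}a_4^{-1}$ built from all four edges of the plaquette, whereas the paper first deforms the string so that a single edge carries $U_g$, blocks the four sites into one U-shaped ``tweezer'' tensor (thereby forcing a single common twirl $x$), and reads off $ab^{-1}=xgx^{-1}$ from the two registers straddling that one edge. The attraction of your version is that the holonomy is manifestly a property of the plaquette and needs no prior deformation of the string.

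However, there is a gap at exactly the point the tweezer is designed to handle. In the accessible form \eqref{eq:iso:iso-accessbonds}, each of the four sites carries its \emph{own} independent twirl sum over $U_{x_i}\otimes U_{x_i}\otimes U_{x_i^{-1}}\otimes U_{x_i^{-1}}$, and each bond additionally carries a maximally entangled, hence individually random, group variable. Consequently (i) a single measured label $a_i\in G$ per bond is not well defined --- you must measure \emph{both} registers straddling each bond and take their difference to eliminate the random bond variable, obtaining something of the form $x_i g_i x_{i+1}^{-1}$ --- and (ii) your claim that the residual freedom acts on $W$ only as a global conjugation $W\mapsto xWx^{-1}$ is precisely what has to be proven: one must check that the four \emph{independent} twirls telescope in the ordered product, $\prod_i x_i g_i x_{i+1}^{-1}=x_1(g_1g_2g_3g_4)x_1^{-1}$, which requires the orientation bookkeeping you defer to come out right at every corner. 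The paper sidesteps this by synchronizing the twirls first, and explicitly warns that without synchronization one only obtains quantities of the form $xgy^{-1}$ with independent $x$ and $y$, which are completely random. Your argument can be completed --- the closed loop does telescope for the left-regular representation with consistent edge orientations --- but as written the cancellation is asserted rather than shown, and the per-bond measurement is underspecified.
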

\begin{proof}
As we can arbitrarily deform the string of $g$'s ending at a given
plaquette, it suffices to consider the following situation:
\begin{equation}
        \label{eq:anyons:flux-detection-scenario}
\raisebox{-1.2em}{
\includegraphics[height=4em]{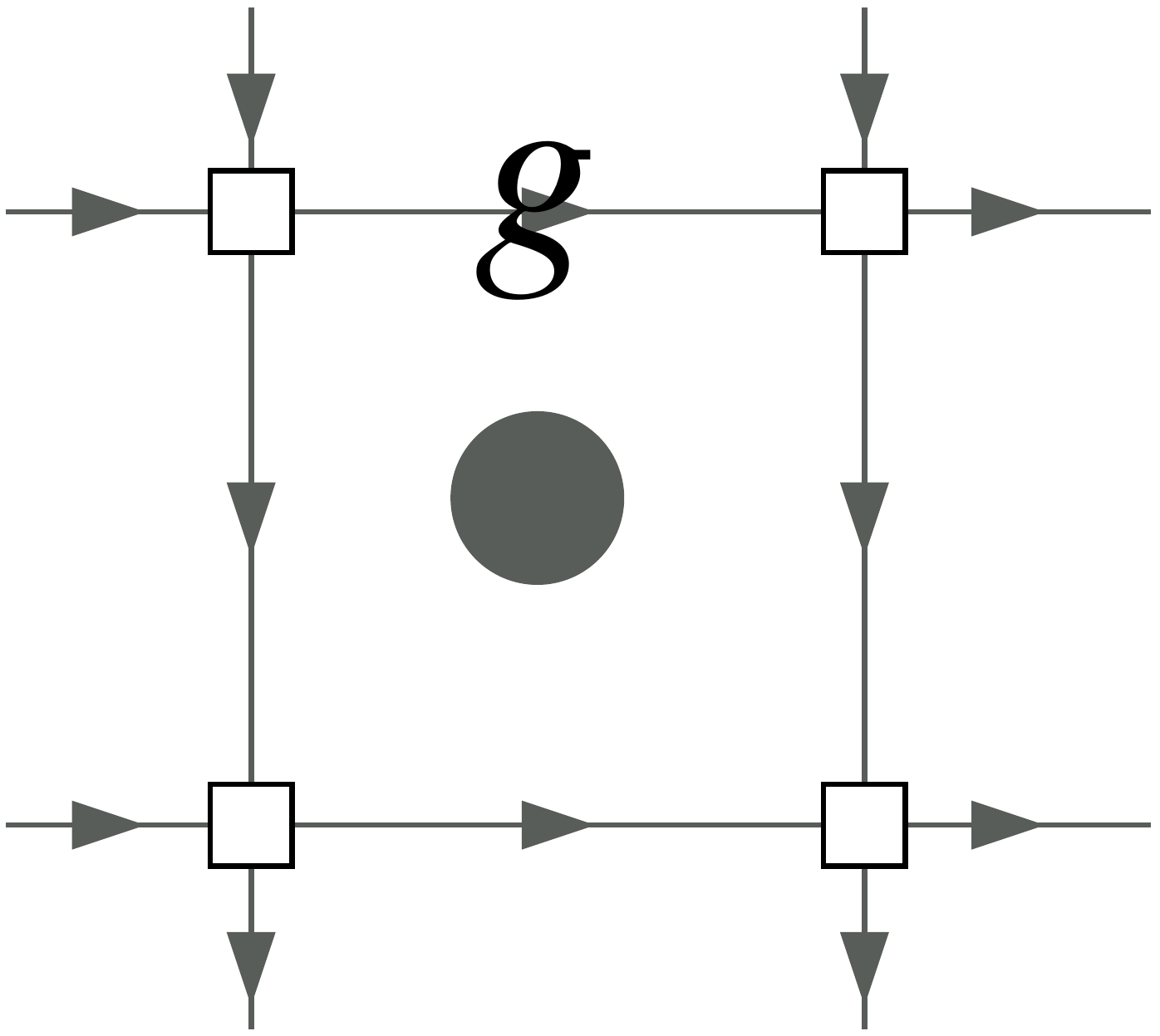}
}\ .
\end{equation}
We will use the four sites of the plaquette
to build a ``tweezer'' which allows us to
``grab'' $g$, up to a twirl. To this end, block the four sites to a
U--shaped tensor. From Observation~\ref{obs:iso:accessible-virt}, we know
that there is a reversible operation which transforms the blocked tensor
to
\begin{equation}
        \label{eq:anyons:flux-detect-tweezer-explicit}
\raisebox{-4em}{
\includegraphics[height=9em]{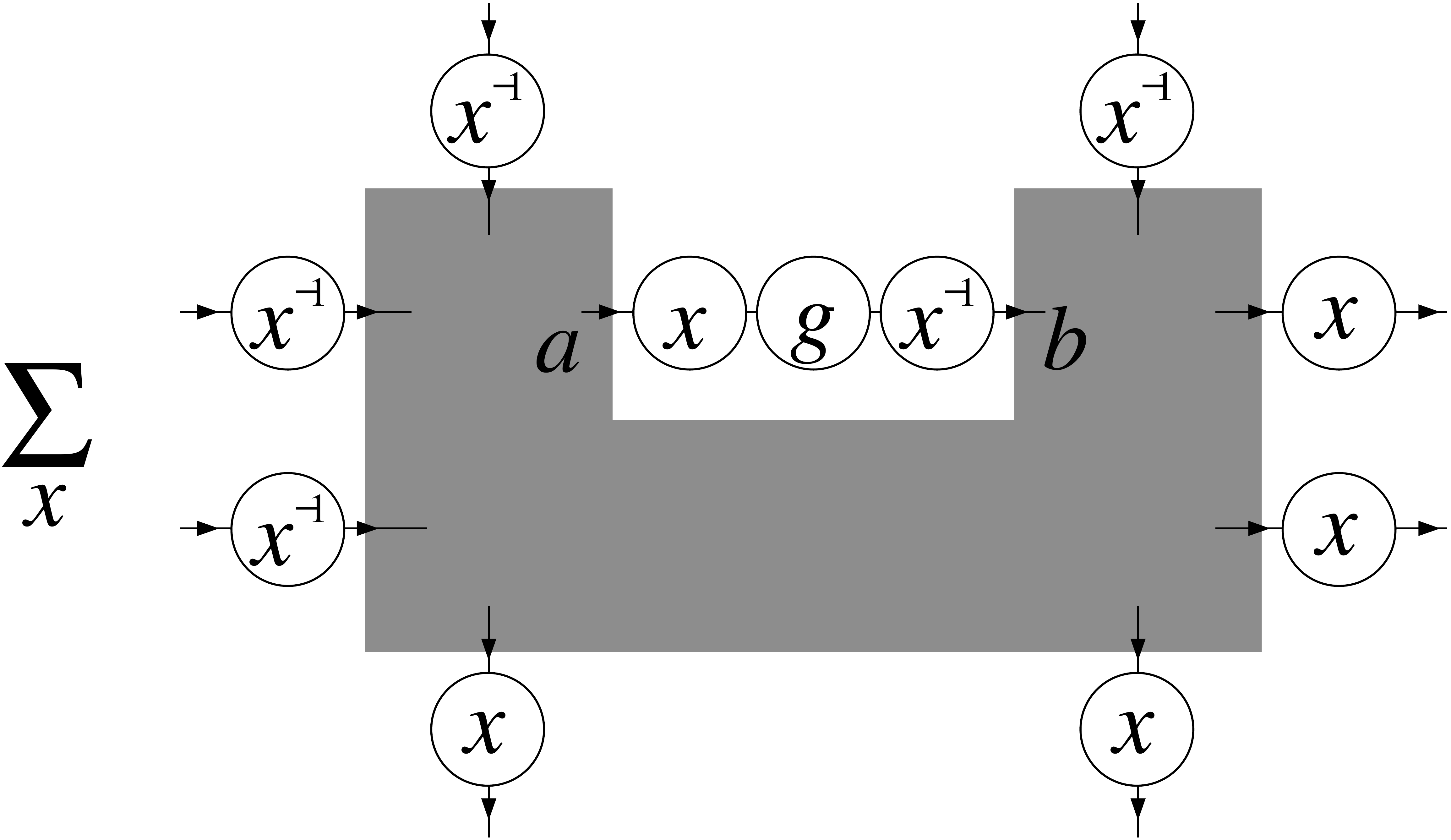}
}\quad .
\end{equation}
By measuring $ab^{-1}=xgx^{-1}$, we can now infer the particle type $C[g]$
of the fluxon.  Since the value of $x$ cannot be determined, this is all
the information we can gain about a single fluxon.

Looking at \eqref{eq:anyons:flux-detect-tweezer-explicit} and the way our
measurement of $C[g]$ works, one can see that we actually do not use any
of the outgoing indices, i.e., those which connect to the virtual level
and thus to neighboring tensors. Therefore, we can simplify notation by
omitting these indices together with the sum over $x$ and analyze the
action of the detection tweezer on
\[
\raisebox{-1.0em}{
\includegraphics[height=2.8em]{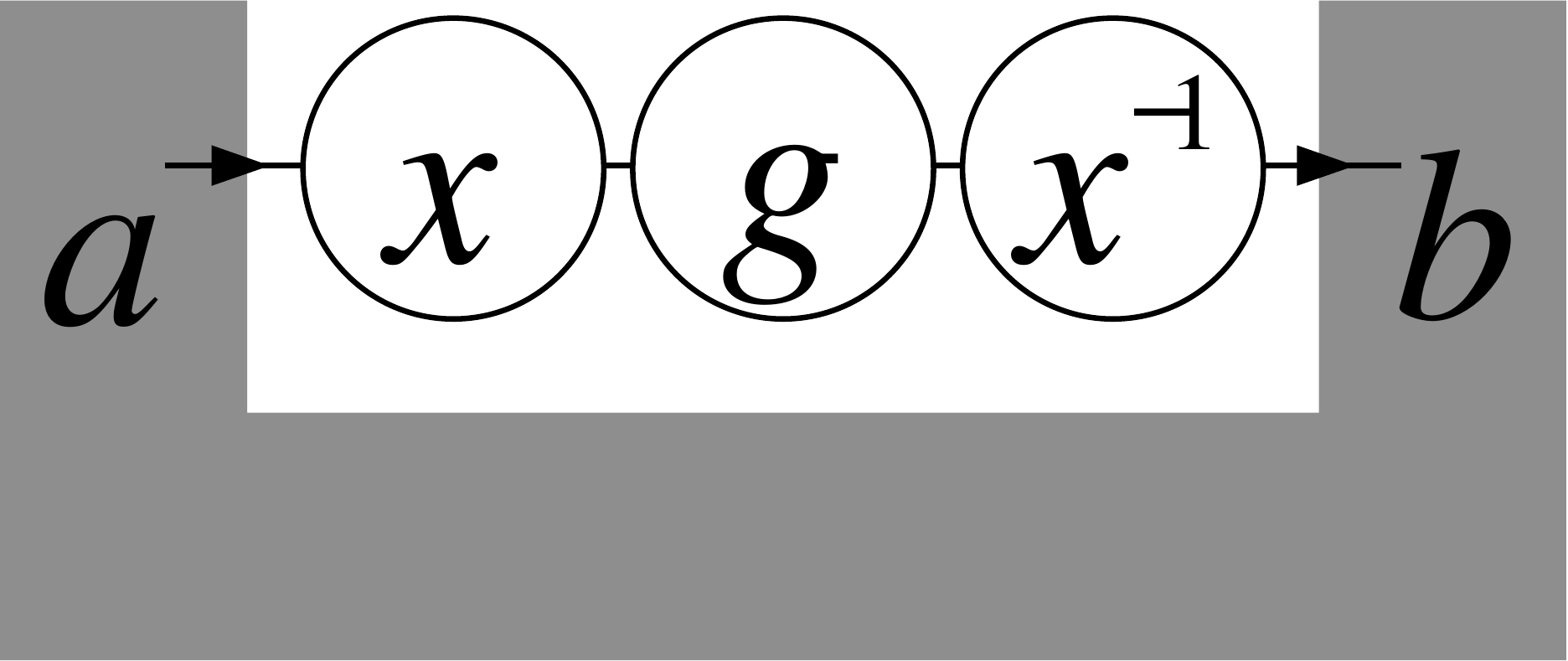}
}\quad ,
\]
where $x$ is unknown. We will make use of this simplified notation from
now on in the analysis of subsequent constructions for detecting, moving,
and creating anyons.

Note that the reason we built a tweezer and not just acted on the site
left and right of $g$ in \eqref{eq:anyons:flux-detection-scenario} is that
we needed to \emph{synchronize} the twirl; otherwise, we could have only
determined $xgy^{-1}$, which is completely random.  Let us also note that
the tweezer can open in any direction, as the string of $g$'s can be
attached to any egde of the plaquette.
\end{proof}

\begin{theorem}[Moving of fluxons]
        \label{thm:anyons:move-fluxons}
Fluxons can be moved by local unitaries, without knowledge of their state.
\end{theorem}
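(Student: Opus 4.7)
The strategy is to reuse the ``tweezer'' idea from Theorem~\ref{thm:anyons:detect-fluxons} and the relative-group-element trick of Eq.~\eqref{eq:iso:tweezer-duplicate-operation}, but instead of reading $g$ out by a measurement, we \emph{write} it onto a fresh virtual edge in order to prolong the string by one plaquette. More concretely, I would begin with a fluxon endpoint attached to a plaquette $P_1$, and fix an adjacent plaquette $P_2$ sharing an edge $E$ with $P_1$. Using the fact that $G$-invariance lets us deform the string (Lemma on string deformations), I bring the string so that the last $U_g$ sits on an edge $E'$ of $P_1$ which is incident to one of the vertices of $E$, and so that $E$ itself is clean. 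The aim is then to produce, by a unitary acting only on the physical spins bordering $E$ (and possibly $E'$), a configuration in which a $U_g$ appears also on $E$, after which the same $G$-invariance moves can be used to rearrange the two $U_g$'s into a single string segment ending at $P_2$.

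The execution proceeds as follows. By Observation~\ref{obs:iso:accessible-virt} I pass to the ``accessible-bonds'' picture on the block of sites around $P_1$ and $P_2$: each virtual half-edge is physically accessible as a group-element register, modulo the common twirl $x$ coming from the $U_g$--invariant projection. On the two sites adjacent to $E'$ the existing endpoint manifests itself as a pair of registers $a,b$ with $ab^{-1}=x^{-1}gx$, exactly as in \eqref{eq:iso:flux-detect-tweezer-explicit}. On the two sites adjacent to $E$ I have a pair of registers $c,d$ with $cd^{-1}=y^{-1}\cdot 1\cdot y=1$ (since $E$ currently carries no symmetry). I then apply the unitary
\[
\ket{a}\ket{b}\ket{c}\ket{d}\mapsto \ket{a}\ket{b}\ket{(ab^{-1})c}\ket{d}
\]
(together with its formal mirror on the other side of $E$ to guarantee the same twirl on both endpoints of the edge). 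This is a well-defined physical unitary that does \emph{not} use any knowledge of $g$; on the virtual level it multiplies the left-regular label on one half of $E$ by $x^{-1}gx$, which, after un-doing the isomorphism of Observation~\ref{obs:iso:accessible-virt}, is exactly the insertion of a $U_g$ on the bond $E$.

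After this step, the virtual configuration has $U_g$'s on both $E'$ and $E$; applying the string-deformation identity \eqref{eq:2d:move-strings} contracts these two adjacent $g$-insertions into a single segment whose endpoint plaquette is now $P_2$, while the parity of $g$'s around $P_1$ has become even (so $P_1$ no longer supports a fluxon). Iterating the construction along any path of plaquettes transports the fluxon to an arbitrary position, proving the theorem.

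The main obstacle is the twirl: one must verify carefully that the unknown $x$ at the original endpoint and the unknown $y$ at the edge $E$ can be made to match, so that the group element written onto $E$ is really $x^{-1}gx$ relative to the \emph{same} frame that appears at the old endpoint. This is precisely the reason for building a tweezer straddling the shared vertex(es) of $E'$ and $E$, rather than acting on $E$ alone: the shared vertex forces a common value of the twirl, exactly as in Theorem~\ref{thm:anyons:detect-fluxons}. Once the synchronization is handled, everything else is a routine bookkeeping of the left-regular action, and in particular the whole operation is manifestly independent of $g$, so it applies simultaneously to any coherent superposition of fluxon internal states.
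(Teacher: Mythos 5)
Your proposal is correct and follows essentially the same route as the paper: build a tweezer around the relevant plaquette so that all registers share a common twirl $x$, pass to the accessible-bonds picture of Observation~\ref{obs:iso:accessible-virt}, and apply a unitary that reads the relative element $ab^{-1}=x^{-1}gx$ from the existing endpoint and multiplies it onto a fresh register, which is exactly the map $\ket{a}\ket{b}\bra{c}\mapsto\ket{bc^{-1}a}\ket{b}\bra{c}$ of Eq.~\eqref{eq:iso:tweezer-duplicate-operation} used in the paper's proof. Your emphasis on synchronizing the twirl via the shared vertices is precisely the point the paper makes about why a tweezer (rather than acting on a single edge) is needed.
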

\begin{proof}
We give the construction for the scenario where we want to move a string
to the right, i.e., achieve the following transformation
\[
\raisebox{-2em}{
\includegraphics[height=5.5em]{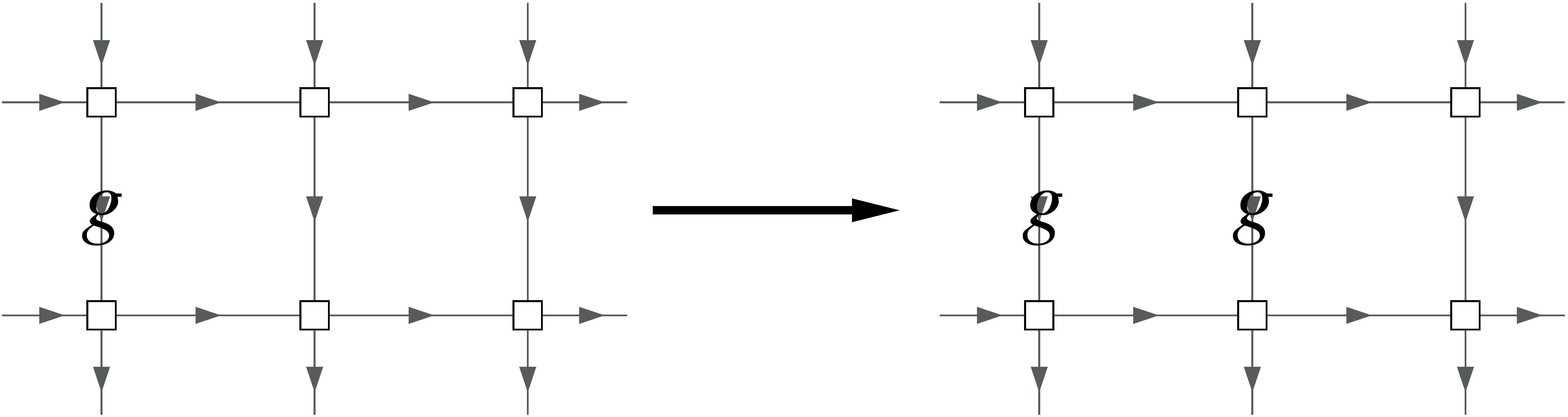}
}
\]
by means of local unitaries. To this end, we will again build a tweezer
(opening towards the left) which will allow us to access the two vertical
edges on the left.  Again,
using Observation~\ref{obs:iso:accessible-virt}, and neglecting
unneeded bonds, the above transformation is equivalent to the
transformation
\[
\raisebox{-2.8em}{
\includegraphics[height=6.5em]{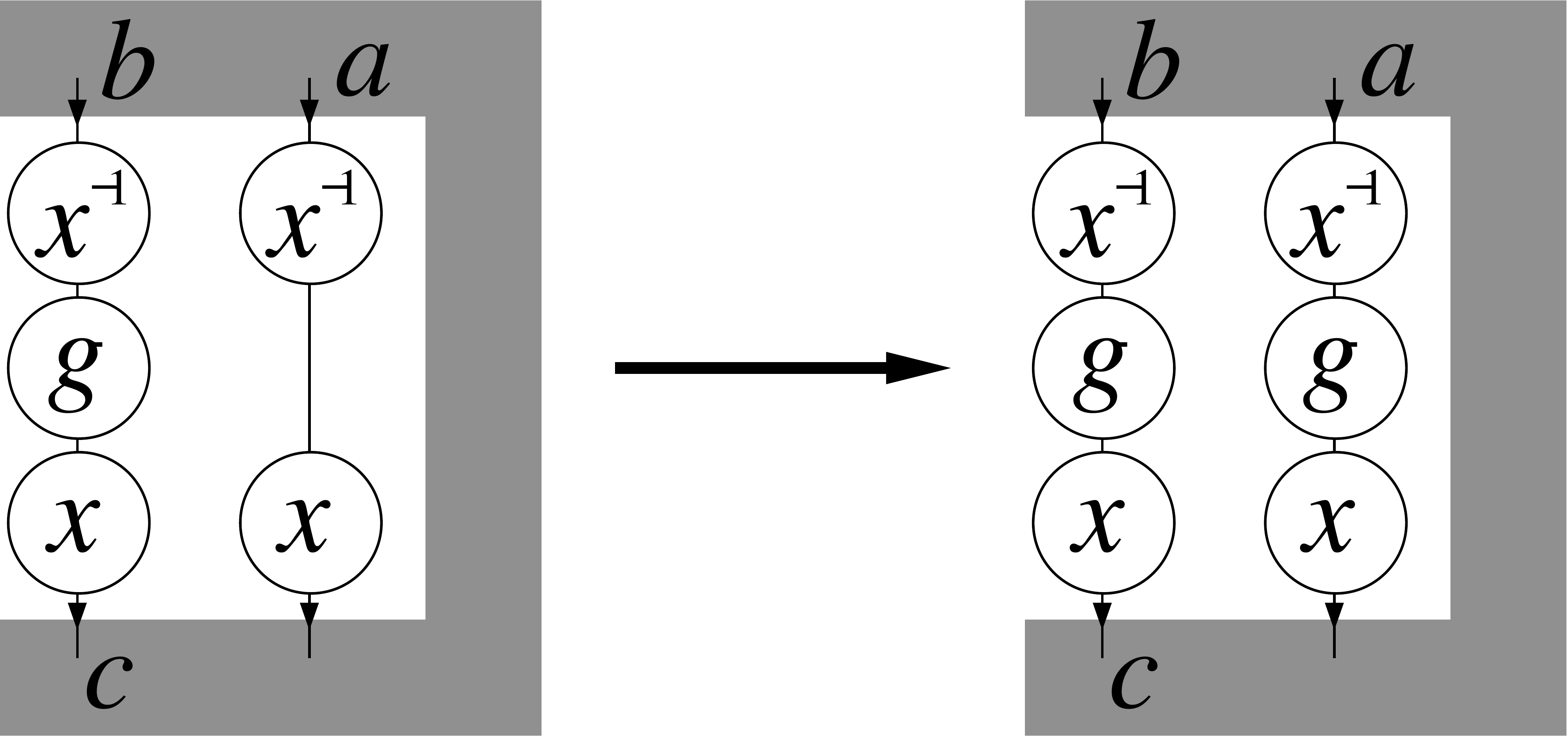}
}
\]
for arbitrary $x$.
This can be accomplished using the same transformation already used in
\eqref{eq:iso:tweezer-duplicate-operation} for this purpose:
$\ket{a}\ket{b}\bra{c}\mapsto\ket{bc^{-1}a}\ket{b}\bra{c}$.
(Note that each of the kets and bras describes the state of a
\emph{separate} quantum system: $\ket{c}\bra{a}\bra{b}$ thus describes a
pure state of the three systems $a$, $b$, and $c$.  Whether we use kets or
bras is determined by the arrows associated to the links.)
\end{proof}

\begin{theorem}[Creation of fluxons]
For any $g\in G$, the fluxon pair
\begin{equation}
        \label{eq:anyons:chargeless-fluxon}
\sum_{z} \ket{\smash{(zgz^{-1},zg^{-1}z^{-1})}}
\end{equation}
can be created deterministically by local operations.\footnote{
        The reason that we can only create the equal weight superposition
        is that all other superpositions carry non-zero charge, which is a
        conserved quantity, cf.~\cite{preskill:lecturenotes}.
}
Note there is one such state per conjugacy class $C[g]$.
\end{theorem}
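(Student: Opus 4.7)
The plan is to reduce the deterministic creation of an arbitrarily separated fluxon pair to two primitives: a \emph{local creation} step that produces the pair on two plaquettes sharing a single bond, followed by iterated application of the \emph{moving} construction of Theorem~\ref{thm:anyons:move-fluxons} to transport one of the two fluxons along a chosen path to its intended target plaquette. Each primitive is implemented by a local physical unitary, so the composite procedure is deterministic and measurement-free; by the deformation invariance of the string established after Definition~\ref{def:anyons:fluxes}, the final state is independent of the chosen path.

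For the local creation step I would use Observation~\ref{obs:iso:accessible-virt}: a local unitary on the two tensors adjacent to a bond $e$ brings them to the accessible-bond form of \eqref{eq:iso:iso-accessbonds-pair}, in which a physical register directly carries the group element on $e$, twirled by an uncontrollable conjugation variable $x$ intrinsic to \eqref{eq:iso:iso-accessbonds}. Applying the group-multiplication permutation $\ket{a}\mapsto\ket{ga}$ to that register and then undoing the accessible-bond isomorphism multiplies the virtual state on bond $e$ by $U_{xgx^{-1}}$, which, after the coherent sum over $x$ inherited from the ground-state structure, yields precisely the conjugacy-class superposition $\sum_z\ket{(zgz^{-1},zg^{-1}z^{-1})}$ of \eqref{eq:anyons:chargeless-fluxon} at the two plaquettes bordering $e$. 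Iterated application of Theorem~\ref{thm:anyons:move-fluxons} then carries one of the two fluxons to its target while preserving the internal state, because the move is, by construction, insensitive to the twirl variable.

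The principal obstacle is to pin down the local creation carefully: I need to verify that the physical permutation $\ket{a}\mapsto\ket{ga}$, conjugated by the accessible-bond isomorphism, really acts on the virtual level as multiplication by $U_{xgx^{-1}}$ on the bond $e$ and nowhere else, and that the coherent sum over $x$ produces the unweighted conjugacy-class superposition rather than a weighted average or a projection onto a subspace. A useful sanity check is Lemma~\ref{lemma:iso:sym-virt-can-be-done-on-phys}: every local physical unitary must induce a virtual operation commuting with the symmetry $U_h$, and up to normalization the only $G$-equivariant operator that inserts flux of class $C[g]$ on a bond is $\sum_z U_{zgz^{-1}}$. This both explains why we must land exactly on the superposition~\eqref{eq:anyons:chargeless-fluxon} and confirms the footnote: any other superposition across the conjugacy class carries a non-trivial charge and therefore cannot be produced from the vacuum by local unitaries alone.
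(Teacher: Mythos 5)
Your high-level plan (create the pair on two adjacent plaquettes, then transport one fluxon with the moving construction of Theorem~\ref{thm:anyons:move-fluxons}) is sound, and the transport step is not even needed for the statement as given. The genuine gap is in the local creation step. In the accessible form of Observation~\ref{obs:iso:accessible-virt}, the register attached to the edge $e$ carries $\ket{xr}$, where $r$ is the virtual index and $x$ is the twirl variable, which is summed \emph{coherently} and is shared with the other registers of the same tensor (and, after synchronization, with the rest of the lattice). Applying the physical permutation $\ket{a}\mapsto\ket{ga}$ to that single register therefore produces $\sum_x \ket{\mathrm{rest}(x)}\otimes\ket{x(x^{-1}gx)r}_a\bra{xr}_b$: for each value of $x$ a flux $x^{-1}gx$ is inserted, with the conjugating element \emph{locked to} the gauge frame $x$ on which the rest of the state depends. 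This is not the state \eqref{eq:anyons:chargeless-fluxon}, which reads $\sum_z\sum_x\ket{\mathrm{rest}(x)}\otimes\ket{x(zgz^{-1})r}_a\bra{xr}_b$ with an \emph{independent} sum over $z$; the coherent sum over $x$ cannot supply the missing class sum precisely because $\ket{\mathrm{rest}(x)}$ differs for different $x$. (For abelian $G$ the two expressions coincide, which is why the step looks plausible.) Your own sanity check already signals the obstruction rather than resolving it: by Lemma~\ref{lemma:iso:sym-virt-can-be-done-on-phys} a physical unitary induces a $G$--equivariant virtual operation, and the $G$--equivariant version of $L_g$ is proportional to the class sum $\sum_z L_{zgz^{-1}}$, which is \emph{not unitary} for non-central $g$; hence no unitary acting on a single bond register can implement the required insertion.

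The paper's construction avoids this by acting jointly on \emph{both} ends of the edge: it builds the U-shaped tweezer from the four sites around one of the two plaquettes, which gives simultaneous access to the two registers $a,b$ bounding $e$. Their joint state in the vacuum is the known, twirl-invariant state $\sum_r\ket{r}_a\bra{r}_b$, and one replaces it (unitarily, after normalization) by the equally twirl-invariant target $\sum_{s,z}\ket{zgz^{-1}s}_a\bra{s}_b$, cf.~\eqref{eq:anyons:make-chargeless-fluxon}. Because both the initial and the final state are invariant under the unknown twirl $x$, this \emph{state replacement} is well defined and deterministic, whereas a single-register group multiplication is not. If you repair your local creation step along these lines, the rest of your argument goes through.
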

\begin{proof}
We want to implement the transformation
\[
\raisebox{-3.0em}{
\includegraphics[height=7.2em]{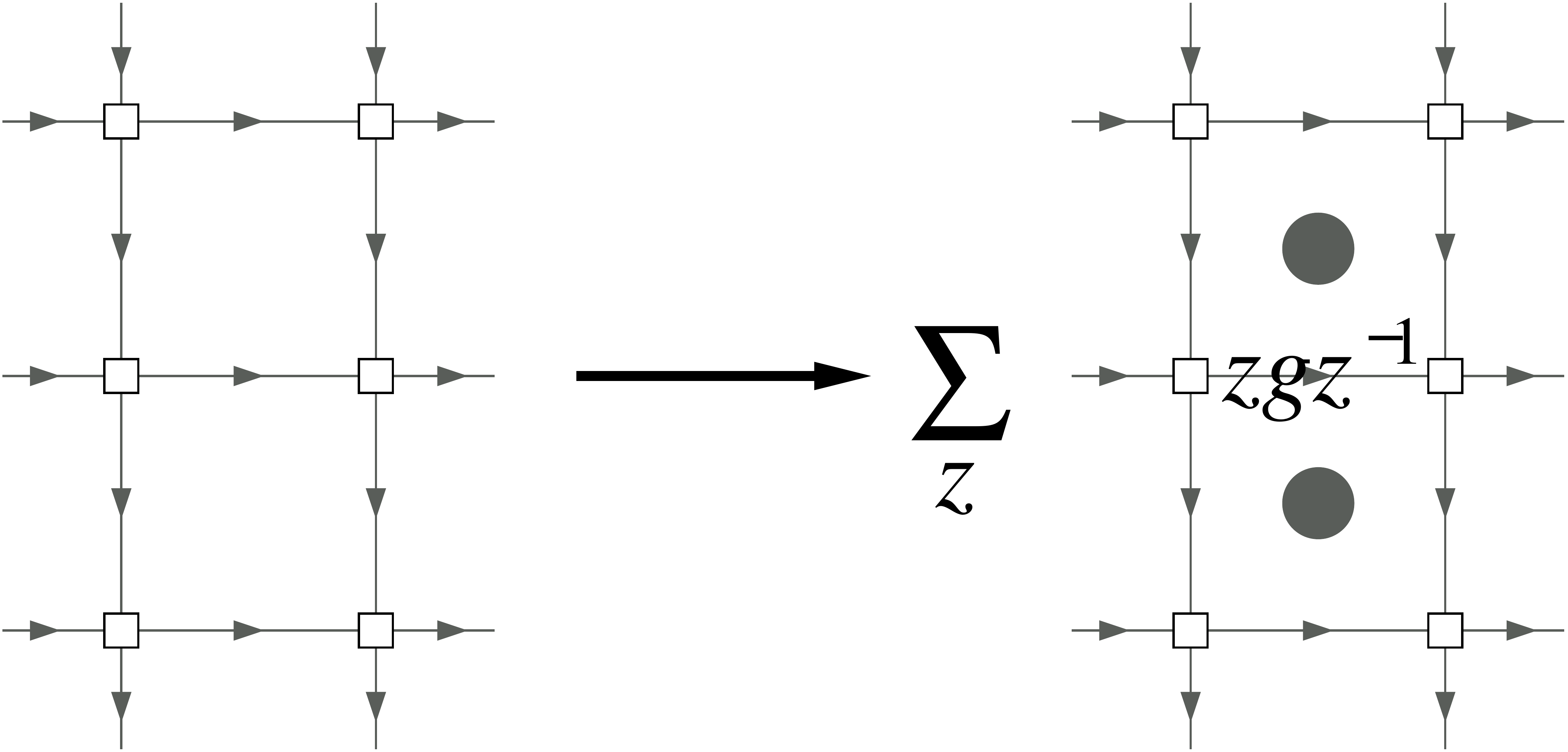}
}\quad .
\]
To this end, we build a tweezer on the lower plaquette. Neglecting
unneeded degrees of freedom, the transformation we want to achieve is
equivalent to
\begin{equation}
        \label{eq:anyons:make-chargeless-fluxon}
\raisebox{-1em}{
\includegraphics[height=2.8em]{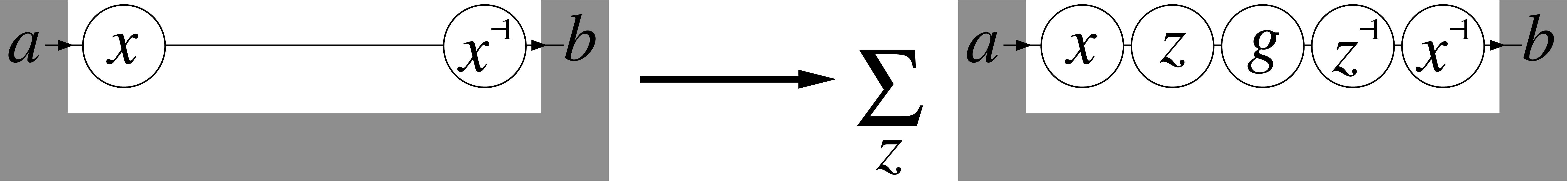}
\raisebox{-0.5em}{\rule{0em}{0em}}}
\end{equation}
for some unknown $x$.  This can be accomplished by replacing the state
$\sum_r\ket{r}_a\bra{r}_b$ with $\sum_{s,z} \ket{zgz^{-1}s}_a\bra{s}_b$,
which can be done unitarily if wanted.
\end{proof}

\subsubsection{Braiding of fluxons}

In the following, we investigate what happens to fluxons when they are
braided around each other. The general scenario is as follows: Consider
two pairs of fluxons, and move one fluxon of one pair around one of the
other
one:
\[
\includegraphics[height=5em]{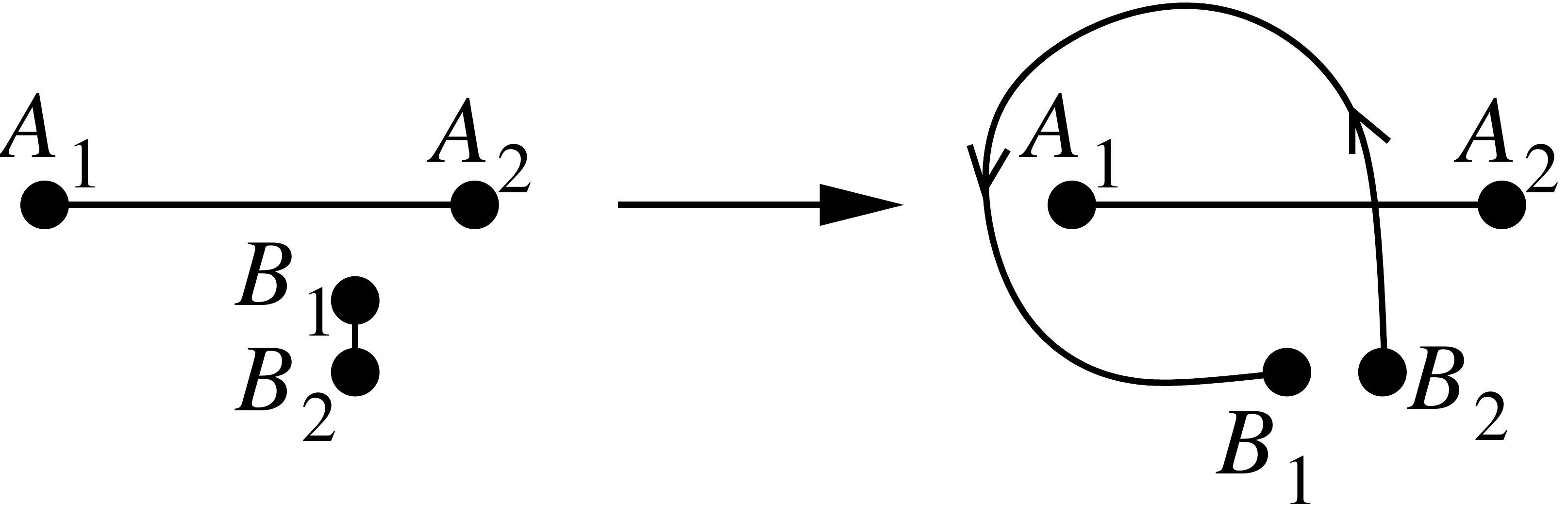}
\]

In principle, it does not make sense to ask how the anyon $B_1$ is
affected when it crosses the line between $A_1$ and $A_2$, since this line
can be deformed at will.  However, it does make sense to think like that
if the pair $B_1$ and $B_2$ is brought together after braiding to measure
the joint flux, since then $B_1$ eventually \emph{has} to cross this
line. We will discuss how to measure the joint flux later on.

Instead of investigating what happens when we move the fluxon $B_1$ across the
string connecting $A_1$ and $A_2$, we rather keep $B_1$ fixed and move the
string across it -- this is easier to analyze, since it takes place purely
on the virtual level:
\begin{equation}
        \label{eq:anyons:fluxon-braiding-lazy}
\raisebox{-2em}{
\includegraphics[height=5em]{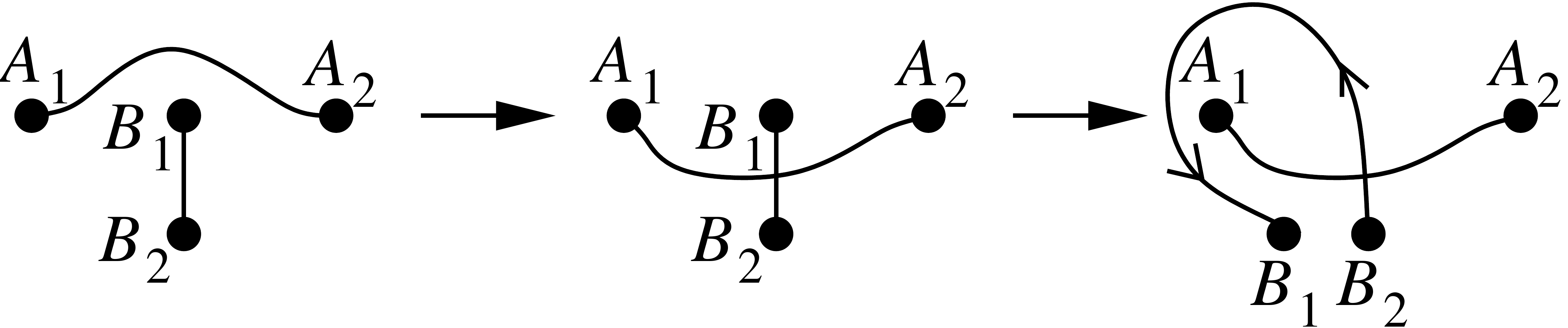}
}\ .
\end{equation}
Let us now consider this scenario -- moving the $A_1$--$A_2$ string
accross the anyon $B_1$ -- in the PEPS representation on the virtual
level. To this end, let $A_1$ and $A_2$ be in the state
$(g,g^{-1})$, and $B_1$ and $B_2$ in state $(h,h^{-1})$.
Then, using the $G$--invariance of the tensors,
\[
\raisebox{-3.5em}{
\includegraphics[height=7.5em]{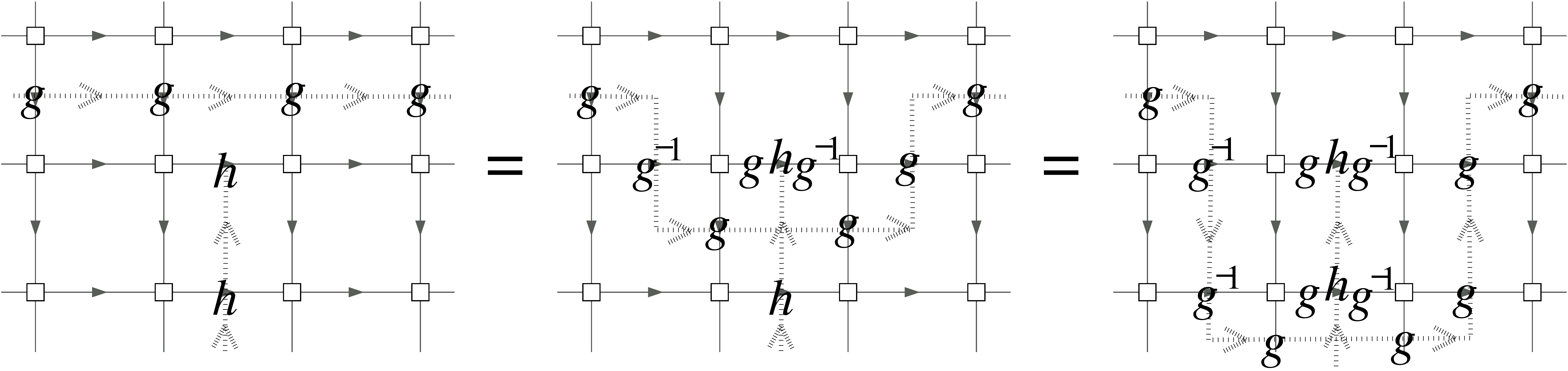}
}\ :
\]
i.e., the $B_1$ end of the string is conjugated by $g$, and is now in the
state $ghg^{-1}$.  Note that this does not change the conjugacy class,
i.e., the particle type. This is crucial since the
crossing of $B_1$ and the $A_1$--$A_2$ line cannot be assigned a well-defined
position, and thus the change of the state of $B_2$ cannot be detected by
measuring $B_2$ alone.

However, when we bring $B_1$ and $B_2$ together, it is possible to measure
the particle type $C[ghg^{-1}h^{-1}]$ of their joint flux, and thus infer
information about the state $g$ of $A_1$. For two fluxes in states
$a$ and $b$, their joint state is given by $ab$; this is
consistent with how the two fluxes together affect another flux when
braiding. Note that actually merging the two fluxes will require a
measurement, since this is an irreversible process.

Let us now describe how to measure the joint flux. To this end, consider
two fluxes $a$ and $b$ sitting on adjacent plaquettes -- this
is as close as we can bring them using the ``fluxon moving tweezer'' of
Theorem~\ref{thm:anyons:move-fluxons} -- and route the virtual strings as
follows:
\[
\raisebox{-1.7em}{
    \includegraphics[height=4.5em]{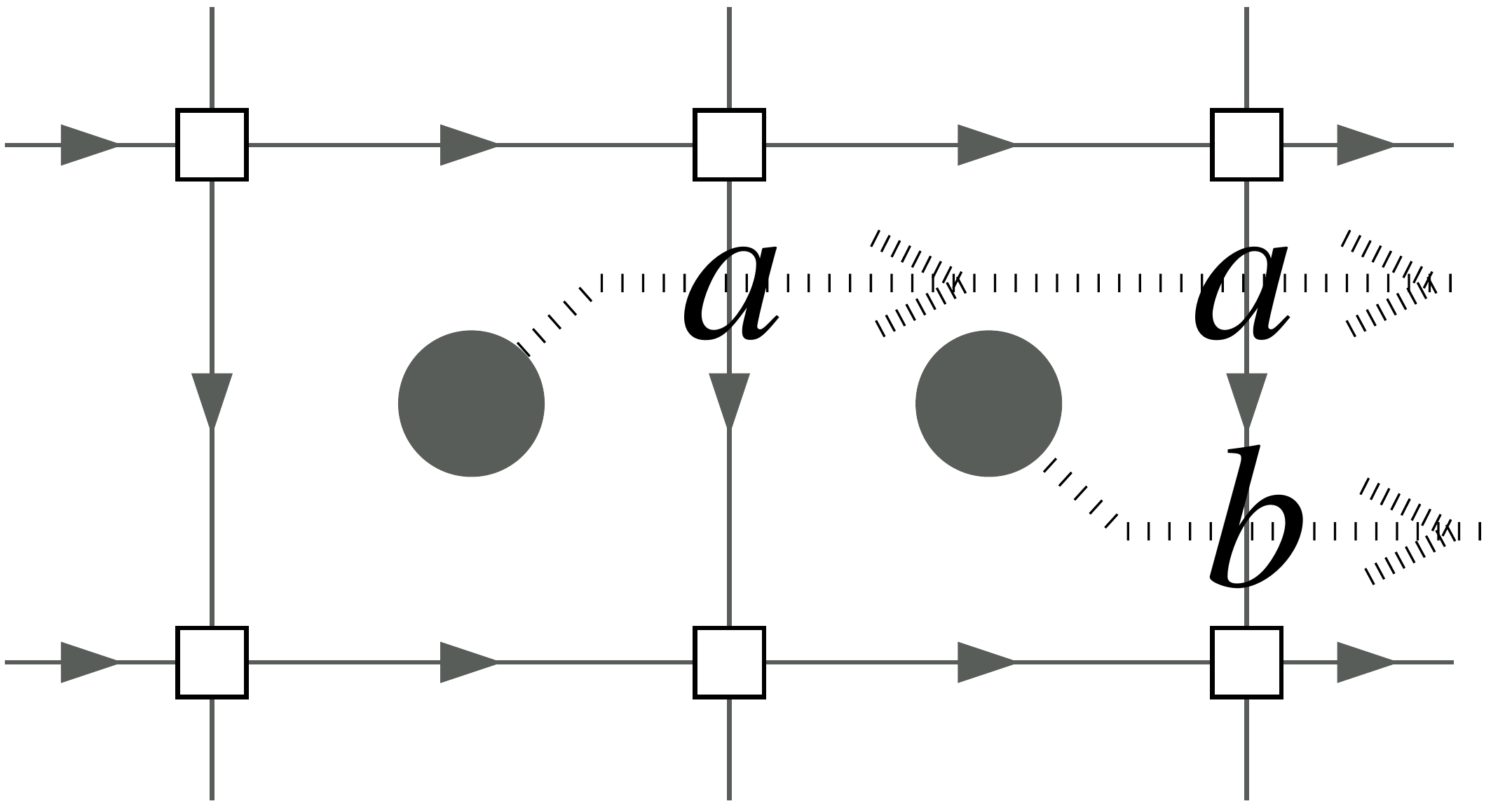}
}\ .
\]
Now, we can use basically the same tweezer construction used for measuring
single fluxons in Theorem~\ref{thm:anyons:detect-fluxons} to measure the
$C[ab]$: To this end, we build a tweezer covering both plaquettes, and
connected at the left side, and measure $C[ab]$ across the right vertex.
In fact, this construction can be extended to measure the joint flux of
any number of anyons enclosed in the tweezer.

Note that the braiding~\eqref{eq:anyons:fluxon-braiding-lazy} does not
only change the state of $B_1$, but also that of $A_1$ if it brought back
to interfere with $A_2$. The reason is that even after repairing the $B$
particles, the loop of $h$'s remains there and affects any anyon crossing
it. Note that while the way $A_1$ is affected when moving it out of the
loop depends on whether it is moved above or below the path it came from,
this difference cannot be observed when repairing it with $A_2$, and
vanishes when $A_1$ and $A_2$ are aligned in a fixed way.

\subsubsection{Electric charges}

Let us now define electric charges. Different from fluxes which always
come in pairs, it is possible to define (though not create) electric
monopoles.

\begin{definition}[Electric charges]
An electric charge (``chargeon'') with charge $c$, where $c$ labels an
irreducible representation $D^c$ of $G$, is given by a defect
\begin{equation}
\label{eq:anyons:chargeon-def}
\raisebox{-3.5em}{
\includegraphics[height=8em]{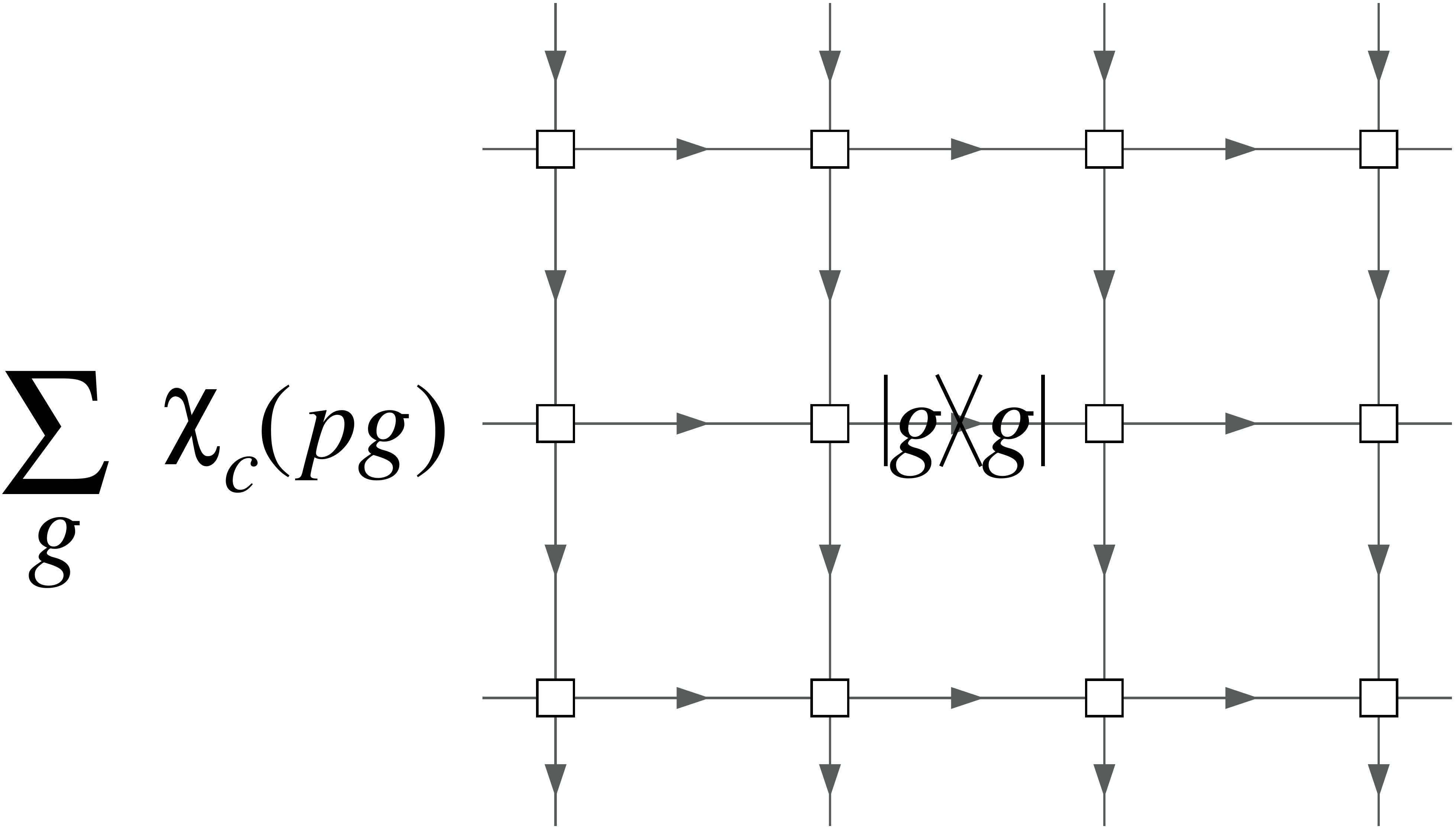}
}
\end{equation}
which is attached to an edge of the lattice, i.e., instead of the
identity, the edge acts as $\sum_g\chi_g(pg)\ket{g}\!\!\bra{g}$.
Here, $\chi_c(\cdot)=\tr[D^c(\cdot)]$ is the character of the irreducible
representation $c$, and $p\in G$ characterizes the internal state of the
chargeon.
\end{definition}

\begin{theorem}[Detection of chargeons]
        \label{thm:anyons:detect-chargeons}
The charge $c$ of a chargeon can be detected by measuring across the two
vertices adjacent to the edge supporting the charge.
\end{theorem}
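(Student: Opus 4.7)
The plan is to mimic the tweezer construction of Theorem~\ref{thm:anyons:detect-fluxons}, but now building tweezers on the two vertices at the endpoints of the edge carrying the chargeon, rather than around a single plaquette. A chargeon lives on an edge, so the natural degrees of freedom to ``grab'' are the two stars of sites meeting at that edge.

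First, I would apply Observation~\ref{obs:iso:accessible-virt} to the sites around each of the two adjacent vertices. Exactly as in the fluxon case, this reversible physical operation gives direct access to the virtual group labels on the edges incident to those vertices, up to a common unknown conjugation by some $x\in G$ coming from the $G$-invariance of the vertex tensors. Because the edge in question connects the two vertices, the $G$-invariance links the twirls at its two endpoints into one shared conjugation, so the accessible state across the edge takes the form $\sum_g (\text{weight}(g))\,\ket{xgx^{-1}}_a\ket{xgx^{-1}}_b$ tensored with reference degrees of freedom.

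Second, the chargeon defect \eqref{eq:anyons:chargeon-def} replaces the identity $\sum_g\ket{g}\!\bra{g}$ on the edge by the weighted diagonal $\sum_g\chi_c(pg)\ket{g}\!\bra{g}$. Through the tweezers this means the accessible virtual state is a superposition of edge labels with amplitudes proportional to $\chi_c(pg)$, and after the twirl by $x$ the amplitudes become $\chi_c(p\,x^{-1}g'x) = \chi_c(xpx^{-1}\cdot g')$, where $g'=xgx^{-1}$ and I used that $\chi_c$ is a class function. Hence the twirl only redefines the internal state $p\mapsto xpx^{-1}$ and does \emph{not} affect the irrep label $c$.

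Third, I would measure the accessible state in the Fourier basis of $G$, i.e., project onto the isotypic components of the regular representation. By character orthogonality,
\[
\frac{1}{|G|}\sum_{g}\chi_c(g)\,\overline{\chi_{c'}(g)}=\delta_{c,c'},
\]
so different charge labels $c$ are perfectly distinguishable by such a measurement, while the residual dependence on $xpx^{-1}$ is irrelevant to reading off~$c$. The main obstacle is the bookkeeping for the twirl: one has to verify carefully that the $G$-invariance of the two vertex tensors really synchronizes into a single conjugation acting on the exposed edge state, so that the character, being a class function, is insensitive to it; once this is established, the rest is just Fourier analysis on the finite group $G$.
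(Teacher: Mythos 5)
Your overall strategy (expose the virtual edge label via Observation~\ref{obs:iso:accessible-virt}, then do Fourier analysis on $G$) is the right one and matches the paper, but your step 1--2 contains a concrete error. The twirls at the two vertices do \emph{not} synchronize into a single shared conjugation: the two vertices are distinct tensors, each carrying its own independent sum over a twirl element, so what you actually have access to is
$\sum_g \chi_c(pg)\,\ket{xg}\bra{yg}$ with \emph{independent} unknown $x,y\in G$ --- independent left translations of the edge label, not a conjugation $g\mapsto xgx^{-1}$. (A conjugation only appears when a group element sits on a leg that enters and exits the \emph{same} blocked tensor, as for the flux string threading the plaquette tweezer; and recall the paper's own remark in the proof of Theorem~\ref{thm:anyons:detect-fluxons} that acting on two separate sites gives unsynchronized twirls.) Consequently your appeal to $\chi_c$ being a class function is not the mechanism that saves the measurement. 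Reindexing $k=xg$ gives amplitudes $\chi_c(px^{-1}k)$, i.e.\ the internal label is shifted $p\mapsto px^{-1}$ by an arbitrary unknown group element --- not conjugated. This matters physically: under your (incorrect) conjugation picture one could still measure the conjugacy class of $p$, whereas in fact nothing about $p$ is learnable, consistent with the later statement that braiding around a flux multiplies $p$ by $k^{-1}$ without this being locally detectable.

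The conclusion nevertheless survives, for a reason you did not state: for any fixed $h$, the function $k\mapsto\chi_c(hk)=\sum_{ij}D^c_{ij}(h)D^c_{ji}(k)$ lies entirely in the isotypic component of the irrep $c$ inside $\mathbb C[G]$, so by orthogonality of \emph{matrix coefficients} (not merely of characters evaluated at the same argument, which is the $h=h'=e$ special case you quoted) the vectors $\sum_k\chi_c(hk)\ket{k}\otimes\ket{zk}$ for different $c$ are mutually orthogonal regardless of the unknown shifts $h$ and $z$. This is exactly how the paper concludes: it writes the accessible state as $\sum_k\chi_c(px^{-1}k)\ket{k}\bra{zk}$ with $z=yx^{-1}$ and invokes $\langle v^c_h\vert v^{c'}_{h'}\rangle=f(h,h')\delta_{c,c'}$ to show that $c$ (and $z$, but not $p$) is determined by a projective measurement on the two vertices. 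So your isotypic-projection measurement is fine; you need to replace the synchronization/class-function justification by the shift-invariance of isotypic components.
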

Note that we will later show that the internal degree of freedom $p$ is
changed by strings connecting fluxons, which rules out that we can measure
$p$ itself.
\begin{proof}
We will perform a joint measurement on the vertices left and right of the
chargeon in \eqref{eq:anyons:chargeon-def}.  From
Observation~\ref{obs:iso:accessible-virt}, the state of these two
sites is isomorphic (up to unneeded links) to
\begin{equation}
\label{eq:anyons:measure-chargeon}
\raisebox{-1.2em}{
\includegraphics[height=2.3em]{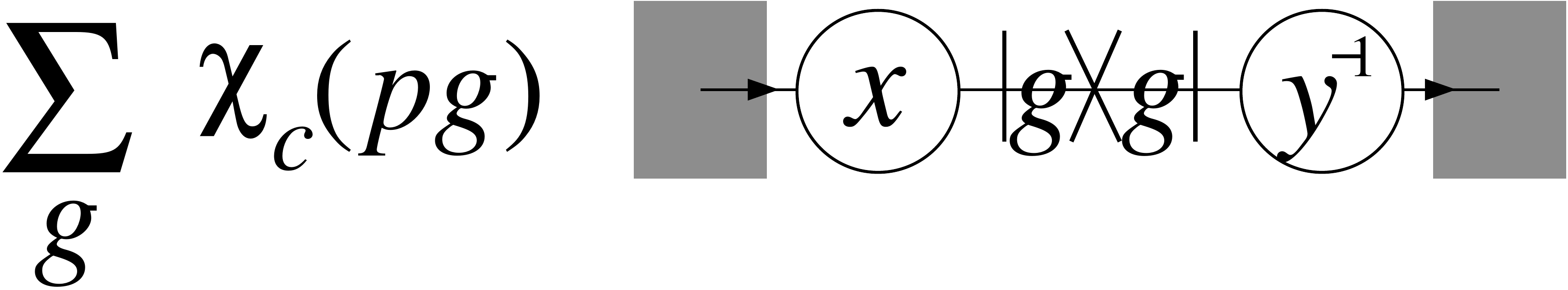}
\raisebox{-0.2em}{\rule{0em}{0em}}}\ .
\end{equation}
i.e., we have access to the state
\[
\sum_{g}\chi_c(pg)\ket{xg}\bra{yg}
=\sum_k\chi_c(px^{-1}k)\ket{k}\bra{zk}
\]
for some $x$ and $y$, with $k:=xg$ and $z=yx^{-1}$. As a consequence of the group orthogonality relations,
$\ket{v_h^c}=\sum_k\chi_c(hk)\ket{k}$ are orthogonal vectors for different
irreducible representations $c$, $\langle v_h^c\vert
v_{h'}^{c'}\rangle=f(h,h')\delta_{c,c'}$.  Therefore, the charge
$c$ (as well as $z$) can be determined by measuring the two sites in
\eqref{eq:anyons:measure-chargeon}.  Note that
on the other hand, no information about $p$ can be learned, since $x$ is
unkown.
\end{proof}

\begin{theorem}[Moving of chargeons]
Chargeons can be moved by local operations.
\end{theorem}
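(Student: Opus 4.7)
The plan is to adapt the proof of Theorem~\ref{thm:anyons:move-fluxons} on the moving of fluxons. I would build a tweezer straddling the two vertices of the edge $e$ currently supporting the chargeon together with the far vertex of the target adjacent edge $e'$, and pass to the accessible virtual picture of Observation~\ref{obs:iso:accessible-virt}. In that picture, the chargeon defect $\sum_g \chi_c(pg)\ket{g}\bra{g}$ on $e$ appears as a scalar coefficient $\chi_c(pg_1)$ multiplying the expansion in the accessible group labels $g_1$ on $e$ and $g_2$ on $e'$, up to the usual common twirl $x$ that Observation~\ref{obs:iso:accessible-virt} leaves unspecified.

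Next, I would apply a local unitary acting on the physical sites covered by the tweezer whose effect in the accessible picture is to redistribute the character weight from the $g_1$ index onto the $g_2$ index, replacing $\chi_c(pg_1)$ by $\chi_c(p'g_2)$ for an appropriate $p'$ determined by the relative labels on the intermediate edges. This is essentially the same relabeling-by-a-relative-group-element trick used in~\eqref{eq:iso:tweezer-duplicate-operation} and in the fluxon-moving proof: because the operation depends only on relative group labels, it commutes with the unknown global twirl by $x$, and hence by Lemma~\ref{lemma:iso:sym-virt-can-be-done-on-phys} lifts to a genuine physical unitary on the tweezer sites. Concatenating such edge-to-edge moves then transports the chargeon along an arbitrary lattice path.

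The main obstacle I anticipate is conceptual rather than computational: the chargeon defect itself is not unitary on the edge it lives on, since the diagonal entries $\chi_c(pg)$ can vanish, so one cannot simply ``grab'' the defect and unitarily place it elsewhere. The resolution is to treat the defect as a global modification of the PEPS amplitudes and to design the tweezer unitary so that it reshuffles the group labels in such a way that the same character weight reappears attached to $g_2$ rather than $g_1$. Preservation of the charge label $c$ is then automatic, since the operation merely permutes the argument of $\chi_c$ and characters of inequivalent irreps are orthogonal as functions on $G$; the internal label $p$ meanwhile shifts to a path-dependent $p'$, mirroring exactly the conjugation ambiguity already encountered for the internal state of a fluxon.
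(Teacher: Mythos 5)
Your proposal is correct and follows essentially the same route as the paper: block the relevant sites, pass to the accessible picture of Observation~\ref{obs:iso:accessible-virt}, and apply a twirl-covariant local unitary on the accessible group labels that transfers the character weight from the old edge to the new one. The only notable difference is that the paper exploits the fact that a chargeon is a localized bond defect (not a string endpoint) to realize this transfer as a plain swap of the subsystems carrying the two edges, rather than the relative-relabeling trick of \eqref{eq:iso:tweezer-duplicate-operation} used for fluxons; your version arrives at the same operation with somewhat more machinery than needed.
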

\begin{proof}
Since chargeons are localized perturbations on the virtual level (unlike
fluxons which arise from strings), moving then is as simple as swapping:
In order to implement the transformation
\begin{equation}
\label{eq:anyons:chargeon-move-setting}
\raisebox{-2.2em}{
    \includegraphics[height=5.5em]{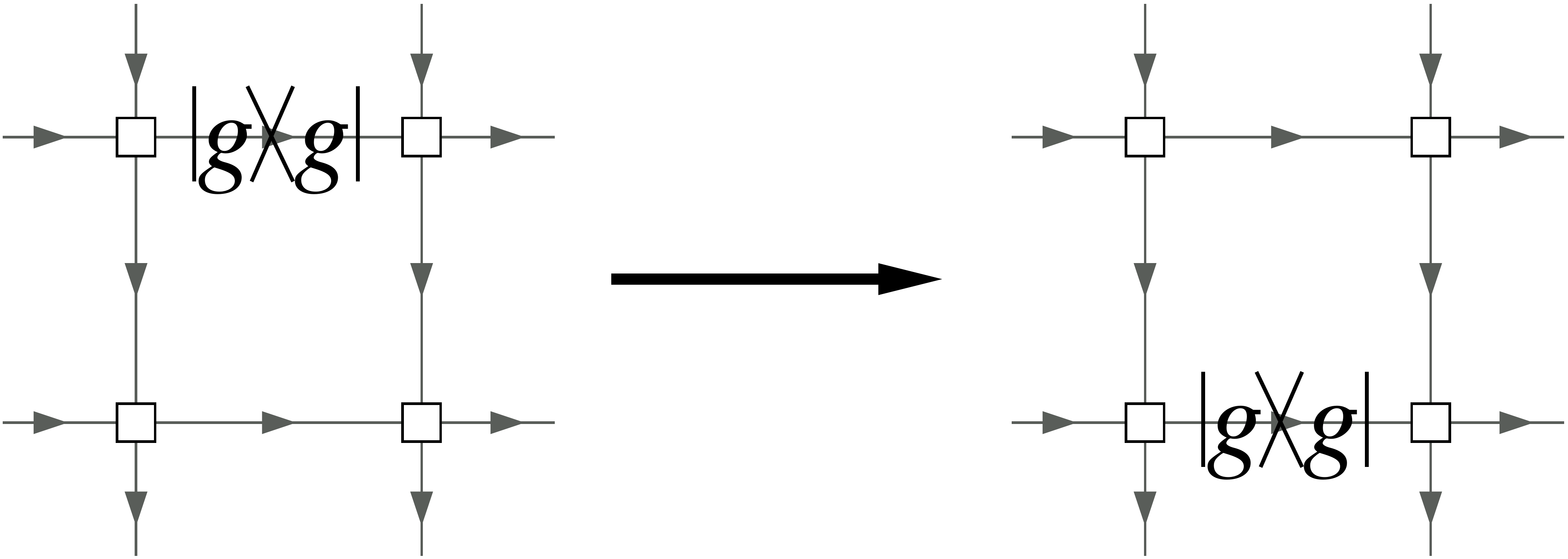}
}\quad ,
\end{equation}
we will need to block the left two and right two spins independently.  The
transformation is then locally equivalent (up to unneeded degrees of
freedom) to
\[
\raisebox{-1.3em}{
    \includegraphics[height=3.5em]{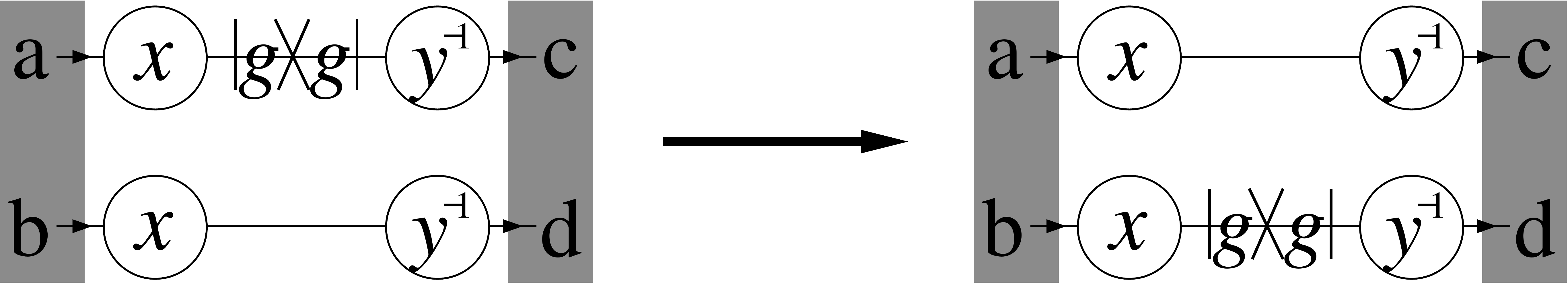}
}\ ,
\]
and thus, by swapping a with b and c with d, the chargeon is
moved as in~\eqref{eq:anyons:chargeon-move-setting}. Note that a similar
construction can also be used to move a fluxon around the corner.
\end{proof}

\begin{theorem}[Creation of chargeons]
        \label{thm:anyons:create-chargeons}
For any irreducible representation $c$ and any $p\in G$, the chargeon pair
described by
\begin{equation}
        \label{eq:anyons:chargeon-pair-state}
\Pi_{c,p}=\sum_{g,h} \chi_c(ph^{-1}g)\ket{g}\bra{g}_u\otimes \ket{h}\bra{h}_d
\end{equation}
can be created by local operations.
\end{theorem}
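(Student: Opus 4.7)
The plan is to adapt the tweezer construction used already for detecting chargeons (Theorem~\ref{thm:anyons:detect-chargeons}) and for creating fluxon pairs. I pick two edges on which I want the pair to appear (conveniently, two edges meeting at a common vertex), and block the sites adjacent to those edges into a single region. By Observation~\ref{obs:iso:accessible-virt}, this blocked region is locally isomorphic to a form in which the two virtual edges in question are directly accessible, up to a common twirl $x\in G$ applied simultaneously to both edges. That is, on the physical side I can address states of the form $\sum_x\ket{xg}\bra{yg}\otimes\ket{xh}\bra{yh}$ supported on those two edges, for unknown twirls $x,y$, in complete analogy with the accessible state \eqref{eq:anyons:measure-chargeon} used to detect a chargeon.

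Next I specify the physical operation: apply the diagonal map
\[
E:\quad \ket{g,h}\bra{g',h'}\;\longmapsto\;
\frac{1}{d_c}\,\chi_c(p h^{-1}g)\,\overline{\chi_c(p h'^{-1}g')}\;\ket{g,h}\bra{g',h'}
\]
on the two accessible edges (realised as a Kraus operator of a POVM with ancilla, together with post-selection on the corresponding outcome). The effect on the contracted PEPS is precisely to replace the two identity insertions on the chosen edges by the weighted diagonal operator $\Pi_{c,p}=\sum_{g,h}\chi_c(ph^{-1}g)\ket{g}\bra{g}_u\otimes\ket{h}\bra{h}_d$, which by definition is exactly a chargeon pair of type $c$ with internal label $p$.

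The key point that makes this work, and the main obstacle to be addressed, is that the operation must be well-defined on the \emph{accessible} virtual degrees of freedom, i.e.\ it must be consistent with the unknown twirl $x$ coming from Observation~\ref{obs:iso:accessible-virt}. This is guaranteed by the identity
\[
\chi_c\!\left(p(xh)^{-1}(xg)\right)=\chi_c(p h^{-1}g),
\]
so the weight $\chi_c(ph^{-1}g)$ depends only on the combination $h^{-1}g$, which is exactly the twirl-invariant datum available via the tweezer. Hence the diagonal map $E$ can indeed be implemented as a local physical operation. Finally, since $|\chi_c(\cdot)|\le d_c$, the Kraus operator has norm at most one, so its dilation to a unitary with an ancilla followed by post-selection is a legitimate local quantum operation, which creates the chargeon pair with some nonzero success probability. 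The proof for arbitrary locations on the lattice then follows by combining this construction with the chargeon-moving result, which allows the pair to be displaced to any desired pair of edges.
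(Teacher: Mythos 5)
There is a genuine error in the core step. In the formalism of Observation~\ref{obs:iso:accessible-virt}, the kets and bras labelling a bond are \emph{separate} physical subsystems, so the two identity insertions correspond to the four-party vector $\sum_{g,h}\ket{g}\bra{g}_u\otimes\ket{h}\bra{h}_d$ and the target is the vector $\ket{\Pi_{c,p}}$; the operation you need is a linear map on this four-party Hilbert space that inserts the weight $\chi_c(ph^{-1}g)$ \emph{once}. Your map $E$ instead applies the weight to the ket labels and its conjugate to the bra labels (it is conjugation by the diagonal Kraus operator $K\propto\Pi_{c,p}$). On the initial state, which is supported on $g'=g$, $h'=h$, this produces
\[
\frac{1}{d_c}\sum_{g,h}\bigl|\chi_c(ph^{-1}g)\bigr|^2\,\ket{g}\bra{g}_u\otimes\ket{h}\bra{h}_d\ ,
\]
and since $|\chi_c|^2=\chi_{c\otimes\bar c}$ decomposes as a sum of characters over all irreps contained in $c\otimes\bar c$, this is a superposition $\sum_j m_j\,\Pi_{j,p}$ of chargeon pairs of different types, not $\Pi_{c,p}$. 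The construction would only be correct for the trivial representation. The fix is to apply the weight once (e.g.\ a diagonal operator acting only on the two ket subsystems), or, as the paper does, to simply note that the initial and target configurations are two specific normalized vectors of the four accessible subsystems and are therefore related by a unitary.

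Two further points. First, your twirl bookkeeping is inconsistent: you block all adjacent sites into one region (which, as in the paper's U-shaped tweezer, synchronizes everything to a \emph{single} unknown $x$), yet write the accessible state with two independent twirls $x,y$; and your invariance check $\chi_c(p(xh)^{-1}(xg))=\chi_c(ph^{-1}g)$ only covers the ket side. The paper's key observation is the full conjugation invariance $(U_x\otimes U_x)\Pi_{c,p}(U_x^\dagger\otimes U_x^\dagger)=\Pi_{c,p}$, Eq.~\eqref{eq:anyons:Ux-invariance-of-charge-pair}, which makes both the initial and the target state independent of the synchronized twirl. Second, because of this the paper implements the creation \emph{unitarily}, i.e.\ deterministically; your post-selected POVM is unnecessary and weaker even where it would work.
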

Note that this gives a particle with charge $c$ in  position $u$ and a
particle of charge $\bar{c}$ in position $d$, where $\bar c$ labels the
complex conjugate of representation $c$.
\begin{proof}
First, observe that the pair~\eqref{eq:anyons:chargeon-pair-state} is invariant
under conjugation by $x\in G$:
\begin{equation}
        \label{eq:anyons:Ux-invariance-of-charge-pair}
\begin{aligned}
(U_x\otimes U_x)\Pi_{c,p}(U_x^\dagger\otimes U_x^\dagger)
&=
\sum_{g,h} \chi_c(ph^{-1}g)\ket{xg}\bra{xg}\otimes \ket{xh}\bra{xh} \\
&=
\sum_{\tilde g,\tilde h} \chi_c(p\tilde h^{-1}xx^{-1}\tilde g)
\ket{\tilde g}\bra{\tilde g}\otimes \ket{\smash{\tilde h}}\bra{\smash{\tilde h} }
=
\Pi_{c,p}\ .
\end{aligned}
\end{equation}
To create the state \eqref{eq:anyons:chargeon-pair-state}, we need to
implement the transformation
\[
\raisebox{-2.5em}{
    \includegraphics[height=7em]{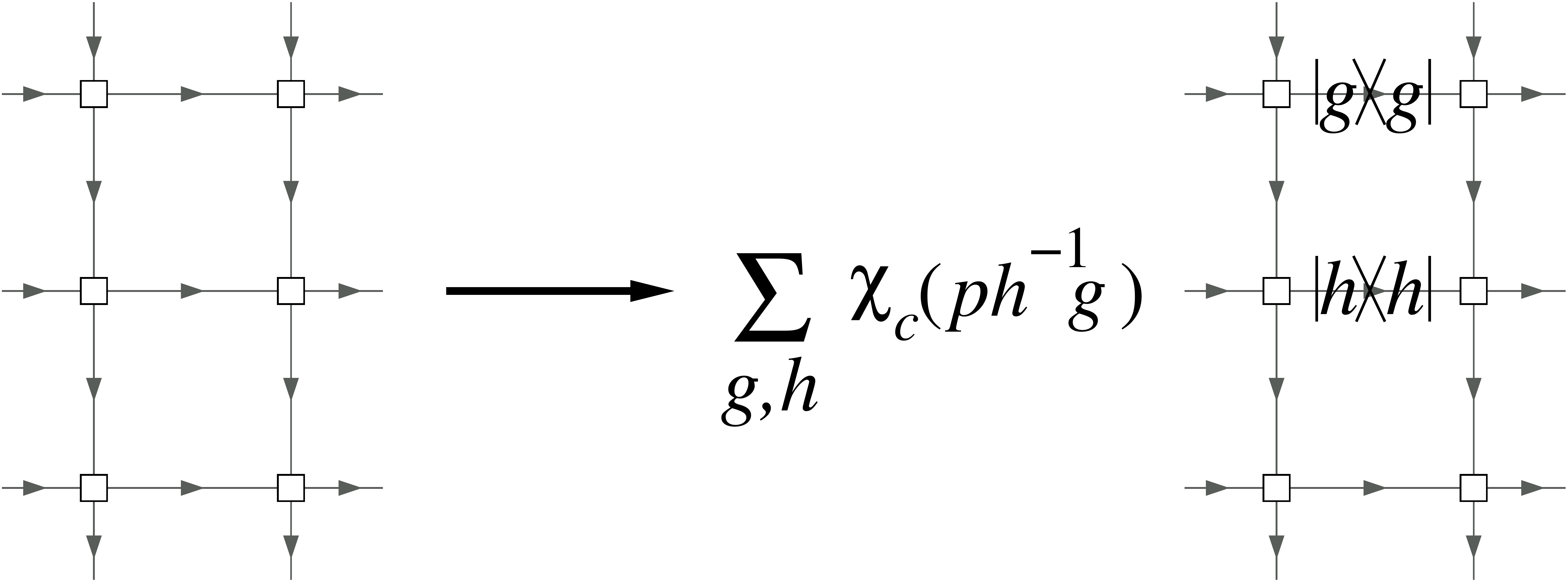}
}\quad .
\]
To this end, we use a U-shaped tweezer encompassing all six sites, which
leaves us with the task of implementing the transformation
\begin{equation}
        \label{eq:anyons:create-chargepair-accessible}
\raisebox{-2.1em}{
    \includegraphics[height=4.8em]{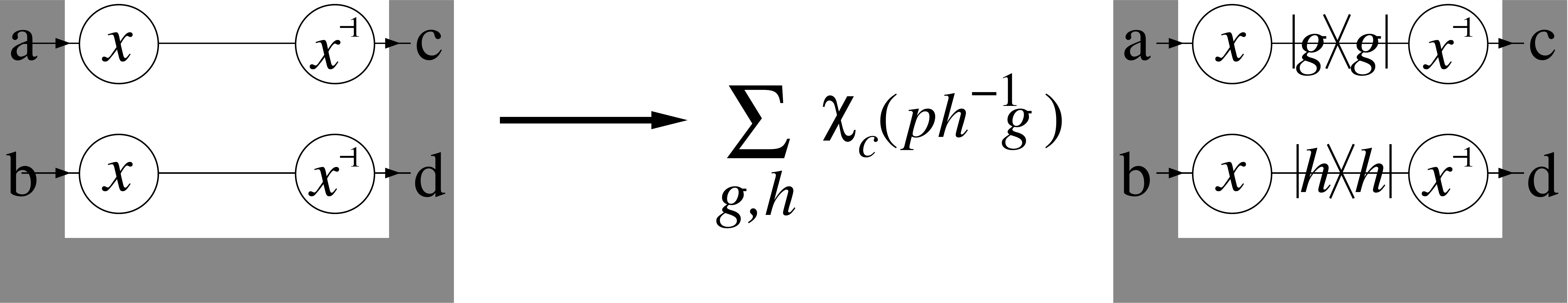}
}
\end{equation}
for some unkown $x$. Due to the $U_x$--invariance of $\Pi_{c,p}$
[Eq.~\eqref{eq:anyons:Ux-invariance-of-charge-pair}], neither side depends
on $x$, and the transformation can be implemented (unitarily) by mapping
the initial state of the four sites a,b,c,d to $\Pi_{c,p}$.
\end{proof}

\subsubsection{Braiding of charges with fluxes}

Let us finally see what happens when we braid chargeons with other
particles. Since chargeons are localized defects in the lattice, they do
not affect other particles which are braided around them. On the other
hand, they are affected if braided around a fluxon. Consider again
the braiding~\eqref{eq:anyons:fluxon-braiding-lazy}, now with
$A_1$--$A_2$ a pair of fluxons in state $(k,k^{-1})$, and $B_1$--$B_2$
a pair of chargeons with charges $c$ and $\bar c$, e.g.\ as
in~\eqref{eq:anyons:chargeon-pair-state}.  Again, we will choose to move
the string connecting the fluxons $A_1$ and $A_2$ over the chargeon $B_1$,
rather than moving $B_1$ across the string.  Analyzing the situation for a
single projection $\ket{g}\bra{g}$, we find that
\[
\raisebox{-3em}{
\includegraphics[height=6.5em]{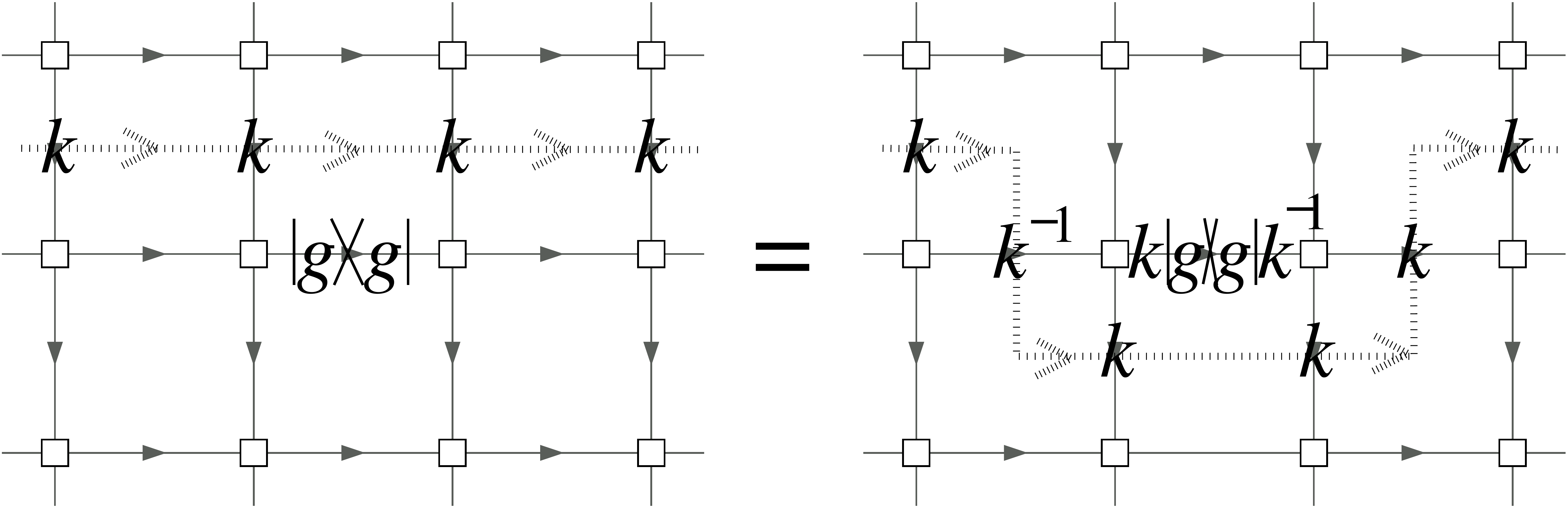}
}\ .
\]
This is, $\ket{g}\bra{g}\mapsto U_k\ket{g}\bra{g} U_k^\dagger=
\ket{kg}\bra{kg}$, and thus
\[
\sum_g \chi_c(pg)\ket{g}\bra{g}\mapsto
\sum_{\tilde g} \chi_c(pk^{-1}\tilde g)\ket{\tilde g}\bra{\tilde g}\ :
\]
braiding a chargeon around a fluxon of flux $k$ leaves its type $c$
invariant, but multiplies the parameter $p$ characterizing its internal
state by $k^{-1}$.

To interferometrically detect this change, one can e.g.\ start from a pair
of charges in state $\Pi_{c,p}$~\eqref{eq:anyons:chargeon-pair-state},
braid them around the flux $k$, and bring them back together. Then one can
reason as follows. We have seen that, by acting with a unitary $W$ on six
sites, the original state (before the braiding) decomposes as
$\sum_x\ket{\Pi_{c,p}}_{abcd}\ket{\psi_x}$ where
$|G|\ket{\Pi_{c,p}}_{abcd}$ is exactly the right hand side of
(\ref{eq:anyons:create-chargepair-accessible}) which is indeed independent
of $x$. The group orthogonality relations assure that
$\ket{\Pi_{c,p}}_{abcd}$ is normalized, which in turn gives
$\sum_{x,y}\langle \psi_x|\psi_y\rangle=1$. Now, after braiding and
applying $W$ we are left with $\sum_x
\ket{\Pi_{c,p,k}^x}_{abcd}\ket{\psi_x}$, where
$|G|\ket{\Pi_{c,p,k}^x}_{abcd}$ is the right hand side of
(\ref{eq:anyons:create-chargepair-accessible}) but with the coefficients
given by $\chi_c(ph^{-1}k^{-1}g)$.  Then, we can implement in abcd (after
having applied $W$) the measurement given by
$\ket{\Pi_{c,p}}\bra{\Pi_{c,p}}_{abcd}$ and its orthogonal complement,
which outputs that the state is unchanged with probability $$\left|
\frac{\chi_c(k)}{|c|} \right|^2\;$$ where $|c|$ denotes the dimension of
the representation. To see that, it is enough to use the group
orthogonality relations once more.

\section{Examples\label{sec:examples}}

In the following, we present examples of $G$--injective PEPS. We start by
discussing Kitaev's code state~\cite{kitaev:toriccode}, including an
explanation of how to construct the original PEPS and to obtain the
$G$--isometric PEPS by an RG step. We then consider the generalization of
the model to arbitrary finite groups~\cite{kitaev:toriccode}, the
so-called \emph{double models}, and show that they correspond to the set
of $G$--isometric PEPS. Finally, we show that by using semi-regular
representations, it is possible to construct PEPS which are locally
equivalent to the double models, yet can be realized with a lower bond
dimension.

\subsection{Kitaev's code state}

\begin{figure}
\begin{center}
\includegraphics[height=8em]{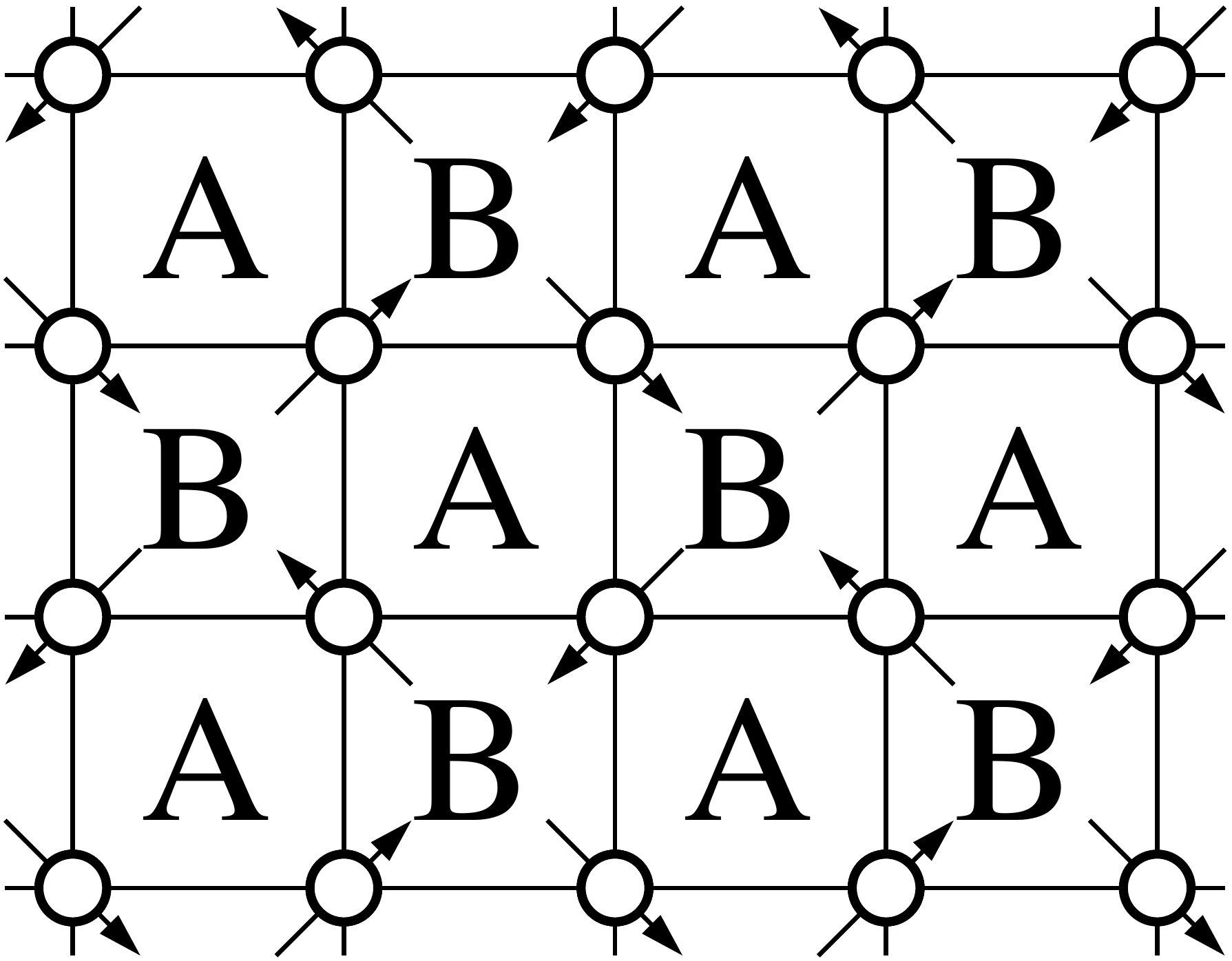}
\end{center}
\caption{
\label{fig:ex:kitaev-lattice}
The lattice for Kitaev's code state and the double models.}
\end{figure}

Kitaev's code state can be defined on a square lattice with qubits
$\{\ket0,\ket1\}$ attached to its vertices, and two types of plaquettes, A
and B, see Fig.~\ref{fig:ex:kitaev-lattice}. (The arrows will be used in
the non-abelian case). The Hamiltonian consists of local terms associated
to plaquettes, each acting on the four surrounding qubits: For each A
plaquette $p_A$, define
\begin{equation}
        \label{eq:ex:A-stabilizer}
        h^{p_A}_A:=\tfrac12(1-Z^{\otimes 4})
\end{equation}
acting on the four sites adjacent to the plaquette $p_A$; this is,
$h^{p_A}_A$ is the projector whose ground state subspace is formed by the
configurations $\ket{g,h,k,l}$ with even parity around the plaquette. For each B
plaquette $p_B$, define
\begin{equation}
        \label{eq:ex:B-stabilizer}
\begin{aligned}
h^{p_B}_B &:= \tfrac12(1-X^{\otimes 4}) \\
&=\sum_{s} \ket{g+s,h+s,k+s,l+s}\bra{g,h,k,l}\ ,
\end{aligned}
\end{equation}
where addition is modulo $2$; this is, $h^{p_B}_B$ is a projector whose
ground state subspace is invariant under flipping all spins adjacent to
$p_B$. Now let
\[
H_A:=\sum_{p_A} h^{p_A}_A\ ,\mbox{ and }
H_B:=\sum_{p_B} h^{p_B}_B\ .
\]
Then, Kitaev's code state Hamiltonian~\cite{kitaev:toriccode} is given by
$H=H_A+H_B$.

Let us now explain how to construct a PEPS representation of a ground
state of $H$~\cite{frank:comp-power-of-peps}. To this end, observe that
all terms $h^{p_A}_A$, $h^{p_B}_B$ in the Hamiltonian commute and are
projections: Thus, the ground state subspace is given by the product
\[
\prod_{p_B}\tfrac12(1-h^{p_B}_B)\prod_{p_A}\tfrac12(1-h^{p_A}_A)\ ,
\]
and a product of local operators can always be written as a
PEPS.

A particularly appealing way to construct the PEPS representation is
the following. We start from the state $\ket{0,\dots,0}$, which has even
parity around each plaquette and is thus a ground state of $H_A$. By
sequentially applying all projections $\tfrac12(1-h^{p_B}_B)$ to the initial
$\ket{0,\dots,0}$ state, we thus end up with a ground state of $H$.\footnote{
    Note we have to make sure that $\ket{0,\dots,0}$ has non-zero
    overlap with the ground state subspace of $H_B$. This is true as it
    has non-zero overlap with
    $(\ket{0}+\ket{1})\otimes\cdots\otimes(\ket{0}+\ket{1})$ which is in
    the kernel of $H_B$.
}

The projector $\tfrac12(1-h^{p_B}_B)$ can be expressed as a tensor network
of the form
\[
\includegraphics[height=4em]{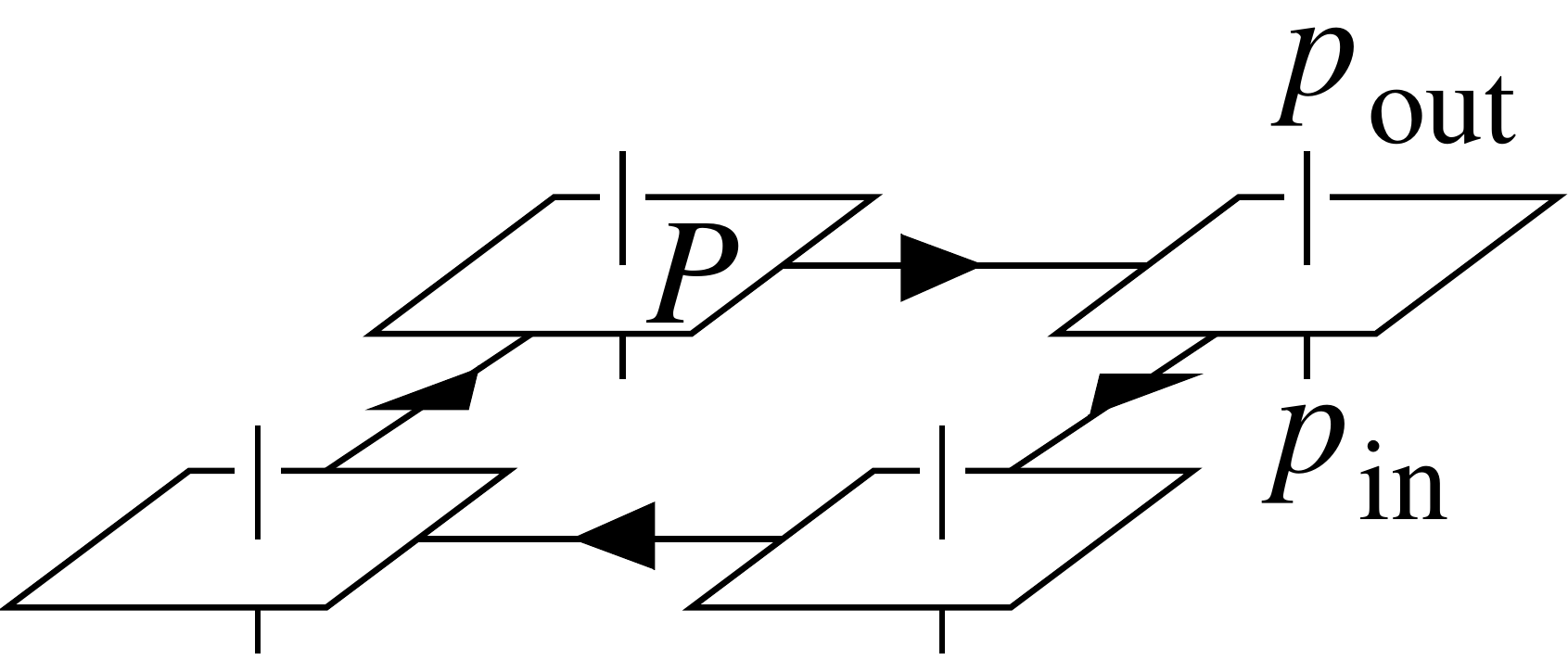}
\]
which takes a layer of physical inputs $p_\mathrm{in}$ at the bottom, and
maps them to a new layer of physical outputs $p_\mathrm{out}$ at the top.
The four PEPS tensors above are all equal,
\begin{equation}
        \label{eq:ex:kitaev-elemproj}
P=\sum_{s,k}\ket{k+s}_{p_\mathrm{out}}\bra{k}_{p_\mathrm{in}}
\otimes\ket{s}_v\bra{s}_{v}\ ,
\end{equation}
where $v$ denotes the virtual indices. The natural way to think of this
construction is to assign the value $s$ to the plaquette rather than to
the bonds: When contracting the virtual indices around the loop, there
only remains a single sum over $s$, which implements the projection
$\tfrac12(1-h_B^{p_B})$ by simulateously flipping all four physical spins
conditionally on $s$.

The B plaquette projections can be divided into two disjoint layers, the
PEPS representation of each of which is given
by~\eqref{eq:ex:kitaev-elemproj}.  When combining the layers, at each site
two $P$ tensors \eqref{eq:ex:kitaev-elemproj} are stacked atop of each
other, with the virtual indices pointing in opposite directions. In order
to obtain the PEPS description of the ground state itself, we have to apply
these two layers of $P$ to the initial state $\ket0$.  This yields the
tensor (cf.~\cite{frank:comp-power-of-peps})
\begin{equation}
\label{eq:ex:kitaev-tens}
\begin{aligned}
T =&  \sum_{r,s}\ket{r+s}_p\ket{r,s}_{v}\bra{r,s}_{v}\\
 = &\ket{0}\!\Big[\!\ket{0,0}_v\bra{0,0}_v+
                \ket{1,1}_v\bra{1,1}_v\!\Big]
    +  \ket{1}\!\Big[\!\ket{0,1}_v\bra{0,1}_v+\ket{1,0}_v\bra{1,0}_v\!\Big]
\end{aligned}
\end{equation}
Here, the $r$'s and $s$'s on the virtual level are placed such
that they are associated with one B plaquette each,
\[
\raisebox{-1.5em}{
\includegraphics[height=3.5em]{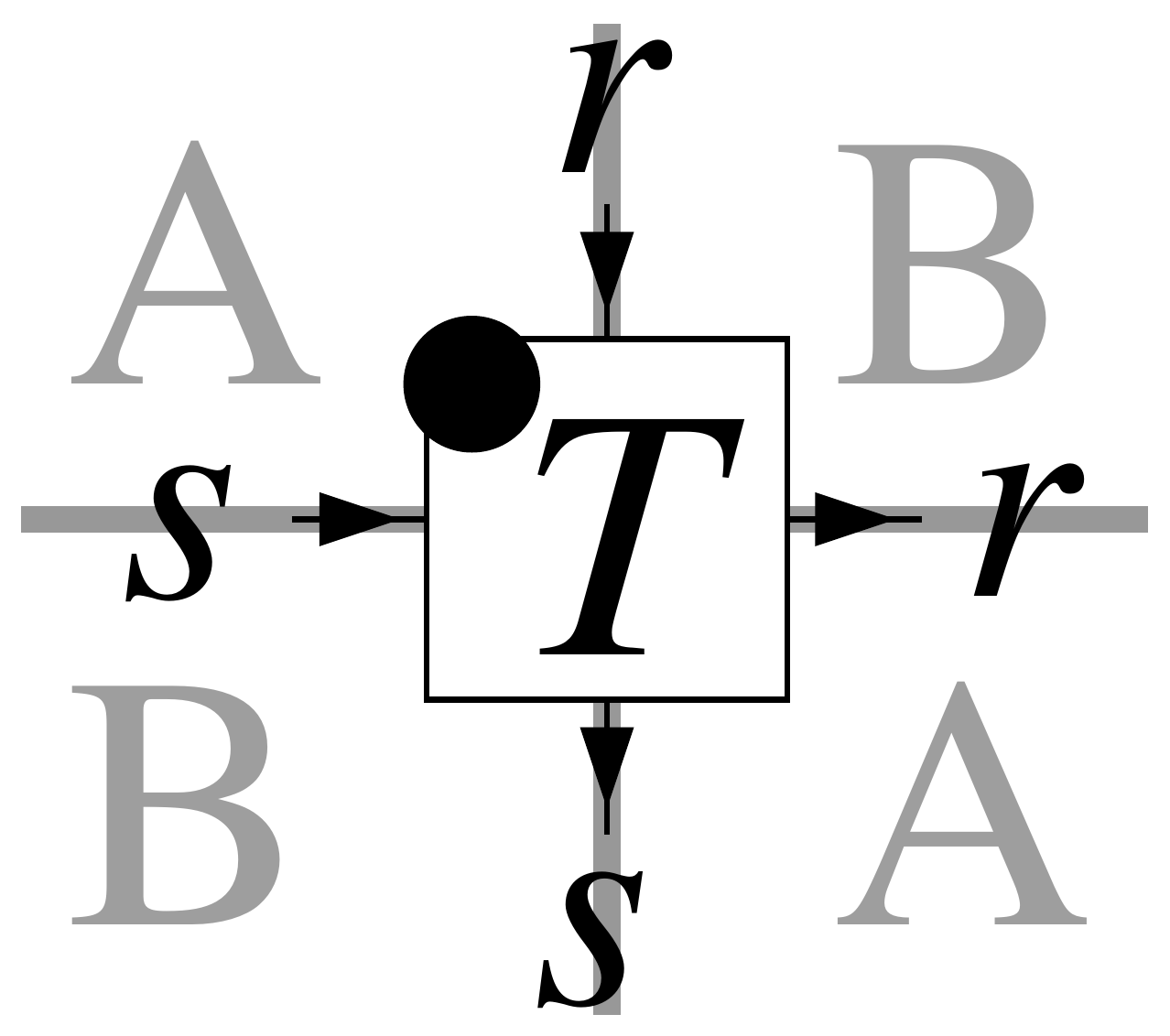}
}\ ,
\]
and correspondingly rotated for the other type of vertices.

It is again convenient to think of the virtual indices as assigned to B
plaquettes, rather than to bonds. Then, the value of each qubit in the
lattice is given by the difference $r-s\equiv r+s$ between the virtual
degrees of freedom $r$ and $s$ assigned to the two adjacent B plaquettes.
Alternatively, one can think of the state as arising from
$\ket{0,\dots,0}$ -- an eigenstate of $H_A$ -- which is symmetrized by
coherently flipping the spins adjacent to any B plaquette, controlled by its
virtual degree of freedom.

From \eqref{eq:ex:kitaev-tens}, it is straightforward to see that $T$ is
characterized by the following symmetries:
\[
\raisebox{-1.8em}{
\includegraphics[height=4.5em]{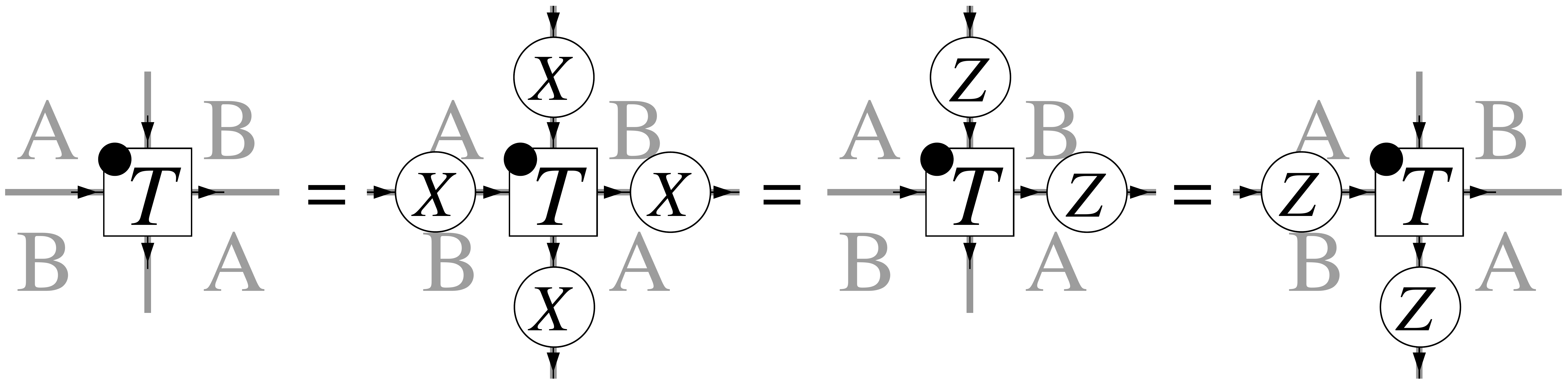}
}\ .
\]
Thus, $T$ has the desired $\mathbb Z_2$--invariance with respect to
$X^{\otimes 4}$, but it is lacking $\mathbb Z_2$--injectivity since it has
additional symmetries with respect to $Z^{\otimes 2}$. In order to get rid
of these symmetries, we consider a $2\times2$ block around an A
plaquette, which has symmetries
\[
\includegraphics[height=5.8em]{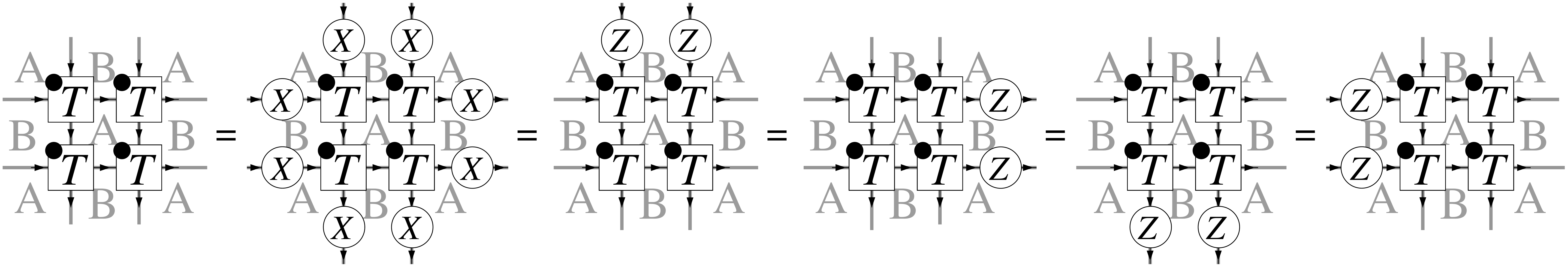}
\]
The $X^{\otimes 8}$ symmetry is the topologically interesting one, whereas
the $Z^{\otimes 2}$ symmetries, after blocking, only act between adjacent
tensors: They should therefore be local in nature. Moreover, since
$X\otimes X$ and $Z\otimes
Z$ commute, there has to be an isometry which separates their action:
This is accomplished by the CNOT gate
$\ket{a}\ket{b}\mapsto\ket{a}\ket{a+b}$ which we have already
used for the RG flow in Sec.~\ref{sec:RG}, and which maps
\begin{align*}
X\otimes X &\mapsto X\otimes \openone \\
Z\otimes Z &\mapsto \openone\otimes Z\ .
\end{align*}
By applying these CNOTs, the $2\times2$-block of $T$'s is thus transformed
to\footnote{
        Note that since $\mathcal P(T)$ is isometric on its support, the
        RG step can be implemented by a physical unitary and a relabelling
        of bonds.
}
\[
\raisebox{-2.5em}{
\includegraphics[height=6em]{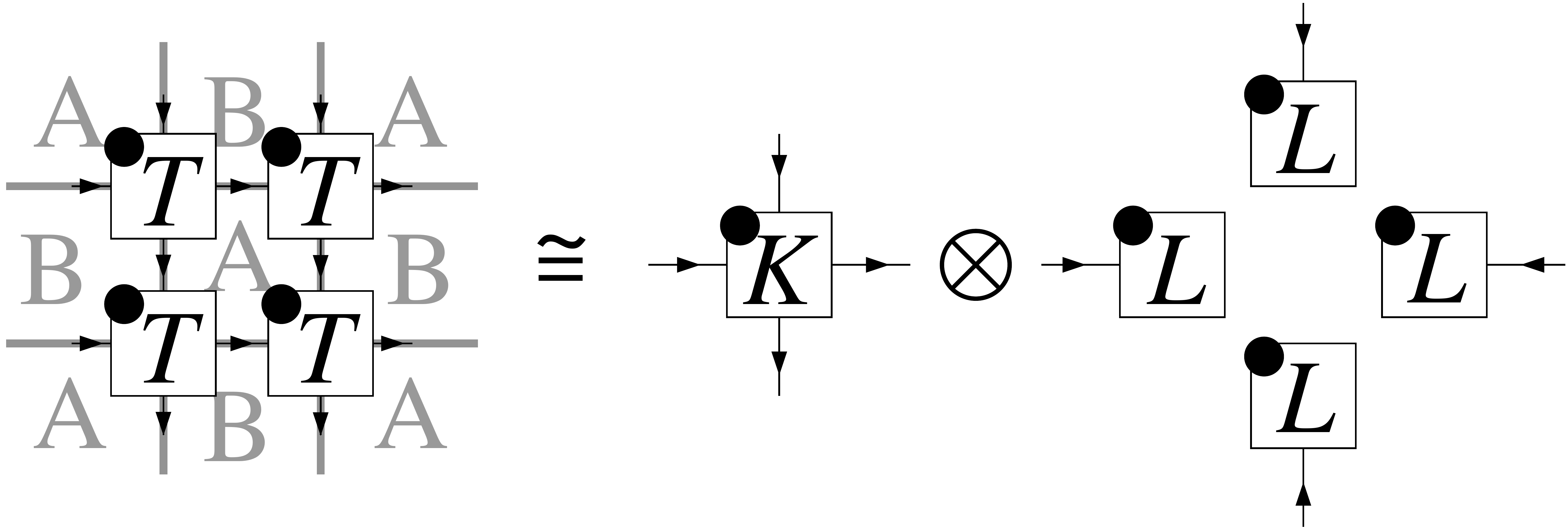}
}\ ,
\]
where the tensors $K$ and $L$ have exactly the symmetries
\[
\raisebox{-2em}{
\includegraphics[height=5em]{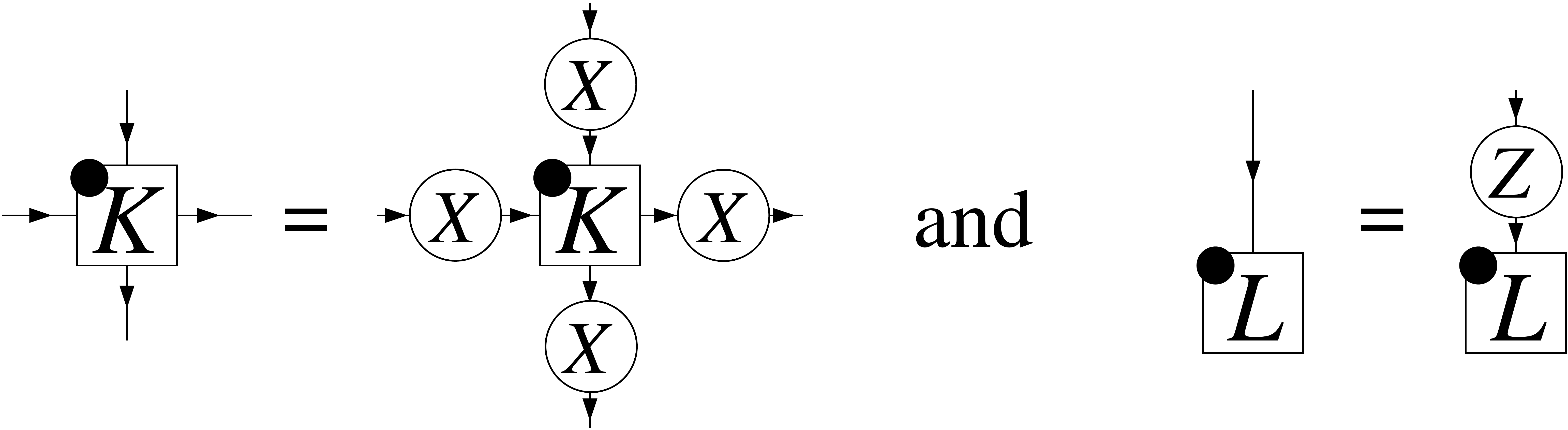}
}\ ,
\]
respectively.
Thus,
the tensor $K$ is the desired $\mathbb Z_2$--isometric tensor
describing the topological code state, while the $L$ are tensors with
only \emph{one} virtual qubit index
with $\mathbb Z_2$ symmetry: They therefore give rise to product states
between adjacent sites.  Note that this also explains why the
original PEPS representation of the code state needed double as many bonds
as actually required to obtain the topological entropy: The original
tensor $T$ has extra local symmetries.

Let us now try to see how the blocked and renormalized tensors $K$ actually
look like. As compared to the original lattice, the tensors sit on top of
(i.e., contain) an A plaquette, and each bond corresponds to a B
plaquette. The other half of the A plaquettes forms the plaquettes between
the renormalized tensors:
\[
\raisebox{-2.3em}{
\includegraphics[height=6em]{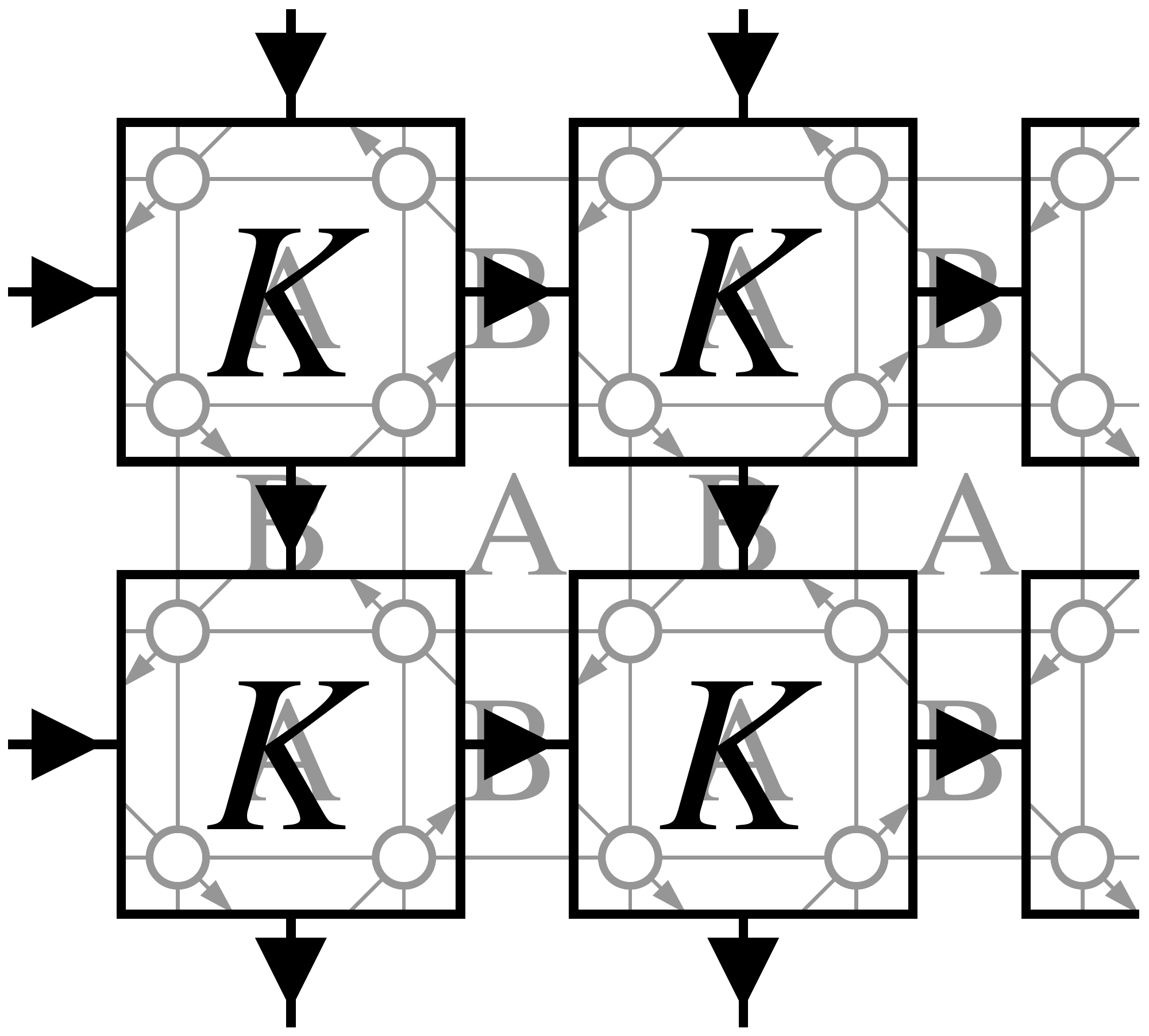}}\quad .
\]
The summation index $s$ in the PEPS construction
\eqref{eq:ex:kitaev-elemproj} and \eqref{eq:ex:kitaev-tens}, which we
argued was associated to B plaquettes, is now associated to edges,
i.e., it became an actual bond state. Going through the RG scheme (or
observing that the action of the tensor $P$,
Eq.~\eqref{eq:ex:kitaev-elemproj}, is to add $s$ to all qubits surrounding
the associated B plaquette), we find that
\begin{equation}
        \label{eq:ex:kitaev-colordiff-rep}
\begin{aligned}
K&=\sum_{pqrs}\ket{p+q,q+r,r+s,s+p}_p\otimes\ket{p,q}_v\bra{r,s}_v\\
=&\raisebox{-4em}{
    \includegraphics[height=9em]{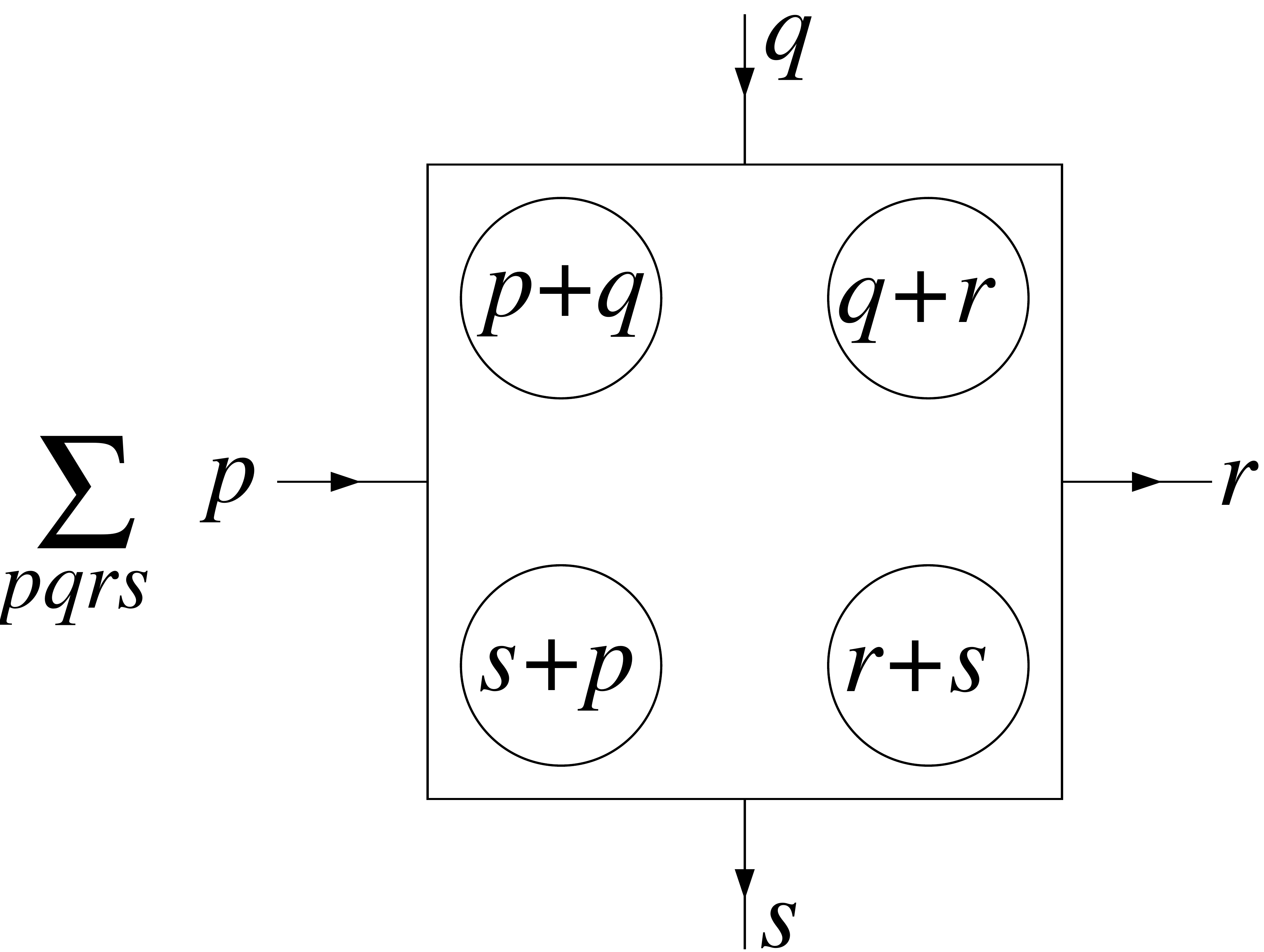}
}\quad .
\end{aligned}
\end{equation}
One possible perspective on the PEPS representation is thus to assign values
$0$ or $1$ (``colors'') to each $B$ plaquette, and let the physical states
be the difference between the colors of the adjacent plaquettes; the state
is then given by the superposition over all color patterns.  Note that
$\mathcal P(K)$ does not have full rank, as the physical subspace
in~\eqref{eq:ex:kitaev-colordiff-rep} is restricted to states with even
parity.

What is the parent Hamiltonian for the blocked tensor? As it turns out,
it can be split into three components, each of which can be associated to
a term in the original Hamiltonian: i) a local term which enforces
the local constraint; ii) a four-site term enforcing the
plaquette constraint for the A plaquette covering four tensors;
and iii) a two-site term making sure that the state is an equal weight
superposition of all edge colorings: it enforces that the state is
invariant under
adding $1$ to all four qubits surrounding the edge.

As we have seen earlier, there is also a different way to express the PEPS
tensor $K$, using Observation~\ref{obs:iso:accessible-virt}:
\[
\raisebox{-2em}{
    \includegraphics[height=5em]{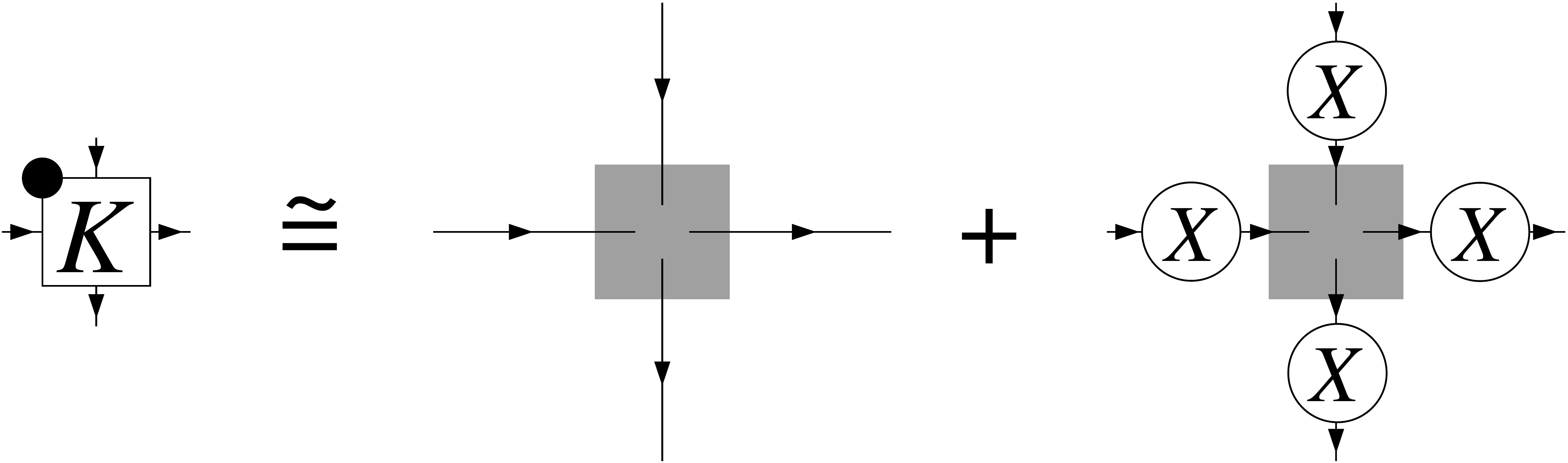}
}\quad .
\]
Written in this form, the four physical qubits are $+1$ eigenstates of
$X^{\otimes 4}$, rather than $Z^{\otimes 4}$ as in
\eqref{eq:ex:kitaev-colordiff-rep}, and the different terms in the
Hamiltonian ensure i) that this local constraint is obeyed, ii) that the
product of twirls around a plaquette vanishes, and iii) that all possible
twirls on an edge  appear with equal
weight.

\subsection{The double models}

Kitaev's code state Hamiltonian can be naturally generalized to any finite
group $G$.  To this end, identify the states at each site with group
elements $\ket{g}$, and generalize the Hamiltonian terms as follows:
\begin{equation}
h^{p_A}_A:=1-\sum_{ghkl=1}\ket{g,h,k,l}\bra{g,h,k,l}\ ,
\end{equation}
where the sum is over all  group elements whose product $ghkl$ equals the
identity $1$, and the product is taken in the order indicated by the
arrows in Fig.~\ref{fig:ex:kitaev-lattice} from left to right; and
\begin{equation}
h^{p_B}_B := 1-\sum_s \ket{U_sg,U_sh,U_sk,U_sl}\bra{g,h,k,l}\ ,
\end{equation}
where $U_s=L_s$ ($U_s=R_s$) if in Fig.~\ref{fig:ex:kitaev-lattice}, the
arrow on the spin points away from (towards) $p_B$; we keep this
convention throughout.  Here, $L_s:\ket{s}\mapsto \ket{sg}$ and
$R_s:\ket{s}\mapsto\ket{gs^{-1}}$ are the left and right multiplication,
respectively. The total Hamiltonian is again the sum of these terms for
all plaquettes.

The PEPS construction can be carried out the same way as for the $\mathbb
Z_2$ case, by starting from $\ket{1,\dots,1}$ and applying projectors
\begin{equation}
P=\sum_{s,k}U_s\ket{k}_{p_\mathrm{out}}\bra{k}_{p_\mathrm{in}}
\otimes\ket{s}_v\bra{s}_{v}\ ;
\end{equation}
applying two of these projectors to $\ket{1}$, we arrive at
the PEPS tensor
\begin{equation}
T =  \sum_{r,s}U_rU_s\ket{1}_p\ket{r,s}_{v}\bra{r,s}_{v}\ .
\end{equation}
Here, the action of $U_r$ and $U_s$ is determined by the site it acts on,
and by the plaquette associated with $r$ and $s$, respectively (which
implies the order does not matter); $U_rU_s\ket1$ is thus either
$\ket{rs^{-1}}$ or $\ket{sr^{-1}}$.

The tensor $T$ has again additional symmetries, which can be removed by
blocking around an A plaquette\footnote{
    The PEPS representation depends on the rotation direction prescribed
    by the arrows around the plaquette.
    Eq.~\eqref{eq:ex:doubles-renorm-tensor} is for clockwise rotation.
}
and renormalizing via
$\ket{a}\ket{b}\mapsto\ket{a}\ket{b^{-1}a}$;
this yields the $G$--isometric tensor for the quantum double,
\begin{equation}
        \label{eq:ex:doubles-renorm-tensor}
\begin{aligned}
K &= \sum \ket{pq^{-1},qr^{-1},rs^{-1},sp^{-1}}_p\otimes \ket{p,q}_v\bra{r,s}_v\\
=&\raisebox{-4em}{
    \includegraphics[height=9em]{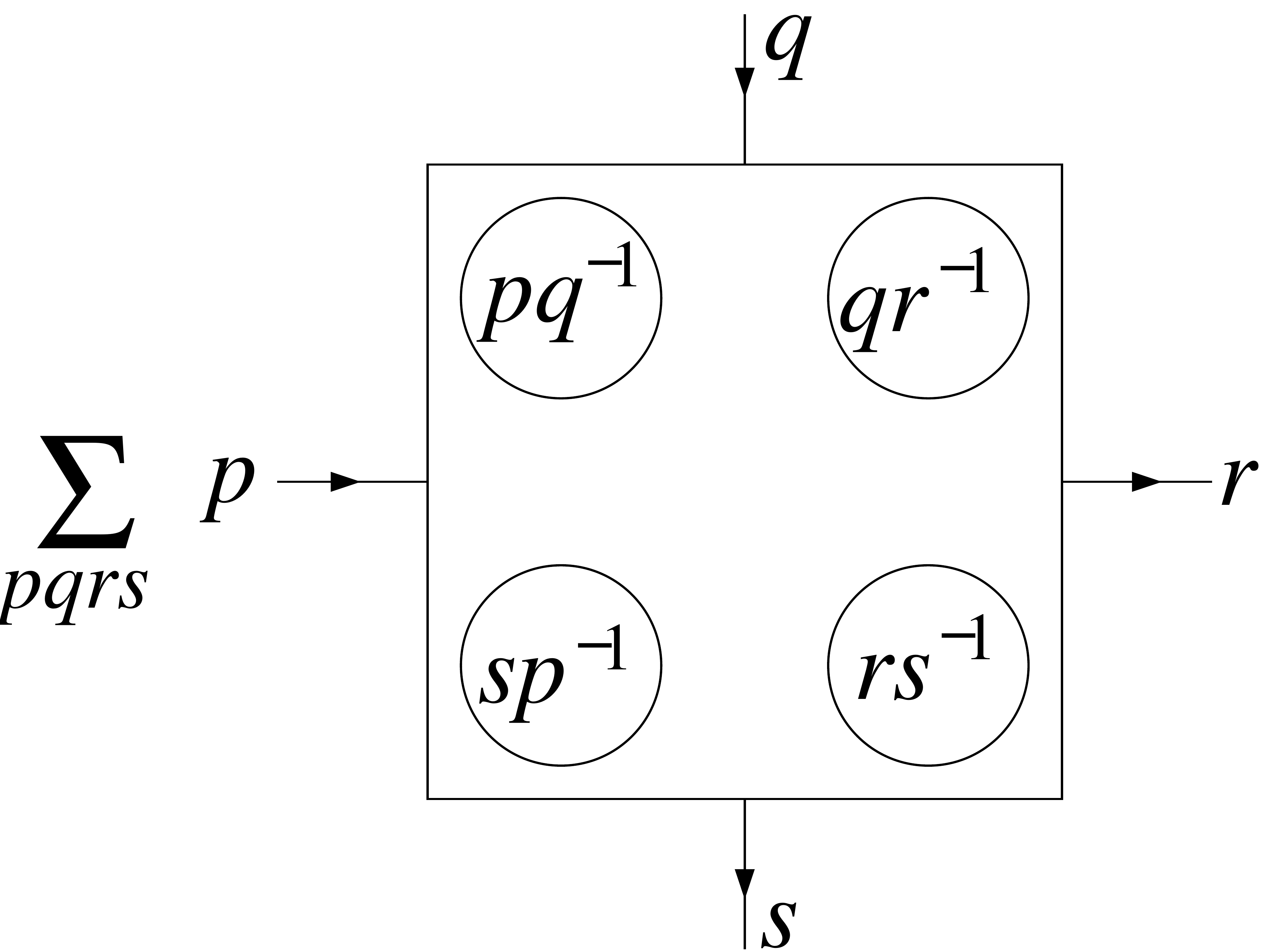}
}\quad .
\end{aligned}
\end{equation}
Note that this PEPS representation can again be interpreted by assigning
colors, i.e., group elements, to the bonds, and letting the state of each
qu-$|G|$-it be the difference between the adjacent bonds.

The Hamiltonian acts analogously to the $\mathbb Z_2$ case: i) a local
term ensures that the four sites in~\eqref{eq:ex:doubles-renorm-tensor}
multiply to the identity; ii) a plaquette term ensures the same for the
corresponding sites of four tensors, and iii) a two-body term enforces
equal weight for all bond configurations by applying $U_s$ coherently to
the four adjacent sites.

Alternatively, we can again use the representation
\begin{equation}
        \label{eq:ex:double-twirled-representation}
\raisebox{-2em}{
    \includegraphics[height=5em]{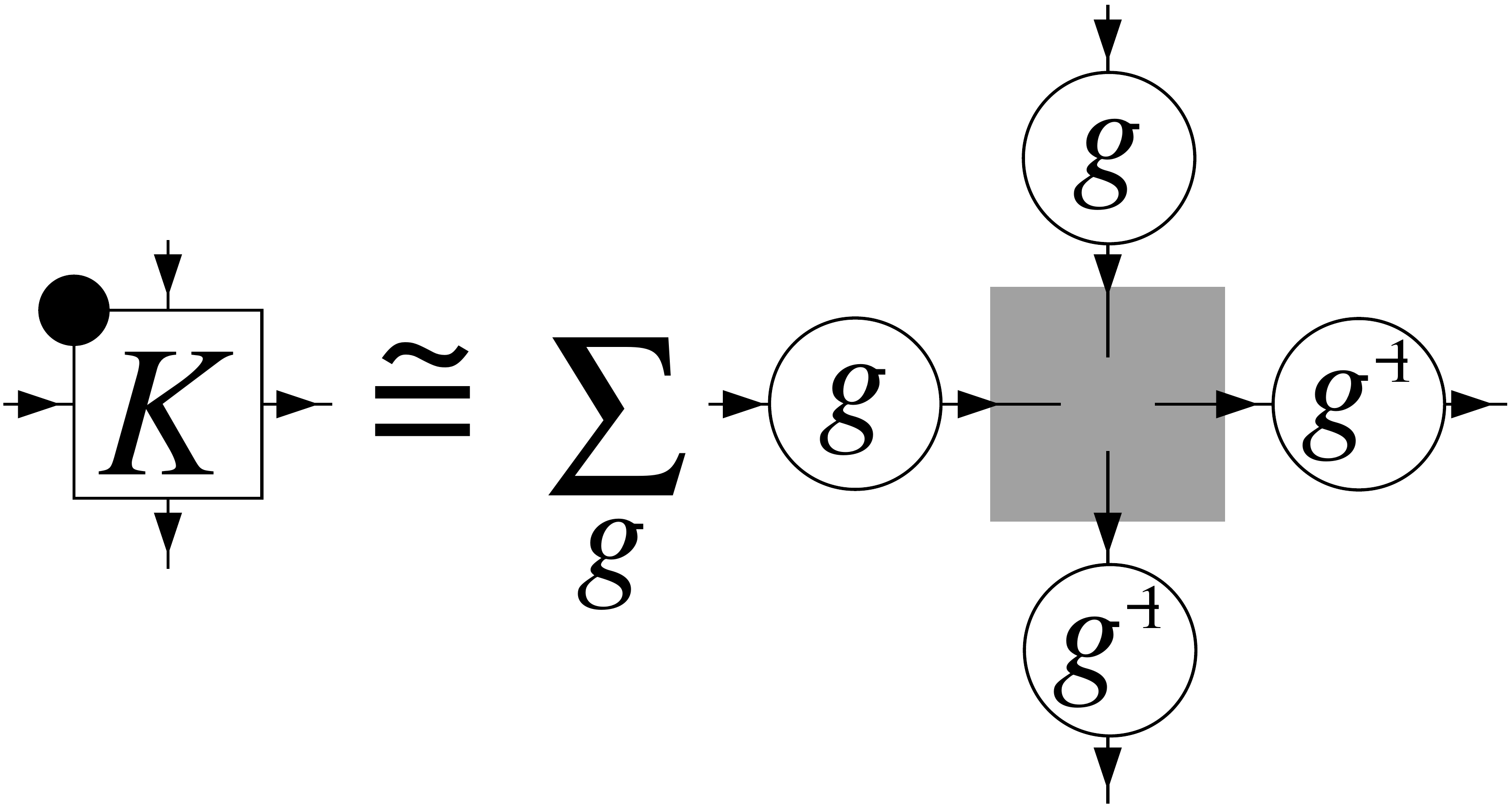}
}\quad ,
\end{equation}
where the role of the Hamiltonian terms in this representation are to
ensure that i) all $U_g$--invariant local configurations have equal
weight, ii) the product of the $g$'s around a plaquette is the identity,
and iii) for any bond, all twirls appear with equal weight.

\subsection{Reducing the bond dimension: semi-regular representations}

In the following, we would like to go beyond regular representations, and
show how a PEPS which is $G$--injective with respect to a semi-regular
representation can implement topological states which are equivalent to
the quantum doubles, but with a lower bond dimension.

To this end, consider the tensor $K$ in the
representation~\eqref{eq:ex:double-twirled-representation}. Here, $g$ is
the left-regular representation of a group, and can thus -- by a proper
relabelling of the virtual indices -- be chosen to be of the form
\[
L_g = \bigoplus_{i=1}^I D^i(g)\otimes {\openone}_{d_i}\ .
\]
Here, $D^i$, $i=1,\dots,I$, are the irreducible representations of $g$,
with dimensions $d_i$.
Let us now take the smallest semi-regular representation
\[
V_g = \bigoplus_{i=1}^I D^i(g)
\]
of $G$, which contains every irreducible representation with multiplicity
one, and define
\[
\Theta = \bigoplus_{i=1}^I \sqrt[4]{d_i}\;\openone_{d_i}\ .
\]
[Note that $\Theta^4=\Delta$, with $\Delta$
from~\eqref{eq:noninj:deltadef}.] With this, we define a tensor
\begin{equation}
        \label{eq:ex:V-Theta-double-def}
\raisebox{-4.5em}{
\includegraphics[height=10em]{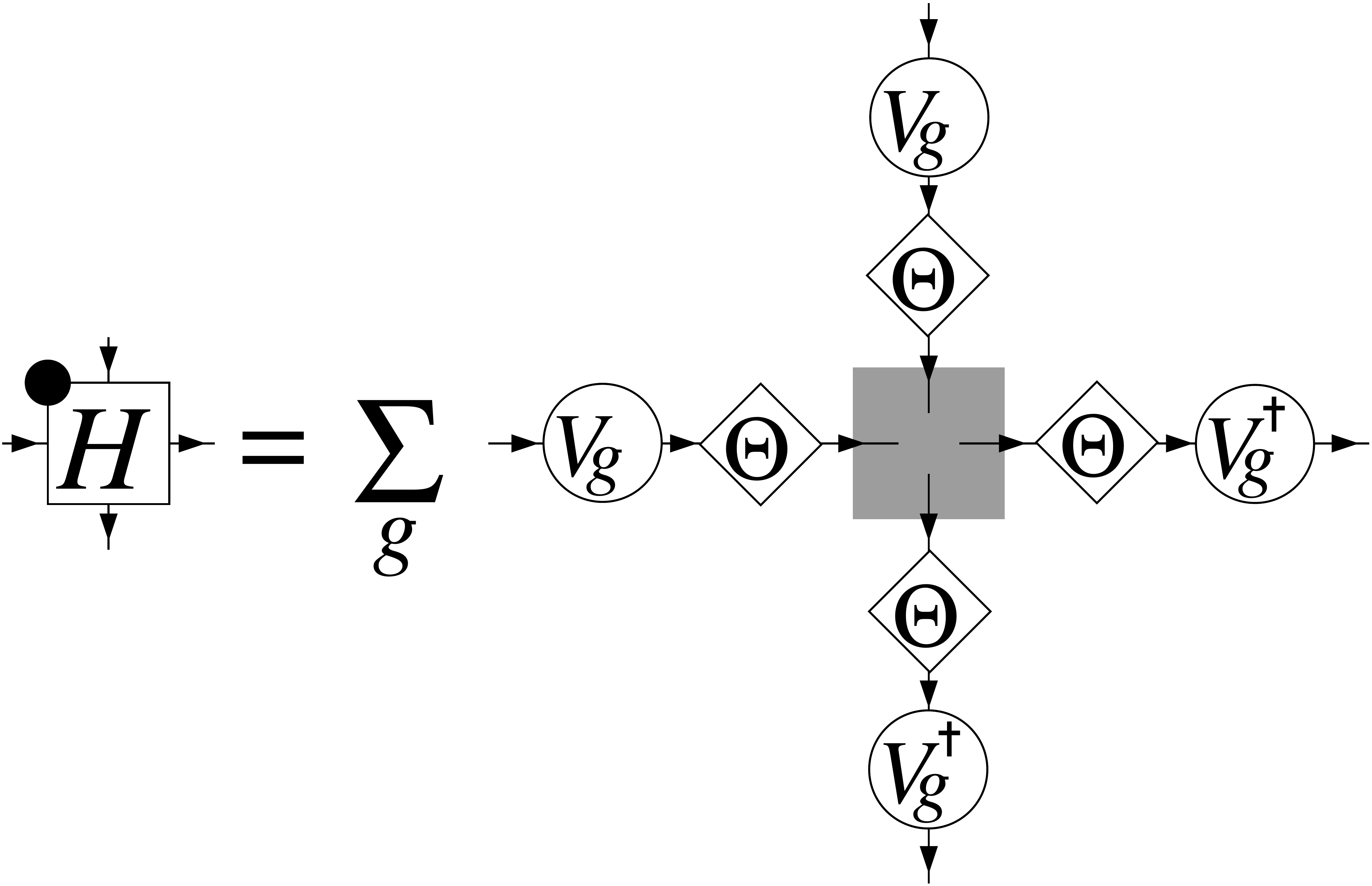}
}\ .
\end{equation}
We now claim that the state given by the tensor $H$ is the same as the
double model for the group $G$, i.e., the state described
by~\eqref{eq:ex:double-twirled-representation} with the regular
representation, up to unitaries between neighboring tensors which act on
disjoint subsystems, i.e., local operations.
This demonstrates that the double model has unneeded local degrees of
freedom, and the bond dimension needed to represent the model is $\sum_i
d_i$ rather than $\sum_i d_i^2$. Note that this is different from the bond
dimension gained by renormalization in that no blocking of tensors is
required.

In order to prove local equivalence of the two states, we will explicitly
construct an isometric transformation which maps between the two states.
(It has to be isometric as the dimension of the underlying space changes.)
The transformation will act on the two endpoints of a bond simulateously,
where it needs to implement a mapping between
\[
\raisebox{-0.5em}{
\includegraphics[height=1.8em]{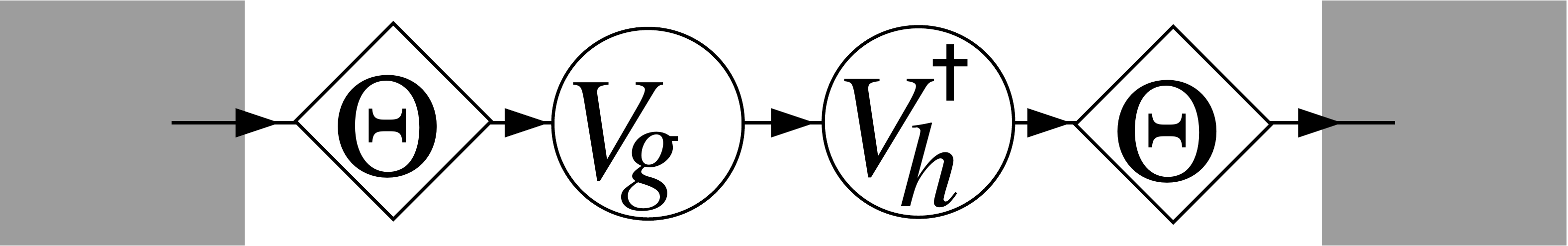}
}
\mbox{\ \ and\ \  }
\raisebox{-0.5em}{
\includegraphics[height=1.8em]{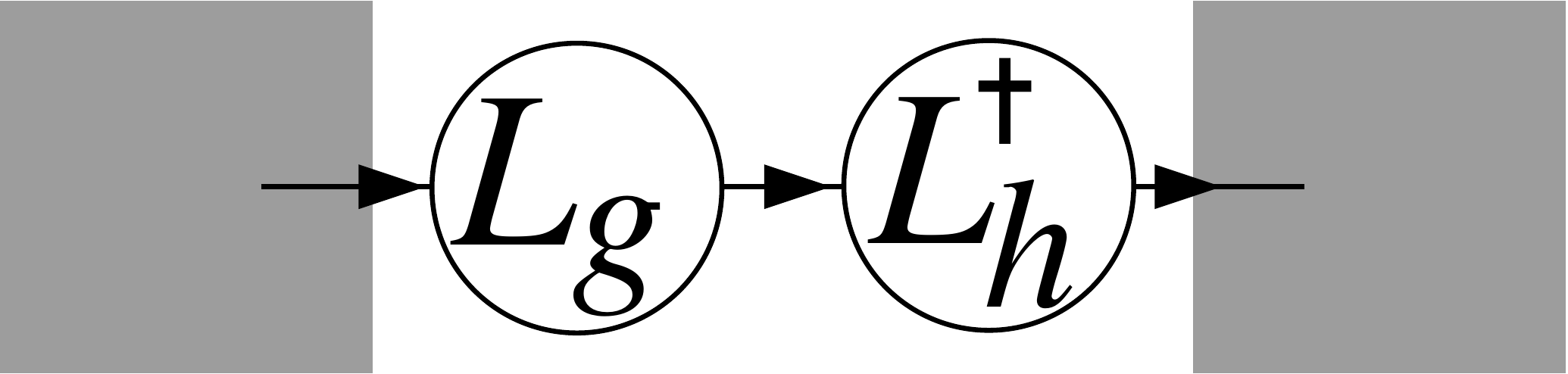}
}\ ,
\]
this is, between
\[
\sum_{i=1}^I \sum_{k_i,l_i=1}^{d_i}
    \sqrt{d_i} D^i_{k_i,l_i}(gh^{-1}) \ket{i;k_i}\bra{i;l_i}
\]
and
\[
\sum_{i=1}^I \sum_{k_i,l_i=1}^{d_i} \sum_{m_i=1}^{d_i}
    D^i_{k_i,l_i}(gh^{-1}) \ket{i;k_i;m_i}\bra{i;l_i;m_i}\ .
\]
This, however, can be accomplished by mapping between the basis elements
of the two spaces as
\[
\ket{i;k_i}\bra{i;l_i} \mapsto
    \frac{1}{\sqrt{d_i}}\sum_{m_i=1}^{d_i}
    \ket{i;k_i;m_i}\bra{i;l_i;m_i}\ ;
\]
and thus, we have shown that the PEPS with
tensor~\eqref{eq:ex:V-Theta-double-def}, using a semi-regular
representation, describes a state which is locally equivalent to the
double model for the same group.

\section{Conclusions and outlook\label{sec:conclusion}}

In this paper, we have presented a general framework for understanding
Matrix Product States and Projected Entangled Pair states, and thus a
large variety of interesting quantum states, in terms of symmetries. While this
classification allowed us to reproduce the known results in one dimension,
it also enabled us to go beyond one dimension and characterize the
properties of two-dimensional systems with non-unique ground states. Using
the symmetry-centered perspective, we could show how these states arise as
finitely degenerate ground states of local Hamiltonians, and characterize
the different ground states using symmetries at the closure. By looking
only at these symmetries, we were able to explain the topological
correction to the area law, the local indistinguishability of the ground
state subspace, the anyonic nature of excitations of these systems, and
why they are fixed points of renormalization group flows. 

We believe that the framework set forward in this article will allow to
explain a variety of things beyond what we have presented in this
paper. For instance, the results on the parent Hamiltonian and
ground state degeneracy can be directly extended to three and more spatial
dimension (replacing the pair-conjugacy classes by triplet-conjugacy
classes, etc.), as well as the statements made about the topological entropy,
local indistinguishability, RG flows, and commuting Hamiltonians. By
introducing excitations which are e.g.\ membranes of $g$'s on the bonds,
the framework of $G$--injective PEPS will also allow to explain the
excitations of e.g.\ three-dimensional models.

The results presented here likely can be generalized to models with other
symmetries. For instance, on might think of tensors for which the ordered
product of the group elements assigned to all indices is the identity for
all non-zero entries. This symmetry shares the crucial property of being
stable under concatenation, and it allows for the definition of injective
and isometric tensors, and thus should allow for comparable statements
about topological entropy, local indistinguishability, renormalization
transformations, or non-abelian excitations.  Note that Kitaev's code
state and the double models can be expressed in this form if one
constructs the PEPS by starting from a ground state of $H_B$ and
subsequently projects onto the ground state subspace of $H_A$. More
generally, similar results might hold when considering more general
symmetries,  e.g.\ tensor categories used to define string-net
models~\cite{levin-wen:string-net-models}, or different tensor-network
based ansatzes~\cite{sierra:2d-dmrg,nishino:tps}.  Note that from a
different perspective, the role of so-called gauge-like symmetries for
topological order has been studied
in~\cite{nussinov:symmetry-topology-prl,nussinov:symmetry-topology-long}.

The framework for understanding RG fixed points by mapping between
isomorphic representations suggests the extension to the case of
renormalization flows, e.g.\ by considering the tensor product of
individual bond symmetries. 
For instance, one could use the fact that the
normalized character of the tensor product of any faithful representation 
converges to the character of the regular representation, up to global
phases, to show that RG flows converge towards $G$--isometric PEPS.

\section{Acknowledgements}

We would like to thank the Perimeter Institute for Theoretical Physics in
Waterloo, Canada, where a major part of this work has been done, for its
hospitality.  We acknowledge helpful discussions with Miguel Aguado,
Oliver Buerschaper, Sofyan Iblisdir, Frank Verstraete, and Michael Wolf.
This work has been supported by the EU project QUEVADIS, the DFG
Forschergruppe 635, the Gordon and Betty Moore Foundation through
Caltech's Center for the Physics of Information, the National Science
Foundation under Grant No. PHY-0803371, and the Spanish grants I-MATH,
MTM2008-01366 and CCG08-UCM/ESP-4394.

\end{document}